\documentclass{article}
\usepackage{graphicx} 
\usepackage{fullpage}

\usepackage{float}
\usepackage[dvipsnames,svgnames,table]{xcolor}
\usepackage[color=green!40, textsize=small, disable]{todonotes}
\definecolor{linkblue}{RGB}{20,60,120}
\definecolor{citeblue}{RGB}{60,90,140}
\usepackage[colorlinks=true,linkcolor=linkblue,citecolor=citeblue]{hyperref}

\usepackage[T1]{fontenc}

\usepackage{amsthm}

\usepackage{thmtools}
\usepackage{mathtools}
\usepackage{tikz}
\usetikzlibrary{arrows.meta,positioning}
\usepackage{subcaption}
\usetikzlibrary{calc}
\usepackage{standalone}
\usepackage{enumitem}
\tikzset{
  circnode/.style={
    draw,
    circle,
    minimum size=4mm,
    inner sep=0pt
  }
}

\usepackage{array}
\usepackage{xspace}

\usepackage{geometry}
\usepackage{amsmath,amsthm,amssymb}
\usepackage[capitalise]{cleveref}
\crefname{equation}{}{}

\usepackage{setspace} 
\usepackage{orcidlink}

\newtheorem{theorem}{Theorem}[section]
\newtheorem{problem}[theorem]{Problem}

\newtheorem{proposition}[theorem]{Proposition}
\newtheorem{corollary}[theorem]{Corollary}
\newtheorem{observation}[theorem]{Observation}
\newtheorem{fact}[theorem]{Fact}
\newtheorem{conjecture}[theorem]{Conjecture}
\newtheorem{lemma}[theorem]{Lemma}
\theoremstyle{definition}
\newtheorem{definition}[theorem]{Definition}
\newtheorem{example}[theorem]{Example}
\theoremstyle{remark}
\newtheorem{remark}[theorem]{Remark}
\newtheorem{claim}[theorem]{Claim}
\newtheorem{assumption}[theorem]{Assumption}
\newtheorem{notation}[theorem]{Notation}

\usepackage{graphicx} 
\usepackage{xspace}

\newcommand{\tomastodo}[1]{\todo[inline, color=green!90!black]{TH: #1}}

\newcommand{\cC}{\ensuremath{\mathcal{C}}}
\newcommand{\cD}{\ensuremath{\mathcal{D}}}

\newcommand{\cF}{\ensuremath{\mathcal{F}}}

\newcommand{\cI}{\ensuremath{\mathcal{I}}}
\newcommand{\cK}{\ensuremath{\mathcal{K}}}

\newcommand{\cP}{\ensuremath{\mathcal{P}}}

\newcommand{\cR}{\ensuremath{\mathcal{R}}}
\newcommand{\cS}{\ensuremath{\mathcal{S}}}

\newcommand{\cV}{\ensuremath{\mathcal{V}}}
\newcommand{\cW}{\ensuremath{\mathcal{W}}}
\newcommand{\cY}{\ensuremath{\mathcal{Y}}}
\newcommand{\cX}{\ensuremath{\mathcal{X}}}

\newcommand{\dcS}{\ensuremath{\vec{\mathcal{S}}}}
\newcommand{\dcY}{\ensuremath{\vec{\mathcal{Y}}}}
\newcommand{\dcX}{\ensuremath{\vec{\mathcal{X}}}}

\newcommand{\bN}{\ensuremath{\mathbb{N}}}

\newcommand{\N}{\mathbb{N}}
\newcommand{\Z}{\mathbb{Z}}
\newcommand{\R}{\mathbb{R}}

\newenvironment{poc}{\begin{proof}[Proof of {C}laim.]}{\end{proof}}

\usepackage{soul}

  \newcommand{\SuggestChange}[2]{{\color{red} \relax\ifmmode\text{\st{$#1$}}\else \st{#1}\fi}\ {\color{blue} #2}}

\usepackage{enumitem,amssymb}
\newlist{todolist}{itemize}{2}
\setlist[todolist]{label=$\square$}
\usepackage{pifont}
\newcommand{\maxpl}{\oplus^{\mathrm{max}}}
\newcommand{\maxmu}{\otimes^{\mathrm{max}}}
\newcommand{\minpl}{\oplus^{\mathrm{min}}}
\newcommand{\minmu}{\otimes^{\mathrm{min}}}

\newcommand{\Rmax}{\mathbb{R}_{\mathrm{max}}}
\newcommand{\Zmax}{\mathbb{Z}_{\mathrm{max}}}
\newcommand{\Rmin}{\mathbb{R}_{\mathrm{min}}}
\newcommand{\Zmin}{\mathbb{Z}_{\mathrm{min}}}

\newcommand{\Nd}{\N^{\underline{d}}}
\newcommand{\Ninj}[1]{\N^{\underline{#1}}}

\newcommand{\ClauseDigraph}{\Delta}

\newcommand{\CC}{\mathcal{C}}
\newcommand{\DD}{\mathcal{D}}

\usepackage{braket}
\newenvironment{claimproof}[1][Proof of Claim]{\begin{proof}[#1]}{\end{proof}}
\newtheorem*{claim*}{Claim}

\newcommand{\shift}{\textsf{shift}}
\newcommand{\func}{\textsf{func}}
\newcommand{\Esh}{E^\shift}
\newcommand{\Efn}{E^\func}
\newcommand{\opt}{\textsf{opt}}

\newcommand{\com}[1]{{#1}^\bullet}
\newcommand{\twin}[1]{\overline{#1}}
\newcommand{\eps}{\varepsilon}

\definecolor{darkgreen}{RGB}{0,130,0}
\definecolor{darkred}{RGB}{130,0,0}

\newcommand{\chibounded}{\textcolor{darkgreen}{$\chi$-bounded}\xspace}
\newcommand{\chiunbounded}{\textcolor{darkred}{$\chi$-unbounded}\xspace}

\newcommand{\chiboundedness}{\textcolor{darkgreen}{$\chi$-boundedness}\xspace}
\newcommand{\chiunboundedness}{\textcolor{darkred}{$\chi$-unboundedness}\xspace}

\usepackage[most]{tcolorbox}

\definecolor{samplingboxcolor}{HTML}{FCF5E5} 
\definecolor{titleboxcolor}{HTML}{EAEAEA}   
\definecolor{bordercolor}{rgb}{0.7, 0.7, 0.7} 

\newtcolorbox[auto counter, number within=section]{boxsampling}[1][]{
	colback=samplingboxcolor,  
	colframe=bordercolor,      
	coltitle=black,            
	colbacktitle=titleboxcolor,
	fonttitle=\bfseries\large, 
	title=#1,                  
	rounded corners,           
	boxrule=0.75pt,            
	enhanced                   
}

\tcolorboxenvironment{example}{
  enhanced,
  breakable,
  sharp corners,
  colback=blue!2,
  colframe=blue!60,
  boxrule=0pt,
  borderline west={1pt}{0pt}{blue!65},
  left=8pt,
  right=7pt,
  top=7pt,
  bottom=7pt,
  before skip=10pt,
  after skip=10pt
}

\tcolorboxenvironment{definition}{
  enhanced,
  breakable,
  sharp corners,
  colback=violet!2,
  colframe=violet!60,
  boxrule=0pt,
  borderline west={1pt}{0pt}{violet!65},
  left=8pt,
  right=7pt,
  top=7pt,
  bottom=7pt,
  before skip=10pt,
  after skip=10pt
}

\usepackage{etoolbox}
\newtoggle{anonymous}
\togglefalse{anonymous}

\title{Set-defined graph classes: $\chi$-boundedness meets tropical algebra
}
\date{}

\iftoggle{anonymous}{
	\author{Anonymous author(s)}
}{%

\author{Sarosh Adenwalla\thanks{Department of Computer Science, University of Liverpool, UK, \texttt{sarosh.adenwalla@liverpool.ac.uk}, \orcidlink{0009-0009-8582-1281}} 
\quad 
Samuel Braunfeld\thanks{The Czech Academy of Sciences, Institute of Computer Science, Pod Vod\'{a}renskou v\v{e}\v{z}\'{\i} 2, 182 00 Prague, Czech Republic, \texttt{braunfeld@cs.cas.cz}, \orcidlink{0000-0003-3531-9970}} 
\quad 
Tom{\'a{\v{s}}} Hons\thanks{Charles University, Computer Science Institute, Malostransk\'{e} N\'{a}m\v{e}st\'{i} 25, 150 00 Prague, Czech Republic; The Czech Academy of Sciences, Institute of Computer Science, Pod Vod\'{a}renskou v\v{e}\v{z}\'{\i} 2, 182 00 Prague, Czech Republic, \texttt{honst@iuuk.mff.cuni.cz}, \orcidlink{0009-0001-7202-8681}} 
\quad
John Sylvester\thanks{Department of Computer Science, University of Liverpool, UK, \texttt{john.sylvester@liverpool.ac.uk}, \orcidlink{0000-0002-6543-2934}} 
\quad 
Viktor Zamaraev\thanks{Department of Computer Science, University of Liverpool, UK, \texttt{viktor.zamaraev@liverpool.ac.uk}, \orcidlink{0000-0001-5755-4141}}}
}
\begin{document}

\maketitle

\thispagestyle{empty}
\setcounter{page}{0}

\begin{center}
   \textit{Dedicated to Jarik Ne\v{s}et\v{r}il in celebration of his 80th birthday.}
\end{center}

\begin{abstract}
We study set-defined graph classes: hereditary classes of graphs in which vertices are assigned fixed-length numerical tuples and adjacency depends only on equality patterns among tuple coordinates. These classes arise naturally in structural graph theory, communication complexity, logic, and adjacency labeling schemes. 
We investigate the structural complexity of set-defined classes by asking when they are $\chi$-bounded, i.e., when chromatic number is controlled by clique number throughout the class.

Our main results give structural and algorithmic characterizations of $\chi$-boundedness in set-defined classes. 
First, we establish a decomposition theorem showing that every graph in a set-defined class can be partitioned into a number of parts bounded polynomially in its clique number, so that each part induces the union of a constant number of shift-colorable graphs, that is, graphs admitting a homomorphism to a shift graph. This identifies bounded unions of shift-colorable graphs as the fundamental obstruction to $\chi$-boundedness in set-defined classes. 

Second, for full set-defined graph classes, that is, classes containing all graphs realizable by a fixed Boolean rule on equality patterns, we prove a stronger dichotomy: every such class is either polynomially $\chi$-bounded or contains shift graphs of arbitrarily large chromatic number. 
Moreover, we provide an algorithm that, given a Boolean-function description of a full set-defined class, decides $\chi$-boundedness of the class. The key tool is combinatorial optimization, namely an explicit reduction to feasibility of tropical linear programs, while the correctness of the algorithm is proved using a duality result connecting this feasibility to winning strategies in mean payoff games.
Conversely, we show that every integer system of tropical inequalities, and hence every mean payoff game, can be encoded in strongly polynomial time as a set-defined graph class whose non-$\chi$-boundedness is equivalent to their feasibility. Thus the $\chi$-boundedness dichotomy for set-defined graph classes provides a graph-theoretic counterpart of tropical feasibility and mean-payoff-game solvability, and suggests a route for transferring techniques among structural graph theory, tropical algebra, and game-theoretic algorithms.
\end{abstract}

\newpage

\thispagestyle{empty}
\setcounter{page}{0}
\begin{spacing}{1}
\tableofcontents
\end{spacing}
\newpage

\section{Introduction}
The study of graph classes provides a unifying framework for understanding how structural restrictions influence combinatorial and computational complexity.
A central theme is to distinguish graph classes in which global behaviour is controlled by local structure from those that admit genuinely wild phenomena, such as graphs of bounded clique number and arbitrarily large chromatic number.

Among the many approaches to defining graph classes, we focus here on set-defined graph classes. Informally, these are classes in which each graph arises from assigning every vertex a label consisting of a fixed-length tuple of numbers and adjacency is determined by a fixed rule that inspects only which coordinates of the two labels are equal.
In other words, two vertices are connected precisely when their labels satisfy a prescribed equality pattern, independent of the specific numerical values. When a single such rule can generate all graphs in a class, we call the class set-defined.

Set-defined graph classes have appeared under other names independently in several settings. In structural graph theory, set-defined classes arise as classes of finite induced subgraphs of graphs definable in the pure equality structure $(\mathbb N,=)$; they were studied in \cite{JiangNesOdM20} as natural classes that are both edge-stable and semi-algebraic, two broad families of classes that exhibit strong forms of the regularity lemma \cite{MS14,APPRRS05,FPS14}. In adjacency labeling, they are precisely the classes admitting equality-based labeling schemes, equivalently the equality fragment of logical labeling schemes \cite{Cha2023,HWZ25}.  In communication complexity, their bipartite analogues  correspond to communication problems admitting deterministic constant-cost protocols with access to the Equality oracle, or equivalently to Boolean matrices of bounded blocky rank \cite{HHH23,HWZ25}, linking them to operator theory, harmonic analysis, cryptography, circuit complexity, and fine-grained complexity. 
We discuss these connections and the broader significance of set-defined classes for theoretical computer science and related fields in~\Cref{sec:set-defined-significance}.

Despite the syntactic simplicity of the definition, set-defined classes form a remarkably broad family. 
They capture graph classes whose adjacency relation is recognized by a finite equality-pattern decoder, a
viewpoint that appears naturally in structural graph theory, logic, adjacency labeling schemes, and
communication complexity. 
The family includes classes of bounded degree, bounded degeneracy, structurally bounded expansion, edge-stable classes of bounded twin-width,
and well-studied classes such as shift graphs, Kneser graphs $K(n,k)$ and Johnson graphs $J(n,k)$ for fixed $k$. They can aslo be viewed as a dense analogue of classes of bounded degeneracy \cite{JiangNesOdM20}. 

Most of these set-defined graph classes are known to be \chibounded, meaning that their chromatic number is bounded by a function of their clique number. Thus, within such a class, excluding large cliques also bounds coloring complexity, a local-to-global principle with important connections to theoretical computer science as outlined in~\Cref{sec:chiisTCS}.
However, set-definedness alone does not guarantee \chiboundedness: shift graphs have clique number two and arbitrarily large chromatic number. This contrast highlights the diversity of behaviors encompassed by the set-defined framework and raises our central question: which set-defined graph classes are \chibounded?

In this work, we provide structural and algorithmic characterizations of \chiboundedness within the family of set-defined graph classes. These characterizations identify shift-type graphs as the fundamental obstructions, have consequences for classical problems such as the Erd\H{o}s--Hajnal problem on subgraphs of large girth and large chromatic number and the Gy\'arf\'as--Sumner conjecture, and reveal a two-way algorithmic connection between \chiboundedness, tropical feasibility, and mean payoff games.

\subsection{Our results}

One of our main messages is that the sole ``reason'' for a set-defined class to not be \chibounded are shift-colorable graphs of large chromatic number. To explain this more precisely we first introduce the necessary definitions and state our decomposition theorem. 

With the exception of Corollary \ref{cor:GS}, we state our results for digraphs rather than undirected graphs. Although we were initially concerned with undirected graphs, the proofs naturally pass through digraphs, and the statements for digraphs are more general, sharper, and immediately imply the corresponding statement for undirected graphs by symmetrizing the edge relation. Our digraphs are assumed to be without loops and without multiple edges in the same direction.

A set of (di)graphs is a \emph{hereditary class} if it is closed under taking induced sub(di)graphs and under isomorphism. Unless stated otherwise, ``class'' refers to a hereditary class.

\begin{boxsampling}[\small Set-defined classes]
Given $d \in \mathbb{N}$, a Boolean function $f : \{0,1\}^{d^2} \rightarrow \{0,1\}$, and a set $V \subseteq \mathbb{N}^d$, the \emph{realisation of $f$ over $V$} is the digraph $G=(V,E)$ with $(u,w) \in E$ if and only if 
\[
    f(Q_{u,w}(1,1), Q_{u,w}(1,2), \ldots, Q_{u,w}(d,d-1), Q_{u,w}(d,d)) = 1,
\]
where $Q_{u,w}(i,j) = 1$ if $u_i = w_j$, and $Q_{u,w}(i,j) = 0$ if $u_i \neq w_j$. 
A digraph is \emph{$(d,f)$-set-defined} if it is isomorphic to a realisation of $f$ over some $V \subseteq \mathbb{N}^d$.
An (undirected) graph is \emph{$(d,f)$-set-defined} if it is the underlying graph of a $(d,f)$-set-defined \emph{digraph}.
A (di)graph is \emph{$d$-dimensional} set-defined if it is $(d,f)$-set-defined for some $f$.

\begin{definition}[Set-defined classes]
    A hereditary class of (di)graphs is (\emph{$d$-dimensional}) \emph{set-defined} if there exists $d \in \mathbb{N}$ and a Boolean function $f : \{0,1\}^{d^2} \rightarrow \{0,1\}$ such that every (di)graph in the class is $(d,f)$-set-defined. Such a class is \emph{full} if it consists of all $(d,f)$-set-defined (di)graphs. We denote by $\cX_f$ and $\dcX_f$ the full classes of $(d,f)$-set-defined graphs and digraphs, respectively.
\end{definition}
\end{boxsampling}

For integers $n,d \in \mathbb{N}$ with $d \ge 2$, the \emph{canonical $d$-dimensional shift digraph}, denoted $\vec{S}(n,d)$, is the digraph with vertex set $\{(u_1,\dots,u_d)\in [n]^d : u_1<\cdots<u_d\}$ in which there is a directed edge from $(u_1,\dots,u_d)$ to $(w_1,\dots,w_d)$ if and only if $u_k = w_{k-1}$ for every $k \in \{2,\ldots,d\}$.
The \emph{canonical $d$-dimensional shift graph} $S(n,d)$ is the underlying graph of $\vec{S}(n,d)$.
A (di)graph is called a \emph{$d$-dimensional shift (di)graph} if it is an induced sub(di)graph of a canonical $d$-dimensional shift (di)graph.
When the dimension is not specified, we mean $d=2$. In particular, a \emph{shift digraph} is an induced subdigraph of $\vec{S}(n,2)$ for some $n$.

Let $G$ and $H$ be (di)graphs. A \emph{homomorphism} from $G$ to $H$ is a map $\varphi \colon V(G) \to V(H)$ such that, whenever $(u,w)\in E(G)$, we have $(\varphi(u),\varphi(w))\in E(H)$.
We say that a (di)graph $G$ is \emph{shift-colorable} if there exists a homomorphism from $G$ to a shift (di)graph.
We are now ready to state our first main result.

\definecolor{coordone}{RGB}{40,90,180}
\definecolor{coordtwo}{RGB}{180,60,40}

\newcommand{\vlabel}[2]{%
$\big(\textcolor{coordone}{#1},\textcolor{coordtwo}{#2}\big)$%
}

\captionsetup[subfigure]{font=small}
\begin{figure}[t]
\centering

\begin{subfigure}[t]{0.31\textwidth}
\centering
\begin{tikzpicture}[
    scale=0.75,
    every node/.style={font=\scriptsize},
    vertex/.style={circle, draw, fill=black, inner sep=1.6pt},
    lab/.style={font=\scriptsize, align=center}
]
    \node[vertex] (v12) at (1,2) {};
    \node[vertex] (v13) at (1,3) {};
    \node[vertex] (v14) at (1,4) {};
    \node[vertex] (v15) at (1,5) {};
    
    \node[vertex] (v23) at (2,3) {};
    \node[vertex] (v24) at (2,4) {};
    \node[vertex] (v25) at (2,5) {};
    
    \node[vertex] (v34) at (3,4) {};
    \node[vertex] (v35) at (3,5) {};
    
    \node[vertex] (v45) at (4,5) {};
    
    \draw (v12)--(v23);
    \draw (v12)--(v24);
    \draw (v12)--(v25);
    
    \draw (v13)--(v34);
    \draw[bend left=12] (v13) to (v35);
    
    \draw (v14)--(v45);
    
    \draw (v23)--(v34);
    \draw (v23)--(v35);
    
    \draw (v24)--(v45);
    
    \draw (v34)--(v45);
    
    \node[lab,left=2pt of v12] {\vlabel{1}{2}};
    \node[lab,left=2pt of v13] {\vlabel{1}{3}};
    \node[lab,left=2pt of v14] {\vlabel{1}{4}};
    \node[lab,left=2pt of v15] {\vlabel{1}{5}};
    
    \node[lab,below=2pt of v23] {\vlabel{2}{3}};
    \node[lab,above=2pt of v25] {\vlabel{2}{5}};
    
    \node[lab,below=2pt of v34] {\vlabel{3}{4}};
    \node[lab,above=2pt of v35] {\vlabel{3}{5}};
    
    \node[lab,above=2pt of v45] {\vlabel{4}{5}};
\end{tikzpicture}
\caption{\small Shift graph: edge if\\
\textcolor{coordtwo}{second coordinate}
=
\textcolor{coordone}{first coordinate}}
\end{subfigure}
\hfill
\begin{subfigure}[t]{0.31\textwidth}
\centering
\begin{tikzpicture}[
    scale=0.75,
    every node/.style={font=\scriptsize},
    vertex/.style={circle, draw, fill=black, inner sep=1.6pt},
    lab/.style={font=\scriptsize, align=center}
]

\node[vertex] (a) at (0,1.2) {};
\node[vertex] (b) at (1.0,2.0) {};
\node[vertex] (c) at (1.0,0.4) {};
\node[vertex] (d) at (3.2,1.6) {};
\node[vertex] (e) at (4.2,1.6) {};

\draw (a)--(b)--(c)--(a);
\draw (d)--(e);

\node[lab, left=0pt of a] {\vlabel{1}{4}};
\node[lab, above=0pt of b] {\vlabel{1}{7}};
\node[lab, below=0pt of c] {\vlabel{1}{2}};
\node[lab, left=0pt of d] {\vlabel{3}{8}};
\node[lab, right=0pt of e] {\vlabel{3}{5}};

\end{tikzpicture}
\caption{Equivalence graph: edge if the \\ \textcolor{coordone}{first coordinates} are equal}
\end{subfigure}
\hfill
\begin{subfigure}[t]{0.31\textwidth}
\centering
\begin{tikzpicture}[
    scale=0.75,
    every node/.style={font=\scriptsize},
    vertex/.style={circle, draw, fill=black, inner sep=1.6pt},
    lab/.style={font=\scriptsize, align=center}
]

\node[vertex] (c) at (1.8,1.3) {};
\node[vertex] (a) at (0.2,2.2) {};
\node[vertex] (b) at (0.2,0.4) {};
\node[vertex] (d) at (3.4,2.2) {};
\node[vertex] (e) at (3.4,0.4) {};

\draw (c)--(a);
\draw (c)--(b);
\draw (c)--(d);
\draw (c)--(e);

\node[lab, above=1pt of c] {\vlabel{1}{2}};
\node[lab, above=1pt of a] {\vlabel{2}{3}};
\node[lab, below=1pt of b] {\vlabel{2}{4}};
\node[lab, above=1pt of d] {\vlabel{5}{1}};
\node[lab, below=1pt of e] {\vlabel{6}{1}};

\end{tikzpicture}
\caption{Tree: edge if \\
\textcolor{coordtwo}{second coordinate}
=
\textcolor{coordone}{first coordinate}}
\end{subfigure}

\vspace{0.6em}

\caption{
Examples of graphs from set-defined classes: shift graphs, equivalence graphs (i.e.\ graphs in which every component is a complete graph), trees.  
Each vertex is represented by a vector of natural numbers, and adjacency depends only on which coordinates of two vectors are equal.}
\label{fig:set-defined-examples}
\end{figure}

\begin{restatable}{theorem}{MainDecomposition}\label{th:main-decomposition}
    Let $\dcX$ be a $d$-dimensional set-defined class of digraphs. There exists a constant $c \in \mathbb{N}$ and a polynomial function $\tau \colon \mathbb{N} \to \mathbb{N}$ such that for every digraph $G=(V,E) \in \dcX$:
    \begin{enumerate}
        \item[1)] the vertex set $V$ can be partitioned into at most $\tau(\omega(G))$ parts $V_1, V_2, \ldots, V_r$, and
        \item[2)] for each $i \in [r]$, the induced subdigraph $G[V_i]$ is the union of at most $c$ shift-colorable digraphs.
    \end{enumerate}
\end{restatable}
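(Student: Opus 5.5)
We will work with a fixed realisation: every $G \in \dcX$ is isomorphic to the realisation of $f$ over some $V \subseteq \mathbb{N}^d$. The first step is a Ramsey-type normalisation that costs only a constant number of parts. We partition $V$ according to the \emph{equality type} of each tuple, i.e.\ which coordinates of $u$ coincide; there are at most $d^d$ such types. Within one type we drop repeated coordinates and assume all coordinates of each label are distinct. Fix a linear order on $\mathbb{N}$. We further partition by the \emph{order type} of $u$, the permutation sorting its coordinates, which costs at most $d!$ parts. Within a part, adjacency of $u,w$ depends only on the pattern $Q_{u,w}$. Since all coordinates are distinct, $Q_{u,w}$ is a partial injection $\sigma \colon [d] \rightharpoonup [d]$.

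So $E(G[V_i])$ is the union, over the at most $c_0=(d+1)^d$ partial injections $\sigma$ with $f(\sigma)=1$, of the digraphs $E_\sigma=\{(u,w): Q_{u,w}=\sigma\}$. It therefore suffices to refine each part into polynomially many in $\omega(G)$ subparts, so that within each subpart every $E_\sigma$ is either empty, shift-colorable, or handled by the clique bound.

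For a fixed $\sigma$ with domain $D$, we will classify $\sigma$ using the sequence of coordinate positions $i \mapsto \sigma(i)$.

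\textbf{Case 1: $\sigma$ has a fixed point $i=\sigma(i)$.} Then $E_\sigma$-edges join vertices sharing the value in coordinate $i$. The intended argument is to bound the problem by cliques: take the set of all $\sigma'$ with $f(\sigma')=1$. If for every $\sigma'$ extending the equality on coordinate $i$ between two vertices the function is $1$, then vertices with equal $i$th coordinate form cliques. In general this needs a sunflower/Ramsey argument on classes of equal $i$th coordinate. I would try a degeneracy-type argument: a set-defined class in which a pattern with a fixed point occurs produces, via the sunflower lemma, either a clique of size $\omega+1$ or a bounded-size structure of equal coordinate values. This should yield a colouring of these edges with $\mathrm{poly}(\omega)$ colours. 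This is the main obstacle: making the sunflower argument polynomial in $\omega$, presumably through an induction on $d$ in which fixing a coordinate value reduces dimension, with the polynomial degree growing with $d$.

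\textbf{Case 2: $\sigma$ has no fixed point.} Consider the functional digraph of $\sigma$ on $[d]$. A cycle of $\sigma$ of length $\ell \ge 2$ forces permuted coordinate sets. Refining by an additional bounded colouring (for example the orientation of comparisons $u_i<u_{\sigma(i)}$), we want to reduce to a chain $i_1 \to i_2 \to \dots$ with strictly monotone values. We then define the homomorphism $u \mapsto (u_{i}, u_{\sigma(i)})$, ordered by value, for a suitably chosen $i \in D$. An edge $(u,w)\in E_\sigma$ has $w_{\sigma(i)}=u_i$. The map $u\mapsto (\min,\max)$ of a chosen pair of coordinates then sends the edge to a shift edge $(a,b)\to(b,c)$ whenever the order type is consistent. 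Cases where the order type is inconsistent (for instance with cycles) contradict the fixed order type within the part, or give bounded chromatic number via a $2$-colouring by parity tricks. These are absorbed into a constant-factor refinement.

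Combining the cases, each part is split into $\mathrm{poly}(\omega)$ subparts, and on each subpart $G$ is the union of at most $c=c_0$ digraphs $E_\sigma$, each of which is shift-colorable. The product of the constant and polynomial refinements gives $\tau$.
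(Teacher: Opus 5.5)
Your normalisation (equality type, then order type, then writing $E(G[V_i])$ as $\bigcup_\sigma E_\sigma$ over partial injections) matches the paper. Your Case~2 map is sound, and simpler than the paper's path-clause argument. If $\sigma(i)$ is defined and $\sigma(i)\neq i$, then $v\mapsto (v_{\sigma(i)},v_i)$, or $v\mapsto (N-v_{\sigma(i)},N-v_i)$ when $v_i<v_{\sigma(i)}$ in the part, is (after relabelling values) a homomorphism $E_\sigma\to\vec{S}(n,2)$. It uses only the equality $u_i=w_{\sigma(i)}$, so no discussion of cycles or parity is needed. With your $(\min,\max)$ convention the orientation comes out reversed in the second case.

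However, your case split is drawn in the wrong place. The empty pattern $\sigma=\emptyset$ has no fixed point, yet $E_\emptyset$ joins any two labels with disjoint value sets. So it contains (bidirected) triangles and is not shift-colorable, and your Case~2 argument has no $i$ to use. Conversely, a pattern with both a fixed point and a non-fixed point cannot be controlled by cliques. For $d=3$, $\sigma=\{1\mapsto 1,\,3\mapsto 2\}$ on $\{(n+1,x,y):1\le x<y\le n\}$ realises exactly $\vec{S}(n,2)$, with $\omega=2$ and unbounded $\chi$. The correct dichotomy is: $\sigma$ has a non-fixed point (handled by your map), versus $\sigma$ is the identity on some $L\subseteq[d]$, possibly $L=\emptyset$.

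For the partial identities, which are the only patterns needing a $\mathrm{poly}(\omega)$ colouring, you give no argument. You offer only a hoped-for sunflower/induction scheme that you yourself flag as the main obstacle; this is the genuine gap. The missing idea is a one-step maximal-clique argument, linear in $\omega$:
\begin{itemize}
\item Take $L\neq[d]$; the full identity gives no edges. Edges of $E_\sigma$ only join vertices with the same restriction $a$ to $L$.
\item Inside a class $\{v:v_{|L}=a\}$, injectivity makes every literal involving $L$ automatic. So distinct $u,w$ are adjacent iff $u(\overline{L})\cap w(\overline{L})=\emptyset$.
\item Take a maximal clique $W$ of $E_\sigma$ in this class; then $|W|\le\omega(G)$. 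By maximality every vertex of the class meets $R=\bigcup_{w\in W}w(\overline{L})$, and $|R|\le d\,\omega(G)$.
\item Hence the class is covered by the at most $d^2\omega(G)$ sets $\{v:v_i=r\}$ with $i\in\overline{L}$, $r\in R$. Each is independent in $E_\sigma$, because its members share a value on a non-loop coordinate.
\item Colours can be reused across classes, so $\chi(E_\sigma)\le d^2\omega(G)$.
\item The common refinement over the at most $2^d$ partial identities gives at most $(d^2\omega(G))^{2^d}$ parts, on which only patterns with a non-fixed point contribute edges.
\end{itemize}
This is exactly the paper's treatment of discrete clauses; no sunflower lemma or induction on $d$ is needed.
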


We write $\chi(G)$ and $\omega(G)$ for the chromatic and clique numbers of $G$ if $G$ is a graph, and of the underlying graph of $G$ if $G$ is a digraph.
A hereditary class $\cX$ of (di)graphs is \chibounded if there exists a function $h \colon \bN \to \bN$ such that $\chi(G) \le h(\omega(G))$ for every $G \in \cX$.
If $h$ can be chosen to be polynomial (resp.\ linear), then $\cX$ is \emph{polynomially} (resp.\ \emph{linearly}) \chibounded. If $\cX$ is not \chibounded, we say that it is \chiunbounded.

\cref{th:main-decomposition} implies that the chromatic number of graphs in a set-defined graph class is controlled by cliques and graphs that are unions of bounded number of shift-colorable graphs. 
In particular, we have the following.

\begin{restatable}{corollary}{polyChiVsUnionsShiftSubgraphs}
\label{cor:poly-chi-vs-unions-shift-subgraphs}
    Let $\dcX$ be a set-defined class of digraphs. There exists a constant $c \in \mathbb{N}$ such that either $\dcX$ is polynomially \chibounded or it contains digraphs of arbitrarily large chromatic number that are unions of at most $c$ shift-colorable digraphs, and thus $\dcX$ is \chiunbounded.
\end{restatable}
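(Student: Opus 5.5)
We derive the corollary from \cref{th:main-decomposition} by a dichotomy on how large the chromatic number of unions of few shift-colorable digraphs inside $\dcX$ can get. Let $c$ and $\tau$ be the constant and polynomial given by \cref{th:main-decomposition} for $\dcX$. Consider the set $\mathcal{U} \subseteq \dcX$ of digraphs in $\dcX$ that are unions of at most $c$ shift-colorable digraphs, where ``union'' is on a common vertex set, as in item 2) of \cref{th:main-decomposition}.

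\emph{Case 1: $\chi$ is bounded on $\mathcal{U}$.} Suppose there is $K$ with $\chi(H)\le K$ for all $H\in\mathcal{U}$. Let $G\in\dcX$ and take the partition $V_1,\dots,V_r$ with $r\le\tau(\omega(G))$ from \cref{th:main-decomposition}. Since $\dcX$ is hereditary, each $G[V_i]$ lies in $\dcX$, and by item 2) it is a union of at most $c$ shift-colorable digraphs, so $G[V_i]\in\mathcal{U}$. Hence $\chi(G[V_i])\le K$. Colouring each part with its own palette of $K$ colours gives
\[
\chi(G)\le K\,\tau(\omega(G)),
\]
which is polynomial in $\omega(G)$. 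So $\dcX$ is polynomially \chibounded.

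\emph{Case 2: $\chi$ is unbounded on $\mathcal{U}$.} Then $\dcX$ contains digraphs of arbitrarily large chromatic number that are unions of at most $c$ shift-colorable digraphs. It remains to show that this makes $\dcX$ \chiunbounded, which requires bounding the clique numbers of these digraphs uniformly.

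Every shift digraph (an induced subdigraph of $\vec S(n,2)$) has triangle-free underlying graph. Indeed, if $(a,b)$ and $(b,e)$ are edges with $a<b<e$, then the third pair $(a,b),(b,e)$ would require $e=a$ or $b=a$, impossible. So shift digraphs have clique number at most $2$. A homomorphism maps a clique injectively (there are no loops, and adjacent vertices map to adjacent, hence distinct, vertices) onto a clique. Therefore every shift-colorable digraph has clique number at most $2$.

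Now let $H=H_1\cup\dots\cup H_c$ with each $H_j$ shift-colorable, and colour the edges of a clique of $H$ by an index $j$ of some $H_j$ containing that edge. A monochromatic triangle in colour $j$ would be a triangle in $H_j$, which is impossible. So $\omega(H)<R(3,\dots,3)$ with $c$ colours, a constant depending only on $c$. (Alternatively: each $H_j$ is $2$-colorable on any clique, so the product colouring shows $\omega(H)\le 2^c$, which is even simpler.)

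Thus these digraphs have bounded clique number and unbounded chromatic number, so $\dcX$ is \chiunbounded. This clique bound is the only nontrivial step; everything else follows directly from \cref{th:main-decomposition} and heredity.
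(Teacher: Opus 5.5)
Your proof is correct and is essentially the paper's. Case~1 is exactly its argument giving $\chi(G)\le K\cdot\tau(\omega(G))$. In Case~2 you make explicit the clique bound $\omega(H)<R_c(3)$, which the paper leaves implicit; this is the reasoning of \cref{lem:omega-of-union}, and triangle-freeness of shift digraphs can be cited directly from Fact~\ref{fact:shift}(1) rather than from your somewhat garbled sketch.

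Drop the parenthetical alternative, however, because $\omega(H)\le 2^c$ is false. A shift-colorable $H_j$ restricted to a clique of $H$ need not be bipartite. For example, $(1,2),(2,3),(3,4),(4,5),(2,4)$ induce a $5$-cycle in $\vec S(5,2)$, and $K_5$ is the edge-disjoint union of two $5$-cycles, so with $c=2$ one gets $\omega(H)=5>2^2$.
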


To understand the significance of this conclusion, we first discuss \chiboundedness in general. The definition of \chiboundedness implies that a hereditary class $\cX$ is \chiunbounded if and only if there exists an integer $s$ such that the subclass $\cX^{(s)}$ of $\cX$ consisting of graphs of clique number less than $s$ has unbounded chromatic number. One can think of such a class $\cX^{(s)}$ with the minimum $s$ as a ``minimal'' reason for $\cX$ to be \chiunbounded. This means that understanding \chiunboundedness is equivalent to understanding unbounded chromatic number in graphs of bounded clique number, a fundamental long-studied question (see \cite[Section~2]{SS20} for a survey and further references).

\cref{th:main-decomposition} reveals that, for \emph{any} \chiunbounded set-defined graph class, its \chiunboundedness can be witnessed by graphs of bounded clique number that have very particular structure---they are unions of a bounded number of shift-colorable graphs; in particular, they are unions of a bounded number of triangle-free graphs. 
This contrasts with the fact that in general not all graphs can be represented as a union of bounded number of subgraphs with strictly smaller clique number. More precisely, for any $r,t \geq 1$ there exist graphs with clique number $t$ that cannot be written as a union of at most $r$ graphs each with clique number strictly less than $t$ \cite{NR76}.

Another interpretation of this result is that if a set-defined class $\cC$ is \chiunbounded, then this is witnessed by a class of graphs $\cD$ that also witnesses that shift graphs are \chiunbounded. Indeed, let $\cD_0 \subset \cC$ be a \chiunbounded class of unions of shift-colorable graphs. By a Ramsey argument, there is a \chiunbounded class $\cD$ of shift-colorable graphs that are subgraphs of graphs in $\cD_0$. Homomorphisms from $\cD$ into both $\cD_0 \subset \cC$ and the class of shift graphs, both of which have bounded clique-number, means that $\cD$ witnesses both are \chiunbounded.

For full set-defined classes we show that \chiunboundedness can be witnessed by shift (di)graphs, rather than unions of shift-colorable (di)graphs. Since we do not require our digraphs to be asymmetric, we must also allow for shift digraphs with the edge relation symmetrized.

\begin{restatable}{theorem}{MainSetDefined}\label{th:main-set-defined}
    Let $\dcX$ be a \emph{full} set-defined class of digraphs. Then either $\dcX$ is polynomially \chibounded or $\dcX$ contains shift digraphs or symmetrized shift digraphs of arbitrarily large chromatic number.   
\end{restatable}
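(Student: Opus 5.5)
Starting from \cref{th:main-decomposition} and \cref{cor:poly-chi-vs-unions-shift-subgraphs}, I would upgrade ``unions of at most $c$ shift-colorable digraphs of large chromatic number'' to ``genuine (possibly symmetrized) shift digraphs of large chromatic number''. The upgrade uses fullness: $\dcX_f$ contains \emph{every} realisation of $f$, so once we have a suitable labelling we may choose the vertex set freely. Suppose $\dcX_f$ is not polynomially \chibounded. Then by \cref{cor:poly-chi-vs-unions-shift-subgraphs} there are digraphs $G_n\in\dcX_f$ with bounded clique number and $\chi(G_n)\to\infty$, each a union of at most $c$ shift-colorable digraphs. Each $G_n$ is realised by a labelling $V_n\subseteq\N^d$.

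\textbf{Step 1: canonize the labels.} The first move is a Ramsey argument on the labels. Colour each vertex by its internal equality type, i.e.\ which coordinates of $v$ coincide, and restrict to one colour class; this loses only a constant factor in $\chi$. Next, pass to ordered labels by fixing, for every vertex, the relative order of its distinct coordinate values. For each pair $u,w$, the adjacency then depends only on the equality pattern $Q_{u,w}$. The aim is to show that large chromatic number forces a specific pattern to recur in a ``shift-like'' way. Concretely, I would extract, via the shift-colorable structure of $G_n$ (a homomorphism to $\vec S(m,2)$ for each of the $c$ pieces) and a product-Ramsey / Erd\H{o}s--Rado canonization, a configuration of the following form. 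There are two index sets $I,J\subseteq[d]$ and a bijection between them. Edges of large chromatic number occur between vertices $u,w$ with $u_i=w_{\pi(i)}$ for $i\in I$ and no other equalities, arranged like the pattern $u_2=w_1$ of a shift graph.

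\textbf{Step 2: realise the shift digraph directly.} Given such a pattern $P$ with $f(P)=1$, and checking the value of $f$ on the finitely many other patterns that arise, I would build an explicit vertex set. Each pair $a<b$ in $[n]$ is mapped to a tuple in $\N^d$: coordinates in the ``first block'' are functions of $a$, those in the ``second block'' are functions of $b$, and the remaining coordinates are fresh distinct values. Distinct pairs of the shift graph then realise exactly the patterns $P$ (if $b=a'$), its reverse $P^{-1}$ (if $b'=a$), or a few other patterns, such as sharing the first element or sharing the second element. If $f$ is $1$ on $P$ and $0$ on the remaining ones, we get $\vec S(n,2)$. If $f$ is also $1$ on $P^{-1}$, we get its symmetrization. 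If $f$ is $1$ on a ``share first/second element'' pattern, the realisation contains large cliques, and we instead restrict to a subfamily where those patterns do not occur. By fullness, all of these realisations lie in $\dcX_f$.

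\textbf{Main obstacle.} The hardest step is Step 1: showing that the chromatic number of the $G_n$ can be attributed to a single recurring equality pattern with the shift structure, rather than being spread over several patterns that interact. I would first try to use the decomposition of \cref{th:main-decomposition} to isolate one shift-colorable piece carrying large chromatic number; this is possible because $\chi$ of a union is at most the product of the $\chi$ of the parts. Then I would use the homomorphism to $\vec S(m,2)$ together with canonization of the labels to read off the pattern. If a clean single pattern cannot be extracted, a fallback is to work with higher-dimensional shift patterns and use that $\vec S(n,k)$ contains $\vec S(n',2)$-like substructures of large chromatic number. This fallback would presumably lead to the tropical/mean-payoff characterization developed later in the paper.
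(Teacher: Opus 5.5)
\textbf{There is a genuine gap.} Step 2 cannot work in general, because a $\chi$-unbounded full class need not contain $\vec S(n,2)$ for large $n$ at all; sometimes the only witnesses are higher-dimensional shift digraphs.

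Take the setting of \cref{ex:encodeshift2}: $d=3$ and $f=P\vee D_1\vee D_2$. Here $P$ is the full clause whose only positive literal is $q_{2,1}$, and $D_1,D_2$ are the full clauses whose only positive literals are $q_{1,1}$ and $q_{2,2}$.

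In your two-block labelling $(a,b)\mapsto(t_a,t_b,\ast)$ the third coordinate causes trouble either way:
\begin{itemize}
\item A constant third coordinate kills every $P$-edge, since $P$ needs $u_3\neq w_3$.
\item Fresh values make any two pairs sharing $a$ adjacent via $D_1$, and any two sharing $b$ adjacent via $D_2$.
\end{itemize}
Your repair does not help. Restricting to a subfamily in which no two pairs share a first (resp.\ second) element leaves out-degree (resp.\ in-degree) at most $1$, so $\chi\le 3$.

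The only way to neutralise $D_1,D_2$ is a functional constraint: coordinate $3$ must be a function of coordinate $1$ and also of coordinate $2$. But this is incompatible with realising $\vec S(n,2)$ by $P$, for $n\ge 5$.
\begin{itemize}
\item The equalities $u_2=w_1$ along edges force labels $(X_a,X_b,z_{ab})$ for $2\le a<b\le n-1$.
\item Functionality gives $z_{ab}=\varphi(X_a)=\psi(X_b)$.
\item Hence $z_{23}=\varphi(X_2)=\psi(X_4)=\varphi(X_3)=z_{34}$. This contradicts the literal $\overline{q_{3,3}}$ needed for the edge $(2,3)\mapsto(3,4)$.
\end{itemize}
Combining this with Ramsey's theorem on pairs, \cref{th:diagonal-multi}, and the analogous contradictions for the other admissible choices of $\lambda_1,\lambda_2$, one checks that $\dcX_f$ contains no $\vec S(n,2)$ for large $n$. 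Yet it does contain $\dcS_3$, via $(a,b,c)\mapsto(g(a,b),g(b,c),h(b))$. In general the witnesses need coordinates depending on windows of consecutive entries of a $D$-tuple (an interval representation), with $D$ up to $2d-2$.

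More fundamentally, Step 1 is where the whole difficulty lies, and you give no argument for it. Canonising labels and using homomorphisms to $\vec S(m,2)$ does not tell you which functional constraints, forced by the discrete clauses, are still compatible with large $\chi$.

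The paper first reduces to classes $\dcY_{P,Z}$ (\cref{th:from-full-set-defined-to-shift-colorable}). It then proves the needed implication by contraposition:
\begin{itemize}
\item If one of the two tropical systems of \cref{thm:tropical_dichotomy} is infeasible, the column player of the associated mean payoff game wins on some set $S$.
\item The optimal strategy shows that, after projecting to $S$, every long directed path has its second vertex determined by its first.
\item Hence there is no $\vec\Lambda_t$, and $\chi\le 2t-1$ by \cref{fact:no_Lambda_small_chromatic_number}.
\end{itemize}
Conversely, feasibility yields an interval representation and hence $\dcS_D\subseteq\dcY_{P,Z}$.

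Finally, isolating one shift-colourable piece of large $\chi$ via the product bound gives only a non-induced subgraph. To obtain induced shift digraphs the paper needs \cref{lem:unbounded_Y_F_contains_shift_or_symmetrized_shift}. There it chooses a maximal loop set $L_{\max}$ and a $\preceq$-minimal clause $P_{i^*}$. It then shows that every other path clause, except the reverse of $P_{i^*}$, induces bounded-$\chi$ digraphs on the constructed sets. Your plan has no counterpart to this once the construction is not the explicit two-block one. Your `fallback' is precisely this missing machinery, not a proof of it.
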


As a corollary of this result we deduce that full set-defined graph classes satisfy the Gy\'arf\'as--Sumner conjecture (see \cref{sec:discussion} for further discussion).

\begin{restatable}[Gy\'arf\'as--Sumner conjecture for full set-defined classes]{corollary}{GS}\label{cor:GS}
    Let $\cX$ be a \emph{full} set-defined graph class. Then either $\cX$ is polynomially \chibounded or $\cX$ contains all forests.    
\end{restatable}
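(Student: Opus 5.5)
We derive \cref{cor:GS} from \cref{th:main-set-defined} applied to the full digraph class underlying $\cX$. Let $\cX=\cX_f$ for some $d$ and $f\colon\{0,1\}^{d^2}\to\{0,1\}$. By definition, a graph lies in $\cX_f$ exactly when it is the underlying graph of a digraph in $\dcX_f$. Chromatic and clique numbers of a digraph are those of its underlying graph, so $\cX_f$ is polynomially \chibounded if and only if $\dcX_f$ is. We therefore apply \cref{th:main-set-defined} to $\dcX_f$. If it is polynomially \chibounded, so is $\cX_f$, and we are done. Otherwise $\dcX_f$ contains shift digraphs, or symmetrized shift digraphs, of arbitrarily large chromatic number. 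Since $\dcX_f$ is hereditary, every digraph in it has all its induced subdigraphs in it, and hence $\cX_f$ contains the underlying graphs of all their induced subdigraphs.

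The combinatorial step is to show that every forest is an induced subgraph of some shift graph $S(n,2)$, and of one with small chromatic number. It is classical that $\chi(S(n,2))=\lceil\log_2 n\rceil$. Thus a family of induced subgraphs of $S(n,2)$ with unbounded chromatic number forces $n$ to be unbounded. The harder part is that we only know $\dcX_f$ contains \emph{some} induced subdigraphs of $\vec S(n,2)$ of large chromatic number, not all of $\vec S(n,2)$.

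To handle this, I would first try to exploit fullness more strongly. Fix an induced subdigraph $H\subseteq \vec S(n,2)$ in $\dcX_f$ with $\chi(H)$ large. I would take realisations $\varphi\colon V(H)\to\mathbb N^d$ and use a Ramsey argument on the order types of the tuples $\varphi(a,b)$ over pairs $a<b$. This should yield a uniform coordinate pattern, indexed by the pair, on a large subset of $[n]$ that still induces many edges. The goal is a single realisation scheme $(a,b)\mapsto$ tuple, built from $a$, $b$ and fresh values, realising $\vec S(m,2)$ for all $m$. Fullness lets us use arbitrary tuples, so the whole canonical shift digraph lies in $\dcX_f$.

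The subtle point is that the Ramsey argument must preserve the requirement that edges $(a,b)\to(b,c)$ actually appear. For this I would use that large chromatic number of a subgraph of $S(n,2)$ forces long monotone paths $a_1<a_2<\dots$ with consecutive pairs present, since $\chi\le$ (longest path length)$+1$. Applying Ramsey to triples $a<b<c$ along such structures fixes the equality pattern between $\varphi(a,b)$ and $\varphi(b,c)$, and similarly for non-edges. I expect this uniformisation step to be the main obstacle. Its conclusion is that $\dcX_f$ contains $\vec S(m,2)$, or its symmetrization, for every $m$.

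Finally, every forest is an induced subgraph of $S(m,2)$ for large $m$. I would prove this by embedding each tree by induction. Map the root to a pair $(a,b)$, and give its children pairs $(b,c_i)$ or $(c_i,a)$ with fresh, suitably spaced values. Fresh values avoid unwanted adjacencies, since in $S(m,2)$ two pairs are adjacent only if the larger end of one equals the smaller end of the other. Choosing new values in increasing blocks keeps the embedding induced. Disjoint components are placed on disjoint value ranges, so no edges appear between them. The symmetrized case has the same underlying graph, so in either case $\cX_f$ contains all forests.
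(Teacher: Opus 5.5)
Your reduction to \cref{th:main-set-defined} and the passage from digraphs to underlying graphs are fine. The gap is in the uniformisation step, which aims at a false statement: a $\chi$-unbounded full class need not contain $\vec S(m,2)$, or its symmetrization, for every $m$.

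Here is a counterexample. Take $d=3$ and let $f$ be the single full clause whose positive literals are $q_{2,1}$ and $q_{3,2}$.
\begin{itemize}
\item On increasing triples this clause realises exactly $\vec S(n,3)$. Hence $\dcX_f\supseteq\dcS_3$, and the class is $\chi$-unbounded.
\item Suppose $\phi$ realised $\vec S(5,2)$. The edges $(1,2)\to(2,3)$, $(2,3)\to(3,4)$, $(1,2)\to(2,4)$ and $(2,4)\to(4,5)$ force $\phi(3,4)_1=\phi(2,3)_2=\phi(1,2)_3=\phi(2,4)_2=\phi(4,5)_1$. This violates the literal $\overline{q_{1,1}}$ on the edge $(3,4)\to(4,5)$.
\item The clause can never hold in both directions, since that would force $u_1=w_2$, violating $\overline{q_{1,2}}$. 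So no symmetrized copy is present either.
\end{itemize}

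The underlying reason your Ramsey argument cannot work is that you can only homogenise over pairs actually present in $V(H)$. The large-$\chi$ shift digraphs supplied by \cref{th:main-set-defined} may be very sparse inside $\vec S(n,2)$; here they are copies of $\vec S(n,3)$. Fullness does not let you add the missing pairs, because the edges among them would also have to be realised by $f$.

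The missing idea is to show directly that every shift graph of sufficiently large chromatic number contains each fixed tree as an induced subgraph, without ever needing the canonical $S(m,2)$. This is what the paper does:
\begin{itemize}
\item The natural orientation $\vec G$ of a shift graph is the directed line graph of a subdigraph $\vec H$ of a transitive tournament.
\item Large $\chi(\vec G)$ forces large $\chi(\vec H)$ (Hell--Ne\v{s}et\v{r}il, Lemma~2.21).
\item By Burr's theorem, $\vec H$ then contains every oriented tree on $k$ vertices as a subdigraph.
\item The directed line graph of any subdigraph of $\vec H$ is automatically an \emph{induced} subdigraph of $\vec G$.
\item Every tree $T$ is the underlying graph of the directed line graph of a particular out-arborescence. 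This is obtained by orienting $T$ away from a root and adding one new edge into the root.
\end{itemize}

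Your inductive embedding of trees into $S(m,2)$ also needs more care than ``fresh values''. A child $(e,b)$ of a vertex $(b,c_i)$ is adjacent to every sibling $(b,c_j)$, so the direction of branching must be tracked. Branching only forwards suffices, and that is exactly the out-arborescence construction above.
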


Moreover, we provide an algorithm that, given a Boolean-function description of a full set-defined class, decides if the class is \chibounded or not.

\begin{restatable}{theorem}{Decision}\label{th:full-set-defined-decision}
    There exists an algorithm that, given a Boolean function $f : \{0,1\}^{d^2} \rightarrow \{0,1\}$, decides whether the full set-defined class of digraphs $\dcX_f$ is \chibounded or not.
\end{restatable}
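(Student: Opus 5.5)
By \cref{th:main-set-defined}, $\dcX_f$ is \chiunbounded if and only if it contains shift digraphs, or symmetrized shift digraphs, of arbitrarily large chromatic number. So the algorithm only has to decide this second property. The plan is to reduce it to a finite combinatorial question about $f$, and then to a decidable problem.

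\textbf{Step 1: canonical labelings of shift digraphs.} If $\dcX_f$ contains $\vec{S}(n,2)$ for every $n$, we want a uniform labeling. Fix a realisation $u\colon [n]^{\underline 2}\to\mathbb{N}^d$ with $n$ huge and apply Ramsey's theorem to pairs and quadruples of points of $[n]$. After passing to a large subset, each coordinate $u_i(a,b)$ depends on $(a,b)$ in one of finitely many canonical ways: constant, a function of $a$ only, a function of $b$ only, or injective in $(a,b)$. Moreover, the equality pattern between $u(a,b)$ and $u(c,e)$ depends only on the order type of $\{a,b,c,e\}$. Each such \emph{canonical type} $T$ is a finite object. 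For it, the shift-edge order type ($b=c$) must evaluate to $1$ under $f$, and all other order types must evaluate to $0$, or to the symmetric pattern in the symmetrized case. The chromatic number of shift graphs grows with $n$, so this gives a finite test.

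\textbf{Step 2: converting a type back into graphs.} Conversely, any canonical type satisfying these constraints realises $\vec{S}(n,2)$ for every $n$. We can choose injective functions of the prescribed kinds on $[n]$, and then the equality patterns are exactly those dictated by the type.

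\textbf{Step 3: the decision procedure.} The number of candidate types is bounded by a function of $d$, and each check is a finite evaluation of $f$. The algorithm enumerates the types, together with the symmetrized variant, and answers ``\chiunbounded'' exactly when some type passes.

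\textbf{Main obstacle.} The difficulty lies in Step 1. The canonical types must faithfully capture all realisations, including the case of larger-dimensional shift digraphs. If \cref{th:main-set-defined} is read as covering only $2$-dimensional shift digraphs, then the types above suffice. If higher-dimensional shift digraphs $\vec S(n,k)$ are needed, with $k$ unbounded a priori, the naive enumeration is no longer finite. This is where the abstract's reduction comes in: encode the admissible index shifts as a system of tropical linear inequalities and decide its feasibility by solving the associated mean payoff game, which is decidable.
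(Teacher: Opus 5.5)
Your algorithm tests whether $\dcX_f$ contains $\vec S(n,2)$, or its symmetrization, for every $n$. That property is strictly stronger than \chiunboundedness, and \cref{th:main-set-defined} does not give it. The theorem only asserts that $\dcX_f$ contains \emph{some} family of (symmetrized) shift digraphs of unbounded chromatic number, i.e.\ some $\chi$-unbounded set of induced subdigraphs of the $\vec S(n,2)$, not all of them.

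Here is a concrete counterexample. Let $d=3$ and let $f$ be the single minterm with $q_{2,1}=q_{3,2}=1$ and all other variables $0$.
\begin{itemize}
\item On increasing triples, $f$ realises exactly $\vec S(n,3)$. So $\dcS_3\subseteq\dcX_f$, and $\dcX_f$ is \chiunbounded.
\item An edge $u\mapsto w$ forces $u$ and $w$ to be injective with $w=(u_2,u_3,w_3)$. Suppose $\vec S(5,2)$ were realised, and let $(p,q,r)$ be the label of $(1,2)$. Then $(2,3)$ and $(2,4)$ get labels $(q,r,s)$ and $(q,r,s')$, and $(3,4)$ gets $(r,s,t)$. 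The label of $(4,5)$ must then begin both with $(r,s')$ and with $(s,t)$. Hence $r=s$, contradicting injectivity of $(q,r,s)$.
\item A $2$-cycle $u\mapsto w\mapsto u$ would force $u_1=w_2$, violating the literal $\overline{q_{1,2}}$. So no symmetrized shift digraph with an edge lies in $\dcX_f$ either.
\end{itemize}
Your procedure would therefore output ``\chibounded'' on this $f$, which is wrong. In particular, your claim that the canonical types suffice under the $2$-dimensional reading of \cref{th:main-set-defined} is false.

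Bounding the dimension does not rescue the approach. The paper's \cref{lem:full_contains_SD} only produces a \chiunbounded subclass of some $\dcS_D$, obtained after discarding colour classes of the other path clauses, not $\dcS_D$ itself. You give no argument that a full copy of some $\dcS_D$ must appear.

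What is missing is the actual decision criterion, which the paper obtains in two steps.
\begin{enumerate}
\item \cref{th:from-full-set-defined-to-shift-colorable} computes from $f$ finitely many pairs $(P_i,Z_i)$. Each pair is a loopless path clause together with functional constraints coming from the discrete clauses. The property is that $\dcX_f$ is \chibounded if and only if every $\dcY_{P_i,Z_i}$ has bounded chromatic number.
\item \cref{thm:tropical_dichotomy} shows that $\dcY_{P,Z}$ has unbounded chromatic number if and only if two explicit tropical systems are both feasible. The systems are $x_{c+1}=x_c+1$ and $y_{c+1}=y_c+1$ for each literal $q_{c+1,c}$ of $P$, together with $x_\lambda\ge\min_{c\in L}x_c$ and $y_\lambda\le\max_{c\in L}y_c$ for each $(L,\lambda)\in Z$.
\end{enumerate}
Feasibility yields an interval representation and hence $\dcS_D\subseteq\dcY_{P,Z}$. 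Infeasibility is handled via a winning positional strategy in the associated mean payoff game. After projecting to a suitable coordinate set $S$, it gives that in long directed paths the second vertex is a function of the first (or the penultimate a function of the last). This excludes $\vec\Lambda_t$ and bounds $\chi$.

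Your last paragraph names tropical inequalities and mean payoff games, but it neither specifies the systems nor proves either direction of the equivalence. The Ramsey canonisation in your Step~1 does not address the hard direction, that infeasibility implies bounded chromatic number.
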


The key step behind \cref{th:main-set-defined,th:full-set-defined-decision} is a reduction of \chiboundedness for full set-defined classes to the non-existence of finite solutions of certain systems of tropical inequalities. This provides the algorithmic criterion of \cref{th:full-set-defined-decision} and reveals a natural tropical-algebraic structure underlying the problem.

Through the standard correspondence between systems of tropical inequalities and mean payoff games, this criterion can also be reformulated in game-theoretic terms: the relevant tropical systems have finite solutions if and only if the associated mean payoff games admit winning strategies from every starting position. Thus the \chiboundedness problem for full set-defined classes reduces to solving a family of mean payoff game instances.

We show that the correspondence also goes in the opposite direction. Namely, we prove that every integer system of tropical inequalities, and hence every mean payoff game, can be encoded in strongly polynomial time as a set-defined graph class whose \chiunboundedness is equivalent to their feasibility.

\begin{theorem}[informal; see \cref{thm:tropics_to_chi_unboundedness} for the formal statement]
\label{th:from-topical-to-chi-bounded}
There is a strongly polynomial-time algorithm that, given an arbitrary system of integer tropical inequalities, produces a description of a set-defined class $\dcX$ such that the system has a finite solution if and only if $\dcX$ is \chiunbounded.
\end{theorem}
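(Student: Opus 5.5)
We prove the statement by an explicit reduction. Given an integer system of tropical inequalities in variables $x_1,\dots,x_n$, we build a set-defined class whose shift-type subgraphs realize exactly the finite solutions of the system. We take as given the easier half of the correspondence, namely the criterion established for \cref{th:main-set-defined,th:full-set-defined-decision}: a full set-defined class is \chiunbounded if and only if an associated tropical system is feasible. We also rely on \cref{cor:poly-chi-vs-unions-shift-subgraphs}, which says that \chiunboundedness is always witnessed by bounded unions of shift-colorable digraphs.

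\textbf{Normalization.} By the standard reduction (adding fresh variables and splitting terms), we first rewrite the system so that every inequality has the form
\[
x_i \le \max_{j\in J_i}(x_j + a_{ij}) \quad\text{or}\quad x_i \le x_j + a,
\]
with integer $a$. Each coefficient $a$ is then replaced by a chain of $|a|$ unit steps, introducing auxiliary variables. Doing this naively costs pseudo-polynomial time, which is not good enough. To stay strongly polynomial, we expand each coefficient in binary and encode it through repeated doubling: one auxiliary variable per bit, linked by inequalities $y_{k+1}=2y_k$ realized tropically as $y_{k+1}\le y_k+y_k$ and its reverse. This is routine but must be checked.

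\textbf{Encoding.} The dimension $d$ will be polynomial in the size of the system. A vertex is a tuple whose coordinates are blocks, one block per variable. Intuitively, a shift-like path of vertices advances a ``time'' coordinate, and the offset between the positions at which block $i$ repeats measures $x_i$. The Boolean rule $f$ is built as follows.
\begin{itemize}
\item A unit-shift inequality $x_i\le x_j+1$ becomes the requirement that an edge forces coordinate $i$ of the head to equal coordinate $j$ of the tail, shifted by one position.
\item A max-inequality becomes a disjunction over $j\in J_i$ of such equality patterns. This is possible because $f$ is an arbitrary Boolean function of the equality pattern, so disjunctions are free.
\end{itemize}
The rule $f$ is written as a formula of size polynomial in the system, so the output description is produced in strongly polynomial time.

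\textbf{Correctness.}
\begin{itemize}
\item \emph{Feasible implies \chiunbounded.} From a finite (integer, after scaling) solution we construct, for each $m$, a vertex set realizing $S(m,2)$ or a symmetrized copy of it inside the class. Hence the class has bounded clique number and unbounded chromatic number.
\item \emph{\chiunbounded implies feasible.} By \cref{cor:poly-chi-vs-unions-shift-subgraphs} and the full-class criterion, \chiunboundedness yields a feasible tropical system $T$ attached to $f$. We then show that $T$ implies the original system, by reading off each variable's offset from how $f$ forces its coordinates to match along long shift paths.
\end{itemize}

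\textbf{Main obstacle.} The hard step is this converse. The gadget $f$ may admit unintended realizations, such as cliques, or configurations in which the disjunction in $f$ is satisfied by coincidental equalities. Any such realization could produce large chromatic number without encoding a solution. To prevent this, we add ``guard'' coordinates: distinct identifiers that forbid any equality patterns other than the intended ones. We then verify that the criterion system derived from $f$ is, up to trivially satisfiable components, exactly the input system.
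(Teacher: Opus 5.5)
\textbf{There is a genuine gap: the tropical inequalities are encoded by the wrong mechanism.}

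An equality between a coordinate of the tail and a different coordinate of the head becomes a positive literal $q_{c+1,c}$ of a path clause. Such literals translate into the \emph{equalities} $x_{c+1}=x_c+1$ and $y_{c+1}=y_c+1$ of \cref{def:interval_system}\ref{item:equalities}, never into inequalities. So forcing an equality to encode $x_i\le x_j+1$ changes its meaning.

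Moreover, the literals of a loopless path clause by themselves never obstruct $\chi$-unboundedness. With $Z=\emptyset$, the assignment $I_c=\{c\}$ is always an interval representation, so the class contains shift digraphs. In the paper, the inequalities $y_\lambda\le\max_{c\in L}y_c$ and $x_\lambda\ge\min_{c\in L}x_c$ come only from \emph{functional constraints} $(L,\lambda)$ imposed on the vertex set. Your plan never specifies any.

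The missing idea is to output the non-full class $\dcY_{P,Z}$ directly:
\begin{enumerate}
\item Normalize to rows $a_{s,i}+y_i\le\max_j(b_{s,j}+y_j)$ with non-negative entries (\cref{lem:reduction_of_arbitrary_to_reasonable}).
\item Take one path $\rho_i$ per variable.
\item For each row, take one constraint with $\lambda_s$ at offset $a_{s,i}$ in $\rho_i$, and $L_s\cap\rho_j$ at offsets $0$ and $b_{s,j}$ for each $j$ with $b_{s,j}$ finite.
\end{enumerate}
The compressed representation of $(P,Z)$ is then exactly the input system together with a trivially feasible min-plus companion (\cref{lem:polynomial_reconstruction}). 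Both directions then follow at once from \cref{lem:efficient_representation_solvable_iff_chiunbounded}, i.e.\ from \cref{thm:tropical_dichotomy}.

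Your converse (``reading off offsets'', ``guard coordinates'') is not an argument. Targeting a full class $\dcX_f$ would also make it much harder. There, $\chi$-unboundedness means that \emph{some} pair $(P_i,Z_i)$ extracted from $f$ has both a min-plus and a max-plus system feasible. Each $\lambda$ ranges freely over the non-loop coordinates of its discrete clause (\cref{th:diaganal-to-path}), and you would have to rule out every spurious pair.

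\textbf{Your normalization is also invalid, and in fact unnecessary.}

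The gadget $y_{k+1}\le y_k+y_k$ is not a tropical linear inequality. Each term of $A\maxmu y$ is a constant plus a \emph{single} variable, while $y_k+y_k$ is a tropical square. Indeed, the solution set of any system $A\maxmu y\le B\maxmu y$ is invariant under adding a common constant to all coordinates, and $y_{k+1}=2y_k$ is not.

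As a sanity check: a polynomial-time reduction to equivalent polynomial-size systems with coefficients in $\{-1,0,1\}$ would, via the known pseudopolynomial algorithms, solve mean payoff games in polynomial time. That is open.

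Your pseudopolynomial ``chain of $|a|$ unit steps'' is essentially what the paper does, realized as paths of length about $a$ in $P$. The point is that $(P,Z)$ is output \emph{succinctly}: the value $d$, the starting index of each path, and at most $2m+1$ indices per constraint. So $d$ may be pseudopolynomial while the algorithm uses polynomially many arithmetic operations on numbers of polynomial bit-length, and the Boolean function is never written out.

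A minor point: a finite solution yields $\dcS_D$ for some $D$ depending on the solution, not $S(n,2)$ in general.
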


Finally, one may ask whether the class of graphs defined by a `typical' function is \chibounded. We show that, with high probability in $d$, the full set-defined class of digraphs given by a random $d$-dimensional function contains the class $\dcS_2$ of $2$-dimensional shift digraphs, and is thus \chiunbounded.
 
\begin{restatable}{theorem}{randomclass}\label{thm:randomclass}
 	Let $f:\{0,1\}^{d^2}\to\{0,1\}$ be chosen uniformly at random. Then  	$\lim_{d \to \infty} \mathbb{P}(\dcS_2 \subseteq \mathcal \dcX_f) =1$. 
\end{restatable}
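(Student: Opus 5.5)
The plan is to exhibit, for each fixed $d$, a large family of candidate ``encodings'' of the shift digraphs into $\mathbb{N}^d$. Each encoding succeeds exactly when $f$ takes prescribed values on six specific inputs, and the input sets of different encodings will be pairwise disjoint. Since $f$ is uniform, the values of $f$ on distinct inputs are independent fair bits. So every encoding succeeds independently with probability $2^{-6}$, and the probability that all of them fail tends to $0$ as the family grows.

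\emph{Encodings.} An encoding is a partition $[d]=Z\cup X\cup Y$ with $|X|,|Y|\ge 1$. A vertex $(a,b)$ of $\vec S(n,2)$, where $1\le a<b\le n$, is sent to the tuple $v_{a,b}\in\mathbb{N}^d$ that has value $0$ on $Z$, value $a$ on $X$ and value $b$ on $Y$.

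Consider two vertices $(a,b)$ and $(a',b')$ with $a<b$ and $a'<b'$. The only equalities among $a,b,a',b'$ that can occur put the pair in one of six types:
\begin{itemize}
\item type $0$: no equality;
\item type $A$: $a=a'$ only;
\item type $B$: $b=b'$ only;
\item type $F$: $b=a'$, which is precisely a shift edge;
\item type $R$: $a=b'$;
\item type self: the same vertex.
\end{itemize}
In particular, $b=a'$ together with $a=b'$ is impossible, and $b=a'$ excludes $a=a'$. The value $0$ never equals $a$ or $b$. Hence $Q_{v,w}$ depends only on the type, and the matrices are:
\begin{itemize}
\item type $0$: support $Z\times Z$;
\item type $A$: additionally $X\times X$;
\item type $B$: additionally $Y\times Y$;
\item type $F$: additionally $Y\times X$;
\item type $R$: additionally $X\times Y$;
\item type self: all of $(Z\cup X\cup Y)^2$ except the cross blocks with $Z$.
\end{itemize}
Call these six matrices $P_T(Z,X,Y)$. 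If $f(P_F)=1$ and $f$ vanishes on the other five, then the realisation of $f$ over $\{v_{a,b}\}$ is exactly $\vec S(n,2)$, with no loops. Every shift digraph, being an induced subdigraph, is then the realisation over a subset. So $\dcS_2\subseteq\dcX_f$, because $\dcX_f$ is full.

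\emph{Disjointness.} I restrict to partitions with $|Z|=|X|=|Y|=\lfloor d/3\rfloor$, putting leftover coordinates into $Z$ (this only shifts sizes by a constant). One can then read off from a matrix $P_T(Z,X,Y)$ which type it has. A square support $D\times D$ with $|D|\approx d/3$ means type $0$. A square support with $|D|\approx 2d/3$ means type $A$ or $B$. A non-square support means type $F$ or $R$. A support of size about $d$ on the diagonal means type self. From the matrix one recovers, respectively:
\begin{itemize}
\item for type $0$: $Z$;
\item for types $A$, $B$: the set $Z\cup X$ or $Z\cup Y$;
\item for types $F$, $R$: $Z$ together with the ordered pair of off-diagonal blocks;
\item for type self: $Z$ and the set $X\cup Y$.
\end{itemize}
So it suffices to choose a family $\mathcal{F}$ of partitions satisfying three conditions:
\begin{enumerate}
\item the sets $Z$ are pairwise distinct;
\item the sets $Z\cup X$ and $Z\cup Y$ are pairwise distinct over all members and both choices;
\item no member is another with $X$ and $Y$ swapped.
\end{enumerate}
A greedy choice achieves this. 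Each chosen partition forbids only $O(1)$ values of $Z$ and $O(1)$ values of $Z\cup X$ and of $Z\cup Y$, and each forbidden set excludes at most $2^{O(d)}\cdot$(a small fraction) of the roughly $3^{d}$ partitions. This gives $|\mathcal{F}|\ge 2^{cd}$ for some $c>0$; even $|\mathcal{F}|\to\infty$ suffices.

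\emph{Conclusion and main obstacle.} With the $6|\mathcal F|$ inputs pairwise distinct, the events ``encoding $\sigma$ succeeds'' for $\sigma\in\mathcal{F}$ are independent, each of probability $2^{-6}$. Therefore
\[
\mathbb{P}(\dcS_2\not\subseteq\dcX_f)\le (1-2^{-6})^{|\mathcal{F}|}\to 0 .
\]
The step I expect to need the most care is the disjointness bookkeeping. I must make sure that no pattern of one type for one partition coincides with a pattern of a different type for another, for instance type self against type $A$, or type $F$ against type $R$. The size normalisation $|Z|=|X|=|Y|$ is chosen precisely to make the type readable from the support, leaving only the within-type collisions, which the three family conditions exclude.
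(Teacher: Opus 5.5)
Your approach is correct, but it takes a different route from the paper's.

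\textbf{Comparison with the paper.} The paper uses an injective encoding. A single pair of coordinates $i<j$ carries $(x,y)$, and every other coordinate is a fixed distinct constant, so only the five patterns $\delta_i,\delta_j,\beta_{ij},\varepsilon_{ij},\varepsilon_{ji}$ matter. The patterns $\delta_i$ are shared by many pairs, so the events ``$(i,j)$ is good'' are not independent. The paper handles this in two steps. It first exposes $f(\delta_1),\dots,f(\delta_d)$ and uses Hoeffding to get $|A|\ge d/4$ for $A=\{i:f(\delta_i)=0\}$. It then uses that the triples $\beta_{ij},\varepsilon_{ij},\varepsilon_{ji}$ for distinct pairs in $A$ are disjoint. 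This gives failure probability at most $(7/8)^{\binom{\lfloor d/4\rfloor}{2}}+e^{-d/8}$.

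You instead use that $\dcX_f$ is full, so non-injective vertex sets are allowed. You spread each encoding over blocks so that different encodings have pairwise disjoint pattern sets. This gives full independence with no conditioning step, and a far smaller failure probability $(1-2^{-6})^{|\mathcal F|}$ with $|\mathcal F|$ exponential in $d$. The price is the disjointness bookkeeping, which is automatic in the paper because its pair-specific patterns are indexed by the pair.

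\textbf{Slips to fix.} Neither of the following breaks the argument.
\begin{itemize}
\item Since $a<b$ forces $a\neq b$, the self pattern is $Z\times Z\cup X\times X\cup Y\times Y$, not $Z\times Z\cup (X\cup Y)^2$.
\item The type-$A$ support is the union of the two squares $Z\times Z$ and $X\times X$, as your own table says. It is not a square on $Z\cup X$, contrary to your disjointness paragraph.
\end{itemize}

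With the correct patterns the type is still readable. Types $F$ and $R$ are the non-symmetric ones. The symmetric ones are equivalence relations on their diagonal support, with $1$, $2$ or $3$ classes for types $0$, $A/B$ and self respectively (using $Z\neq\emptyset$). Collisions are then excluded by your conditions 1 and 2 alone. Condition 3 follows from 1, since swapping $X$ and $Y$ keeps $Z$.

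For the self type you only recover the unordered triple $\{Z,X,Y\}$, so you cannot read off $Z$ when $3\mid d$. The collision is still excluded: any $\sigma'\neq\sigma$ with the same triple either has $Z'=Z$, or has $Z'\in\{X,Y\}$ and $Z\in\{X',Y'\}$. In the latter case $Z\cup Z'$ lies in both $\{Z\cup X,Z\cup Y\}$ and $\{Z'\cup X',Z'\cup Y'\}$, violating condition 2. In any case you can drop the self type entirely: digraphs are loopless by convention, and the paper's proof only looks at distinct vertices.

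\textbf{The greedy count.} This can be made precise. For $d=3k$ with $|Z|=|X|=|Y|=k$, each chosen partition excludes at most $5\binom{2k}{k}$ of the $\binom{3k}{k}\binom{2k}{k}$ candidates. Hence $|\mathcal F|\ge \binom{3k}{k}/5$.
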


\subsection{Discussion \& open problems}
\label{sec:discussion}

\subsubsection{Solving tropical inequalities and mean payoff games.}

Tropical linear algebra provides an algebraic framework for representing and analyzing many natural problems in combinatorial optimization, including all-pairs shortest paths, set covering, and exact cycle covers~\cite{butkovic2003max,Butkovic2010}.
It has also played an important role in establishing new conditional and unconditional bounds in fine-grained complexity theory~\cite{williams2014faster,korhonen2021lower}.
Mean payoff games are zero-sum, two-player games whose decision problem lies in the complexity class $\mathsf{NP} \cap \mathsf{coNP}$~\cite{ZWICK1996343}.
A formal equivalence between mean payoff games and tropical linear systems was established in~\cite{akian2012tropical}, and the algorithmic consequences of this equivalence remain an active area of research~\cite{allamigeon2015tropicalizing,akian2022tropical}.

A central open question is whether mean payoff games (or, equivalently, tropical linear systems) can be solved in polynomial time. Despite the development of algorithms based on a variety of approaches, the best known general algorithms remain pseudopolynomial~\cite{GURVICH198885,ZWICK1996343,bjorklund2007combinatorial,Dhingra2006}. A major breakthrough for parity games, a closely related but technically simpler class, was a discovery of a quasipolynomial-time algorithm~\cite{Calude2017}. Adapting this approach has yielded improved pseudopolynomial algorithms for mean payoff games; however, this line of attack alone is unlikely to produce even a quasipolynomial-time algorithm in this setting~\cite{fijalkow_et_al}.

Our Theorem~\ref{th:from-topical-to-chi-bounded} gives a strongly polynomial-time reduction from systems of tropical inequalities, and hence from mean payoff games, to the \chiboundedness problem for set-defined graph classes. This reduction opens the possibility of bringing structural and algorithmic tools from graph theory---particularly the theory of set-defined classes---to bear on these problems. A natural direction for future work is to determine whether this connection can yield new structural characterizations or algorithmic advances for mean payoff games. For example, sufficient conditions for \chiboundedness, such as bounded degeneracy, may translate into tractability criteria for previously unrecognized classes of mean payoff games and tropical linear systems.

\subsubsection{$\chi$-boundedness of hereditary graph classes}\label{sec:chiisTCS}

The local-to-global control expressed by \chiboundedness has concrete algorithmic consequences, including implications for approximate vertex coloring noted already in~\cite{Gya87}. A simple illustration, discussed in~\cite{abrishami2026burling}, is the $(k,\ell)$-coloring problem~\cite{garey1976complexity}, a prototypical promise constraint satisfaction problem~\cite{krokhin2022invitation}: for fixed $k<\ell$, one is given a graph and needs to decide whether it can be colored with $k$ colors or cannot even be colored with $\ell$ colors, under the promise that one of these two cases holds. 
In a \chibounded class with a $\chi$-binding function $f$, this can be solved in polynomial time for $\ell > f(k)$. 
Indeed, it suffices to test whether the input graph contains a $(k+1)$-clique: under the promise, its presence identifies the second case, whereas its absence implies
that the chromatic number of the graph is at most $f(k)<\ell$, ruling out the second case.

\paragraph{Gy\'arf\'as--Sumner conjecture.}
A central structural problem concerning \chiboundedness of hereditary graph classes is the Gy\'arf\'as--Sumner conjecture, which asserts that, for every forest $F$, the class of graphs with no induced copy of $F$ is \chibounded. We restate it below in an equivalent form that is more convenient for our discussion.

\begin{conjecture}[Gy\'arf\'as--Sumner]
\label{conj:GS}
    Every hereditary class is \chibounded or contains the class of forests. 
\end{conjecture}

The conjecture has been open since the 1970s, and despite substantial recent progress, much remains unknown; see, for example, the survey \cite{SS20}. 

Our \cref{cor:GS} shows that the Gy\'arf\'as--Sumner conjecture holds for \emph{full} set-defined classes. On the other hand, it is not clear whether the conjecture holds for arbitrary set-defined classes. By \cref{th:main-decomposition}, this question reduces to classes consisting of unions of a bounded number of shift-colorable graphs. However, \cref{conj:GS} 
already appears non-trivial even for the simplest such classes, namely subgraphs of shift graphs and unions of two shift graphs, which motivates the following more specific problems.

\begin{problem}\label{pr:GS-subgraphs-of-shift-graphs}
    Is it true that for every tree $T$ there exists $c = c(T)$ such that any subgraph $G$ of a shift graph with $\chi(G) \geq c$ contains $T$ as an induced subgraph?
\end{problem}

\begin{problem}\label{pr:GS-union-of-shift-graphs}
    Is it true that for every tree $T$ there exists $c=c(T)$ such that the union of any two graphs, each isomorphic to a shift graph of chromatic number at least $c$, contains $T$ as an induced subgraph?
\end{problem}

\paragraph{Polynomial and linear $\chi$-boundedness.}
The relevance of polynomial and linear \chiboundedness for efficient approximation algorithms for vertex coloring are discussed in \cite{Gya87}. Esperet conjectured that every \chibounded hereditary graph class is polynomially \chibounded \cite{Esp17}. Although this is now known to be false in general \cite{BDW24}, it remains natural to identify settings in which the implication still holds (see e.g. \cite{CCDO26}). Our \cref{th:main-decomposition} shows that this is indeed the case for set-defined graph classes: within this family, \chiboundedness and polynomial \chiboundedness coincide. Importantly, however, the degree of the polynomial cannot be bounded uniformly over all such classes. Indeed, \cite{ABSZ24} shows that no fixed $d$ suffices even for classes that are Boolean combinations of equivalence graphs.
On the other hand, \cite{ABSZ24} also shows that every class that is a Boolean combination of a proper subclass of equivalence graphs is linearly \chibounded. Linear \chiboundedness is also known for other set-defined graph classes, for example edge-stable classes of bounded twin-width \cite{GPT22} and edge-stable classes excluding both a path and the bipartite complement of a path as semi-induced subgraphs \cite{dMPS25}. This motivates the following problem.

\begin{problem}\label{pr:linearly-chi-bounded}
    Characterise linearly \chibounded set-defined classes.
\end{problem}

\subsubsection{Subgraphs of large girth and large chromatic number.}

A long-standing conjecture of Erd\H{o}s and Hajnal asserts that, for every
$k,g \in \mathbb{N}$, every graph of sufficiently large chromatic number
contains a subgraph with chromatic number at least $k$ and girth at least
$g$~\cite{Erd73}. R{\"o}dl proved the special case of $g=4$, showing that
every graph of sufficiently large chromatic number contains a triangle-free
subgraph of large chromatic number~\cite{Rod77}. The conjecture is known to hold
for canonical $d$-dimensional shift graphs for every fixed $d \geq 2$
~\cite{Enj23,TW18} and for Kneser graphs~\cite{MW22b}; it has also
recently been announced for the sequence of Burling graphs~\cite{PTW26}.

Our \cref{th:main-decomposition} identifies shift-colorable graphs as the
central case of this problem within set-defined classes. More precisely, it
shows that every graph of sufficiently large chromatic number in a set-defined class contains a shift-colorable subgraph of
large chromatic number. Since shift-colorable graphs are triangle-free,
this gives a structural strengthening of R{\"o}dl's theorem in the
set-defined setting. More importantly, to establish the
Erd\H{o}s--Hajnal conjecture for every set-defined graph class, it suffices to establish it for shift-colorable graphs.

The problem already appears difficult for much more special classes, such as subgraphs of shift graphs, or even shift graphs themselves, that is, induced subgraphs of canonical shift graphs.

\begin{problem}[Erd\H{o}s--Hajnal conjecture for shift graphs]
\label{pr:EH-shift}
    Is it true that, for all $k,g \in \mathbb{N}$, there exists $f(k,g)$
    such that every shift graph $G$ with $\chi(G) \geq f(k,g)$ contains
    a subgraph $H$ with $\chi(H) \geq k$ and girth at least $g$?
\end{problem}

We note that the existing proof of the Erd\H{o}s–Hajnal conjecture for canonical shift graphs relies crucially on their canonical structure \cite{Enj23, TW18}. Thus, resolving \cref{pr:EH-shift} would require fundamentally new ideas.

We also note that no set-defined class can contain graphs of arbitrarily large girth and chromatic number, so the Erd\H{o}s--Hajnal conjecture will always require passing to subgraphs in this setting. This is because if a set-defined class has girth at least 5, then it can contain no $K_3$ or $K_{2,2}$, and so must have bounded degeneracy and thus bounded chromatic number, by \cite[Theorem 35]{JiangNesOdM20}. This also serves as a sanity check on our goal of finding some $\cS_k$ in every \chiunbounded full set-defined class, since every $\cS_k$ has girth 4.

\subsection{Overview of proof strategies}
\label{ssec:overview_of_proof_strategies}

Each technical section in the paper begins with a more detailed outline of its proofs. 
Here we give a high-level overview of the main proof strategies and how the 
different components fit together.

Let $\dcX$ be a set-defined class of digraphs and let 
$f : \{0,1\}^{d^2} \to \{0,1\}$ be a Boolean function such that 
$\dcX \subseteq \dcX_f$.

\paragraph{Decomposition via DNF and clause simplification.} At a high level, the proof of our decomposition theorem (\cref{th:main-decomposition}), which is developed in \cref{sec:decomposition-theorem}, proceeds through a sequence of vertex and edge partitions that gradually reveal the desired structure of digraphs in $\dcX_f$. The starting point is a representation of $f$ in \emph{full} disjunctive normal form (DNF). This representation allows us to view any digraph in $\dcX_f$ as the edge-disjoint union of subdigraphs realised by the individual clauses of the DNF.
We then show that vertex sets of digraphs in $\dcX_f$ admit a partition into at most $d^{2d}$ 
parts so that each part induces a digraph belonging to a set-defined class whose defining DNF has a simplified clause structure. This reduces the analysis to the study of such digraph classes and allows us to classify clauses into several types.
For every class defined by a simplified DNF, we prove that the edges arising from all clause types except one form a subdigraph $H$ with polynomially \chibounded structure. In particular, the vertex set of $H$ admits a partition into independent sets $V_1, V_2, \dots, V_s$, where $s \le r(\omega(H))$ for some polynomial $r$.
The edges arising from each of the remaining clauses, which we call \emph{path clauses}, form 
a shift-colorable digraph, and hence the edges produced by all path clauses collectively form, inside each $V_i$, a union of shift-colorable digraphs. 
This yields the structural decomposition stated in 
\cref{th:main-decomposition}.

\paragraph{Reduction to path clauses over sets with functional constraints.}
In \cref{sec:refinements-full-set-defined}, we revisit the proof of the decomposition theorem in the special case of full set-defined classes and refine its main reductions so as to preserve the information needed for the stronger dichotomy of \cref{th:main-set-defined}. More precisely, the reductions are arranged to keep track of \chiboundedness equivalences between the original class and the classes produced along the way.
These reductions rely on an additional feature of the partition $V_1,\dots,V_s$ constructed in the proof of \cref{th:main-decomposition}: each part satisfies a family of \emph{functional constraints}, where each such constraint specifies that one coordinate of the vertices in the set is determined by some of the others. In the setting of full set-defined classes, this extra structure implies that the class $\dcX_f$ contains every digraph that can be realised by the path clauses on vertex sets satisfying the corresponding functional constraints. Since each path clause produces a shift-colorable subdigraph, and hence a triangle-free subdigraph, the class is \chibounded if and only if every class corresponding to a path clause has bounded chromatic number (\cref{th:from-full-set-defined-to-shift-colorable}). This reduces the problem to the following question: given a collection of functional constraints and a path clause, does the class of all digraphs realised by that path clause on sets of vertices satisfying those constraints have bounded chromatic number?

\paragraph{Tropical and game-theoretic dichotomy.}
To answer this question, in \cref{sec:chi-dichotomy}, we encode the functional constraints as tropical inequalities and reduce the question to the existence of finite solutions in two systems of tropical inequalities.
We show that when both systems have finite solutions, such solutions can be used to construct a realisation of every $D$-dimensional shift digraph, for some $D \in \bN$, over sets satisfying the functional constraints.
On the other hand, when one of the systems does not have a finite solution, we show that the digraphs in the class have bounded chromatic number. For this latter result, we exploit the known correspondence between systems of tropical inequalities and mean payoff games. 
Starting from a system of tropical inequalities, we construct an associated game digraph and apply a sequence of transformations to obtain a dependency digraph that encodes dependencies between vertex coordinates. Using graph covering projections, we relate walks in the game digraph to walks in the dependency digraph.
This correspondence allows us to deduce structural constraints on directed paths in digraphs generated by the path clause. In particular, under the assumption that the tropical system has no finite solution (equivalently, that the corresponding game does not admit a strategy ensuring a player a non-negative payoff for every starting position), we show that in any sufficiently long directed path the coordinates of the second vertex are uniquely determined by those of the first. As a consequence, these digraphs cannot contain a directed tree consisting of two long edge-disjoint directed paths emanating from a common root. Known results then imply that such digraphs have bounded chromatic number.

\paragraph{From subgraphs to induced subgraphs.}
The case when one of the path clauses can realise all $D$-dimensional shift digraphs implies that the class is \chiunbounded. However, this does not immediately imply that the class contains $D$-dimensional shift digraphs of arbitrarily large chromatic number, since other path clauses may contribute additional edges and the $D$-dimensional shift digraphs appear only as subdigraphs of digraphs in the class. 
Using an additional argument, in \cref{sec:induced-shift-graphs}, we refine the union of digraphs produced by the path clauses and show that such digraphs contain, as \emph{induced} subdigraphs, $D$-dimensional shift digraphs of arbitrarily large chromatic number. This yields \cref{th:main-set-defined}.

\paragraph{Deciding $\chi$-boundedness in full set-defined classes.}
In \cref{sec:deciding-chiboundedness}, we combine \cref{th:from-full-set-defined-to-shift-colorable} with the reduction established in \cref{sec:chi-dichotomy} to prove \cref{th:full-set-defined-decision}, that is, the decidability of \chiboundedness for full set-defined classes. More precisely, by \cref{th:from-full-set-defined-to-shift-colorable}, deciding whether, for a given Boolean function $f \colon \{0,1\}^{d^2} \to \{0,1\}$, the class $\dcX_f$ is \chibounded reduces to checking bounded chromatic number for finitely many classes defined by a path clause on vertex sets satisfying a collection of functional constraints. Moreover, for each such class, a finite description in terms of a path clause and a collection of functional constraints can be computed from $f$. 
By \cref{thm:tropical_dichotomy}, bounded chromatic number for each of these classes can in turn be decided from their descriptions by constructing and solving a pair of finite systems of tropical inequalities. Since finite systems of tropical inequalities are algorithmically solvable, this yields \cref{th:full-set-defined-decision}.

\paragraph{Connections to tropical algebra.}
In Section~\ref{sec:compressed_matrices}, we strengthen the connection between \chiboundedness of set-defined classes and the solvability of tropical inequalities, or equivalently, of mean payoff games. Building on the tropical representation developed in Section~\ref{sec:chi-dichotomy}, we introduce a more efficient compressed representation of a path clause together with its functional constraints by a pair of tropical systems of much smaller dimension. This compressed form still fully captures the relevant \chiboundedness question: the corresponding class has unbounded chromatic number if and only if both tropical systems admit finite solutions. Conversely, the class of tropical systems arising from this representation is flexible enough that, after a simple normalization, an arbitrary tropical system can be encoded in it. Since the corresponding set-defined class can then be reconstructed in strongly polynomial time, this yields Theorem~\ref{th:from-topical-to-chi-bounded}.

\subsection{Related work} \label{ssec:related_work}

Two lines of work are particularly relevant to our study of canonical obstructions to \chiboundedness in sufficiently tame graph classes.

The first concerns the conjecture stating that for a fixed graph $F$, if a hereditary graph class $\cC$ contains no induced subdivision of $F$ and is \chiunbounded, then $\cC$ must contain the class of all Burling graphs \cite[Conjecture 2.1]{chudnovsky2021induced}. Although this conjecture was made with the suggestion that it should be disproved, in \cite{abrishami2026burling} it is proved under an additional hypothesis that in particular holds for classes of string graphs (i.e.\ intersection graphs of curves in the plane). In that setting, there is a single minimal obstruction to \chiboundedness, under the induced subgraph containment. 
Our setting has no minimum \chiunbounded class, since the $d$-dimensional shift graphs form an infinite decreasing chain under inclusion, and they are all \chiunbounded but their intersection is a class of bipartite graphs.
Moreover, it is not even true that every \chiunbounded class of shift graphs contains the class of $d$-dimensional shift graphs for some $d$ (Example \ref{ex:shift_diag}).

The second line of work comes from infinite graph theory. It concerns the conjecture that if an infinite graph has uncountable chromatic number, then it must contain the a homomorphic image of every finite shift graph \cite{taylor1970combinatorial} (or every finite shift graph as a subgraph \cite[Problem 2]{erdos1972some}). Although this conjecture was disproved  \cite[Theorem 4]{hajnal1984must}, it was recently revisited under additional model-theoretic assumptions in \cite{halevi2022infinite, halevi2023infinite, halevi2025infinite}. In addition to sharing the theme of cliques and shift graphs as the canonical witnesses of large chromatic number, the analysis of two of the three cases considered in \cite{halevi2022infinite} (i.e.~the $\omega$-stable and superstable cases) reduces to full set-defined graph classes arising from graphs defined on a totally indiscernible set. It is then shown that, if such a class has unbounded chromatic number, its monotone closure contains the class of finite shift graphs. Thus these results are close in spirit to ours, but our focus is finer as we are concerned with induced subgraph containment rather than subgraph containment.

\section{Broader significance of set-defined graphs}\label{sec:set-defined-significance}
In this section we highlight various contexts in which set-defined graph classes arise.

\paragraph{Graph Theory.} 
In \cite{JiangNesOdM20}, a set-defined graph class is defined as a subclass of the class of finite induced subgraphs of a graph definable in $(\mathbb{N}, =)$. Since this structure admits quantifier elimination, there exists a fixed dimension $d$ such that every vertex can be associated with an element of $\mathbb{N}^d$, and adjacency between two vertices is determined by a Boolean combination of equalities among the corresponding tuple entries. This definition coincides precisely with that of graph classes admitting \emph{equality-based labeling schemes (EBLS)}~\cite{HWZ25}, or, equivalently, those admitting the equality fragment of logical labeling schemes~\cite{Cha2023}.

In~\cite{JiangNesOdM20}, it was shown that numerous well-studied graph classes are set-defined, including classes of bounded shrub-depth, classes of bounded expansion, and, more generally, classes of bounded degeneracy. Moreover, set-defined graph classes can be viewed as a dense analogue of classes of bounded degeneracy: a class has bounded degeneracy if and only if it is both weakly sparse\footnote{A class is weakly sparse if there exists $t \in \bN$ such that no graph in the class contains a complete bipartite graph $K_{t,t}$ as a subgraph.} and set-defined. 
It was also shown that classes of structurally bounded expansion are set-defined. The fact that all these classes are set-defined also follows from independent results showing that classes of bounded degeneracy~\cite{Cha2023,HWZ25} and classes of structurally bounded expansion~\cite{EHK22} admit EBLS.

Further classes are known to be set-defined. In particular, the structural result of~\cite{GPT22} was used in~\cite{HWZ25} to prove that edge-stable classes of bounded twin-width admit EBLS. Weakly sparse small classes are set-defined since they have bounded expansion~\cite{BDSZ25}, and edge-stable unit disk graphs are because they admit EBLS~\cite{HZ24}. 
Any edge-stable class excluding both a path and the bipartite complement of a path as semi-induced subgraphs is set-defined~\cite{dMPS25}. 
Edge-stable classes of bipartite graphs excluding a subdivision of a star and its bipartite complement as induced subgraphs are set-defined because they admit EBLS~\cite{HZ24}.  Finally, classes of graphs that are Boolean combinations of equivalence graphs are also set-defined~\cite{ABSZ24}.

All the classes mentioned above are known to be \chibounded, and in fact polynomially \chibounded. However, this property is not shared by all set-defined classes. In particular, the class of shift graphs, a classical example of a \chiunbounded class \cite{EH68}, is also set-defined.

\paragraph{Communication complexity and beyond.}
Set-defined graph classes admit a natural interpretation within communication complexity. 
A close connection between adjacency labeling schemes for hereditary graph classes and communication protocols in the randomized public-coin two-party communication model was established in~\cite{Har20}, using a probabilistic variant of labeling schemes.
In the important special case of constant-cost communication protocols, there is a tight correspondence: a communication problem admits such a protocol if and only if the associated hereditary graph class admits probabilistic adjacency labels of constant size, and vice versa~\cite{HWZ25}. 
Since set-defined graph classes admit equality-based labeling schemes, they in particular admit probabilistic adjacency labels of constant size, and hence correspond to constant-cost communication problems \cite{HWZ25}.

Constant-cost communication problems themselves form a rich and non-trivial class, with deep connections to areas such as operator theory and harmonic analysis \cite{HHH23}, differential privacy \cite{FX14}, and cryptography \cite{AN25}. They have been the subject of sustained recent study \cite{HHH23,FHHH24,HH24,HZ24,HWZ25,GHR25,AKL25} and are widely believed to exhibit strong but still poorly understood structural properties.
Set-defined graph classes correspond to a particularly well-structured subfamily of such problems: namely,  the communication problems that admit constant-cost \emph{deterministic} protocols with access to the Equality oracle~\cite{HWZ25}. 
To make this correspondence explicit, it is convenient to adopt the standard matrix representation of communication problems. In this representation, communication problems are modeled as (not necessarily symmetric) Boolean matrices, and permuting rows or columns does not affect the communication complexity of the corresponding problem. Interpreted graph-theoretically, such matrices are precisely the bipartite adjacency matrices of bigraphs, that is, bipartite graphs with an ordered bipartition. Accordingly, communication problems admitting constant-cost deterministic protocols with access to the Equality oracle can be associated with set-defined classes of bigraphs. We therefore refer to the corresponding classes of bipartite adjacency matrices as \emph{set-defined classes of matrices}.

The simplest examples of set-defined matrices are \emph{blocky matrices}.
These are blowups of partial permutation matrices, i.e.\ matrices obtained from permutation matrices by duplicating some rows and/or columns, and adding zero rows and/or zero columns; their corresponding bigraphs are disjoint unions of complete bipartite graphs.
Importantly, blocky matrices form a ``complete'' class for the family of set-defined classes of matrices in the sense that any matrix in such a class can be expressed as a fixed Boolean combination of blocky matrices.
In \cite{HHH23}, the \emph{blocky rank} of a matrix $M$ is defined as a minimum integer $r$ such that $M$ can be expressed as a linear combination (over $\mathbb{C}$) of $r$ blocky matrices. In particular, matrices of blocky rank 1 are exactly blocky matrices. It is not hard to show that for a Boolean matrix $M$, the blocky rank of $M$ is functionally equivalent to the minimum number of blocky matrices whose Boolean combination is equal to $M$ (the latter parameter is known as functional blocky rank \cite{AY24}). 
Consequently, classes of set-defined matrices are precisely those classes of Boolean matrices with bounded blocky rank.
This characterization yields a communication-complexity interpretation established in \cite{HHH23}: bounded blocky rank exactly captures communication problems admitting constant-cost deterministic protocols with access to the Equality oracle.

Beyond communication complexity, blocky rank has applications in cryptography~\cite{AN25},  circuit complexity~\cite{jukna2006graph,AY24}, and fine-grained complexity~\cite{williams2024orthogonal}.
We refer the reader to \cite{hambardzumyan2026spiky} for more details.

A further connection arises through the $\gamma_2$-norm, a matrix norm that was introduced to communication complexity by Linial and Shraibman \cite{LS07}.
The logarithm of the $\gamma_2$-norm lower bounds the cost of deterministic communication protocols with access to the Equality oracle \cite{HHH23}. Therefore, by the above discussion, classes of set-defined matrices have bounded $\gamma_2$-norm. The converse---if a class of matrices has bounded $\gamma_2$ norm, then the class is set-defined---is an open conjecture from \cite{HHH23}, where it was shown to be equivalent to an important open question in operator theory about characterization of idempotent Schur multipliers in terms of contractive idempotents. Furthermore, from the perspective of harmonic analysis, this conjecture can also be seen as analogue of Cohen’s idempotent theorem for the algebra of Schur multipliers \cite{HHH23}.

\paragraph{Logical perspective.}
In \cite{JiangNesOdM20} set-defined graph classes were introduced as those included in the set of finite induced subgraphs of a graph definable in $\mathbb{N}$, considered as an infinite set just equipped with equality (hence the name). 
This positions them within the landscape of definability in finite model theory. They have an orthogonal relationship to first-order (FO) transductions, which are the central tool for understanding the complexity of FO model checking. FO transductions are able to use a non-deterministic coloring and arbitrary first-order formulas, but define the edge relation on singletons. In contrast, set-defined graph classes cannot use  coloring and only use quantifier-free formulas, but can define the edge relation on tuples.

While graphs definable in an infinite set equipped with equality may seem limited, the ability to pass to a subclass means that the definition of set-defined classes would be 
the same if one more broadly considered graphs interpretable in the theory of equality, or even trace definable in the theory of equality in the sense of \cite{walsberg2021notes}. So the infinite structures giving rise to set-defined classes are richer than they first appear. The structures trace definable in the theory of equality have recently received attention in the context of infinite-domain constraint satisfaction problems \cite{bodirsky2025structures, bodirsky2025taking}, and Bodor has shown (personal communication) that they include a large class of structures studied by model-theorists (the $\omega$-categorical $\omega$-stable structures with disintegrated algebraic closure \cite{lachlan1987structures}).

\paragraph{Applications.}
Some specific set-defined graph classes play an important role in a variety of application areas.

De~Bruijn graphs of dimension $d$ are defined as follows: the vertices are all $d$-tuples over a fixed alphabet, and there is an edge from a vertex $u$ to a vertex $v$ if and only if the $(d-1)$-suffix of $u$ coincides with the $(d-1)$-prefix of $v$. These graphs arise naturally in several domains. In bioinformatics, they were introduced by \cite{PTW01} to address the problem of assembling sequencing reads into a genome, leading to practical tools \cite{ZB08} and to the successful assembly of large-scale human genome sequences \cite{LZR10}. Beyond bioinformatics, De~Bruijn graphs are used in the distributed hash table protocol Koorde \cite{KK03} and in time-series forecasting \cite{CKB25}.

The class of $d$-dimensional shift graphs is defined similarly to $d$-dimensional De~Bruijn graphs, with the key difference that vertices are $d$-tuples of strictly increasing integers. These graphs form a classical example of triangle-free graphs with large chromatic number \cite{EH68} and play an important role in a number of mathematical fields including poset theory \cite{FHRT92}, Ramsey theory \cite{DLR95}, and model theory \cite{halevi2022infinite}.

Another class of set-defined graphs is defined over $d$-tuples, where two vertices are adjacent if and only if the corresponding tuples share a common element in different positions. Lower bounds on the chromatic number of these graphs were used in \cite{AFCK25} to obtain tight bounds on the size of data structures for monotone minimal perfect hashing.

Finally, several graph classes of interest in discrete mathematics, consisting of graphs whose vertices correspond to $k$-element sets and whose edges are defined by prescribed relations between these sets, are also set-defined. Notable examples include generalized Kneser graphs, Johnson graphs, line graphs, and line graphs of $k$-uniform hypergraphs.

\section{Preliminaries}
For natural numbers $a,b$ with $a \leq b$, we denote by $[a,b]$ the set $\{ a, a+1, \ldots, b \}$. In the case $a=1$, we use the shorthand $[b]$ for $[1,b]$.
A \emph{discrete partition} of a set $S$ is a partition in which every element of $S$ forms a singleton set.
The multicolour Ramsey number $R(p_1,p_2,\ldots,p_t)$ is the minimum $n$ such that in any edge colouring of $K_n$ with $t$ colours, there exists a monochromatic clique of size $p_i$ for some $i \in [t]$. When all $p_i$ are equal to the same value $p$, we write $R_t(p)$ as a shorthand for the corresponding multicolour Ramsey number.
For a set $I \subseteq \bN$ we denote by $\bN^I$ the set of functions from $I$ to $\bN$, and by $\Ninj{I}$ the set of injective functions from $I$ to $\bN$; when $I=[d]$ for some $d \in \bN$, we write $\bN^d$ and $\Ninj{d}$ instead. We sometimes refer to elements of $\bN^I$ and $\Ninj{I}$ as \emph{vectors} and \emph{injective vectors}, respectively, indexed by the elements of $I$.
For $v \in \bN^I$ and non-empty set $S \subseteq I$, we denote by $v_{|S}$ the restriction of $v$ to $S$, and by $v(S)$ we denote the set of values of $v$ on $S$, i.e.\ $v(S) := \{ v_i : i \in S \}$. For brevity, when $S$ consists of a single element $i$, we write $v_i$ instead of $v_{|\{i\}}$. For a set $V \subseteq \bN^I$ we define $V_{|S} := \{ v_{|S} : v \in V \}$.
We say that $u,w \in \bN^I$ have the same \emph{order type} if $u_i < u_j \iff w_i < w_j$ for all $i,j \in I$.
A set $V$ is \emph{order-uniform} if all its elements are of the same order type. We say that $V$ is \emph{injective} if $V \subseteq \Ninj{I}$.

All undirected graphs in this paper are simple, i.e.\ without loops or multiple edges. Unless stated otherwise, directed graphs (digraphs) are without multiple edges in the same direction and without loops. If there is a directed edge from a vertex $u$ to a vertex $v$ in a digraph, we sometimes denote such an edge as $u \mapsto v$. The \emph{underlying graph} of a digraph $G$ is a simple graph obtained from $G$ after ignoring edge orientations and collapsing multiple edges to single edges. 
Given a (di)graph $G$, we write $V(G)$ for its vertex set, and $E(G)$ for its edge set. A (di)graph $H$ is a \emph{sub(di)graph} of $G$ if $V(H) \subseteq V(G)$ and $E(H) \subseteq E(G)$, and $H$ is an \emph{induced sub(di)graph} of $G$ if $V(H) \subseteq V(G)$, and $E(H)$ consists exactly of the edges in $E(G)$ with both endpoints in $V(H)$.
A digraph $G$ is connected if its underlying graph is connected. A connected component of $G$ is a maximal connected subgraph of $G$. 
For an undirected graph $G$, its \emph{clique number} $\omega(G)$ is the size of the largest complete subgraph in $G$, and its \emph{chromatic number} $\chi(G)$ is the smallest number of colours needed to colour the vertices of $G$ so that no two adjacent vertices share the same colour. 
For a digraph $G$, its clique number $\omega(G)$ and  chromatic number $\chi(G)$ are defined as the corresponding parameters of the underlying graph of $G$. 

A \emph{homomorphism} from a digraph $G$ to a digraph $H$ is a function $\rho : V(G) \rightarrow V(H)$ that preserves edges, i.e.\ $(u,v) \in E(G)$ implies $(\rho(u), \rho(v)) \in E(H)$ for every ordered pair of vertices $u,v \in V(G)$. We say that $G$ is \emph{homomorphic to $H$} if there exists a homomorphism from $G$ to $H$.
It is easy to see that if $G$ is homomorphic to $H$, then $\chi(G) \leq \chi(H)$ and $\omega(G) \leq \omega(H)$.

\subsection{Set-defined graphs}

Given $d \in \mathbb{N}$, a Boolean function $f : \{0,1\}^{d^2} \rightarrow \{0,1\}$, and a set $V \subseteq \mathbb{N}^d$, the \emph{realisation of $f$ over $V$} is the digraph $G=(V,E)$ with $(u,w) \in E$ if and only if 
\[
    f(Q_{u,w}(1,1), Q_{u,w}(1,2), \ldots, Q_{u,w}(d,d-1), Q_{u,w}(d,d)) = 1,
\]
where $Q_{u,w}(i,j) = 1$ if $u_i = w_j$, and $Q_{u,w}(i,j) = 0$ if $u_i \neq w_j$. If $F$ is a Boolean formula that represents $f$, we will also say that $G$ is a realisation of $F$ over $V$. 
If $G$ is the realisation of $F$ on $V$, we will also say that \emph{$G$ is induced by $F$ on $V$}.
For convenience, we will occasionally abuse notation and write $f(u,w)$ or $F(u,w)$ to denote $f(Q_{u,w}(1,1), Q_{u,w}(1,2), \ldots, Q_{u,w}(d,d))$. 
We will denote the variables of the function $f$ by $q_{i,j}$, and interpret them as indicators of equality between the $i$-th coordinate of the first vertex and the $j$-th coordinate of the second vertex. That is, the existence of an edge from $u$ to $w$ is determined by the value of $f$ when the variables $q_{i,j}$ are set to $Q_{u,w}(i,j)$ for all $i,j \in [d]$.

A digraph is a \emph{$(d,f)$-set-defined digraph} if it is isomorphic to a realisation of $f$ over some $V \subseteq \mathbb{N}^d$.
An (undirected) graph is a \emph{$(d,f)$-set-defined graph} if it is the underlying graph of a $(d,f)$-set-defined \emph{digraph}.
When analysing the structure of an arbitrary $(d,f)$-set-defined (di)graph $G=(V,E)$, we will often assume, without loss of generality, that $V \subseteq \bN^d$ and $G$ is a realisation of $f$ over $V$.

\begin{definition}[Set-defined classes]
    A class of digraphs is \emph{set-defined} if there exists $d \in \mathbb{N}$ and a Boolean function $f : \{0,1\}^{d^2} \rightarrow \{0,1\}$ such that every digraph in the class is $(d,f)$-set-defined. Such a class is \emph{full} if it consists of all $(d,f)$-set-defined digraphs.
    We denote by 
    \begin{enumerate}
        \item[$\dcX_f$] the full class of $(d,f)$-set-defined digraphs;
        \item[$\dcX_f'$] the class of $(d,f)$-set-defined digraphs that admit a realisation of $f$ over an \emph{injective} set $V \subseteq \Nd$;
        \item[$\dcY_f$] the class of $(d,f)$-set-defined digraphs that admit a realisation of $f$ over an \emph{injective order-uniform} set $V \subseteq \Nd$.
    \end{enumerate}
    From the definition, we have $\dcY_f \subseteq \dcX_f' \subseteq \dcX_f$.
    If $F$ is a Boolean formula representing $f$, we also write $\cX_F$, $\cX_F'$, and $\cY_F$ to denote the classes $\cX_f$, $\cX_f'$, and $\cY_f$, respectively.
\end{definition}

\subsection{Shift graphs}

Because shift (di)graphs will play a central role in our results, we record some basic facts and notation concerning them.

We recall the \emph{canonical $d$-dimensional shift digraph} $\vec{S}(n,d)$ from the introduction, and the corresponding undirected graph $S(n,d)$. An alternative definition is to define $\vec{S}(n,2)$ to be the directed line graph of a transitive tournament of size $n$, and then to define $\vec{S}(n,d+1)$ to be the directed line graph of $\vec{S}(n,d)$ (see \cite[Lemma 2.20]{hell2004graphs}). (We recall that the directed line graph of a digraph $G = (V, E)$ is the digraph with vertex set $E$ and an edge from $(a, b)$ to $(c, d)$ if $b = c$.)

For a fixed $d$, the induced subdigraphs of the digraphs $\vec{S}(n,d)$ for all $n \geq 1$ form the class of \emph{$d$-dimensional shift digraphs}, which we denote $\dcS_d$, and the corresponding undirected graphs form the class of \emph{$d$-dimensional shift graphs}, which we denote $\cS_d$.

\begin{fact} \label{fact:shift}
    \begin{enumerate}
        \item Each $\cS_d$ has odd-girth $2d+1$, and in particular is triangle-free.
        \item Each $\dcS_d$ has unbounded chromatic number. In particular\footnote{The superscript $d-1$ indicates an iterated logarithm, e.g. $\log^{(2)}_2 n =\log_2\log_2 n$}, $\chi(\vec{S}(n,d)) \geq \log^{(d-1)}_2 n $.
        \item For each $d$, $\dcS_d \supsetneq \dcS_{d+1}$.
    \end{enumerate}
\end{fact}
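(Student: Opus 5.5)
We prove the three parts of \cref{fact:shift} separately, working throughout with the description of $\vec{S}(n,d)$ as the $(d-1)$-fold iterated directed line graph of the transitive tournament $T_n$ on $[n]$.

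\emph{Part 1 (odd girth).} Assign to each vertex $u=(u_1,\dots,u_d)$ the value $u_1$. Along a directed edge $u\mapsto w$ we have $w_1=u_2>u_1$. More generally, a directed edge shifts the window by one, so $w_j=u_{j+1}$. Take a closed walk in the underlying graph of length $\ell$, with $a$ forward and $b$ backward steps, so $\ell=a+b$. Track the ``offset'' $k$: after the walk the tuple must return to itself, and along the way each coordinate of the current vertex equals a coordinate of the start-relative infinite sequence indexed by the current offset. A cleaner route is the standard one: call a closed walk \emph{consistent} if the shift offset returns to $0$. Any closed walk with net offset $a-b=0$ has even length, so an odd cycle needs $a\neq b$. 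We argue that $|a-b|$ must be at least $d+1$. If $0<|a-b|\le d$, the overlapping windows force some coordinate to equal a strictly larger coordinate of the same tuple, contradicting strict increase. Hence an odd closed walk has length $\ell\ge |a-b|\ge d+1$ with $\ell\equiv a-b \pmod 2$. A more careful count along the windows sharpens this to $\ell\ge 2d+1$. Finally, the walk $u^{(0)},\dots$ going $d$ steps forward along a sequence $1<2<\dots<2d$ and returning backward realises a cycle of length $2d+1$, so the bound is attained. In particular, since $2d+1\ge 5$, the graphs in $\cS_d$ are triangle-free.

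\emph{Part 2 (chromatic number).} We use the classical fact that for any digraph $G$, $\chi(L(G))\ge \log_2\chi(G)$, where $L(G)$ is the directed line graph. The proof: given a proper $k$-colouring $c$ of $L(G)$, assign to each vertex $v$ of $G$ the set of colours on its out-edges. If $(v,w)$ is an edge with colour $\alpha$, then $\alpha$ lies in the set for $v$. However, $\alpha$ cannot lie in the set for $w$: an out-edge $(w,x)$ coloured $\alpha$ would be adjacent in $L(G)$ to $(v,w)$, contradicting properness. So the sets for $v$ and $w$ differ, and this gives a proper $2^k$-colouring of $G$. Starting from $\chi(T_n)=n$ and applying this inequality $d-1$ times yields $\chi(\vec{S}(n,d))\ge\log_2^{(d-1)}n$, which is unbounded in $n$.

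\emph{Part 3 (strict inclusion).} For the inclusion, map $\vec{S}(n,d+1)$ into $\vec{S}(N,d)$ by embedding the vertex set $E(\vec{S}(n,d))$ as follows. Send $(u_1,\dots,u_{d+1})$ to a $d$-tuple encoding its consecutive pairs: fix an order-preserving bijection $\pi$ from pairs $(u_j,u_{j+1})$ with $u_j<u_{j+1}$ (ordered lexicographically) to $[N]$, and send $u$ to $(\pi(u_1,u_2),\dots,\pi(u_d,u_{d+1}))$. This tuple is increasing, and the map is injective. Shift adjacency is preserved, and also reflected, because $\pi$ is injective, so the embedding is induced. For strictness, Part 1 shows that $\cS_d$ contains $C_{2d+1}$ while $\cS_{d+1}$ has odd girth $2d+3$; hence $\vec{S}(n,d)$ for large $n$ is not in $\dcS_{d+1}$.

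The main obstacle is the precise odd-girth bound in Part 1, namely establishing the sharp value $2d+1$ rather than a weaker bound.
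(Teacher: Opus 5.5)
\paragraph{Parts 2 and 3.} Part 2 is fine. Your colour-set argument is the standard proof of $\chi(L(G))\ge\log_2\chi(G)$, which the paper cites from Hell--Ne\v{s}et\v{r}il. The inclusion in Part 3 is also correct. Your lexicographic pair encoding is an explicit version of the paper's argument: $\vec S(n,d+1)$ is the $(d-1)$-fold directed line graph of the acyclic digraph $\vec S(n,2)$, and lex order is a linear extension of it.

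\paragraph{The gap in Part 1.} The genuine gap is the odd-girth lower bound. Strictness in Part 3 uses exactly this bound for $\cS_{d+1}$, so it inherits the gap.

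Your intermediate claim, that a closed walk with $0<|a-b|\le d$ is impossible, is false. In $\vec S(5,2)$ the induced $5$-cycle $(1,2)\to(2,3)\to(3,4)\to(4,5)\leftarrow(2,4)\leftarrow(1,2)$ has $a=3$, $b=2$. The extremal cycle for general $d$ likewise has $a=d+1$ and $b=d$. The windows argument fails because coordinates that slide out of the window are free to change, so nothing forces agreement when the window slides back. The sentence ``a more careful count sharpens this to $2d+1$'' is where all the content of Part 1 lies, and it is not supplied.

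Your tightness construction is also misstated. With $d$ forward steps inside $[2d]$ the walk cannot close; for $d=2$ the underlying graph of $S(4,2)$ is a forest. What works is $d+1$ forward steps, $(1,\dots,d)\to\cdots\to(d+2,\dots,2d+1)$, followed by $d$ backward steps through $(d-j+1,\dots,d,d+2,\dots,2d+1-j)$ for $j=1,\dots,d-1$.

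\paragraph{A repair.} You can fix Part 1 with the same line-graph viewpoint you use in Part 2. The claim is: if $G$ is acyclic and its underlying graph has odd girth at least $g$, then $L(G)$ is acyclic and its underlying graph has odd girth at least $g+2$.

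Let $e_0,\dots,e_{\ell-1}$ be an odd cycle in the underlying graph of $L(G)$. Let $v_i$ be the vertex shared by $e_i$ and $e_{i+1}$; it is the head of one and the tail of the other, and it is unique since $G$ has no directed $2$-cycles. Set $\sigma_i=+1$ if $v_i$ is the head of $e_i$ and $\sigma_i=-1$ otherwise. Then the step from $v_{i-1}$ to $v_i$
\begin{itemize}
\item traverses $e_i$ forwards when $\sigma_{i-1}=\sigma_i=+1$,
\item traverses $e_i$ backwards when $\sigma_{i-1}=\sigma_i=-1$,
\item stays put ($v_{i-1}=v_i$) exactly when $\sigma$ changes sign.
\end{itemize}
The number $s$ of sign changes around the cycle is even. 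If $s=0$, you get a directed closed walk in $G$, which is impossible. If $s\ge 2$, you get an odd closed walk of length $\ell-s\le \ell-2$ in the underlying graph of $G$, so $\ell\ge g+2$.

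Starting from the transitive tournament $T_n$, whose underlying graph has odd girth $3$, and iterating $d-1$ times gives odd girth at least $2d+1$ for $\cS_d$. Together with the corrected cycle this gives odd girth exactly $2d+1$, which is what your strictness argument in Part 3 needs.
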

\begin{proof}
    \begin{enumerate}
        \item The lower bound is given in \cite[Lemma 2.22]{hell2004graphs}. That $S(2d+2, d)$ contains $C_{2d+1}$ can be seen directly.
        \item This follows from the definition of $\vec{S}(n,d)$ as the iterated directed line graph of a tournament, and the lower bound on the chromatic number of a directed line graph of a digraph in \cite[Lemma 2.21]{hell2004graphs}.
        \item From the definition of $\vec{S}(n,d)$ in terms of iterated directed line graphs, we see that for every $d \geq 2$, the $(d-1)$-iterated directed line graph of any acyclic digraph is an induced subgraph of $\vec{S}(n, d)$ for some $n$. For $d \geq 2$, $\vec{S}(n,d+1)$ is the $(d-1)$-iterated directed line graph of the acyclic digraph $\vec{S}(n,2)$, and so is in $\dcS_d$. Thus $\dcS_d \supseteq \dcS_{d+1}$, and the strictness of the inclusion follows from the odd-girth in part 1.
        \qedhere
    \end{enumerate}
\end{proof}

The last point of Fact \ref{fact:shift} justifies our convention of calling $\dcS_2$ the class of shift digraphs.

Finally, we note the following example referred to in Section \ref{ssec:related_work}.

\begin{example} \label{ex:shift_diag}
    We give an example of a hereditary class $\cS$ of shift graphs with unbounded chromatic number that does not contain $\cS_d$ for any $d$. Let $t_i(x)$ be the tower function defined by $t_1(x) = x$ and $t_{i+1}(x) = 2^{t_i(x)}$. For each $k \geq 2$, let $G_k = S(t_k(k),k)$, and let $\cS$ be the hereditary closure of $\{G_k : k \geq 2\}$. By Fact \ref{fact:shift}(2), $\chi(G_k) \geq k$. By (the proof of) Fact \ref{fact:shift}(1), $S(2d+2, d)$ contains $C_{2d+1}$ and so is not an induced subgraph of $S(n', d')$ for any $d' > d$ and any $n'$. Let $n_d  = \max(2d+2, t_d(d)+1)$. Then for each $d \geq 2$, $S(n_d, d) \not\in \cS$.

    Similar reasoning gives continuum-many such classes that form an antichain under inclusion. Note that for $k \geq 3$, $G_k$ is the largest graph in $\cS$ that contains $C_{2k+1}$, and so $\{G_k : k \geq 3\} \subset \cS$ forms an of antichain of maximal elements under the induced subgraph embedding. 
    Given $X \subset \{k \in \N : k \geq 3\}$, let $\cS_X = \cS - \{G_k : k \in X\}$. Then each $\cS_X$ is still a hereditary class, and has unbounded chromatic number. Also, $\cS_X$ and $\cS_{X'}$ are incomparable if $X$ and $X'$ are, and there exist continuum-many inclusion-incomparable subsets of $\N$. However, we note that the single \chiunbounded class $\cS_\N$ is contained in all these classes.
\end{example}

A different construction of such classes is given in Section \ref{sec:stability}.

\subsection{Boolean functions, DNFs, and clause digraphs}

Let $f : \{0,1\}^{d^2} \rightarrow \{0,1\}$ be a Boolean function with variables $q_{i,j}$, $i,j \in [d]$. Let $F = \bigvee_{i=1}^k C_i$ be a propositional formula in disjunctive normal form (DNF) representing $f$, where each $C_i$ is a conjunction of literals (i.e.\ variables or their negations). We will refer to these conjunctions as \emph{clauses} of $F$. A clause is \emph{full}\footnote{In Boolean algebra, this is also known as \emph{minterm}.}
if it contains exactly one literal for each of the $d^2$ variables. A DNF is \emph{full} if each of its clauses is full. It is a basic fact that any Boolean function admits a full DNF.
We say that a full DNF or a full clause is \emph{$d$-dimensional} if it over $d^2$ variables.

Given a full clause $C$, the \emph{clause digraph} $\ClauseDigraph_C$ of $C$ is a  digraph, possibly with loops, on vertex set $[d]$, where for every $i,j \in [d]$ there is an edge from $i$ to $j$ if and only if $q_{i,j}$ appears positively in $C$.
Observe that any digraph on $[d]$ defines a full clause over the variables $q_{i,j}$, $i, j \in [d]$. That is, there is a natural one-to-one correspondence between full clauses and $d$-vertex digraphs.

Let $\cP$ be a partition of $[d]$. We say that a digraph on vertex set $[d]$ is \emph{$\cP$-compatible} if the in-neighborhood of every vertex is an empty set or a $\cP$-class and the out-neighborhood of every vertex is an empty set or a $\cP$-class. 
A full clause $C$ is \emph{$\cP$-compatible} if its clause digraph $\ClauseDigraph_C$ is $\cP$-compatible.
If every vertex of $\ClauseDigraph_C$ has at most one in-neighbor and at most one out-neighbor, we call $C$ \emph{injective}\footnote{Such a digraph $\ClauseDigraph_C$ can be viewed as the functional digraph of a \textit{partial injective} function; we omit the word ``partial'' for brevity.}.
Clearly, in the clause digraph of an injective clause every connected component corresponds to either a directed path or a directed cycle.
We say that an injective clause $C$ is
\begin{enumerate}
  \item \emph{connected} if the underlying graph of $\ClauseDigraph_C$ is connected;
    \item \emph{discrete} if every connected component of the underlying graph of $\ClauseDigraph_C$ is a single-vertex graph;

    \item a \emph{path} clause if every non-trivial connected component (i.e.\ connected component with at least 2 vertices) of $\ClauseDigraph_C$ is a directed path, and there exists at least one non-trivial connected component;

    \item a \emph{cyclic} clause if $\ClauseDigraph_C$ contains a directed cycle with at least 2 vertices;

    \item an \emph{acyclic} clause if it is not cyclic. 
\end{enumerate}

See \cref{fig:clause-type} for illustration of different types clause digraphs. A DNF is injective, discrete, acyclic, if it is full and each of its clauses is respectively injective, discrete, acyclic.

Given an injective clause $C$, we say that $i \in [d]$ is a \emph{loop coordinate with respect to $C$}, if $i$ is an isolated vertex with a loop in $\ClauseDigraph_C$. A path clause is called \emph{loopless} if it contains no loop coordinates.

\begin{figure}[htbp]
    \centering
    \begin{subfigure}[t]{0.24\textwidth}
        \captionsetup{margin={-1cm,0cm}}
        \centering
        \begin{tikzpicture}[
    ->,
    >=Stealth,
    circnode/.append style={inner sep=0pt, minimum size=5pt},
    every label/.append style={font=\scriptsize, inner sep=1pt},
]

\path[use as bounding box] (-0.6,-1.8) rectangle (2.6,0.6);

\node[circnode,label=left:$1$] (d1) at (0,0) {};
\node[circnode,label=right:$2$] (d2) at (1,0) {};

\node[circnode,label=left:$3$] (d3) at (0,-1.2) {};
\node[circnode,label=right:$4$] (d4) at (1,-1.2) {};

\draw (d1) -- (d2);
\draw (d2) -- (d3);
\draw (d3) -- (d4);

\end{tikzpicture}
        \caption{Connected, path, loopless}
    \end{subfigure}
    \hspace{-0.5cm}
    \begin{subfigure}[t]{0.24\textwidth}
        \captionsetup{margin={-1.2cm,0cm}}
        \centering
        \begin{tikzpicture}[ ->, >=Stealth,    
    circnode/.append style={inner sep=0pt, minimum size=5pt}, 
    every label/.append style={font=\scriptsize, inner sep=1pt}, 
] 
    \path[use as bounding box] (-0.6,-1.8) rectangle (2.6,0.6);
    
    \node[circnode,label=left:$1$] (b1) at (0,0) {}; 
    \node[circnode,label=right:$2$] (b2) at (1,0) {}; 
    \node[circnode,label=left:$3$] (b3) at (0,-1.2) {}; \node[circnode,label=right:$4$] (b4) at (1,-1.2) {}; 
    
    \draw (b1) edge[loop above] (b1); 
    \draw (b2) edge[loop above] (b2); 
    \draw (b3) edge[loop above] (b3);
\end{tikzpicture}
        \caption{Discrete}
    \end{subfigure}
    \hspace{-0.5cm}
    \begin{subfigure}[t]{0.24\textwidth}
        \centering







\begin{tikzpicture}[
    ->,
    >=Stealth,
    circnode/.append style={inner sep=0pt, minimum size=5pt},
    every label/.append style={font=\scriptsize, inner sep=1pt},
]

\path[use as bounding box] (-0.5,-1.65) rectangle (2.45,0.45);

\node[circnode,label={[xshift=-1pt]left:$1$}] (a1) at (0,0) {};
\node[circnode,label={[yshift=1pt]right:$2$}] (a2) at (1,0) {};
\node[circnode,label={[xshift=1pt]right:$3$}] (a3) at (2,0) {};

\node[circnode,label={[xshift=-1pt]left:$4$}] (a4) at (0,-1.2) {};
\node[circnode,label={[yshift=-1pt]right:$5$}] (a5) at (1,-1.2) {};
\node[circnode,label={[xshift=1pt]right:$6$}] (a6) at (2,-1.2) {};

\draw (a1) -- (a2);
\draw (a2) -- (a5);
\draw (a5) -- (a4);
\draw (a4) -- (a1);

\draw (a6) -- (a3);

\end{tikzpicture}
        \caption{Cyclic}
    \end{subfigure}
    \hspace{0.5cm}
    \begin{subfigure}[t]{0.24\textwidth}
        \centering
        \begin{tikzpicture}[
    ->,
    >=Stealth,
    circnode/.append style={inner sep=0pt, minimum size=5pt},
    every label/.append style={font=\scriptsize, inner sep=1pt},
]

\path[use as bounding box] (-0.6,-1.8) rectangle (2.6,0.6);

\node[circnode,label=above:$1$] (c1) at (0,0) {};
\node[circnode,label=above:$2$] (c2) at (1,0) {};
\node[circnode,label=above:$3$] (c3) at (2,0) {};

\node[circnode,label=below:$4$] (c4) at (0,-1.2) {};
\node[circnode,label=below:$5$] (c5) at (1,-1.2) {};
\node[circnode,label=below:$6$] (c6) at (2,-1.2) {};

\draw (c1) -- (c2);
\draw (c2) -- (c3);

\draw (c5) -- (c4);

\draw (c6) edge[loop above] (c6);

\end{tikzpicture}
        \caption{Path, acyclic}
    \end{subfigure}
\caption{Examples of clause digraphs corresponding to different types of injective clauses.}
\label{fig:clause-type}
\end{figure}

Let $C$ be a clause over the variables $q_{i,j}$, $i,j \in [d]$, and let $I \subseteq [d]$. We denote by $C_{|I}$ the clause obtained from $C$ by removing literals corresponding to variables $q_{i,j}$ with at least one of the two indices $i$ and $j$ being outside $I$.

\begin{lemma} \label{lem:subclause}
	Let $C$ be a clause, $I \subseteq [d]$, and let $V \subseteq \bN^d$.
    Let $G$ be the realisation of $C$ over $V$ and $G'$ the realisation of $C_{|I}$ over $V_{|I}$.
    Then $G$ is homomorphic to $G'$.
\end{lemma}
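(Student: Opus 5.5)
The natural candidate is the restriction map $\rho \colon V \to V_{|I}$, $\rho(v) = v_{|I}$; I will check that it sends edges of $G$ to edges of $G'$.

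Take an edge $(u,w) \in E(G)$, so $C(u,w) = 1$. Every literal of $C$ is then satisfied by the pattern $Q_{u,w}$: for each positive literal $q_{i,j}$ we have $u_i = w_j$, and for each negative literal $\neg q_{i,j}$ we have $u_i \neq w_j$. The clause $C_{|I}$ consists only of those literals of $C$ whose indices $i,j$ both lie in $I$. For such indices, $(u_{|I})_i = u_i$ and $(w_{|I})_j = w_j$, so $Q_{\rho(u),\rho(w)}(i,j) = Q_{u,w}(i,j)$. Hence every literal of $C_{|I}$ is satisfied by the pair $(\rho(u),\rho(w))$, which gives $C_{|I}(\rho(u),\rho(w)) = 1$.

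There is one subtlety: $G'$ has no loops, so we must rule out $\rho(u) = \rho(w)$ for an edge $(u,w)$. I see two ways to handle this. First, if $C_{|I}$ contains a negative literal $\neg q_{i,i}$ with $i \in I$, then $\rho(u) \neq \rho(w)$ is forced automatically. Second, if collisions can occur, I would read $G'$ as the realisation allowing loops, or note that the intended use (bounding chromatic number) tolerates this. I expect the paper either implicitly permits loops in realisations or that the intended applications avoid this case.

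Apart from this point, the argument is a one-line check. It uses only the fact that a clause is a conjunction of literals, each depending on a single equality $u_i = w_j$, so no earlier results are needed.
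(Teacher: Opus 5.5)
Your proof is correct and is the same as the paper's: the restriction map $v \mapsto v_{|I}$ preserves edges, because the literals of $C_{|I}$ are exactly the literals of $C$ with both indices in $I$, and these evaluate identically on $(u,w)$ and on $(u_{|I},w_{|I})$. The paper's proof does not mention the loop issue you raise. Your first resolution covers every use of the lemma: there $I$ is a cycle or path component of $\Delta_C$, or the clause is $P_{|S}$ with $P$ loopless, so $C_{|I}$ always contains some $\neg q_{i,i}$ with $i \in I$. Your fallback of allowing loops in $G'$ would not help, because an edge of $G$ sent to a loop destroys exactly the conclusions the lemma is used for ($\chi(G) \le \chi(G')$, or ``$G$ is homomorphic to an edgeless or shift digraph'').
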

\begin{proof}
    We claim that $v \mapsto v_{|I}$ is a homomorphism from $G$ to $G'$.
    Let $u,v$ be arbitrary vertices in $V$.
    If there is an edge $(u,v)$ in $G$, then all literals of $C$ corresponding to the variables $q_{i,j}$, with $i,j \in I$, must be satisfied by $(u,v)$. This means that all literals of $C_{|I}$ must be satisfied by $(u_{|I},v_{|I})$, i.e.\ $(u_{|I},v_{|I})$ is an edge in $G'$.
\end{proof}

\subsection{Boolean functions of graphs and graph classes}\label{sec:bool-fun-graph-classes}

Let $G, H_1, H_2, \ldots, H_k$ be $n$-vertex (di)graphs on the same vertex set $V$. 
A (di)graph $G$ is said to be a \emph{Boolean combination} of $H_1, H_2, \ldots, H_k$, if there exists a Boolean function
$f : \{0,1\}^k \rightarrow \{0,1\}$ such that 
\begin{equation}\label{eq:fun-def}
    G(a,b) = f\left( H_1(a,b), H_2(a,b), \ldots, H_k(a,b) \right)
\end{equation}
holds for all distinct $a,b \in V$. 
In other words, the adjacency matrix of $G$ is obtained by applying $f$ to the adjacency matrices of $H_1, H_2, \ldots, H_k$ entry-wise. 
Abusing notation, we will sometimes write $G = f(H_1, H_2, \ldots, H_k)$.
For example, we write $G = H_1 \vee H_2 \vee \cdots \vee H_k$ to denote that $G$ is the disjunction (i.e. the union) of $H_1, H_2, \ldots, H_k$.

\begin{lemma}\label{lem:omega-of-union}
    Let $G=(V,E)$, and $H_i=(V,E_i)$, $i \in [k]$, be such that $G = \bigvee_{i=1}^k H_i$. 
    If $\omega(H_i) \leq c_i$ for every $i \in [k]$, then $\omega(G) < R(c_1+1, c_2+1, \ldots, c_k+1)$.
\end{lemma}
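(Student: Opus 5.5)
The plan is a direct multicolour Ramsey argument on a largest clique of $G$. Set $N := R(c_1+1, c_2+1, \ldots, c_k+1)$ and suppose, for contradiction, that $\omega(G) \geq N$. Then $G$ contains a clique $K \subseteq V$ with $|K| = N$; in the digraph case this means that for every two distinct $a,b \in K$ at least one of $(a,b)$, $(b,a)$ is an edge of $G$.

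The first step is to colour the pairs of $K$. Since $G = \bigvee_{i=1}^k H_i$, for every edge $(a,b)$ of $G$ with $a \neq b$ there is an index $i$ with $H_i(a,b) = 1$. For each unordered pair $\{a,b\} \subseteq K$, I fix some orientation $(a,b)$ that is an edge of $G$ and colour $\{a,b\}$ by the smallest $i \in [k]$ with $(a,b) \in E_i$. This gives an edge colouring of the complete graph on $K$ with $k$ colours. By the definition of $N$, there is an index $i$ and a set $K' \subseteq K$ with $|K'| = c_i+1$ all of whose pairs receive colour $i$.

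The second step is to read off the contradiction. Every pair of $K'$ is joined by an edge of $H_i$ in at least one direction, so $K'$ is a clique of size $c_i+1$ in the underlying graph of $H_i$. This contradicts $\omega(H_i) \leq c_i$, and hence $\omega(G) \leq N-1 < N$.

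There is no real obstacle. The only point needing care is the directed setting: the clique number there refers to the underlying graph, so the colour of a pair must be taken from an actual directed edge between the two vertices, and a clique in the underlying graph of $H_i$ only requires one direction per pair. The degenerate case $c_i = 0$ needs no separate treatment, since $R(\ldots,1,\ldots) = 1$ and the argument still goes through.
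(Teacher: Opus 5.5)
Your proof is correct and is essentially the paper's argument. Like the paper, you colour each pair of a large clique of $G$ by an index $i$ for which that pair is an edge of $H_i$ (ties broken arbitrarily), then apply the multicolour Ramsey theorem to get a clique of size $c_i+1$ in some $H_i$, a contradiction; your remarks on the directed case and on $c_i=0$ are sound additional care.
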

\begin{proof}
    Assign to every edge of $G$ one of the colours in $[k]$ depending on which of the graphs $H_1$, $H_2$, \ldots, $H_{k-1}$, or $H_k$ contributed that edge to the union (breaking ties arbitrarily). Then, by Ramsey's theorem, if $\omega(G) \geq R(\omega(H_1)+1, \omega(H_2)+1, \ldots, \omega(H_k)+1)$, then there exists $i \in [k]$ such that $H_i$ contains a clique of size $\omega(H_i)+1$, which is clearly impossible.
\end{proof}

For graph classes $\cX_1, \cX_2, \ldots, \cX_k$ and a Boolean function
$f : \{0,1\}^k \rightarrow \{0,1\}$, we write $f(\cX_1, \cX_2, \ldots, \cX_k)$ to denote the class of graphs $\{G=f(H_1, H_2, \ldots, H_k) : H_i=(V,E_i) \in \cX_i \}$.
It is easy to see that if all $\cX_i$, $i \in [k]$, are hereditary
then $f(\cX_1, \cX_2, \ldots, \cX_k)$ is also hereditary. 
We denote 
$\bigvee_{i=1}^k \cX_i := \big\{ H_1 \vee H_2 \vee \cdots \vee H_k : H_i=(V,E_i) \in \cX_i, i \in [k] \big\}$, and call this class the union of $\cX_1, \cX_2, \ldots, \cX_k$.

\begin{observation}\label{lem:union}
    Let $F_1, F_2, \ldots, F_k$ be Boolean formulas and $F = \bigvee_{i=1}^k F_i$. Let $V \subseteq \bN^d$ and $G_i$ be the realisation of $F_i$ over $V$ for every $i \in [k]$. Then the realisation of $F$ over $V$ is equal to $\bigvee_{i=1}^k G_i$. In particular, $\cX_F \subseteq \bigvee_{i=1}^k \cX_{F_i}$ and $\cY_F \subseteq \bigvee_{i=1}^k \cY_{F_i}$.
\end{observation}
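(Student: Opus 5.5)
The first part of the observation is a direct unfolding of the definitions. Fix $V \subseteq \bN^d$ and an ordered pair $(u,w)$ of distinct elements of $V$. By definition of realisation, $(u,w)$ is an edge of the realisation of $F$ over $V$ exactly when $F(u,w)=1$, that is, when $F$ evaluates to $1$ on the assignment $q_{i,j}\mapsto Q_{u,w}(i,j)$. Since $F=\bigvee_{i=1}^k F_i$, this holds if and only if $F_i(u,w)=1$ for some $i\in[k]$. That in turn says $(u,w)\in E(G_i)$ for some $i$. So the edge sets coincide, and hence the realisation of $F$ over $V$ equals $G_1\vee\cdots\vee G_k$ as a Boolean combination on the common vertex set $V$, in the sense of \eqref{eq:fun-def} with $f$ the $k$-ary disjunction.

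For the class inclusions, take any $G\in\cX_F$. By definition $G$ is isomorphic to the realisation of $F$ over some $V\subseteq\bN^d$, so without loss of generality $G$ is that realisation. By the first part, $G=\bigvee_i G_i$, where $G_i$ is the realisation of $F_i$ over the same $V$. Each $G_i$ is $(d,F_i)$-set-defined, so $G_i\in\cX_{F_i}$, and all the $G_i$ share the vertex set $V$. Hence $G\in\bigvee_{i=1}^k\cX_{F_i}$; closure under isomorphism of the union class handles the identification.

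The same argument gives the inclusion for $\cY$. If $G\in\cY_F$, we may choose $V$ injective and order-uniform. Then each $G_i$, being realised over this same $V$, lies in $\cY_{F_i}$, so $G\in\bigvee_{i=1}^k\cY_{F_i}$.

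There is no real obstacle here. The only points needing care are bookkeeping: the union of classes is defined over a common vertex set, and here that set is $V$ itself. The injective and order-uniform condition is a property of $V$ alone, so it passes unchanged from $G$ to each $G_i$.
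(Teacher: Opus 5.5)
Your argument is correct and is exactly the intended one. The paper records this as an observation without proof, and all that is needed is an edge-by-edge unfolding of the definition of realisation over the common vertex set $V$, together with your remark that injectivity and order-uniformity are properties of $V$ alone.
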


\subsection{\texorpdfstring{$\chi$}{chi}-boundedness}

A class $\cX$ is \emph{\chibounded} if there exists a function $h : \bN \rightarrow \bN$ such that $\chi(H)\leq h(\omega(H))$ holds for every induced subgraph $H$ of a graph $G \in \cX$; 
in this case, the function $h$ is called a \emph{$\chi$-binding function} for $\cX$.
If $h$ can be chosen to be polynomial (resp.\ linear), then $\cX$ is \emph{polynomially} (resp.\ \emph{linearly}) \chibounded. If $\cX$ is not \chibounded, we say that it is \chiunbounded.
When we consider a function which is $\chi$-binding for some class of graphs, without loss of generality, we always assume that such a function is non-decreasing.
We will use the following result from \cite{Gya87}, showing that a disjunction of \chibounded classes is also \chibounded.

\begin{lemma}[{\cite[Proposition 5.1(a)]{Gya87}}]
\label{lem:union-chi-bounded}
    Let $\cX_1, \cX_2, \ldots, \cX_k$ be $\chi$-bounded classes of graphs with
    $\chi$-binding functions $h_1, h_2, \ldots, h_k$ respectively. Then the class $\bigvee_{i=1}^k \cX_i$ is \chibounded and $g(w)=\prod_{i=1}^k h_i(w)$ is a suitable $\chi$-binding function. 
\end{lemma}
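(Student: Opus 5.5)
The plan is to prove the statement by the standard product-coloring argument, combined with \cref{lem:omega-of-union} to control clique numbers. Let $G = H_1 \vee \cdots \vee H_k$ with $H_i \in \cX_i$ all on a common vertex set $V$. Since the class $\bigvee_{i=1}^k \cX_i$ is hereditary, it suffices to bound $\chi(G)$ by $\prod_i h_i(\omega(G))$ for every such $G$; the induced-subgraph clause in the definition of $\chi$-boundedness then follows automatically. The reason is that an induced subgraph $G[U]$ equals $\bigvee_i H_i[U]$, and each $H_i[U]$ lies in $\cX_i$ because each $\cX_i$ is hereditary.

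The first step compares clique numbers. Each $H_i$ is a spanning subgraph of $G$ (same vertex set, $E(H_i) \subseteq E(G)$), so every clique of $H_i$ is a clique of $G$ and hence $\omega(H_i) \le \omega(G)$. Since the $h_i$ are assumed non-decreasing, we get $\chi(H_i) \le h_i(\omega(H_i)) \le h_i(\omega(G))$ for each $i$.

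The second step builds the coloring. Fix a proper coloring $c_i \colon V \to [h_i(\omega(G))]$ of $H_i$ for each $i$, and define $c(v) = (c_1(v), \ldots, c_k(v))$. If $ab$ is an edge of $G$, then it is an edge of some $H_i$, so $c_i(a) \neq c_i(b)$ and therefore $c(a) \neq c(b)$. Thus $c$ is a proper coloring of $G$ using at most $\prod_{i=1}^k h_i(\omega(G))$ colors, which gives the claimed binding function $g$.

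There is no real obstacle here. The only point needing care is that the union is taken over a common vertex set, which is what makes $\omega(H_i) \le \omega(G)$ valid; the Ramsey bound of \cref{lem:omega-of-union} is not needed for this direction.
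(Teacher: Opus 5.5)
Your proof is correct. The paper gives no proof of its own and simply cites Gyárfás, and your argument is the standard one: each $H_i$ is a spanning subgraph of $G$, so $\omega(H_i)\le\omega(G)$, and the product coloring gives $\chi(G)\le\prod_{i=1}^k\chi(H_i)\le\prod_{i=1}^k h_i(\omega(G))$.

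As you note yourself, the Ramsey bound from \cref{lem:omega-of-union} plays no role, so you can drop the reference to it in your opening sentence.
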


\subsection{Functional sets}\label{sec:functional-sets}

Let $d \in \bN$ and $L \subsetneq [d]$.
Denote $\ell := |L|$, $\overline{L} := [d] \setminus L$, and let $\lambda \in \overline{L}$.
A set $U \subseteq \bN^d$ is \emph{$(L,\lambda)$-functional}, if for every $v \in U$, the value $v_{\lambda}$ is determined by the vector $v_{|L}$, i.e.\ there exists a function $\varphi \colon \bN^{\ell} \rightarrow \bN$ such that $v_{\lambda} = \varphi(v_{|L})$ holds for every $v \in U$.
Note, if $U$ is $(\emptyset, \lambda)$-functional, then all vectors in $U$ have the same $\lambda$-th coordinate.
A set $U \subseteq \bN^d$ is \emph{$L$-functional} if it is $(L,\lambda)$-functional for some $\lambda \in \overline{L}$.

More generally, let $k \in \bN$, and $L_s \subsetneq [d]$ and $\lambda_s \in \overline{L_s}$ for $s \in [k]$. A set $U \subseteq \bN^d$ is \emph{$((L_1,\lambda_1),(L_2,\lambda_2), \ldots, (L_k,\lambda_k))$-functional} if it is $(L_s,\lambda_s)$-functional for every $s \in [k]$, in such a case we refer to $((L_1,\lambda_1),(L_2,\lambda_2), \ldots, (L_k,\lambda_k))$ as \emph{functional constraints} of $U$. A set $U \subseteq \bN^d$ is 
$(L_1, L_2, \ldots, L_k)$-functional if it is $((L_1,\lambda_1),(L_2,\lambda_2), \ldots, (L_k,\lambda_k))$-functional for some $\lambda_s \in \overline{L_s}$, $s \in [k]$.

\subsection{Tropical algebra and mean payoff games}
\label{sec:tropical-prelim}

We use the term tropical algebra to speak about two related algebraic objects: the \emph{max-plus algebra} and \emph{min-plus algebra}.
Both notions will be used.
However, since they are dual to each other, we primarily present the results in the language of max-plus algebra (in compliance with the majority of literature we cite).
We mention the appropriate statements for min-plus algebra at the end of this subsection.
For a comprehensive background on tropical algebra and its connection to combinatorics see \cite{Butkovic2010} or \cite{Joswig2021}.

The max-plus algebra is the semiring $(\R \cup \{-\infty\}, \maxpl, \maxmu, -\infty, 0)$, where the `addition' $\maxpl$ is defined as $\max\{\cdot,\cdot\}$, `multiplication' $\maxmu$ as $+$, with $-\infty$ being the neutral element to $\maxpl$, and $0$ being the neutral element to $\maxmu$. 
For example,
\[
    (5 \maxmu 3) \maxpl (6 \maxmu -\infty) = 8 \maxpl -\infty = 8.
\]
We use $\Rmax = \R \cup \{-\infty\}$ and $\Zmax = \Z \cup \{-\infty\}$ for brevity. For a matrix $A \in \Rmax^{m \times n}$ and a vector $x \in \Rmax^n$, we write $A \maxmu x$ to denote the max-plus matrix-vector multiplication, which is given by $A \maxmu x=(\max_{j\in [n]}\{A_{i,j} + x_i\})_{i \in [m]} $.
If no confusion arises, we may write $Ax$ instead of $A \maxmu x$.

Our main focus is on the max-plus inequalities of the form
\[
    Ax \leq Bx
    ,
\]
where $A, B \in \Rmax^{m \times n}$ are fixed matrices and $x \in \Rmax^n$ is a vector of variables.
The (ordinary) inequality between the resulting $m$-dimensional vectors should hold entry-wise.
It was observed by \cite{Dhingra2006} and \cite{akian2012tropical} that such systems are closely related to \emph{mean payoff games}.
Informally, a mean payoff game is a two-player perfect information game on a finite bipartite graph between a \emph{column} and \emph{row} player controlling the columns and rows of the matrix $A$ and $B$, respectively.

To define this game precisely, we first introduce the bipartite weighted digraph associated with a pair of matrices $(A,B)$.
Let $\Gamma = \Gamma(A,B)$ be a bipartite weighted digraph with parts $\cR = [m]$ and $\cC = [n]$.
For each $i \in \cR$ and $j \in \cC$, unless $A_{ij} = -\infty$, $\Gamma$ has a directed edge $(j,i) \in E(G)$ with weight $-A_{ij}$.
Similarly, for each $i \in \cR$ and $j \in \cC$, unless $B_{ij} = -\infty$, $\Gamma$ has a directed edge $(i,j)$ with weight $B_{ij}$.
The weight of an edge $e$ is denoted by $w(e)$. 
We will always work under the following standard assumption.

\begin{assumption}\label{assump:valid_moves_exist}
    Each column of $A$ as well as each row of $B$ contains a finite entry.
    This is equivalent to the property that each vertex of the graph $\Gamma(A,B)$ has at least one outgoing edge.
\end{assumption}

Given an initial vertex $s_{0} \in \cR \cup \cC$, the \emph{mean payoff game} $(\Gamma,s_0)$ associated with the system $Ax \leq Bx$ is an infinite game between the \emph{row} and the \emph{column} player defined as follows.
The row player controls the vertices in $\cR$ (indexed by rows of the matrices), while the column player controls the vertices in $\cC$ (indexed by columns of the matrices).
The game starts by placing a token on the initial vertex $s_0$.
In each round, the player controlling the currently occupied vertex $s$ moves the token along an outgoing edge of $s$, say $(s,s')$, and the value $w(s,s')$ is transferred from the column player to the row player (the value might be negative).
The game continues with the token placed on the vertex $s'$ and never stops.
\cref{assump:valid_moves_exist} guarantees that each vertex has at least one outgoing edge, so the game may always continue.
An infinite sequence $W = (s_0, s_1, s_2, \dots)$, where $(s_i,s_{i+1})$ is an edge in $\Gamma(A,B)$, is called a \emph{realization} of the game $(\Gamma,s_0)$.

The goal of the row player is to maximize the long-term average payoff from the column player, while the column player tries to minimize this value.
Formally, given a realization $W = (s_0, s_1, s_2, \dots)$ of the game $(\Gamma,s_0)$, we define the \emph{mean payoff} of the row player as 
\begin{equation}\label{eq:row_payoff}
    \nu_\textrm{R}(W) = \liminf_{N \to \infty} \frac{1}{N} \sum_{k=1}^N w(s_{k-1}, s_k),
\end{equation}
while the \emph{mean loss} of the column player is defined as
\begin{equation}\label{eq:column_loss}
    \nu_\textrm{C}(W) = \limsup_{N \to \infty} \frac{1}{N} \sum_{k=1}^N w(s_{k-1}, s_k).
\end{equation}
Note that, in general, these values do not have to be the same.

\begin{remark}
    Equivalently, some authors (such as \cite{akian2012tropical}) define the graph $\Gamma$ with $w(j,i) = A_{ij}$ instead of $-A_{ij}$.
    Then, instead of one-directional payments, they consider mutual payments between the row and the column player, i.e.\ the player controlling vertex $s$ receives the value $w(s,s')$ from the other player.
    Our approach, which is closer to \cite[Section~9]{Joswig2021} (but with an exchanged role of $A$ and $B$), is more convenient for us due to the fact that the payoff of the row player over a sequence of moves is equal to the sum of the weights of the traversed edges.
\end{remark}

\begin{example}[Taken from {\cite[Example~9.1]{Joswig2021}}]\label{ex:game_graph_example0}
    Consider the matrices
    \begin{align*}
        A =
        \begin{pmatrix}
         3      & -\infty \\
        -3      & -\infty \\
        -2      & 1
        \end{pmatrix}
        \quad \text{and} \quad
        B =
        \begin{pmatrix}
         1      & -\infty \\
        -3      & 2       \\
        -\infty & 4
        \end{pmatrix}
        ,
    \end{align*}
    corresponding to the system
    \begin{align*}
        3+x_1 &\leq 1+x_1, \\
        -3+x_1 &\leq \max\{-3+x_1,2+x_2\}, \\
        \max\{-2+x_1,1+x_2\} &\leq 4+ x_2.
    \end{align*}

        The corresponding game graph $\Gamma$ is displayed in Figure~\ref{fig:game_graph-ex0}.
\end{example}

\begin{figure}
    \centering
    \begin{tikzpicture}[xscale=.8,
	dot/.style={circle,fill,inner sep=1.8pt},
	weight/.style={font=\scriptsize, fill=white, inner sep=1pt},
	>= {Stealth[scale=1.25]}
	]

    	\tikzset{
			circnode/.style={
				draw,
				circle,
				minimum size=0.2cm,
				inner sep=0pt
			}
		}
        
	\def\Xleft{-3.5cm}
	\def\Xmid{0cm}
	\def\Xright{3.5cm}
	
	\def\vertsep{2cm}
	
	\node[circnode,label=right:$c_1$] (Cx1) at (\Xmid,{2*\vertsep}) {};
	\node[circnode,label=right:$c_2$] (Cx2) at (\Xmid,{1*\vertsep}) {};
	
	\node[circnode,label=left:$r_1$] (Ls1) at (\Xleft,{2.5*\vertsep}) {};
	\node[circnode,label=left:$r_2$] (Ls2) at (\Xleft,{1.5*\vertsep}) {};
	\node[circnode,label=left:$r_3$] (Ls3) at (\Xleft,{0.5*\vertsep}) {};
	
	\newcommand{\edge}[4]{%
		\ifnum#3=0
		\draw[->] (#1) -- (#2);
		\else
		\draw[->] (#1) -- node[pos=#4, weight] {$#3$} (#2);
		\fi
	}
	
	\newcommand{\bedge}[6]{%
		\ifnum#3=0
		\draw[->, bend #5=15] (#1) to (#2);
		\else
		\draw[->, bend #5=15] (#1) to node[pos=#4, weight, #6] {$#3$} (#2);
		\fi
	}
	
	\def\pos{0.25}
	
	\bedge{Cx1}{Ls1}{-3}{\pos}{right}{}
	\bedge{Cx1}{Ls2}{3}{\pos}{right}{}
	\edge {Cx1}{Ls3}{2}{\pos}
	
	\bedge{Cx2}{Ls3}{-1}{\pos}{right}{}
	
	\bedge{Ls1}{Cx1}{1}{\pos}{right}{}
	\bedge{Ls2}{Cx1}{-3}{\pos}{right}{}
	\edge {Ls2}{Cx2}{2}{\pos}
	
	\bedge{Ls3}{Cx2}{4}{\pos}{right}{swap}
	
\end{tikzpicture}
    \caption{Game graph $\Gamma$ corresponding to \cref{ex:game_graph_example0}, with $\mathcal{R}=\{r_1,r_2,r_3\}$ and $\mathcal{C}=\{c_1,c_2\}$. }
    \label{fig:game_graph-ex0}
\end{figure}

A \emph{strategy} of a player is a function $\sigma$ which, given a finite sequence of moves $W$ that ends in a player's vertex $s$, produces an outneighbor $\sigma(W)$ of $s$.
We say that the player follows the strategy $\sigma$ if, upon occurrence of the sequence $W$, the player chooses the edge $(s, \sigma(W))$.
A strategy is \emph{positional} if it depends only on the current position of the token.
That is, if it satisfies $\sigma(W) = \sigma(W')$ for any two sequences $W$ and $W'$ that end in the same vertex.

A fundamental theorem of mean payoff games is the following.

\begin{theorem}[\cite{Ehrenfeucht1979}\cite{GURVICH198885}]\label{thm:positional_strategy}
    There are positional strategies $\tau$ and $\sigma$ for the row and column player, respectively, such that for each $s_0 \in V(\Gamma)$, there is a value $\nu(\Gamma, s_0) \in \R$ satisfying that in the mean payoff game $(\Gamma,s_0)$ 
    \begin{enumerate}[label=(\roman*)]
        \item the mean payoff of the row player is at least $\nu(\Gamma, s_0)$ when following $\tau$, and
        \item the mean loss of the column player is at most $\nu(\Gamma, s_0)$ when following $\sigma$.
    \end{enumerate}
\end{theorem}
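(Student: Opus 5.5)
The statement is the classical positional determinacy theorem for finite mean payoff games of Ehrenfeucht--Mycielski and Gurvich--Karzanov--Khachiyan. I would prove it by a finite-horizon approximation combined with a fixed-point (potential) argument for a Shapley-type operator. This route yields positional strategies directly and works on the finite bipartite digraph $\Gamma=\Gamma(A,B)$ under \cref{assump:valid_moves_exist}.

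\textbf{Step 1: the operator.} Define the one-step dynamic programming operator $T\colon \R^{V(\Gamma)}\to\R^{V(\Gamma)}$ by
\[
(Tu)(s)=\max_{(s,s')\in E}\{w(s,s')+u(s')\} \ \text{for } s\in\cR, \qquad (Tu)(s)=\min_{(s,s')\in E}\{w(s,s')+u(s')\} \ \text{for } s\in\cC.
\]
By \cref{assump:valid_moves_exist} every max and min is over a nonempty finite set. The operator $T$ is monotone, additively homogeneous ($T(u+c)=Tu+c$), and hence nonexpansive in the sup-norm. Moreover, $T^N 0(s)$ is the value of the $N$-round game from $s$.

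\textbf{Step 2: the key claim.} I would aim to show that there exist a vector $\eta\in\R^{V}$ and a bias vector $h\in\R^{V}$ satisfying the ``average-optimality'' equations. These say that $\eta(s)=\max$ (respectively $\min$) of $\eta(s')$ over successors $s'$ of $s$, and that $\eta(s)+h(s)=\max$ (respectively $\min$) of $w(s,s')+h(s')$ over those successors $s'$ attaining the extremal $\eta(s')$.

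I expect this to be the main obstacle. My first approach is the vanishing-discount method. For $\beta\in(0,1)$, the discounted operator $u\mapsto T_\beta u$ (with $u(s')$ replaced by $\beta u(s')$) is a sup-norm contraction, so it has a unique fixed point $u_\beta$. Choosing a maximizer (respectively minimizer) at each vertex gives a pair of positional strategies. There are only finitely many positional strategy pairs, so some pair $(\tau,\sigma)$ is optimal for all $\beta$ in a sequence $\beta_k\to1$. In fact the optimal pair is eventually constant, because $u_\beta$ is a rational function of $\beta$ for each fixed pair, so comparisons between pairs stabilize near $1$; these are Blackwell optimal strategies. Finally, $(1-\beta)u_\beta\to\eta$ by the Laurent expansion of the resolvent for the Markov chain induced by $(\tau,\sigma)$.

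\textbf{Step 3: conclusion.} Fix $\tau$ for the row player and let the column player respond arbitrarily, even with history-dependent play. Since $\tau$ is optimal in the $\beta$-discounted game for all $\beta$ near $1$, the discounted payoff from $s_0$ is at least $u_\beta(s_0)$ against every column response. To pass to the mean payoff, I would use the positional structure. Against a fixed positional $\tau$, the one-player game that remains for the column player is a minimum mean-cycle problem on the subgraph $\Gamma_\tau$. For such one-player games it is elementary that positional play is optimal, and that the liminf mean of any infinite walk from $s_0$ is at least the minimum cycle mean reachable from $s_0$. The remaining point is that this minimum mean over cycles reachable in $\Gamma_\tau$ equals $\eta(s_0)$, which follows from the Blackwell optimality of $\tau$ with respect to positional deviations. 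This gives (i). Statement (ii) follows symmetrically using $\sigma$, with the limsup version of the same argument.

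Set $\nu(\Gamma,s_0):=\eta(s_0)$. It is independent of the strategies, and it is finite because all weights are finite reals.
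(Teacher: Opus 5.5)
Your argument is correct. It differs from the paper, which gives no proof of this theorem and simply cites Ehrenfeucht--Mycielski and Gurvich--Karzanov--Khachiyan. Those sources argue combinatorially.
\begin{itemize}
\item Ehrenfeucht and Mycielski compare the infinite game with a finite game that ends when the play first closes a cycle.
\item Gurvich, Karzanov and Khachiyan use potential (reweighting) transformations $w(s,s')\mapsto w(s,s')+h(s)-h(s')$, which also yield an algorithm.
\end{itemize}

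You go through discounted games instead. Shapley's contraction gives positional optimal strategies for each $\beta<1$. Pigeonhole over the finitely many positional pairs gives a pair $(\tau,\sigma)$ that is optimal along some $\beta_k\to 1$. Then $\eta(s_0)$ is the mean of the cycle closed by the $(\tau,\sigma)$-play from $s_0$. A cycle decomposition of arbitrary walks in $\Gamma_\tau$ (respectively $\Gamma_\sigma$) handles history-dependent opponents. This route gives uniform, start-independent strategies directly and identifies $\nu$ as $\lim(1-\beta)u_\beta$, which is the link to discounted games used by Zwick--Paterson. The price is that you rely on discounted-game theory rather than a purely finite argument.

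Three simplifications are available:
\begin{itemize}
\item The bias equations in Step 2 and the remark about $T^N 0$ are never used and can be dropped.
\item Blackwell stabilisation is unnecessary; a single sequence $\beta_k\to 1$ suffices.
\item The rational function you need is $v^{\tau,\sigma}_\beta$, not $u_\beta$.
\end{itemize}

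The one step you should write out is the lower bound in Step 3. Let $C$ be any simple cycle reachable from $s_0$ in $\Gamma_\tau$. It is the eventual cycle of the play under $\tau$ and some positional column strategy $\sigma'$: follow a shortest path to $C$, go around $C$, and define $\sigma'$ arbitrarily elsewhere. Optimality of $\tau$ gives $v^{\tau,\sigma'}_{\beta_k}(s_0)\ge u_{\beta_k}(s_0)$. Multiplying by $1-\beta_k$ and letting $k\to\infty$ shows that the mean of $C$ is at least $\eta(s_0)$. This is exactly what the cycle-decomposition bound needs.
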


We say that the strategy of a player ensuring them in the game $(\Gamma,s_0)$ the value $\nu(\Gamma, s_0)$ is \emph{optimal}.
Observe that the mean payoff~\eqref{eq:row_payoff} for the row player is always at most the mean loss~\eqref{eq:column_loss} of the column player; hence, Theorem~\ref{thm:positional_strategy} implies that equality can be obtained by optimal positional strategies.

A state $s_0 \in V(\Gamma)$ is \emph{winning} for the row, or column player if $\nu(\Gamma, s_0) > 0$, or $\nu(\Gamma, s_0) < 0$, respectively.
Otherwise, we say that $s_0$ is a \emph{draw-state}.%
\footnote{Some authors (such as \cite{akian2012tropical}) define a state $s_0$ as winning for the row player if $\nu(\Gamma, s_0) \geq 0$.}
The natural language constructions apply, e.g. a state is non-losing for a player if it is  a draw or winning state for the player.

The key connection between tropical algebras and mean payoff games is formalized in the following statement. See~\cite[Corollary~3.4]{akian2012tropical}, where it is attributed to~\cite{Dhingra2006}.

\begin{theorem}\label{thm:connection_of_tropical_and_mean_payoff}
    The max-plus system $Ax \leq Bx$ has a finite solution $x \in \R^n$ if and only if the row player does not lose the game on $\Gamma$ from any starting state. That is, if for every $s_0 \in V(\Gamma)$ we have that 
    $
        \nu(\Gamma, s_0) \geq 0.
    $
\end{theorem}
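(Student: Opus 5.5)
We prove both directions through \emph{potentials}: real numbers $\pi(s)$ attached to the vertices of $\Gamma=\Gamma(A,B)$. Along a play, the edge weights telescope against these potentials. Write $c_j$ for column vertex $j\in\cC$ and $r_i$ for row vertex $i\in\cR$.

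\emph{From a finite solution to non-losing.} Let $x\in\R^n$ satisfy $Ax\le Bx$. Set $\pi(c_j)=x_j$ and $\pi(r_i)=(Ax)_i$.
\begin{itemize}
\item \emph{Finiteness.} $(Ax)_i$ is finite whenever row $i$ of $A$ has a finite entry, which holds for every $r_i$ having an in-edge.
\item \emph{Starting vertex.} If $s_0$ is a row vertex with no in-edge, its potential is irrelevant: it affects only the first step, which contributes a bounded amount.
\item \emph{Column moves.} The row player takes the positional strategy $\tau$ that, at $r_i$, moves to a $c_j$ attaining the maximum in $(Bx)_i$; this maximum is finite by \cref{assump:valid_moves_exist}. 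For every column move $c_j\to r_i$ we have $A_{ij}+x_j\le (Ax)_i$, so its weight satisfies $-A_{ij}\ge \pi(c_j)-\pi(r_i)$.
\item \emph{Row moves.} For the move $r_i\to c_{\tau(i)}$ we have $B_{i\tau(i)}+x_{\tau(i)}=(Bx)_i\ge (Ax)_i$, so its weight satisfies $B_{i\tau(i)}\ge \pi(r_i)-\pi(c_{\tau(i)})$.
\end{itemize}
Hence every edge traversed under $\tau$ has weight at least the drop in potential. The sum of the first $N$ weights is therefore at least $\pi(s_0)-\pi(s_N)$, which is bounded below because there are finitely many vertices. Dividing by $N$ and taking $\liminf$ gives $\nu_{\mathrm R}\ge 0$ against every column strategy. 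Thus $\nu(\Gamma,s_0)\ge 0$ for all $s_0$.

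\emph{From non-losing to a finite solution.} Assume $\nu(\Gamma,s_0)\ge 0$ for all $s_0$. By \cref{thm:positional_strategy}, fix an optimal positional strategy $\tau$ for the row player. Let $\Gamma_\tau$ be the subgraph of $\Gamma$ keeping all out-edges of column vertices and only the edge $r_i\to c_{\tau(i)}$ at each row vertex.

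The first claim is that $\Gamma_\tau$ has no negative-weight directed cycle; I expect this to be the main point needing care. Suppose $Z$ is such a cycle and start the game at a vertex of $Z$. The column player can simply follow $Z$ forever, since every row vertex on $Z$ moves by $\tau$. The resulting play is periodic with mean payoff $w(Z)/|Z|<0$. But $\tau$ guarantees the row player at least $\nu(\Gamma,s_0)\ge0$ against every column strategy, a contradiction.

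With no negative cycles, add a sink $t$ with $0$-weight edges from every vertex. Let $\pi(s)$ be the shortest-path distance from $s$ to $t$ in $\Gamma_\tau$; it is finite and satisfies $\pi(s)\le w(s,s')+\pi(s')$ for every edge $(s,s')$ of $\Gamma_\tau$. Put $x_j=\pi(c_j)$.
\begin{itemize}
\item Each column edge $c_j\to r_i$ with $A_{ij}>-\infty$ lies in $\Gamma_\tau$. It gives $\pi(c_j)\le -A_{ij}+\pi(r_i)$, so $A_{ij}+x_j\le \pi(r_i)$, and hence $(Ax)_i\le \pi(r_i)$.
\item The row edge $r_i\to c_{\tau(i)}$ gives $\pi(r_i)\le B_{i\tau(i)}+x_{\tau(i)}\le (Bx)_i$.
\end{itemize}
Combining the two, $(Ax)_i\le\pi(r_i)\le(Bx)_i$ for every $i$, so $x\in\R^n$ is a finite solution of $Ax\le Bx$.
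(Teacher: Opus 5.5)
Your proof is correct, but it takes a different route from the paper, which does not prove this statement at all. The paper simply imports it from \cite[Corollary~3.4]{akian2012tropical}, where it is attributed to \cite{Dhingra2006}. That source treats the game through its Shapley operator and obtains the statement as a corollary of a more general correspondence.

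Your argument is self-contained and elementary, using only the uniform positional determinacy of \cref{thm:positional_strategy}:
\begin{itemize}
\item \emph{Solution to non-losing.} A solution $x$ acts directly as a potential. It certifies that the greedy positional strategy (move to a maximizer of $(Bx)_i$) keeps the running sum bounded below. This direction needs no determinacy beyond one short step, which you should write out rather than leave as ``Thus'': play your $\tau$ against the column player's optimal $\sigma$ and use $\nu_{\mathrm R}(W)\le\nu_{\mathrm C}(W)$ to conclude $\nu(\Gamma,s_0)\ge 0$.
\item \emph{Non-losing to solution.} Fixing an optimal positional row strategy turns the game into a one-player shortest-path problem. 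There, ``non-losing from every state'' means exactly ``no negative cycle'', and Bellman--Ford potentials give the solution.
\end{itemize}

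Your handling of the technical points is sound: row vertices with no in-edge get potential $-\infty$, and the uniformity of $\tau$ is needed to define a single subgraph $\Gamma_\tau$. A side benefit is that your solution $x$ consists of shortest-path distances, so it is integral whenever $A,B$ are. This recovers \cref{obs:integer_solution} in the integer case without the separate rounding argument.

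What the cited approach offers in exchange is a framework that gives finer, more general information about the game, of which this statement is only a special case.
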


In this paper, we are interested in the consequence of the fact that the system $Ax \leq Bx$ has no finite solution.
The contrapositive of Theorem~\ref{thm:connection_of_tropical_and_mean_payoff} gives the following corollary (also see \cite[Theorem~9.25]{Joswig2021}).

\begin{corollary}\label{cor:no_solution_implies_negative_mean_payoff}
    If the max-plus system $Ax \leq Bx$ has no finite solution $x \in \R^n$, then the set of winning states for the column player is non-empty.
\end{corollary}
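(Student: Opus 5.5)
This follows from Theorem~\ref{thm:connection_of_tropical_and_mean_payoff} by contraposition, followed by one use of the antisymmetry of the game.

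First, since the system $Ax \leq Bx$ has no finite solution, Theorem~\ref{thm:connection_of_tropical_and_mean_payoff} gives a state $s_0 \in V(\Gamma)$ with $\nu(\Gamma, s_0) < 0$. By definition, a state is winning for the column player exactly when its value is negative. So $s_0$ is such a state, and the set of winning states for the column player is non-empty.

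The only point to check is that $\nu(\Gamma,s_0)$ is well defined for every starting state. Under Assumption~\ref{assump:valid_moves_exist}, this is given by Theorem~\ref{thm:positional_strategy}: it provides optimal positional strategies for both players and a common value $\nu(\Gamma,s_0)\in\R$. Hence the negation of ``$\nu(\Gamma,s_0)\geq 0$ for all $s_0$'' is exactly ``$\nu(\Gamma,s_0)<0$ for some $s_0$''.

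There is no real obstacle. The only care needed is that the cited equivalence and the paper's definition of ``winning'' both refer to the sign of the same value $\nu$. This holds because the paper's weight convention differs from \cite{akian2012tropical} only by how payments are booked, not by the sign of the row player's value.
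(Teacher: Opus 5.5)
Your argument is correct and matches the paper's proof: the corollary is the contrapositive of Theorem~\ref{thm:connection_of_tropical_and_mean_payoff}, combined with the paper's definition that $s_0$ is winning for the column player exactly when $\nu(\Gamma,s_0)<0$. The ``use of the antisymmetry of the game'' announced in your first sentence never actually occurs and is not needed, so you can drop that phrase.
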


This can be seen as a tropical variant of Farkas's lemma, giving an existential witness for the non-existence of a solution to a system of inequalities.

Let $X \subseteq V(\Gamma)$ be the set of winning states for the column player and assume that $X$ is non-empty.
We will consider the restriction $\Gamma[X]$ of the game graph $\Gamma$ to $X$.
Naturally, the row player has no edge from $X$ to $V(\Gamma) \setminus X$, while the column player (if it has such edges) chooses not to use them.
This yields the following standard observation.

Let $S = X \cap \cC$ and $T = X \cap \cR$.
Let $\sigma_{|S}$ be the restriction of the column player's optimal positional strategy $\sigma: \cC \to \cR$ (in a game on $\Gamma$) to the domain $S$.
Clearly both $S$ and $T$ are non-empty, and $\sigma_{|S}(s) \in T$ for all $s \in S$.

\begin{observation}\label{obs:restricted_game_graph}
    The strategy $\sigma_{|S} : S \to T$ witnesses that all states of the game on $\Gamma[X]$ are winning for the column player.
\end{observation}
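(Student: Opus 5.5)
The plan is to argue directly from the defining properties of the optimal positional strategies $\tau$ and $\sigma$ in \cref{thm:positional_strategy}, together with the fact that mean payoff and mean loss are invariant under changing finitely many initial moves. The $\liminf$ and $\limsup$ in \eqref{eq:row_payoff} and \eqref{eq:column_loss} do not change if a finite prefix of the realization is removed.

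First I will show that $X$ is closed in the appropriate sense. There are two cases.
\begin{enumerate}
\item Let $r\in T$ and let $(r,s')$ be any edge of $\Gamma$. Suppose the row player moves to $s'$ and then follows $\tau$, while the column player follows $\sigma$.
\begin{enumerate}
\item By the guarantee of $\tau$ from $s'$ and tail-invariance, the row player's mean payoff is at least $\nu(\Gamma,s')$.
\item By the guarantee of $\sigma$ from $r$, the column player's mean loss is at most $\nu(\Gamma,r)<0$.
\end{enumerate}
Since the mean payoff is always at most the mean loss, we get $\nu(\Gamma,s')\le \nu(\Gamma,r)<0$. Hence $s'\in X$, and all out-edges of row vertices in $T$ stay inside $X$.
\item Let $s\in S$. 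The same argument, applied to the move $s\mapsto\sigma(s)$ followed by $\tau$ for the row player and $\sigma$ for the column player, gives $\nu(\Gamma,\sigma(s))\le\nu(\Gamma,s)<0$. Thus $\sigma(s)\in T$, as asserted in the text.
\end{enumerate}

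These two facts show that $\Gamma[X]$ is a legitimate game graph. Every vertex of $T$ keeps all its out-edges, and it has at least one by \cref{assump:valid_moves_exist}. Every vertex of $S$ keeps the edge to $\sigma(s)$. They also show that $\sigma_{|S}$ is a valid positional strategy in $\Gamma[X]$.

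Finally, take any start $s_0\in X$ and any row strategy in $\Gamma[X]$, and suppose the column player follows $\sigma_{|S}$. The resulting realization is a realization of the game $(\Gamma,s_0)$ in which the column player follows $\sigma$; the row strategy is simply a row strategy in $\Gamma$ that happens to stay in $X$. By \cref{thm:positional_strategy}(ii), its mean loss is at most $\nu(\Gamma,s_0)<0$. So against $\sigma_{|S}$, every play from every state of $\Gamma[X]$ has a strictly negative mean loss, and hence a strictly negative mean payoff. Therefore every state of $\Gamma[X]$ is winning for the column player, witnessed by $\sigma_{|S}$.

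The only delicate point is the closure step. There one must combine the two optimal strategies with the tail-invariance of the mean, so that the value can be compared across a single edge. The rest is bookkeeping.
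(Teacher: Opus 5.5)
Your proof is correct and takes essentially the paper's route. The paper only asserts that row vertices in $X$ have no edges leaving $X$ and that the optimal column strategy never leaves $X$ (so $\sigma_{|S}(s)\in T$), then restricts $\sigma$. Your comparison of values across a single edge, using $\tau$, $\sigma$ and tail-invariance, supplies exactly the justification left implicit there; the paper argues the same way later in the proof of Lemma~\ref{lem:all_states_in_Gamma|S_are_winning}. The one step you leave implicit is passing from ``every play against $\sigma_{|S}$ has negative mean loss'' to $\nu(\Gamma[X],s_0)<0$. It follows by playing $\sigma_{|S}$ against an optimal row strategy of $\Gamma[X]$, which exists because you verified that $\Gamma[X]$ satisfies Assumption~\ref{assump:valid_moves_exist}.
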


The following is a simple exercise from mathematical analysis.

\begin{observation}\label{obs:negative_mena_loss_implies_divergence}
    Let $W = (s_0, s_1, \dots)$ be a realization of a mean payoff game with $\nu_C(W) < 0$. Then,
\[
	\lim_{N \to \infty} \sum_{k=1}^N w(s_{k-1}, s_k) = -\infty.
\]
\end{observation}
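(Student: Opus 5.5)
We argue by contradiction, directly from the definitions of $\liminf$ and $\limsup$. Write $S_N := \sum_{k=1}^N w(s_{k-1}, s_k)$. Fix a constant $\varepsilon$ with $0 < \varepsilon < -\nu_C(W)$, which is possible because $\nu_C(W) < 0$. By the definition of $\nu_C(W)$ as $\limsup_{N\to\infty} S_N/N$ in~\eqref{eq:column_loss}, there is an index $N_0$ such that for all $N \ge N_0$
\[
    \frac{S_N}{N} \le \nu_C(W) + \varepsilon =: -\delta < 0 .
\]
This gives $S_N \le -\delta N$ for every $N \ge N_0$. Since $\delta > 0$ is fixed and $N \to \infty$, the right-hand side tends to $-\infty$, and therefore so does $S_N$. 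This is exactly the claimed limit.

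Two technical points need checking. First, $\nu_C(W)$ is a finite real number. The game graph $\Gamma$ is finite, and its edge weights are the finite entries of $A$ and $B$. Hence $|S_N| \le N \cdot \max_e |w(e)|$, so $S_N/N$ is bounded and its $\limsup$ is a real number. This makes the choice of $\varepsilon$ legitimate.

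Second, the conclusion is a genuine limit rather than only a $\limsup$ statement. This holds because the bound $S_N \le -\delta N$ is valid for all sufficiently large $N$, not merely along a subsequence. We do not need the $\liminf$ version $\nu_R$ at all; the $\limsup$ hypothesis is the stronger one and suffices. No real obstacle is expected, since the whole argument is the elementary fact that a sequence of averages eventually bounded above by a negative constant forces the partial sums to diverge to $-\infty$.
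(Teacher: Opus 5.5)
Your argument is correct: from $\limsup_{N} S_N/N < 0$ you get $S_N \le -\delta N$ for all large $N$, which forces $S_N \to -\infty$, and your remark that the edge weights are bounded (so $\nu_C(W)$ is finite) justifies the choice of $\varepsilon$. The paper gives no proof and calls this a simple exercise in analysis; your argument is the standard one.
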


\subsubsection{Min-plus algebra}\label{sec:min_plus_prelims}

Min-plus algebra is the semiring $(\R \cup \{\infty\}, \minpl, \minmu, \infty, 0)$, with $\minpl$ defined as $\min\{\cdot,\cdot\}$ and $\minmu$ as $+$.
Let $\Rmin = \R \cup \{\infty\}$ and $\Zmin = \Z \cup \{\infty\}$.
Note that the min-plus algebra is isomorphic to the max-plus algebra by the mapping $x \mapsto -x$.
Hence, for $A, B \in \Rmin^{m \times n}$ and $x \in \Rmin^n$, we have
\[
    A \minmu x \geq B \minmu x
\]
if and only if we have
\[
    A' \maxmu x' \leq B' \maxmu x'
\]
for $A', B' \in \Rmax^{m \times n}$, and $x' \in \Rmax^{n}$ defined as $A' = -A, B' = -B$, and $x' = -x$.

This allows us to construct the game graph $\Gamma_{\min}(A,B)$ corresponding to the min-plus inequality $A \minmu x \geq B \minmu x$ via the corresponding max-plus inequality $A' \maxmu x' \leq B' \maxmu x'$.
That is, let $\Gamma_{\max}(A',B')$ be the game graph for $A' \maxmu x' \leq B' \maxmu x'$ as above.
Then we define $\Gamma_{\min}(A,B)$ for the min-plus inequality $A \minmu x \geq B \minmu x$ to be the graph $\Gamma_{\max}(A',B')$ with reversed signs of all edge-weights.

The change of signs essentially switches the roles of the row and column player.
That is, while in the `max-plus game', the column player tries to minimize the total weight of the walk, in the `min-plus game', the column player (who is still defined as the one controlling the part $\cC$ of $\Gamma_{\min}$ indexed by the columns of $A$) tries to maximize the weight.
Hence, we define the mean payoff (instead of mean loss) of the column player from a realization $W = (s_0, s_1, s_2, \dots)$ as
\[
    \rho_C(W)  = \liminf_{N \to \infty} \frac{1}{N} \sum_{k=1}^N w(s_{k-1}, s_k)
    .
\]
Moreover, we define $\rho(\Gamma_{\min}, s_0)$ as $\rho_C(W)$, where $W$ is the realization starting in $s_0$ produced by optimal strategies of the row and columns players, existence of which is guaranteed by Theorem~\ref{thm:positional_strategy}.
We say that a state $s_0$ is winning for the column player if $\rho(\Gamma_{\min}, s_0) > 0$.

With this, we may state the min-plus analog of Corollary~\ref{cor:no_solution_implies_negative_mean_payoff}.
The proof follows directly from definitions.

\begin{corollary}\label{cor:min_plus_no_solution_implies_positive_mean_payoff}
    If the min-plus system $Ax \geq Bx$ has no finite solution $x \in \R^n$, then the set of winning states for the column player is non-empty.
\end{corollary}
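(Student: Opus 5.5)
We derive this directly from Corollary~\ref{cor:no_solution_implies_negative_mean_payoff} via the isomorphism between min-plus and max-plus algebra recorded above. Set $A' = -A$, $B' = -B$. If the min-plus system $A \minmu x \geq B \minmu x$ had no finite solution, then neither does the max-plus system $A' \maxmu x' \leq B' \maxmu x'$. Indeed, a finite solution $x'$ of the latter would give the finite solution $x = -x'$ of the former, because $-(A'\maxmu x')_i = \min_j\{A_{ij} + x_j\} = (A \minmu x)_i$, and negation reverses inequalities. Note also that Assumption~\ref{assump:valid_moves_exist} transfers verbatim, since the finite entries of $A'$ and $B'$ are exactly those of $A$ and $B$.

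Applying Corollary~\ref{cor:no_solution_implies_negative_mean_payoff} to $(A',B')$, there is a state $s_0$ of $\Gamma_{\max}(A',B')$ with $\nu(\Gamma_{\max}, s_0) < 0$. By Theorem~\ref{thm:positional_strategy}, this means the column player has a positional strategy $\sigma$ guaranteeing $\nu_C(W) \le \nu < 0$ against every row-player strategy, while the row player's optimal strategy $\tau$ guarantees payoff at least $\nu$.

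Now $\Gamma_{\min}(A,B)$ is the same digraph with negated weights. Hence for every realization $W$, the limsup of averages in $\Gamma_{\max}$ becomes the negated liminf of averages in $\Gamma_{\min}$, that is,
\[
\rho_C^{\min}(W) = -\nu_C^{\max}(W).
\]
Since the realization produced by the optimal pair $(\tau,\sigma)$ is the same walk in both graphs, we get $\rho(\Gamma_{\min}, s_0) = -\nu(\Gamma_{\max}, s_0) > 0$. So $s_0$ is winning for the column player in the min-plus sense, and the set of winning states is non-empty.

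The only point needing care is that $\rho(\Gamma_{\min}, s_0)$ is defined via optimal strategies in the min-plus game. We must check that the optimal strategies of $\Gamma_{\max}$ remain optimal in $\Gamma_{\min}$ with the players' objectives swapped, and that the game value is not ambiguous. This follows because negation is a bijection between payoffs that preserves each player's preference order once the objectives are flipped. Moreover, by Theorem~\ref{thm:positional_strategy}, liminf and limsup coincide along optimal play, so the choice between them causes no discrepancy.
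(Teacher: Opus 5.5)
Your argument is correct and is the paper's proof written out; the paper only says it ``follows directly from definitions''. You pass to the max-plus system via $A'=-A$, $B'=-B$, $x'=-x$, apply the max-plus corollary, and use that $\Gamma_{\min}(A,B)$ is $\Gamma_{\max}(A',B')$ with negated weights, so the same optimal positional strategies give the same play and $\rho_C(W)=-\nu_C(W)$ flips the sign of the value. The one point the paper leaves implicit is that optimality and the value survive the swap of the players' objectives, and you handle it correctly.
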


\subsubsection{Algorithms}\label{sssec:algorithms}

A central objective in the study of mean payoff games is the development of efficient solvers, a task essentially equivalent to solving tropical systems of the form $Ax \leq Bx$.
It is well established that this problem can be solved in pseudopolynomial time~\cite{ZWICK1996343}; however, the complexity of such algorithms is polynomial relative to the matrix dimensions and the actual numeric values of the weights.

Despite significant ongoing research~\cite{fijalkow_et_al, dorfman_et_al}, whether a truly polynomial-time algorithm exists remains a major open problem.
Such an algorithm would require a running time polynomial in the matrix size and the binary representation of the weights, i.e.\ the logarithm of their values.

In Section~\ref{sec:compressed_matrices}, we present a strongly polynomial algorithm reducing the problem of finding a finite solution to $Ax \leq Bx$, to the problem of $\chi$-boundedness of a certain set-defined class.
Such algorithm runs in polynomial time with respect to the matrix size in the arithmetic model (with constant time per arithmetic operation), using at most polynomial space with respect to the total size of the input~\cite{Grtschel1993}.

\subsection{Graph covering projections}\label{sec:covering-projections}

Let $G$ and $H$ be two multigraphs.
We say that $f: V(G) \cup E(G) \to V(H) \cup E(H)$, mapping vertices to vertices and edges to edges, is a graph covering projection~\cite{Kratochvil_97} (also known as a locally bijective homomorphism) if
\begin{enumerate}[label=(\roman*)]
    \item $f$ respects endpoints. That is, if $e$ is an edge from $u$ to $v$, then $f(e)$ is an edge from $f(u)$ to $f(v)$.
    \item $f$ is a bijective at each vertex. That is, if $E^+(v)$ denotes the set of outgoing edges from $v$, then for every $v \in V(G)$, $f$ is a bijection between $E^+(v)$ and $E^+(f(v))$.
\end{enumerate}
Moreover, we may require that $f$ preserves some more information such as unary or binary predicates.

Note that a covering projection is indeed a homomorphism from $G$ to $H$ as it maps edges to edges.
Moreover, the property of local bijection allows to assign unique preimages to edges from $H$ upon fixing their source vertex in $G$, which leads to the following correspondence between walks in $G$ and $H$ known as the \emph{unique walk lifting property}~\cite{FKKN14}.
Denote by $\cW_G(v)$ the set of all walks from a vertex $v \in V(G)$ in $G$.

\begin{observation}\label{obs:walks_via_covering_projection}
    Suppose $f: G \to H$ is a covering projection. 
    Then $f$ lifts to a bijection $F$ between $\cW_G(v)$ and $\cW_H(f(v))$ for any $v \in V(G)$.
\end{observation}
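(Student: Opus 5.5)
The plan is to define $F$ directly by applying $f$ letter by letter, and to prove bijectivity by induction on the length of the walk, using local bijectivity at each step. Since $G$ and $H$ are multigraphs, I will treat a walk as an alternating sequence $W=(v_0,e_1,v_1,e_2,\dots,e_k,v_k)$, where each $e_t$ is an edge from $v_{t-1}$ to $v_t$. (For infinite walks the same argument works with "all prefixes".) I then set
\[
F(W) := \bigl(f(v_0),f(e_1),f(v_1),\dots,f(e_k),f(v_k)\bigr).
\]

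First, I check that $F$ maps $\cW_G(v)$ into $\cW_H(f(v))$. By property (i), each $f(e_t)$ is an edge from $f(v_{t-1})$ to $f(v_t)$, so $F(W)$ is a walk in $H$. Its start vertex is $f(v_0)=f(v)$. Lengths are also preserved, so it suffices to prove that $F$ restricts to a bijection between walks of length $k$ for every $k$.

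Second, I prove this by induction on $k$; the case $k=0$ is trivial, since both sides consist of the single trivial walk.

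For injectivity, suppose $F(W)=F(W')$ for two walks of length $k$ from $v$. By induction their prefixes of length $k-1$ agree, so both end at the same vertex $v_{k-1}$. Then $e_k,e'_k\in E^+(v_{k-1})$ and $f(e_k)=f(e'_k)$. Injectivity of $f$ on $E^+(v_{k-1})$, from property (ii), gives $e_k=e'_k$, and hence $v_k=v'_k$.

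For surjectivity, take a walk $(h_0,g_1,\dots,g_k,h_k)$ in $H$ with $h_0=f(v)$. By induction its prefix of length $k-1$ has a preimage ending at some $v_{k-1}$ with $f(v_{k-1})=h_{k-1}$. Then $g_k\in E^+(h_{k-1})=E^+(f(v_{k-1}))$, so by surjectivity in (ii) there is $e_k\in E^+(v_{k-1})$ with $f(e_k)=g_k$. Letting $v_k$ be the head of $e_k$, property (i) gives $f(v_k)=h_k$, which extends the preimage by one step.

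There is no real obstacle here. The only point needing care is to record walks as edge sequences rather than vertex sequences, because with parallel edges a vertex sequence does not determine the walk and the lift would otherwise not be well defined.
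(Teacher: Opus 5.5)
Your proof is correct and follows the same route as the paper, which gives no formal proof and only remarks that local bijectivity assigns each edge of $H$ a unique preimage once its source vertex in $G$ is fixed. Your induction on walk length formalizes exactly this, and your handling of infinite walks is sound because uniqueness forces the lifts of successive prefixes to be nested.
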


\section{Decomposition theorem for set-defined graph classes}\label{sec:decomposition-theorem}

In this section, we prove our main decomposition result for set-defined graph classes.

\MainDecomposition*

Before proving this theorem, we first derive \cref{cor:poly-chi-vs-unions-shift-subgraphs} from it.

\polyChiVsUnionsShiftSubgraphs*
\begin{proof}
Let $c \in \mathbb{N}$ and $\tau \colon \mathbb{N}\to\mathbb{N}$ be the constant and the polynomial function given by \cref{th:main-decomposition} for the class $\dcX$.
Suppose that digraphs in $\dcX$ which are unions of at most $c$ shift-colorable digraphs have bounded chromatic number, that is, there exists a constant $M \in \mathbb{N}$ such that every such digraph has chromatic number at most $M$. Then, by \cref{th:main-decomposition}, any $G \in \cX$ admits a partition $V(G)=V_1\cup \dots \cup V_r$ with $r \le \tau(\omega(G))$ such that each $G[V_i]$ has chromatic number at most $M$. Hence,
$\chi(G)\le M\cdot \tau(\omega(G))$, and thus $\dcX$ is polynomially \chibounded. 
\end{proof}

\subsection{Outline of the proof}\label{sec:reduction-outline}

We begin by outlining the proof of \cref{th:main-decomposition}, which is carried out in \cref{sec:reduction-to-matching-clauses,sec:multiple-full-matching,sec:proof-decomposition-theorem}. We then describe refinements for full set-defined graph classes that will be needed for the \chiboundedness decision procedure.

Let $\dcX$ be a $d$-dimensional set-defined class, and let $f \colon \{0,1\}^{d^2} \to \{0,1\}$ be a Boolean function such that $\dcX \subseteq \dcX_f$. The proof of \cref{th:main-decomposition} proceeds through the following steps:

\begin{enumerate}
    \item In \cref{sec:reduction-to-matching-clauses}, we prove \cref{th:decomposition-uniform-Y}, which reduces the analysis to classes definable by injective DNFs over injective order-uniform sets. To do so, we first partition the vertex set according to the equality pattern of the coordinates, and show in \cref{th:compatible_to_full_matching} that each part induces a digraph belonging to a class $\dcX_H'$ for some injective DNF $H$ over at most $(d')^2$ variables, where $d' \leq d$. We then refine this partition further according to the order type of the vertices. Altogether, this yields at most $d^{2d}$ parts, each inducing a digraph in some class $\dcY_H$ for an injective DNF $H$, thereby reducing the analysis to such classes.

    \item In \cref{sec:multiple-full-matching}, we obtain a decomposition result for classes $\dcY_H$, where $H$ is an injective DNF.
    \begin{enumerate}
        \item In \cref{sec:reduction-to-ayclic}, we reduce the analysis to acyclic DNFs. To this end, we show that cyclic clauses of $H$ do not contribute any edges to digraphs in $\dcY_H$ (\cref{lemma:cyclicchrom}). Hence $H$ can be replaced, without changing the class $\dcY_H$, by the full DNF consisting only of the acyclic clauses of $H$ (\cref{th:reduction-to-acyclic}).
        Once we pass to an acyclic DNF, only two types of clauses remain: discrete clauses and path clauses.

        \item In \cref{sec:all-discrete-clauses}, we treat discrete DNFs, that is, DNFs consisting only of discrete clauses. We show that, for every such DNF $H$, each digraph $G$ in $\dcY_H$ admits a partition into independent sets whose number is polynomially bounded in terms of the clique number of~$G$. Moreover, we construct this partition so that the independent sets satisfy certain functional constraints. Although this additional property is not needed for the proof of \cref{th:main-decomposition}, it is crucial for the later analysis of full set-defined graph classes in \cref{sec:refinements-full-set-defined}.

        \item In \cref{sec:path-clauses}, we consider DNFs consisting of a single path clause $P$. We show in \cref{lem:path-clause-shift-colorable} that every digraph in $\dcY_P$ is shift-colorable.

        \item In \cref{sec:decomposition-injective}, we combine the previous two steps. We first partition the vertex set so that the discrete clauses contribute no edges within the parts, and then observe that, on each part, the remaining edges come only from path clauses. Since each path clause defines a shift-colorable digraph, we obtain the decomposition theorem for classes defined by injective DNFs (\cref{th:decomposition-injective}): every digraph in $\dcY_H$ admits a partition into polynomially many parts, in terms of the clique number, such that each part induces a subdigraph that is the union of boundedly many shift-colorable digraphs.
    \end{enumerate}

    \item In \cref{sec:proof-decomposition-theorem}, we combine \cref{th:decomposition-uniform-Y} and \cref{th:decomposition-injective} to obtain the desired decomposition for arbitrary $d$-dimensional set-defined classes, namely \cref{th:main-decomposition}.
\end{enumerate}

In \cref{sec:refinements-full-set-defined}, we derive a refinement of the previous argument that enables a reduction of the \chiboundedness of a full set-defined class to the solvability of systems of tropical inequalities, developed in \cref{sec:chi-dichotomy}. 
Starting from $\dcX_f$, we first reduce to finitely many subclasses defined by acyclic DNFs (\cref{lem:fromXtoY_H}). Next, for each acyclic DNF $H$, we reduce the analysis to finitely many DNFs consisting of a single path clause of $H$ together with all discrete clauses of $H$ (\cref{lem:path+discrete-clauses}). We then reduce further to the case in which the path clause is loopless (\cref{lem:loop-elimination}). Finally, we pass to classes of the form $\dcY_{P,Z}$, where $P$ is a loopless path clause and $Z$ encodes the relevant functional constraints arising from the discrete clauses (\cref{th:diaganal-to-path}). This yields \cref{th:from-full-set-defined-to-shift-colorable}, which serves as the input for the dichotomy proved in \cref{sec:chi-dichotomy}. The reduction from \cref{th:from-full-set-defined-to-shift-colorable} and the dichotomy from \cref{sec:chi-dichotomy} are then combined in \cref{sec:deciding-chiboundedness} into a decision procedure for \chiboundedness of full set-defined graph classes.

\subsection{Reduction to classes defined by injective DNFs}\label{sec:reduction-to-matching-clauses}

In this section, we show how to partition the vertex set of digraphs in a $d$-dimensional set-defined class into subsets whose number depends only on $d$, such that each subset induces a digraph belonging to a set-defined class specified by an injective DNF. We first introduce the partition via these new classes, and then show that they can be defined by injective DNFs.

Let $f : \{0,1\}^{d^2} \rightarrow \{0,1\}$ be a Boolean function, and let $\cP$ be a partition of $[d]$. We say that $v \in \bN^d$ is \emph{$\cP$-homogeneous} if $v_i = v_j$ $\iff$ $i$ and $j$ are in the same $\cP$-class. A set $V \subseteq \bN^d$ is \emph{$\cP$-homogeneous} if every $v \in V$ is $\cP$-homogeneous.
We denote by $\dcX_{f,\cP}$ the subclass of $\dcX_{f}$ consisting of digraphs that can be realised by $f$ over $\cP$-homogeneous sets, i.e.\
\[
    \dcX_{f,\cP} := \{ H \simeq G=(V,E) : V \subseteq \bN^d \text{ is } \cP\text{-homogeneous, and } G \text{ is a realisation of } f \text{ over } V \}.
\]
Note that, since any subset of a $\cP$-homogeneous set is also $\cP$-homogeneous, the class $\dcX_{f,\cP}$ is hereditary.

Let $G=(V,E)$ be an arbitrary digraph in $\dcX_f$, which is a realisation of $f$ over a set $V \subseteq \bN^{d}$.
For a partition $\cP$ of $[d]$, denote by $V_{\cP}$ the $\cP$-homogeneous subset of $V$. Since $\{V_{\cP} : \cP \text{ is a partition of } [d]\}$ is a partition of $V$, the vertex set of $G$ partitions into at most $d^d$ sets each of which induces a digraph from $\dcX_{f,\cP}$ for some partition $\cP$ of $[d]$.

Next, we show that a class $\dcX_{f,\cP}$ is equal to a class $\dcX_H'$ of digraphs that can be realised by an injective DNF $H$ over an injective set of vectors.
To this end, we establish two auxiliary results. 
First, we show that a full non-$\cP$-compatible clause cannot create an edge between $\cP$-homogeneous vertices (\cref{lem:non-P-matching-clause}), and therefore such clauses can be ignored in realisations of digraphs in $\dcX_{f,\cP}$. 
Second, we show that full $\cP$-compatible clauses can be replaced with injective clauses (\cref{lem:clause-to-prime}).
We then use these lemmas to derive the claimed equality between the classes (\cref{th:compatible_to_full_matching}).

\begin{lemma}\label{lem:non-P-matching-clause}
    Let $C$ be a full \emph{not} $\cP$-compatible clause,
    and $u,w \in \bN^d$ be $\cP$-homogeneous. 
    Then $C$ evaluates to 0 on $(u,w)$.
\end{lemma}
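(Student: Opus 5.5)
I will prove the contrapositive: assuming $C$ evaluates to $1$ on $(u,w)$, I will show that the clause digraph $\ClauseDigraph_C$ is $\cP$-compatible. Since $C$ is full, evaluating to $1$ on $(u,w)$ pins down the whole equality pattern between the coordinates of $u$ and those of $w$. Concretely, there is an edge $i \to j$ in $\ClauseDigraph_C$ if and only if $u_i = w_j$, because every variable $q_{i,j}$ appears in $C$ either positively or negatively, and the evaluation is $1$.

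Fix a vertex $i \in [d]$ and consider its out-neighbourhood $N^+(i) = \{ j : u_i = w_j \}$. If it is empty, there is nothing to check. Otherwise pick some $j_0 \in N^+(i)$. Then for any $j \in [d]$ we have $j \in N^+(i)$ iff $w_j = u_i = w_{j_0}$. Since $w$ is $\cP$-homogeneous, $w_j = w_{j_0}$ holds iff $j$ and $j_0$ lie in the same $\cP$-class. Hence $N^+(i)$ is exactly the $\cP$-class of $j_0$.

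The in-neighbourhood is handled symmetrically. For a vertex $j$, $N^-(j) = \{ i : u_i = w_j \}$. If this set is nonempty, fix $i_0$ in it; then $i \in N^-(j)$ iff $u_i = u_{i_0}$, which by $\cP$-homogeneity of $u$ holds iff $i$ lies in the $\cP$-class of $i_0$. So $N^-(j)$ is either empty or a single $\cP$-class.

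Thus $\ClauseDigraph_C$ is $\cP$-compatible, contradicting the hypothesis, and so $C$ must evaluate to $0$ on $(u,w)$. Loops in $\ClauseDigraph_C$, corresponding to $u_i = w_i$, need no special treatment, since the argument above never uses $i \neq j$. The argument is a direct check with no real obstacle. The only point needing care is that fullness of $C$ is what turns ``$C$ is satisfied'' into the exact correspondence between edges of $\ClauseDigraph_C$ and the equalities $u_i = w_j$; without fullness we would only have an inclusion in one direction.
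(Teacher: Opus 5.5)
Your proof is correct and is essentially the paper's argument. The paper also assumes $C$ is satisfied by $(u,w)$ and takes a vertex witnessing non-$\cP$-compatibility, splitting into the cases of a neighbourhood meeting two $\cP$-classes (forcing $u_i$ or $w_i$ to equal two distinct values) or meeting only part of a class (forcing an equality and a non-equality on the same value); this is exactly your observation that, by fullness and homogeneity, each neighbourhood is a level set of $w$ (resp.\ $u$) and hence empty or a single $\cP$-class.
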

\begin{proof}
    Suppose $(u,w)$ satisfies $C$, and let $i \in [d]$ be a vertex of the clause digraph $\ClauseDigraph_C$ witnessing non-$\cP$-compatibility of $C$.
    If the out-neighborhood (resp. in-neighborhood) of $i$ in $\ClauseDigraph_C$ contains vertices from two different $\cP$-classes, then $u_i$ (resp. $w_i$) must be equal to two distinct values. 
    If the out-neighborhood (resp. in-neighborhood) of $i$ in $\ClauseDigraph_C$ contains only part of a $\cP$-class, then $u_i$ (resp. $w_i$) must be both equal and not equal to the same value.
    These contradictions prove the lemma.
\end{proof}

Let $\cP$ be a partition of $[d]$, and $I \subseteq [d]$ consist of the least element of every $\cP$-class. Let $r,t$ be not necessarily distinct elements in $I$, and $P_r, P_t$ be the $\cP$-classes containing $r$ and $t$, respectively.
For a $\cP$-compatible clause $C$ we have that $P_r \times P_t \subseteq E(\ClauseDigraph_C)$ if $(r,t) \in \ClauseDigraph_C$, and $(P_r \times P_t) \cap E(\ClauseDigraph_C) = \emptyset$ if $(r,t) \not\in \ClauseDigraph_C$. Furthermore, we have that every vertex in the clause digraph $\ClauseDigraph_{C_{|I}}$ of $C_{|I}$ has at most one in-neighbor and at most one out-neighbor, and thus $C_{|I}$ is injective.

\begin{lemma}\label{lem:clause-to-prime}
    Let $C$ be a full $\cP$-compatible clause over variables $q_{i,j}$, $i,j \in [d]$.
    If $u,w \in \bN^d$ are $\cP$-homogeneous, then $C(u,w)=C_{|I}(u_{|I},w_{|I})$.
\end{lemma}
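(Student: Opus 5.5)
The plan is to compare the two clauses literal by literal. Both sides are conjunctions, so it suffices to show that for every pair $(r,t) \in I \times I$ and every pair $(i,j) \in P_r \times P_t$, the literal of $C$ on $q_{i,j}$ evaluated at $(u,w)$ equals the literal of $C_{|I}$ on $q_{r,t}$ evaluated at $(u_{|I},w_{|I})$. Since the $\cP$-classes partition $[d]$, every variable $q_{i,j}$ of $C$ falls into exactly one such block $P_r \times P_t$. Every literal of $C_{|I}$ is a literal of $C$ (on the variable $q_{r,t}$ with $r,t \in I$). Hence, once the block-wise equality is shown, the conjunction over all literals of $C$ equals the conjunction over the representative literals, which is $C_{|I}(u_{|I},w_{|I})$.

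Two facts drive the block-wise equality. First, by $\cP$-homogeneity of $u$ and $w$, we have $u_i = u_r$ for all $i \in P_r$ and $w_j = w_t$ for all $j \in P_t$. Consequently $Q_{u,w}(i,j) = Q_{u,w}(r,t)$ for all $(i,j) \in P_r \times P_t$, and moreover $Q_{u,w}(r,t) = Q_{u_{|I},w_{|I}}(r,t)$ trivially. Second, by the observation preceding the lemma, $\cP$-compatibility of $C$ gives a dichotomy. Either $P_r \times P_t \subseteq E(\ClauseDigraph_C)$, in which case every $q_{i,j}$ in the block appears positively, exactly as $q_{r,t}$ does. Or the block is disjoint from $E(\ClauseDigraph_C)$, in which case every $q_{i,j}$ in the block appears negated, exactly as $q_{r,t}$ does. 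In both cases all literals in the block have the same polarity as the literal on $q_{r,t}$ and are evaluated on the same bit. So they all take the same truth value as the literal of $C_{|I}$ on $q_{r,t}$.

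Combining these, the conjunction defining $C(u,w)$ is the conjunction over blocks of repeated copies of the literals of $C_{|I}(u_{|I},w_{|I})$, and the two values coincide. The only point needing care, not a real obstacle, is justifying the block dichotomy. If that observation is not taken as given, I would rederive it from the definition. If $(r,t) \in E(\ClauseDigraph_C)$, the out-neighbourhood of $r$ is nonempty, so it is a $\cP$-class containing $t$, namely $P_t$. Every $i \in P_r$ likewise has out-neighbourhood $P_t$, because the in-neighbourhood of $t$ is a class containing $r$, namely $P_r$. The symmetric argument handles the case $(r,t) \notin E(\ClauseDigraph_C)$.
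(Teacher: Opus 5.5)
Your proposal is correct and follows the paper's proof. Both compare $C$ and $C_{|I}$ literal by literal over the blocks $P_r \times P_t$. Both use $\cP$-compatibility to see that each block is uniformly positive or uniformly negated, and $\cP$-homogeneity to see that every literal in a block is evaluated on the same bit $Q_{u,w}(r,t) = Q_{u_{|I},w_{|I}}(r,t)$.

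Your rederivation of the block dichotomy, which the paper states without proof just before the lemma, is also fine. The case $(r,t) \notin E(\ClauseDigraph_C)$ is not really ``symmetric'', though: it is the contrapositive of your first case applied to an arbitrary arc of the block.
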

\begin{proof}
    Consider arbitrary not necessarily distinct elements $r,t \in I$, and let $P_r$ and $P_t$ be the $\cP$-classes containing $r$ and $t$ respectively.
    Then, by construction, the clause digraph $\ClauseDigraph_{C_{|I}}$ contains (resp. does not contain) the edge $(r,t)$ if and only if $\ClauseDigraph_C$ contains all (resp. none) of the arcs in $P_r \times P_t$. 
    Thus, clause $C_{|I}$ contains the literal $q_{r,t}$ (respectively, $\overline{q_{r,t}}$) if and only if $C$ contains the literals $q_{i,j}$, $i \in P_r$, $j \in P_t$ (respectively, the literals $\overline{q_{i,j}}$, $i \in P_r$, $j \in P_t$).
    Therefore, since $u$ and $w$ are $\cP$-homogeneous, we have that $(u_{|I}, w_{|I})$ satisfies each literal in $C_{|I}$ if and only if $(u,w)$ satisfies each literal in $C$.
\end{proof}

\begin{lemma}\label{th:compatible_to_full_matching}
    Let $f : \{0,1\}^{d^2} \rightarrow \{0,1\}$, and let $\cP$ be a partition of $[d]$.
    Then there exists an injective DNF $H$ over $(d')^2$ variables, for some $d' \leq d$, such that  $\dcX_{f,\cP} = \dcX_{H}'$.
\end{lemma}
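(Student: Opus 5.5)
Fix a full DNF $F=\bigvee_{k} C_k$ representing $f$. Let $I\subseteq[d]$ consist of the least element of each $\cP$-class, and set $d':=|I|\le d$, identifying $I$ with $[d']$ in increasing order. Define $H:=\bigvee C_{k|I}$, where the disjunction ranges over the $\cP$-compatible clauses $C_k$ of $F$, with duplicates removed. (If there are none, $H$ is the empty disjunction, i.e.\ constant $0$; this is vacuously injective, and the argument below still applies.) By the remark preceding \cref{lem:clause-to-prime}, each $C_{k|I}$ is injective. Each is also full over the $(d')^2$ variables $q_{r,t}$, $r,t\in I$, since $C_k$ is full. Hence $H$ is an injective DNF over $(d')^2$ variables.

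\textbf{Key identity.} We show that for $\cP$-homogeneous $u,w$ we have $f(u,w)=H(u_{|I},w_{|I})$. First, $f(u,w)=\bigvee_k C_k(u,w)$. By \cref{lem:non-P-matching-clause}, every non-$\cP$-compatible clause evaluates to $0$ on $(u,w)$, so only the $\cP$-compatible clauses matter. By \cref{lem:clause-to-prime}, each of these satisfies $C_k(u,w)=C_{k|I}(u_{|I},w_{|I})$. Combining the two facts gives the identity.

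\textbf{Inclusion $\dcX_{f,\cP}\subseteq\dcX'_H$.} Let $V$ be $\cP$-homogeneous. The map $v\mapsto v_{|I}$ is injective on $V$, because $v$ is constant on each $\cP$-class and is therefore determined by $v_{|I}$. Its image $V_{|I}$ is an injective set, since distinct classes receive distinct values. By the key identity, this map is an isomorphism from the realisation of $f$ over $V$ onto the realisation of $H$ over $V_{|I}$.

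\textbf{Inclusion $\dcX'_H\subseteq\dcX_{f,\cP}$.} Let $U\subseteq\Ninj{d'}$ be injective. Extend each $x\in U$ to $\hat x\in\bN^d$ by setting $\hat x_i:=x_r$, where $r\in I$ is the representative of the class of $i$. Because $x$ is injective, $\hat x$ is $\cP$-homogeneous. Moreover, $\hat x_{|I}=x$, so the extension map is a bijection with inverse the restriction. Applying the key identity again shows that the realisation of $f$ over $\{\hat x:x\in U\}$ is isomorphic to the realisation of $H$ over $U$.

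\textbf{Main obstacle.} The only delicate point is bookkeeping: identifying $I$ with $[d']$ consistently, and checking that full $\cP$-compatible clauses restrict to full injective clauses. Both are already provided by the remark preceding \cref{lem:clause-to-prime}, so I expect no real difficulty.
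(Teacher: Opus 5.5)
Your proposal is correct and follows the paper's proof. Both arguments discard the non-$\cP$-compatible clauses via \cref{lem:non-P-matching-clause}, restrict the remaining clauses to the class representatives $I$ via \cref{lem:clause-to-prime}, and use the bijection between $\cP$-homogeneous sets and injective subsets of $\Ninj{I}$. The only difference is that you verify both inclusions and the degenerate empty-DNF case explicitly, where the paper compresses this into a one-line remark about the one-to-one correspondence.
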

\begin{proof}
    Let $F$ be a full DNF representing $f$, and let $F'$ be the DNF consisting of the $\cP$-compatible clauses of $F$, i.e.\ $F$ can be written as $F' \vee \bigvee_{i=1}^{t} C_i$, where $C_1, C_2, \ldots, C_t$ are the clauses of $F$ that are not $\cP$-compatible. 
    From \cref{lem:non-P-matching-clause} we have that $F(u,w) = F'(u,w)$ holds for any $\cP$-homogeneous $u,w \in \bN^d$.

    Let $I \subseteq [d]$ consist of the least element of every $\cP$-class.
    Define $H$ to be the DNF consisting of the clauses $\{ C_{|I} : C \text{ is a clause of } F'\}$, and note that each of these clauses is injective.
    It follows from \cref{lem:clause-to-prime} that for any $\cP$-homogeneous  $u,w \in \bN^d$, $F'(u,w) = H(u_{|I},w_{|I})$. Consequently, $F(u,w) = H(u_{|I},w_{|I})$ holds for any $\cP$-homogeneous $u,w \in \bN^d$. 
    Since the $\cP$-homogeneous sets $V \subseteq \bN^d$ are in one-to-one correspondence with the injective sets $V_{|I} \subseteq \Ninj{I}$, we conclude that $\dcX_{f,\cP} = \dcX_{H}'$.
\end{proof}

Together with the discussion at the beginning of the section, \cref{th:compatible_to_full_matching} yields the desired partition. We further refine this partition to ensure that each part is order-uniform, i.e.~consists of vectors of the same order type.

\begin{theorem}\label{th:decomposition-uniform-Y}
    Let $f : \{0,1\}^{d^2} \rightarrow \{0,1\}$. There exists at most $d^{2d}$ injective DNFs $H_1, H_2, \ldots, H_r$, each over $(d')^2$ variables for some $d' \leq d$, so that every digraph $G \in \dcX_f$ admits a partition $V(G) = V_1 \cup \cdots \cup V_r$ such that $G[V_i] \in \dcY_{H_i}$, for every $i \in [r]$.
\end{theorem}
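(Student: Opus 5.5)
We combine the partition by equality patterns with a further partition by order types. First, for a digraph $G=(V,E)\in\dcX_f$ realised by $f$ over $V\subseteq\bN^d$, split $V$ into the sets $V_{\cP}$, one for each partition $\cP$ of $[d]$, where $V_{\cP}$ is the set of $\cP$-homogeneous vectors of $V$. There are at most $d^d$ partitions $\cP$. Each $G[V_{\cP}]$ is the realisation of $f$ over the $\cP$-homogeneous set $V_{\cP}$, so it lies in $\dcX_{f,\cP}$. By \cref{th:compatible_to_full_matching}, $\dcX_{f,\cP}=\dcX'_{H_{\cP}}$ for an injective DNF $H_{\cP}$ over $(d')^2$ variables, where $d'=|I|\le d$. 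Here $I$ is the set of least elements of the $\cP$-classes. The proof of that lemma gives the explicit realisation: $G[V_{\cP}]$ is, via $v\mapsto v_{|I}$, the realisation of $H_{\cP}$ over the injective set $(V_{\cP})_{|I}\subseteq\Ninj{I}$. This map is a bijection on $V_{\cP}$, because a $\cP$-homogeneous vector is determined by its restriction to $I$.

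Next, we refine each $V_{\cP}$ by order type. The set $W=(V_{\cP})_{|I}$ is injective, so each $w\in W$ has a strict linear order type on $I$, and there are at most $d'!\le d^d$ possible order types. For each order type $\pi$, let $W_\pi$ be the elements of $W$ with order type $\pi$, and let $V_{\cP,\pi}$ be its preimage in $V_{\cP}$. Then $W_\pi$ is injective and order-uniform. The induced subdigraph of a realisation over $W$ on the subset $W_\pi$ is exactly the realisation over $W_\pi$. Hence $G[V_{\cP,\pi}]\in\dcY_{H_{\cP}}$.

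Finally, we index the DNFs by pairs $(\cP,\pi)$ and set $H_{(\cP,\pi)}:=H_{\cP}$; these DNFs depend only on $f$, not on $G$. This gives at most $d^d\cdot d^d=d^{2d}$ DNFs. For a given $G$, some parts $V_{\cP,\pi}$ may be empty, and an empty part trivially lies in $\dcY_H$. So $V(G)$ is partitioned into the sets $V_{\cP,\pi}$, with $G[V_{\cP,\pi}]\in\dcY_{H_{(\cP,\pi)}}$ for every index.

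The only delicate point is that the list $H_1,\dots,H_r$ must be chosen uniformly in advance. We handle this by fixing the full DNF $F$ of $f$ once, so that each $H_{\cP}$ is canonically determined by $\cP$. The rest is bookkeeping of the bijection $v\mapsto v_{|I}$ and the hereditary nature of realisations.
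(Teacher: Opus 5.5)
Your proposal is correct and follows the paper's proof: you partition by equality pattern into at most $d^d$ sets lying in classes $\dcX_{f,\cP}$, identify each such class with $\dcX'_{H}$ for an injective DNF $H$ via \cref{th:compatible_to_full_matching}, and then refine by the at most $d'!\le d^d$ order types. Your extra remarks make explicit what the paper leaves implicit: the DNFs $H_{\cP}$ are fixed in advance from one full DNF of $f$, the map $v\mapsto v_{|I}$ is a bijection, and empty parts are allowed.
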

\begin{proof}
	Let $G=(V,E)$ be an arbitrary digraph in $\dcX_f$, which is a realisation of $f$ over a set $V \subseteq \bN^{d}$. Then, as discussed earlier, the vertex set of $G$ partitions into at most $d^d$ sets each of which induces a digraph from $\dcX_{f,\cP}$ for some partition $\cP$ of $[d]$. By \cref{th:compatible_to_full_matching}, each of these classes is equal to $\dcX_H'$ for some injective DNF $H$ over $(d')^2$ variables, for some $d' \leq d$. Now, for each digraph $G'=(V',E') \in \dcX_H'$, which is a realisation of $H$ over an injective set $V' \subseteq \Ninj{d'}$, vertex set $V'$ can be partitioned into at most $d'!$ order-uniform subsets. Thus, each such subset induces a digraph from $\dcY_H$.
	Combining the two partitions together and noting that $d'! \leq d! \leq d^d$ we obtain the theorem.
\end{proof}

\subsection{Injective DNFs}\label{sec:multiple-full-matching}

\cref{th:decomposition-uniform-Y} reduces our analysis to digraphs in  classes $\dcY_F$, where $F$ is an injective DNF.
In this section we establish a decomposition for such digraphs. We begin by reducing the problem to digraphs defined by acyclic DNFs.

\subsubsection{Reduction to acyclic DNFs}\label{sec:reduction-to-ayclic}

To reduce to acyclic DNFs, we show that none of the cyclic clauses can create an edge between vertices of the same order type.

\begin{lemma} \label{lemma:cyclicchrom}
	Let $C$ be a cyclic clause. Then any digraph in $\dcY_C$ is edgeless.
\end{lemma}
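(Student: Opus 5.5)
We argue by contradiction, using only the order-uniformity of the vertex set. Let $C$ be a cyclic clause over variables $q_{i,j}$, $i,j\in[d]$, and let $G$ be a realisation of $C$ over an injective order-uniform set $V\subseteq\Ninj{d}$. Suppose $(u,w)$ is an edge of $G$. Since $C$ is cyclic, $\ClauseDigraph_C$ contains a directed cycle $i_1\to i_2\to\cdots\to i_k\to i_1$ with $k\ge 2$ distinct vertices. Each arc $(i_s,i_{s+1})$ of this cycle means that $q_{i_s,i_{s+1}}$ appears positively in $C$, so
\[
u_{i_s}=w_{i_{s+1}}\quad\text{for all } s\in[k],
\]
with indices taken modulo $k$.

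The next step is to compare the two orderings. The map $i_s\mapsto i_{s+1}$ is a cyclic permutation $\pi$ of the set $I=\{i_1,\dots,i_k\}$, and we have $u_i=w_{\pi(i)}$ for every $i\in I$. Because $u$ and $w$ are injective, the values $u_i$, $i\in I$, are pairwise distinct, and likewise for $w$. Consider the index $i^*\in I$ at which $u$ attains its minimum over $I$. Then $w_{\pi(i^*)}=u_{i^*}$ is the minimum of $\{w_j : j\in I\}$, since $\{u_i:i\in I\}=\{w_j:j\in I\}$ as sets. Order-uniformity says that $u$ and $w$ have the same order type, so the position of the minimum over $I$ is the same for both vectors. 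Hence $\pi(i^*)=i^*$.

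More generally, order-uniformity means the relative order of $(u_i)_{i\in I}$ coincides with that of $(w_i)_{i\in I}$, while $u=w\circ\pi$ on $I$ and the value sets agree. Therefore $\pi$ must send the $r$-th smallest index to the $r$-th smallest index for every $r$, i.e.\ $\pi$ is the identity on $I$. But $\pi$ is a cycle of length $k\ge 2$, so it fixes no point of $I$. This contradiction shows that no edge $(u,w)$ can exist.

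Note that the argument only uses the positive literals of the cycle together with injectivity and order-uniformity, and it applies equally when $u=w$. Since digraphs here are loopless, only $u\neq w$ matters anyway. The main care point is the set equality $\{u_i:i\in I\}=\{w_j:j\in I\}$, which follows directly from $\pi$ being a bijection of $I$. Beyond that, the proof is routine.
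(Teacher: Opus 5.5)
Your proof is correct and uses the same idea as the paper: the positive literals of the cycle force $w$ to be a cyclic shift of $u$ on the cycle's coordinates, which is incompatible with $u$ and $w$ having the same injective order type. The differences are minor. The paper first treats the case where $\ClauseDigraph_C$ is itself a directed cycle, deriving a circular chain of strict inequalities from adjacent comparisons, and then handles general $C$ via the homomorphism of \cref{lem:subclause}; you work with the cycle inside $C$ directly and get the contradiction from the position of the minimum, which avoids that reduction step.
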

\begin{proof}
	Suppose first that the clause digraph $\ClauseDigraph_C$ is a directed cycle on $d \geq 2$ vertices. Without loss of generality assume that $E(\ClauseDigraph_C) = \{ (i,i+1) : i \in [d-1] \} \cup \{ (d,1) \}$. That is, the variables $q_{i,i+1}$, for $i \in [d-1]$, and $q_{d,1}$ appear positively, and all other variables $q_{i,j}$, $i,j \in [d]$ appear negatively in $C$.
	
	Let $G=(V,E) \in \dcY_C$ be the realisation of $C$ over an order-uniform injective set $V \subseteq \Nd$. We claim that no two vertices in $V$ can be connected by an edge. Indeed, let $u,w \in V$ and assume $(u,w) \in E$. Due to the structure of clause $C$, the latter holds if and only if $w$ is obtained from $u$ by a single cyclic shift, i.e.\ $w = (u_d, u_1, \ldots, u_{d-1})$.
	Since $u$ and $w$ have the same order type, we have $u_1 < u_2 \iff u_d < u_1$ and $u_i < u_{i+1} \iff u_{i-1} < u_i$ for every $2 \leq i \leq d-1$. These equivalences imply that either $u_1 < u_2 < \cdots < u_d < u_1$ or $u_1 > u_2 > \cdots > u_d > u_1$, both of which are contradictions.
	
	If $\ClauseDigraph_C$ has more than one connected component, and one of these components is a directed cycle on at least two vertices, then by \cref{lem:subclause} and the above, any $G \in \dcY_C$ is homomorphic to an edgeless graph, and therefore $G$ itself is edgeless.
\end{proof}

\begin{theorem}\label{th:reduction-to-acyclic}
	Let $H$ be an injective DNF and let $F$ be the DNF consisting of the acyclic clauses of $H$.
	Then $\dcY_H = \dcY_F$.
\end{theorem}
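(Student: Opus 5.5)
The plan is to show that, for every injective order-uniform set $V \subseteq \Nd$, the realisations of $H$ and of $F$ over $V$ coincide. Then a digraph is realisable by $H$ over such a set exactly when it is realisable by $F$ over the same set, which gives $\dcY_H = \dcY_F$. Note that $H$ and $F$ have the same dimension $d$: $F$ keeps all $d^2$ variables and only drops clauses, so both classes are realised over the same family of injective order-uniform sets in $\Nd$.

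Write $H = F \vee \bigvee_{i=1}^{t} C_i$, where $C_1,\dots,C_t$ are the cyclic clauses of $H$. Here we use that every clause of an injective DNF is injective, hence either cyclic or acyclic. Fix an injective order-uniform $V \subseteq \Nd$, and let $G_H$, $G_F$ and $G_{C_i}$ be the realisations of $H$, $F$ and $C_i$ over $V$. By \cref{lem:union}, $G_H = G_F \vee \bigvee_{i=1}^t G_{C_i}$.

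Each $G_{C_i}$ is a realisation of the cyclic clause $C_i$ over an injective order-uniform set, so $G_{C_i} \in \dcY_{C_i}$. By \cref{lemma:cyclicchrom}, $G_{C_i}$ is therefore edgeless. Hence $G_H = G_F$. This holds for every admissible $V$, so the two classes of realisations, and their isomorphism closures, coincide.

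There is no real obstacle here. The only point to check is that $\dcY_{C_i}$ is defined through realisations of the single clause $C_i$, viewed as a $d$-dimensional formula, over the same kind of sets as $\dcY_H$. This is immediate from the definitions, and it is exactly what \cref{lemma:cyclicchrom} requires.
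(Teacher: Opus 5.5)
Your proposal is correct and follows the paper's proof: you fix an injective order-uniform $V \subseteq \Nd$ and use \cref{lemma:cyclicchrom} to show that the cyclic clauses of $H$ contribute no edges on $V$. It follows that the realisations of $H$ and $F$ over $V$ coincide, which gives $\dcY_H = \dcY_F$.
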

\begin{proof}
	Let $V \subseteq \Nd$ be an injective order-uniform set. By \cref{lemma:cyclicchrom}, none of the cyclic clauses in $H$ induces an edge between vertices in $V$. Hence the realisation of $H$ over $V$ coincides with the realisation of $F$ over $V$, which implies the theorem.
\end{proof}

\cref{th:reduction-to-acyclic} reduces the analysis to acyclic DNFs. Such DNFs consist of two types of clauses: discrete clauses and path clauses. 

\subsubsection{Discrete DNFs}\label{sec:all-discrete-clauses}

In this section, we consider digraphs realised by discrete DNFs, that is, DNFs in which all clauses are discrete.  

For a discrete clause $D$, we denote by $L(D) \subseteq [d]$ the set of its loop coordinates. Without loss of generality, we assume that $L(D) \neq [d]$ for all discrete clauses $D$. This is because no two distinct $v,u \in \Nd$ can satisfy such a clause, and thus it can be removed from the DNF without affecting anything. 

Let $D$ be a discrete clause, i.e.\ every connected component of the clause digraph $\ClauseDigraph_D$ is a single-vertex digraph, or, equivalently, only the variables $q_{i,i}$, $i \in [d]$ can appear positively in $D$. 
Denote $L := L(D) \subseteq [d]$ and $\overline{L} := [d] \setminus L$.
For a set $V \subseteq \Nd$ and $a \in \Ninj{L}$, we denote $V_{L,a} := \{ v \in V : v_{|L} = a \}$.
Let $G=(V,E)$ be the realisation of $D$ over a set $V \subseteq \Nd$. Let $a \in \Ninj{L}$ be fixed, and denote $G_{L,a} := G[V_{L,a}]$. 
In the next few lemmas we establish auxiliary facts about $G_{L,a}$ and $G$.

Recall that by $v(\overline{L})$ we denote the set of values of $v$ on $\overline{L}$, i.e.\ $v(\overline{L}) := \{ v_i : i \in \overline{L} \}$.

\begin{lemma}\label{lem:non-adjacent}
	Two vertices $v$ and $w$ in $G_{L,a}$ are non-adjacent if and only if $v(\overline{L}) \cap w(\overline{L}) \neq \emptyset$.
\end{lemma}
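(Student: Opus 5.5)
The plan is to unpack what it means for $(v,w)$ or $(w,v)$ to satisfy the discrete clause $D$ when both endpoints lie in $V_{L,a}$. Adjacency in the digraph $G_{L,a}$ is taken in the underlying sense: an edge in either direction. Since $D$ is symmetric under swapping the roles of the two vertices (its clause digraph has only loops), it suffices to analyse $D(v,w)$.

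\textbf{Reading off the literals of $D$.} Recall that $D$ is a full clause in which exactly the variables $q_{i,i}$ with $i \in L$ appear positively, and all other variables appear negatively. So $D(v,w)=1$ if and only if three conditions hold:
\begin{enumerate}
\item $v_i = w_i$ for all $i \in L$;
\item $v_i \neq w_j$ for all pairs $(i,j) \neq (i,i)$ with $i \in L$;
\item $v_i \neq w_i$ for all $i \in \overline{L}$.
\end{enumerate}

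\textbf{Simplifying using $v,w \in V_{L,a}$.} For $v,w \in V_{L,a}$ we have $v_{|L}=w_{|L}=a$, so condition 1 holds automatically. We then check condition 2 in cases.
\begin{itemize}
\item For $i,j \in L$ with $i \neq j$: $v_i = a_i \neq a_j = w_j$, because $a$ is injective.
\item For $i \in L$, $j \in \overline{L}$: $v_i = w_i \neq w_j$ by injectivity of $w$. The case $i \in \overline{L}$, $j \in L$ is symmetric, using injectivity of $v$.
\end{itemize}
Here we use that $V \subseteq \Nd$ is an injective set, so each vector has pairwise distinct coordinates; this is the setting of the section.

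Hence all remaining constraints concern pairs $i,j \in \overline{L}$, including $i=j$. These say exactly that $v_i \neq w_j$ for all $i,j \in \overline{L}$, that is, $v(\overline{L}) \cap w(\overline{L}) = \emptyset$.

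\textbf{Conclusion.} Therefore $D(v,w)=1 \iff v(\overline{L})\cap w(\overline{L})=\emptyset$. This condition is symmetric in $v$ and $w$, so $D(w,v)$ holds under the same condition. Thus $v$ and $w$ are non-adjacent exactly when the two value sets intersect.

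One degenerate case needs care: $v=w$. No loops are present, and indeed $v(\overline{L})$ meets itself because $\overline{L}\neq\emptyset$ by the standing assumption $L(D)\neq[d]$. So the equivalence holds for distinct vertices, and harmlessly also for $v=w$.

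The only potential obstacle is bookkeeping the cross terms between $L$ and $\overline{L}$. These are exactly where injectivity of the vectors and of $a$ is needed, so I would state that assumption explicitly at the start.
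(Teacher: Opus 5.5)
Your proof is correct and follows essentially the same route as the paper. The literals of $D$ involving an index in $L$ are automatically satisfied for vectors in $V_{L,a}$, by injectivity and the common restriction $a$. This leaves exactly the negative literals on $\overline{L}\times\overline{L}$, which together say precisely that $v(\overline{L})\cap w(\overline{L})=\emptyset$; your explicit handling of the $(v,w)$/$(w,v)$ symmetry and of the case $v=w$ is a harmless addition that the paper leaves implicit.
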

\begin{proof}
	By definition, for any $v,w \in V_{L,a}$, and any $i \in L$ and $j \in [d] \setminus \{ i \}$, we have that $v_i = w_i$, $v_i \neq w_j$, and $w_i \neq v_j$.
	These equalities and non-equalities satisfy the corresponding literals of $D$. All remaining literals are $\overline{q_{j_1,j_2}}, j_1,j_2 \in \overline{L}$.
	Therefore, neither of $(v,w)$ and $(w,v)$ satisfies $D$ if and only if there exist $j_1,j_2 \in \overline{L}$ such that $v_{j_1} = w_{j_2}$, i.e.\ $v(\overline{L}) \cap w(\overline{L}) \neq \emptyset$.
\end{proof}

\begin{lemma}\label{lem:G-L-a-partition}
	There exists a set $R \subseteq \bN$ of size at most $d \cdot \omega(G_{L,a})$ such that $V_{L,a}$ is the union of at most $d^2 \cdot \omega(G_{L,a})$ (possibly, empty) 
	sets $U_{i,r}, i \in \overline{L}, r \in R$, where
	$U_{i,r} := \{ v \in V_{L,a} : v_i = r \}$.
\end{lemma}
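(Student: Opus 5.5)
By \cref{lem:non-adjacent}, the underlying graph of $G_{L,a}$ is exactly the ``disjointness graph'' of the family of sets $v(\overline{L})$, $v \in V_{L,a}$: two vertices are adjacent precisely when their $\overline{L}$-value sets are disjoint. The plan is to pick a maximal clique and use its values as the set $R$.

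Let $K \subseteq V_{L,a}$ be a maximal clique in the underlying graph of $G_{L,a}$. Then $|K| \le \omega(G_{L,a})$. Define
\[
R := \bigcup_{u \in K} u(\overline{L}).
\]
Since each $u(\overline{L})$ has at most $|\overline{L}| \le d$ elements, we get $|R| \le d \cdot \omega(G_{L,a})$. If $V_{L,a}$ is empty there is nothing to prove, so we may take $K$ nonempty.

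Next we check the covering property. Take any $v \in V_{L,a}$. If $v \in K$, then $v_i \in R$ for every $i \in \overline{L}$; since $\overline{L}\neq\emptyset$ by our standing assumption $L(D)\neq[d]$, there is some $i$ with $v \in U_{i,v_i}$. If $v \notin K$, maximality of $K$ means $v$ is non-adjacent to some $u \in K$. By \cref{lem:non-adjacent}, $v(\overline{L}) \cap u(\overline{L}) \neq \emptyset$, so $v_i = r$ for some $i \in \overline{L}$ and some $r \in u(\overline{L}) \subseteq R$. Hence $v \in U_{i,r}$.

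Thus $V_{L,a} = \bigcup_{i \in \overline{L},\, r \in R} U_{i,r}$. The number of sets is at most
\[
|\overline{L}|\cdot|R| \le d \cdot d\,\omega(G_{L,a}) = d^2 \cdot \omega(G_{L,a}).
\]
The only subtle point is that non-adjacency must be read in the underlying graph: the lemma is stated symmetrically (neither $(v,w)$ nor $(w,v)$ is an edge), so the clique $K$ and the appeal to maximality both refer to that undirected graph. No real obstacle is expected.
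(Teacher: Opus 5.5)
Your proof is correct and follows essentially the same route as the paper: take a maximal clique, let $R$ be the union of its $\overline{L}$-values, and use maximality together with \cref{lem:non-adjacent} to place every vertex outside the clique in some $U_{i,r}$. Your explicit handling of vertices inside the clique (via $\overline{L}\neq\emptyset$) and of the empty case fills in small details that the paper leaves implicit.
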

\begin{proof}
	Let $W$ be a maximal clique in $G_{L,a}$. By the maximality of $W$, every vertex $v$ in $V_{L,a} \setminus W$ is non-adjacent to some vertex $w \in W$, and thus, by \cref{lem:non-adjacent}, $v(\overline{L}) \cap w(\overline{L}) \neq \emptyset$. Therefore, denoting $R := \bigcup_{w \in W} w( \overline{L})$, we  conclude that $v(\overline{L}) \cap R \neq \emptyset$ holds for every $v \in V_{L,a}$, and hence $V_{L,a}$ is covered by the sets $U_{i,r} = \{ v \in V_{L,a} : v_i = r \}$, $i \in \overline{L}, r \in R$. Since $|R| \leq |\overline{L}| \cdot |W| \leq d \cdot \omega(G_{L,a})$, we conclude that the number of such sets is at most $|\overline{L}| \cdot |R| \leq d^2 \cdot \omega(G_{L,a})$.
\end{proof}

\begin{lemma}\label{lem:functional-covering}
	The set $V$ admits a covering by at most 
	$d^2 \cdot \omega(G)$ sets each of which is $L$-functional.
\end{lemma}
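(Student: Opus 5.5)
The idea is to apply \cref{lem:G-L-a-partition} separately inside each fibre $V_{L,a}$, and then glue the resulting pieces across the different values of $a \in \Ninj{L}$ so that each glued set is $L$-functional. Throughout, $L = L(D) \neq [d]$, so $\overline{L} \neq \emptyset$ and every $i \in \overline{L}$ is a legitimate choice of $\lambda$ in the definition of $(L,\lambda)$-functionality.

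\textbf{Step 1 (cover each fibre).} Fix $a \in \Ninj{L}$ with $V_{L,a} \neq \emptyset$. Since $G_{L,a}$ is an induced subdigraph of $G$, we have $\omega(G_{L,a}) \le \omega(G)$. By \cref{lem:G-L-a-partition} there is a set $R_a \subseteq \bN$ with $|R_a| \le d \cdot \omega(G_{L,a}) \le d\cdot\omega(G)$ such that
\[
V_{L,a} = \bigcup_{i \in \overline{L},\, r \in R_a} U^a_{i,r}, \qquad \text{where } U^a_{i,r} = \{v \in V_{L,a} : v_i = r\}.
\]
Set $m := d\cdot\omega(G)$. Enumerate $R_a$ as $r^a_1, \dots, r^a_m$, repeating elements if $|R_a| < m$.

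\textbf{Step 2 (glue across fibres).} For $i \in \overline{L}$ and $k \in [m]$, define
\[
U_{i,k} := \bigcup_{a} U^a_{i,r^a_k}.
\]
Here $a$ ranges over the values with $V_{L,a} \neq \emptyset$. Every $v \in V$ lies in $V_{L,v_{|L}}$, and by Step 1 it lies in some $U^{v_{|L}}_{i,r}$ with $r \in R_{v_{|L}}$. Hence the family $\{U_{i,k}\}$ covers $V$, and it has at most $|\overline{L}| \cdot m \le d^2\cdot \omega(G)$ members.

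\textbf{Step 3 (functionality).} Define $\varphi_{i,k}(a) := r^a_k$ on the finitely many relevant $a$, and extend it arbitrarily elsewhere. For every $v \in U_{i,k}$ we have $v_{|L} = a$ for the unique fibre containing $v$, and therefore $v_i = r^a_k = \varphi_{i,k}(v_{|L})$. Thus $U_{i,k}$ is $(L,i)$-functional, hence $L$-functional. Empty sets are vacuously functional, and they can in any case be discarded.

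\textbf{Main obstacle.} The only delicate point is that non-adjacency in $G$ does not directly yield a small hitting set of coordinate values across different fibres, because vertices with different $L$-parts are automatically non-adjacent. This is why the argument cannot work with a single global maximal clique. It must instead work fibre by fibre, and the uniform indexing by $k \in [m]$ is what turns the per-fibre choice of value $r$ into a function of $v_{|L}$.
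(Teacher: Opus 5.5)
Your proof is correct and follows the paper's argument essentially verbatim. You cover each fibre $V_{L,a}$ via \cref{lem:G-L-a-partition}, index the elements of $R_a$ by a uniform slot $k \in [d\cdot\omega(G)]$, and glue the sets $\bigcup_a U^a_{i,r^a_k}$, each of which is $(L,i)$-functional via $a \mapsto r^a_k$. The only cosmetic difference is that you pad $R_a$ by repeating elements, which is valid because $R_a \neq \emptyset$ whenever $V_{L,a} \neq \emptyset$ and $\overline{L} \neq \emptyset$; the paper instead leaves the missing slots empty.
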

\begin{proof}
	Let $M = V_{|L}$.
	For $a \in M$ and the digraph $G_{L,a}$, denote by $R^a$ and $U_{i,r}^a$, $i \in \overline{L}, r \in R^a$, the sets guaranteed by \cref{lem:G-L-a-partition}.
	Since 
	\[
	V = \bigcup_{a \in M} V_{L,a} 
	\quad \text{ and } \quad V_{L,a} = \bigcup_{i \in \overline{L}} \bigcup_{r \in R^a} U_{i,r}^a,
	\]
	we have that $V$ is covered by the sets $U_{i,r}^a$, $a \in M, i \in \overline{L}, r \in R^a$. In the rest of the proof we show how to group these sets into the desired number of $L$-functional sets.
	
	Recall that $|R^a| \leq d\cdot \omega(G_{L,a}) \leq d \cdot \omega(G)$.
	For $j \in [d \cdot \omega(G)]$, denote by $r_j^a$ the $j$-th largest element in $R^a$, if it exists.
	Now, for every $i \in \overline{L}$, we define at most $d \cdot \omega(G)$ distinct $(L,i)$-functional sets. 
	The $j$-th such set is defined as 
	\[
	W_{i,j} := \bigcup_{a \in M} U_{i,r_j^a}^a,
	\]
	where we assume that $U_{i,r_j^a}^a$ is empty whenever $r_j^a$ does not exist.
	The $(L,i)$-functionality of $W_{i,j}$ is witnessed by a function $\varphi_j : \Ninj{L} \rightarrow \bN$ that maps $a$ to $r^a_j$, i.e.\ for every $v \in W_{i,j}$ we have $v_i = \varphi_j(v_{|L}) =
	r_j^{v_{|L}}$. 
	
	It is easy to see that $V = \bigcup_{i,j} W_{i,j}$ and the number of sets in the union is at most $|\overline{L}| \cdot d \cdot \omega(G) \leq d^2 \cdot \omega(G)$.
\end{proof}

\begin{lemma}\label{lem:is-functional}
	Let $G=(V,E) \in \dcY_D$ be the realisation of $D$ over a set $V \subseteq \Nd$, and let $U \subseteq V$ be an $L$-functional set.
	Then $U$ is an independent set in $G$.
\end{lemma}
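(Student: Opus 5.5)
Take two vertices $u,w \in U$ and show that neither $(u,w)$ nor $(w,u)$ satisfies $D$. Since $U$ is $L$-functional, it is $(L,\lambda)$-functional for some $\lambda \in \overline{L}$, witnessed by a function $\varphi$. It suffices to treat the ordered pair $(u,w)$; the reverse pair is handled by the same argument with the roles exchanged.

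Suppose, for contradiction, that $(u,w)$ satisfies $D$. The positive literals of $D$ are exactly the variables $q_{i,i}$ with $i \in L$. So we get $u_i = w_i$ for every $i \in L$, that is, $u_{|L} = w_{|L}$. Functionality then forces
\[
u_\lambda = \varphi(u_{|L}) = \varphi(w_{|L}) = w_\lambda .
\]
Because $\lambda \notin L$ and $D$ is a full discrete clause, the variable $q_{\lambda,\lambda}$ appears negatively in $D$. Hence $u_\lambda \neq w_\lambda$, which is a contradiction. Thus no edge joins two vertices of $U$, and $U$ is independent.

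There is one subtle case: when $L=\emptyset$, the map $\varphi$ is constant, and the same argument goes through directly. Note also that the argument does not use injectivity or order-uniformity of $V$, only the form of $D$ and $u \neq w$. I do not expect any step to be a real obstacle. The only point to check is that fullness of $D$ indeed makes $\overline{q_{\lambda,\lambda}}$ a literal of $D$ for $\lambda \notin L$, and this holds because $L$ is by definition the set of coordinates whose diagonal variable appears positively.
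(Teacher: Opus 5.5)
Your proof is correct and matches the paper's argument: the paper splits into the cases $v_{|L}\neq w_{|L}$, where some positive loop literal fails, and $v_{|L}=w_{|L}$, where functionality forces $v_\lambda=w_\lambda$ and so violates $\overline{q_{\lambda,\lambda}}$. This is your contradiction argument written as a case split.
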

\begin{proof}
	Let $\lambda \in \overline{L}$ be such that $U$ is $(L,\lambda)$-functional, and let $\varphi \colon \Ninj{L} \rightarrow \bN$ be a function witnessing the $(L,\lambda)$-functionality of $U$, i.e.\ $v_{\lambda} = \varphi(v_{|L})$ holds for every $v \in U$.
	We claim that any two vertices $v$ and $w$ in $U$ are not adjacent. Indeed, if $v_{|L} \neq w_{|L}$, then the vertices are not adjacent as they differ in at least one loop coordinate. Otherwise, if $v_{|L}  = w_{|L}$, then, due to the $(L,\lambda)$-functionality, we have $v_{\lambda} = \varphi(v_{|L}) = \varphi(w_{|L}) = w_{\lambda}$, in which case $v$ and $w$ are not adjacent either, since $\lambda$ is a non-loop coordinate.
\end{proof}

Next lemma extends \cref{lem:is-functional} to multiple discrete clauses.

\begin{lemma}\label{lem:multi-functional-independent-set}
	Let $F$ be a discrete DNF, and let $D_1, D_2, \ldots, D_k$ be the clauses of $F$ with $L_i := L(D_i)$, $i \in [k]$.
	Let $G=(V,E)$ be the realisation of $F$ over a set $V \subseteq \Nd$ and let $U \subseteq V$ be an $(L_1,L_2, \ldots, L_k)$-functional set.
	Then $U$ is an independent set in $G$.
\end{lemma}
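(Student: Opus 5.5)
The plan is to reduce to \cref{lem:is-functional} via \cref{lem:union}. Write $F = D_1 \vee D_2 \vee \cdots \vee D_k$. For each $i \in [k]$, let $G_i$ be the realisation of the single clause $D_i$ over $V$. By \cref{lem:union}, the realisation of $F$ over $V$ equals $G_1 \vee G_2 \vee \cdots \vee G_k$. So $(u,w)$ is an edge of $G$ exactly when it is an edge of some $G_i$. It therefore suffices to show that $U$ is independent in each $G_i$ separately.

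Fix $i \in [k]$. By definition of $(L_1,\ldots,L_k)$-functionality, $U$ is $(L_i,\lambda_i)$-functional for some $\lambda_i \in \overline{L_i}$; in particular, $U$ is $L_i$-functional. I would now apply \cref{lem:is-functional} to the clause $D_i$, whose loop coordinates are $L_i$, and the $L_i$-functional subset $U \subseteq V$ of the realisation $G_i$. This gives that $U$ is independent in $G_i$. Taking the union over $i$ then yields independence in $G$.

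The only point needing care is a hypothesis mismatch. \cref{lem:is-functional} is stated for $G \in \dcY_D$, that is, realisations over injective (order-uniform) sets, while the present statement allows an arbitrary $V \subseteq \Nd$. I would check that the proof of \cref{lem:is-functional} never uses injectivity or order-uniformity. It only uses two facts, both valid for any vectors:
\begin{enumerate}
\item if $v_{|L} \neq w_{|L}$, then some positive literal $q_{j,j}$ with $j \in L$ fails;
\item if $v_{|L} = w_{|L}$, then $v_\lambda = w_\lambda$ by functionality, so the negative literal $\overline{q_{\lambda,\lambda}}$ (present since $\lambda \notin L$ and $D$ is discrete) fails, for both $(v,w)$ and $(w,v)$.
\end{enumerate}
Rather than invoking the lemma as a black box, I would repeat this two-case argument directly for each $D_i$. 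This avoids the issue entirely, and the argument is complete.
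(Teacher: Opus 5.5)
Your proposal is correct and follows the paper's proof: decompose $G$ as the union of the realisations of the individual clauses $D_i$ (\cref{lem:union}) and apply \cref{lem:is-functional} to each. Your remark about the hypothesis mismatch is also accurate. The paper invokes \cref{lem:is-functional} directly and so relies implicitly on what you verify, namely that its two-case argument uses neither order-uniformity nor injectivity of $V$.
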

\begin{proof}
	For each $i \in [k]$, let $H_i=(V,E_i)$ be the realisation of $D_i$ over $V$. Then $G = \bigvee_{i=1}^k H_i$.
	By definition, the $(L_1,L_2, \ldots, L_k)$-functional set $U$ is $L_i$-functional for every $i \in [k]$. Then, by \cref{lem:is-functional}, $U$ is independent in $H_i$ for every $i \in [k]$, and hence it is independent in $G$.
\end{proof}

We are now ready to prove the main result of this section.

\begin{theorem}\label{th:diagonal-multi}
	Let $F$ be a discrete DNF, $D_1, D_2, \ldots, D_k$ be the clauses of $F$, and let $L_i := L(D_i)$, $i \in [k]$.
	Then, for every realisation $G=(V,E)$ of $F$ over a set $V \subseteq \Nd$, its vertex set can be partitioned into at most $(d^2 \cdot \omega(G))^k$ independent sets each of which is $(L_1,L_2, \ldots, L_k)$-functional.
\end{theorem}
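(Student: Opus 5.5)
The plan is to take the coverings supplied by \cref{lem:functional-covering}, one for each clause, and intersect them. For each $i \in [k]$, let $G_i=(V,E_i)$ be the realisation of the single clause $D_i$ over $V$, so that $G = \bigvee_{i=1}^k G_i$ by \cref{lem:union}. Since every $G_i$ is a subdigraph of $G$ on the same vertex set, $\omega(G_i) \le \omega(G)$. Applying \cref{lem:functional-covering} to $D_i$ and $G_i$ gives a covering $V = W^i_1 \cup \cdots \cup W^i_{m_i}$ with $m_i \le d^2 \cdot \omega(G_i) \le d^2\cdot\omega(G)$, where each $W^i_j$ is $L_i$-functional.

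First, I would turn each covering into a partition. Replace $W^i_j$ by $\widetilde W^i_j := W^i_j \setminus \bigcup_{j'<j} W^i_{j'}$. The number of sets does not increase, and each $\widetilde W^i_j$ is still $L_i$-functional, because $(L,\lambda)$-functionality is inherited by subsets: the same witnessing function $\varphi$ works. Empty parts are harmless.

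Next, I would take the common refinement. For each tuple $(j_1,\dots,j_k)$ with $j_i\in[m_i]$, set
\[
U_{j_1,\dots,j_k} := \widetilde W^1_{j_1} \cap \widetilde W^2_{j_2} \cap \cdots \cap \widetilde W^k_{j_k}.
\]
These sets partition $V$, and there are at most $\prod_i m_i \le (d^2\cdot\omega(G))^k$ of them. Each $U_{j_1,\dots,j_k}$ is a subset of an $L_i$-functional set for every $i$, so it is $(L_i,\lambda_i)$-functional for the corresponding $\lambda_i$. It is therefore $(L_1,\dots,L_k)$-functional.

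Finally, \cref{lem:multi-functional-independent-set} shows that each such set is independent in $G$, which finishes the proof. The argument is essentially bookkeeping. The one point to check is that \cref{lem:functional-covering} applies to an arbitrary $V\subseteq \Nd$ with the bound stated in terms of $\omega(G_i)$. Its proof only uses \cref{lem:G-L-a-partition} on the subgraphs $G_{L,a}$, and their clique numbers are at most $\omega(G_i)\le\omega(G)$, so the monotonicity $\omega(G_i)\le\omega(G)$ is what makes the combined bound valid.
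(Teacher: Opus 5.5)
Your proof is correct and follows the paper's argument. You apply \cref{lem:functional-covering} to the realisation of each clause $D_i$ (using $\omega(G_i)\le\omega(G)$), refine the $k$ coverings into a common partition with at most $(d^2\cdot\omega(G))^k$ parts, and conclude independence via \cref{lem:multi-functional-independent-set}. You also spell out two steps the paper leaves implicit: making the coverings disjoint, and the fact that functionality is inherited by subsets.
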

\begin{proof}
	For each $i \in [k]$, let $H_i=(V,E_i)$ be the realisation of $D_i$ over $V$.
	Then $G = \bigvee_{i=1}^k H_i$. 
	Denote by $\cP_i$ the family of at most $d^2 \cdot \omega(H_i)$ $L_i$-functional sets covering $V$, which is guaranteed to exist by \cref{lem:functional-covering}.
	Then every vertex in $V$ belongs to at least one set in $\cP_i$ for every $i \in [k]$. This implies that $V$ can be partitioned into at most $(d^2 \cdot \omega(G))^k$ $(L_1,L_2, \ldots, L_k)$-functional sets, and by \cref{lem:multi-functional-independent-set} each of these sets is independent in $G$.
\end{proof}

\subsubsection{Path clauses}\label{sec:path-clauses}

\begin{lemma} \label{lem:path-clause-shift-colorable}
	Let $P$ be a path clause. Then any digraph $G \in \dcY_P$ is shift-colorable. In particular, $\omega(G) \leq 2$.
\end{lemma}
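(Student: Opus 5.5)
Since $P$ is a path clause, $\ClauseDigraph_P$ has at least one non-trivial component, which is a directed path $i_1 \to i_2 \to \cdots \to i_m$ with $m \geq 2$. The plan is to project onto a single edge of this path and show that the projection is a homomorphism into a shift digraph. Set $I := \{i_1,i_2\}$. By \cref{lem:subclause}, the map $v \mapsto v_{|I}$ is a homomorphism from $G$ to the realisation $G'$ of $P_{|I}$ over $V_{|I}$. The restricted clause $P_{|I}$ says that $q_{i_1,i_2}$ holds and $q_{i_1,i_1}$, $q_{i_2,i_2}$, $q_{i_2,i_1}$ fail. Hence there is an edge $(u,w)$ in $G'$ exactly when $u_{i_1}=w_{i_2}$ and the other three equalities do not hold.

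Up to relabelling coordinates so that $u_{i_1}$ plays the role of the ``second'' coordinate, this is the shift-type edge relation. Order-uniformity of $V$ fixes the comparison between $v_{i_1}$ and $v_{i_2}$ for all $v \in V$, and injectivity ensures this comparison is strict.

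\emph{Case $v_{i_2} < v_{i_1}$ for all $v$.} Map $v_{|I}$ to the pair $(v_{i_2}, v_{i_1})$, which is increasing. An edge $(u,w)$ gives $u_{i_1}=w_{i_2}$, i.e.\ the second entry of the image of $u$ equals the first entry of the image of $w$. This is exactly an edge of $\vec S(n,2)$ for large $n$, and the image of $V_{|I}$ induces a shift digraph.

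\emph{Case $v_{i_1} < v_{i_2}$ for all $v$.} Map to the pair $(v_{i_1}, v_{i_2})$ instead. Then $u_{i_1}=w_{i_2}$ says the first entry of the image of $u$ equals the second entry of the image of $w$, so the edge direction is reversed. To get an honest homomorphism, compose with the order-reversing map $x \mapsto N - x$ on values and swap the two entries. This sends $(a,b)$ with $a<b$ to $(N-b,\,N-a)$, which is again increasing, and the relation ``first of $u$ equals second of $w$'' becomes ``second of $u'$ equals first of $w'$''.

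In both cases, composing the projection with this map gives a homomorphism from $G$ into a shift digraph, so $G$ is shift-colorable. Shift graphs are triangle-free by Fact~\ref{fact:shift}, and homomorphisms do not increase clique number. Since $G$ has at least one edge if nonempty, we get $\omega(G) \leq 2$.

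The only delicate point is the orientation bookkeeping in the second case. I also need to confirm that the chosen map is well defined on $V_{|I}$, and that it lands in a finite $[n]$ after shifting the values to be positive. Both follow from finiteness of $V$.
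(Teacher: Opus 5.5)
Your proof is correct, but it takes a different route from the paper.

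The paper first treats the case where $\Delta_P$ is a single directed path on all of $[d]$, normalised to edges $(c+1,c)$. It shows that if $G$ has an edge, then order-uniformity propagates $u_c<u_{c+1}\iff u_{c+1}<u_{c+2}$ along the whole path. Hence every vector of $V$ is monotone, and $G$ is (isomorphic to) a $d$-dimensional shift digraph. Only then does it apply Lemma~\ref{lem:subclause} to project onto one non-trivial component in the general case.

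You instead project straight onto the two coordinates of a single edge $i_1\to i_2$. You observe that only two facts are needed: every edge $(u,w)$ satisfies $u_{i_1}=w_{i_2}$, and injectivity plus order-uniformity fix a strict comparison between $v_{i_1}$ and $v_{i_2}$. Then $v\mapsto(v_{i_2},v_{i_1})$ or $v\mapsto(N-v_{i_2},N-v_{i_1})$ is a homomorphism into $\vec S(n,2)$. The negative literals of $P_{|I}$ are never actually used.

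What your route buys:
\begin{itemize}
\item It is shorter and avoids the monotonicity-propagation argument.
\item It lands directly in the $2$-dimensional shift digraphs. The paper's conclusion that $G$ is a $d$-dimensional shift digraph tacitly needs $\dcS_d\subseteq\dcS_2$ (third item of Fact~\ref{fact:shift}) to match the definition of shift-colorable.
\end{itemize}

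What the paper's route buys: in the connected case $G$ is an induced copy of a $d$-dimensional shift digraph, not merely something homomorphic to one, and any non-edgeless realisation forces a monotone order type. Every later use of this lemma in the paper needs only triangle-freeness or shift-colorability, so your weaker homomorphic conclusion suffices.

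One small slip: your remark that $G$ has an edge whenever it is nonempty is false (a single vertex refutes it). It is also unnecessary, since $\omega(G)\le 2$ follows from triangle-freeness of shift graphs alone.
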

\begin{proof}
	Suppose first that the clause digraph $\ClauseDigraph_P$ is a directed path on $d \geq 2$ vertices. Without loss of generality, assume that $E(\ClauseDigraph_P)=\{(i+1,i) : i\in[d-1]\}$.
	Thus the variables $q_{i+1,i}$, for $i\in[d-1]$, appear positively in $P$, whereas all other variables $q_{i,j}$, with $i,j\in[d]$, appear negatively.
	
	Let $G=(V,E) \in \dcY_P$ be a realisation of $P$ over an order-uniform injective set $V\subseteq \Nd$. We will show that $G$ is a shift digraph. By the definition of $P$, for any $u,w\in V$ we have $(u,w)\in E \iff u_{i+1}=w_{i}\text{ for every }i \in [d-1]$.
	
	We claim that if $G$ contains an edge, then all vectors in $V$ are either increasing or decreasing. Indeed, suppose that $(u,w)\in E$. Since $u$ and $w$ have the same order type, for every $i\in[d-2]$ we have
	$u_{i}<u_{i+1} \Longleftrightarrow w_{i}<w_{i+1}
	\Longleftrightarrow u_{i+1}<u_{i+2}$.
	Hence all adjacent comparisons in $u$ are the same, and thus either $u$ is increasing or $u$ is decreasing.
	Since all vectors in $V$ have the same order type, they are therefore all increasing or all decreasing. 
	
	From the above, if the common order type of vectors in $V$ is neither increasing nor decreasing, then $G$ has no edges, and hence it is trivially a shift digraph.
	If all vectors in $V$ are increasing, then $G$ is a shift digraph by definition.
	If all vectors in $V$ are decreasing, then $G$ is isomorphic to a shift digraph,  which can be seen by mapping each  $(v_1, v_2, \ldots, v_d) \in V$ to $((n+1)-v_1, (n+1)-v_2, \ldots, (n+1)-v_d)$, where $n$ is the largest coordinate value of a vector in $V$.
	
	Suppose now that $\ClauseDigraph_P$ has more than one connected component. Since $P$ is a path clause, one of these components is a directed path on at least two vertices. 
	By \cref{lem:subclause} and the above, any digraph $G \in \dcY_P$ is homomorphic to a shift digraph, and therefore $G$ is shift-colorable.
\end{proof}

\subsubsection{Decomposition theorem for injective DNFs}\label{sec:decomposition-injective}

\begin{theorem}\label{th:decomposition-injective}
	Let $F$ be an injective DNF on $d^2$ variables. Then, for every  $G \in \dcY_F$, the vertex set of $G$ admits a partition into at most $(d^2 \cdot \omega(G))^{2^{d^2}}$ parts, each inducing a subdigraph that is a union of at most $2^{d^2}$ shift-colorable digraphs.
\end{theorem}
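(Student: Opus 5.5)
Let $G=(V,E)\in\dcY_F$ be the realisation of $F$ over an injective order-uniform set $V\subseteq\Nd$. The plan has three steps: discard the cyclic clauses, use the discrete clauses to build the partition, and let the path clauses supply the shift-colorable pieces.

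First, by \cref{th:reduction-to-acyclic} we may replace $F$ by the DNF consisting of its acyclic clauses without changing the class $\dcY_F$. So we assume $F$ is acyclic. Every clause of $F$ is then either discrete or a path clause. Write $F = F_D \vee F_P$, where $F_D = D_1\vee\dots\vee D_k$ collects the discrete clauses and $F_P = P_1\vee\dots\vee P_m$ collects the path clauses. Since $F$ is full, its clauses are distinct minterms in $d^2$ variables, so $k+m\le 2^{d^2}$. In particular $k\le 2^{d^2}$ and $m\le 2^{d^2}$. By \cref{lem:union}, $G = G_D\vee G_P$, where $G_D$ and $G_P$ are the realisations of $F_D$ and $F_P$ over $V$.

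Next, apply \cref{th:diagonal-multi} to $G_D$, which is the realisation of the discrete DNF $F_D$ over $V$. This partitions $V$ into at most $(d^2\cdot\omega(G_D))^k$ sets, each independent in $G_D$. Since $G_D$ is a subdigraph of $G$ on the same vertex set, $\omega(G_D)\le\omega(G)$. Using $k\le 2^{d^2}$ and assuming $\omega(G)\ge1$, the number of parts is at most $(d^2\cdot\omega(G))^{2^{d^2}}$. If $F_D$ is empty, take the single part $V$.

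Finally, fix a part $V_i$. Because $V_i$ is independent in $G_D$, the induced subdigraph $G[V_i]$ equals $G_P[V_i] = \bigvee_{j=1}^m G_j[V_i]$, where $G_j$ is the realisation of $P_j$ over $V$. Each $V_i\subseteq V$ is again injective and order-uniform, so $G_j[V_i]$ is the realisation of $P_j$ over an injective order-uniform set and lies in $\dcY_{P_j}$. By \cref{lem:path-clause-shift-colorable}, each $G_j[V_i]$ is shift-colorable. Hence $G[V_i]$ is a union of at most $m\le 2^{d^2}$ shift-colorable digraphs, as required.

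The argument is essentially an assembly of earlier results, so I expect no substantive obstacle. The only points needing care are the bookkeeping of the exponent, namely bounding the number of discrete clauses by $2^{d^2}$ together with the monotonicity $\omega(G_D)\le\omega(G)$, and checking that restricting to a part keeps the vertex set injective and order-uniform so that \cref{lem:path-clause-shift-colorable} applies.
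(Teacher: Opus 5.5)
Your proposal is correct and follows the paper's proof essentially step for step. Both arguments pass to the acyclic clauses via \cref{th:reduction-to-acyclic}, partition $V$ using \cref{th:diagonal-multi} applied to the discrete part (with $\omega(G_D)\le\omega(G)$), and then apply \cref{lem:path-clause-shift-colorable} to each path clause on every part, with both clause counts bounded by $2^{d^2}$.
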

\begin{proof}
	Without loss of generality, by \cref{th:reduction-to-acyclic}, we assume that $F$ is acyclic.
	Let $P_1, P_2, \ldots, P_{a}$ be the path clauses and $D_{1}, D_{2}, \ldots, D_{b}$ be the discrete clauses of $F$, i.e.\
	\[
	F := \bigvee_{i \in [a]} P_i \vee \bigvee_{j \in [b]} D_j.
	\]
	Let $G 
    \in \dcY_F$ be the realisation of $F$ over an order-uniform injective set $V \subseteq \Nd$.
	Denote by $G_D$ the realisation of $\bigvee_{j \in [b]} D_j$ over $V$, and for every $i\in[a]$ denote by $G_{P_i}$ the realisation of $P_i$ over $V$.
	
	By \cref{th:diagonal-multi}, $V$ can be partitioned into $r \leq (d^2 \cdot \omega(G_D))^b \leq (d^2 \cdot \omega(G))^b$ sets $V_1, V_2, \ldots, V_r$ each of which is an independent set in $G_D$. In particular, none of the discrete clauses contribute edges in $G$ within these sets, and thus, for each $i \in [r]$, the digraph $G[V_i]$ is the union of digraphs $G_{P_1}[V_i], G_{P_2}[V_i], \ldots, G_{P_a}[V_i]$. 
	Since any subset of an injective order-uniform set is injective and order-uniform, by \cref{lem:path-clause-shift-colorable} each of these digraphs is shift-colorable. Therefore, for every $i \in [r]$, $G[V_i]$ is a union of at most $a$ shift-colorable digraphs. 
	Finally, as the total number of clauses in $F$ is at most $2^{d^2}$, we have $a, b \leq 2^{d^2}$, and the result follows.
\end{proof}

\subsection{Proof of the decomposition theorem}\label{sec:proof-decomposition-theorem}

We now prove our decomposition theorem that we restate below for convenience.

\MainDecomposition*
\begin{proof}
	Let $f : \{0,1\}^{d^2} \rightarrow \{0,1\}$ be such that $\dcX \subseteq \dcX_{f}$. Let $G \in \dcX$ be the realisation of $f$ over an arbitrary set $V \subseteq \Nd$.
	By \cref{th:decomposition-uniform-Y},
	$V$ can be partitioned into $r \leq d^{2d}$ sets $V_1, V_2, \ldots, V_r$ such that, for every $i \in [r]$, the digraph $G[V_i]$ is in $\dcY_{H_i}$ for some injective clauses $H_1, H_2, \ldots, H_r$ each over at most $d^2$ variables. By \cref{th:decomposition-injective}, the vertex set of each such digraph admits a partition into at most 
	$(d^2 \cdot \omega(G[V_i]))^{2^{d^2}} \leq (d^2 \cdot \omega(G))^{2^{d^2}}$ parts, each inducing a subdigraph that is a union of at most $2^{d^2}$ shift-colorable digraphs. Combining the two partitions, we obtain the theorem with $\tau(\omega) := d^{2d} \cdot (d^2 \cdot \omega(G))^{2^{d^2}}$ and $c := 2^{d^2}$.
\end{proof}

\subsection{Refinements for full set-defined classes}\label{sec:refinements-full-set-defined}

Recall, a path clause is \emph{loopless} if it contains no loop coordinates.
Let $P$ be a loopless path clause over $d^2$ variables.  For a tuple of pairs $Z = ((L_1,\lambda_1), (L_2,\lambda_2), \ldots, (L_k,\lambda_k))$, where $L_i \neq [d]$ and $\lambda_i \in \overline{L_i} := [d] \setminus L_i$, $i \in [k]$, we denote by $\dcY_{P,Z}$ the class of all set-defined digraphs that can be realised by the clause $P$ over order-uniform $Z$-functional sets in $\Nd$.
More formally,
\[
\dcY_{P,Z} := \{ H \simeq G=(V,E) : V \subseteq \Nd \text{ is order-uniform } Z\text{-functional, and } G \text{ is a realisation of } P \text{ over } V \}.
\]
Observe that any subset of an order-uniform $Z$-functional set is also order-uniform and $Z$-functional, which implies that $\dcY_{P,Z}$ is a hereditary class.

Our goal in this section is to reduce \chiboundedness of a full $d$-dimensional set-defined graph class $\dcX_f$ to boundedness of chromatic number in a family of classes of the form $\dcY_{P,Z}$. Furthermore, we will show that the description of the latter classes in the form of pairs $(P,Z)$ can be computed from $f$ in time that depends only on $d$. Together with the decision procedure for boundedness of chromatic number in classes  $\dcY_{P,Z}$ that we establish in \cref{sec:chi-dichotomy}, this will give us a procedure for deciding \chiboundedness of full set-defined graph classes (see \cref{sec:deciding-chiboundedness}).

\begin{restatable}{theorem}{fromXtoYPZ}\label{th:from-full-set-defined-to-shift-colorable}
	Let $f : \{0,1\}^{d^2} \rightarrow \{0,1\}$. There exist $r \leq d^{2d} \cdot 2^{d^2} \cdot d^{2^{d^2}}$ classes $\dcY_{P_1,Z_1}, \dcY_{P_2,Z_2}, \ldots, \dcY_{P_r,Z_r}$ such that $\dcX_f$ is \chibounded if and only if each of the classes $\dcY_{P_i,Z_i}, i \in [r]$, has bounded chromatic number. Furthermore, the pairs $(P_i,Z_i), i \in [r]$ can be computed from $f$.
\end{restatable}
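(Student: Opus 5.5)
The plan is to follow the four-step refinement announced in the outline: \cref{lem:fromXtoY_H}, \cref{lem:path+discrete-clauses}, \cref{lem:loop-elimination}, \cref{th:diaganal-to-path}. At each step I will keep a \chiboundedness equivalence between the class at hand and finitely many simpler classes, not just a decomposition. The fullness of $\dcX_f$ supplies the converse directions: every realisation over a subset of a $\cP$-homogeneous order-uniform set lies in $\dcX_f$.

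\emph{Step 1.} By the discussion before \cref{th:decomposition-uniform-Y} and by \cref{th:compatible_to_full_matching}, we have $\dcX_{f,\cP}=\dcX'_{H_\cP}$, and every order-type slice $\dcY_{H_\cP}$ is a subclass of $\dcX_f$. Conversely, every $G\in\dcX_f$ is partitioned into at most $d^{2d}$ induced pieces, each lying in some $\dcY_{H}$. \cref{lem:union-chi-bounded}, applied to vertex partitions (colour the parts separately), then shows that $\dcX_f$ is \chibounded if and only if each of these at most $d^{2d}$ classes $\dcY_H$ is \chibounded. The order-type slices need care: a fixed order type is a class $\dcY$ with $H$ reindexed, so I treat each pair $(\cP,\text{order type})$ as a single class. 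By \cref{th:reduction-to-acyclic}, each $H$ may be taken acyclic.

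\emph{Step 2.} Fix an acyclic $H=\bigvee_i P_i\vee\bigvee_j D_j$, and let $Z_0=((L(D_j),\lambda_j))_j$ range over all choices of $\lambda_j\in\overline{L(D_j)}$. There are at most $d^{2^{d^2}}$ such choices. By \cref{th:diagonal-multi}, every $G\in\dcY_H$ splits into polynomially many (in $\omega$) $Z_0$-functional sets, on which only the path clauses contribute edges. The path-clause realisations are triangle-free by \cref{lem:path-clause-shift-colorable}. Combined with \cref{lem:omega-of-union}, each part then has clique number bounded by a constant, so \chiboundedness of $\dcY_H$ is equivalent to bounded chromatic number of the unions of path clauses on $Z_0$-functional order-uniform sets. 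Since the union is over at most $2^{d^2}$ clauses, this is in turn equivalent, via the product bound of \cref{lem:union-chi-bounded}, to bounded chromatic number of each single path clause $P_i$ on such sets. For the converse I need that the realisation of $H$ on a $Z_0$-functional set coincides with the union of the $P_i$ realisations; this holds by \cref{lem:multi-functional-independent-set}. These digraphs lie in $\dcY_H$ and have bounded $\omega$, so they genuinely witness \chiunboundedness. The same argument handles a single $P_i$ by a Ramsey/pigeonhole step on the colourings. This accounts for the factor $2^{d^2}\cdot d^{2^{d^2}}$.

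\emph{Step 3 (loop elimination), the main obstacle.} A loop coordinate $\ell$ of $P_i$ forces $u_\ell=w_\ell$ along edges, so each connected component has constant $\ell$-coordinate. For the bounded direction I will project away the loop coordinates as in \cref{lem:subclause}. The difficulty is transporting the functional constraints: a constraint $(L,\lambda)$ with $\lambda$ a loop coordinate becomes vacuous, and a constraint with $\ell\in L$ becomes a constraint with $L\setminus\{\ell\}$ on each fixed-$\ell$ fibre, because $\ell$ is constant on a component. So I will define $Z$ on the remaining coordinates by deleting loop coordinates from each $L$ and dropping constraints whose $\lambda$ is a loop coordinate. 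The chromatic number of a digraph is the maximum over its components, so it suffices to treat one fibre. Conversely, given a realisation over a $Z$-functional set on the remaining coordinates, I will append constant loop coordinates, with values distinct from everything else, and check that the extended set stays $Z_0$-functional and order-uniform. Order-uniformity is the delicate point: I place the constant loop values in a position consistent with the prescribed order type, which is possible by relabelling values order-preservingly. Computability follows throughout because each step is a finite enumeration over partitions, order types, clauses and choices of $\lambda_j$ determined by $f$. The total count multiplies to $r\le d^{2d}\cdot 2^{d^2}\cdot d^{2^{d^2}}$.
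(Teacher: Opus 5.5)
Your Step 3 fails as written, and the cause is the order-type splitting you introduce in Step 1. Once a class is a slice of a fixed order type $\tau$, the constant loop values must sit in one position relative to the non-loop coordinates, simultaneously for every vertex. When $\tau$ places a loop coordinate strictly between two non-loop coordinates this is impossible, and no order-preserving relabelling helps, since relabelling keeps every equality and inequality that causes the conflict.

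Concretely, take $d=3$, let $P$ have positive literals $q_{2,1},q_{3,3}$, and let $\tau\colon v_1<v_3<v_2$. An edge $u\mapsto w$ in the $\tau$-slice forces $u_3=w_3=:a$ and then $w_1=u_2>a>w_1$. So this slice of $P$ is edgeless, whereas $P_{|\{1,2\}}$ with $Z'=\emptyset$ realises every shift digraph. Hence the per-slice equivalence claimed in Step 3 is false, and for the pair produced by this slice your ``only if'' argument does not go through. A related slip: a fixed-order-type slice is not a class $\dcY_{H^\sigma}$ for a reindexed $H$, since such classes again contain all order types.

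The repair is easy.
\begin{itemize}
\item Do not split by order type. The class $\dcY_H$ already allows every order type, which is how the paper's \cref{lem:fromXtoY_H} works, and the bound $d^{2d}$ still holds.
\item For the lift, take the loop values to be an injective vector larger than every value occurring in $V'$. Then all lifted vectors share one order type; mere disjointness from the values of $V'$ would not ensure this.
\end{itemize}
The lift is injective, order-uniform and $Z_0$-functional, so the realisation of $H$ on it is the union of the path-clause realisations. This realisation lies in $\dcY_H\subseteq\dcX_f$, has clique number below $R_a(3)$, and contains the given $\dcY_{P',Z'}$ digraph as a spanning subdigraph. That is all the ``only if'' direction needs.

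With this fix your argument is a correct reordering of the paper's reductions. The paper proceeds as follows:
\begin{itemize}
\item reduce to acyclic DNFs;
\item reduce to one path clause together with all discrete clauses (\cref{lem:path+discrete-clauses}, via Ramsey applied to the supergraph realised by $F$);
\item eliminate loops (\cref{lem:loop-elimination}), splitting off as a separately \chibounded part the discrete clauses whose loop set does not contain $L(P)$, and using that components have constant loop part;
\item only then apply the functional partition (\cref{th:diaganal-to-path}).
\end{itemize}
You instead apply \cref{th:diagonal-multi} first and isolate single path clauses by product colouring and monotonicity. Your ``Ramsey/pigeonhole step'' there is unnecessary: the realisation of one $P_i$ is a spanning subdigraph of the union realised on the same set. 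You then eliminate loops last, fibre by fibre, turning each $(L_j,\lambda_j)$ into $(L_j\cap S,\lambda_j)$, or dropping it when $\lambda_j$ is a loop coordinate.

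Your order avoids the $F_1/F_2$ split, at the price of carrying constraints from all discrete clauses; the bound on $r$ is unaffected. The paper's order yields DNF-level intermediate equivalences, and these are what it reuses in \cref{sec:induced-shift-graphs}.
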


We prove \cref{th:from-full-set-defined-to-shift-colorable} via a sequence of lemmas that reduce the \chiboundedness of $\dcX_f$ to that of the classes $\dcY_{P_i,Z_i}$ via intermediate classes.
First, we reduce to classes defined by acyclic DNFs.

\begin{lemma}\label{lem:fromXtoY_H}
	Let $f : \{0,1\}^{d^2} \rightarrow \{0,1\}$. There exist $r \leq d^{2d}$ acyclic DNFs $F_1, F_2, \ldots, F_r$ each over at most $d^2$ variables such that 
	\begin{enumerate}
		\item[(1)] $\dcY_{F_i} \subseteq \dcX_f$ for every $i \in [r]$;\label{lem:fromXtoY_H:item1}
		
		\item[(2)] $\dcX_f$ is \chibounded if and only if for every $i \in [r]$ the class $\dcY_{F_i}$ is \chibounded;
		
		\item[(3)] the DNFs $F_1, F_2, \ldots, F_r$ can be computed from $f$.
	\end{enumerate}
\end{lemma}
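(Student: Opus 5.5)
The plan is to reuse the partition from \cref{th:decomposition-uniform-Y}, but now insist that each resulting class is actually contained in $\dcX_f$. Fix a full DNF $F$ representing $f$. For every partition $\cP$ of $[d]$ and every order type $o$ of injective vectors indexed by the set $I$ of least elements of $\cP$-classes, \cref{th:compatible_to_full_matching} gives an injective DNF $H_{\cP}$ with $\dcX_{f,\cP} = \dcX'_{H_{\cP}}$. We take $F_{\cP,o}$ to be the DNF consisting of the acyclic clauses of $H_{\cP}$. The order type $o$ does not change the DNF, so only one DNF per partition is needed. This gives $r \le d^d \le d^{2d}$ DNFs, each over $|I|^2 \le d^2$ variables. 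Item (3) is then immediate: $F$, the $\cP$-compatible clauses, their restrictions $C_{|I}$, and the acyclicity test are all finite syntactic computations on $f$.

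For item (1), let $G \in \dcY_{F_{\cP}}$ be realised over an injective order-uniform set $V' \subseteq \Ninj{I}$.
\begin{itemize}
\item By \cref{th:reduction-to-acyclic}, $G$ is also the realisation of $H_{\cP}$ over $V'$.
\item Lift each $v' \in V'$ to the $\cP$-homogeneous vector $v \in \bN^d$ obtained by copying $v'_{t}$ to every coordinate of the class $P_t$. This map is a bijection onto a $\cP$-homogeneous set $V$.
\item The identity $F(u,w) = H_{\cP}(u_{|I},w_{|I})$ from the proof of \cref{th:compatible_to_full_matching} shows that the realisation of $f$ over $V$ is isomorphic to $G$.
\end{itemize}
Hence $G \in \dcX_{f,\cP} \subseteq \dcX_f$.

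For item (2), the forward direction follows from (1), since subclasses of $\chi$-bounded classes are $\chi$-bounded. For the converse, suppose each $\dcY_{F_{\cP}}$ is $\chi$-bounded with binding function $h_{\cP}$. Take $G \in \dcX_f$ realised over $V$. As in the proof of \cref{th:decomposition-uniform-Y}, split $V$ by $\cP$-homogeneity type and then by the order type of $v_{|I}$; there are at most $d^d \cdot d!$ parts. Each part induces a digraph in some $\dcY_{H_{\cP}} = \dcY_{F_{\cP}}$, so its chromatic number is at most $h_{\cP}(\omega(G))$, using the monotonicity convention for binding functions. Summing over parts gives $\chi(G) \le d^{2d}\max_{\cP} h_{\cP}(\omega(G))$. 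The same argument applies to induced subgraphs of $G$, since these are again in $\dcX_f$.

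The main point requiring care is (1). The earlier theorem only gives a partition, not containment, so we must check that the realisation of the injective DNF over an arbitrary injective order-uniform set really comes from $f$ over a lifted $\cP$-homogeneous set. The lifting bijection between $\cP$-homogeneous sets and injective sets in $\Ninj{I}$ handles this. We also need $\dcY_{F_{\cP}} \subseteq \dcX'_{H_{\cP}}$, which holds because order-uniform injective sets are injective and the realisations of $F_{\cP}$ and $H_{\cP}$ agree on them. The count bound $r \le d^{2d}$ is loose, so indexing by partitions alone is enough.
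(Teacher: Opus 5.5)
Your proposal is correct and follows essentially the same route as the paper. You take the acyclic parts of the injective DNFs $H_{\cP}$ from \cref{th:compatible_to_full_matching}, obtain (1) from $\dcY_{F_{\cP}} = \dcY_{H_{\cP}} \subseteq \dcX'_{H_{\cP}} = \dcX_{f,\cP} \subseteq \dcX_f$, and obtain the nontrivial direction of (2) by summing binding functions over the at most $d^{2d}$ parts of the partition from \cref{th:decomposition-uniform-Y}. The differences are cosmetic: you index the DNFs by partitions alone, giving $r \le d^d$, and you re-verify the inclusion $\dcX'_{H_{\cP}} \subseteq \dcX_{f,\cP}$ through the explicit lift rather than quoting the equality proved in \cref{th:compatible_to_full_matching}.
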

\begin{proof}
	Let $H_1, H_2, \ldots, H_r$, $r \leq d^{2d}$, be the injective DNFs that are given by \cref{th:decomposition-uniform-Y} for $f$. For $i \in [r]$, let $F_i$ be the full DNF consisting of the acyclic clauses of $H_i$.
	
	From the proof of \cref{th:decomposition-uniform-Y}, each $H_i$ corresponds to some partition $\cP_i$ of $[d]$ so that $\dcX_{f,\cP_i} = \dcX_{H_i}'$. Since $\dcX_{f,\cP_i} \subseteq \dcX_f$, $\dcY_{H_i} \subseteq \dcX_{H_i}'$, and, by \cref{th:reduction-to-acyclic}, $\dcY_{H_i} = \dcY_{F_i}$, we conclude that $\dcY_{F_i} \subseteq \dcX_f$ holds for every $i \in [r]$. This proves item (1) of the statement.
	
	Now, since $\dcY_{F_i} \subseteq \dcX_f$, for every $i \in [r]$, if any of the classes $\dcY_{F_i}$ is \chiunbounded, then the class $\dcX_f$ is \chiunbounded. This establishes one direction of item (2). To prove the other direction, suppose that each $\dcY_{F_i}$ is \chibounded and $h_i$ is the corresponding $\chi$-binding function. Let $G=(V,E)$ be an arbitrary digraph in $\cX_f$, which is a realisation of $f$ over a set $V \subseteq \bN^{d}$.
	By \cref{th:decomposition-uniform-Y}, $G$ admits a partition $V(G) = V_1 \cup \cdots \cup V_r$ with $G[V_i] \in \dcY_{H_i} = \dcY_{F_i}$ for every $i \in [r]$. Thus, we have 
	\[
	\chi(G) \leq \sum_{i=1}^{r} \chi(G[V_{i}]) \leq \sum_{i=1}^{r} h_{i}(\omega(G[V_{i}])) \leq 
	\sum_{i=1}^{r} h_{i}(\omega(G)),
	\]
	which completes the proof of item (2).
	
	We now justify item~(3). First, we construct a full DNF $F$ for $f$ from its truth table. Next, following the proof of \cref{th:compatible_to_full_matching}, we construct the injective DNFs $H_i$ from $F$ and the corresponding partitions $\cP_i$ of $[d]$. Finally, from each injective DNF $H_i$, we obtain an acyclic DNF $F_i$ by removing all cyclic clauses.
\end{proof}

Next we deal with classes defined by acyclic DNFs, and reduce them to classes defined by acyclic DNFs with exactly one path clause. We note that unlike the other reductions of this subsection which pass to a subclass of the initial class, this reduction passes to a subclass of the monotone closure of the initial class. This is not an issue for the decision procedure, but we will revisit this point in Lemma \ref{lem:full_contains_SD}, using the analysis from Section \ref{sec:chi-dichotomy}.

\begin{lemma}\label{lem:path+discrete-clauses}
	Let $F := \bigvee_{i \in [a]} P_i \vee \bigvee_{j \in [b]} D_j$ be an acyclic DNF over $d^2$ variables, where $P_1, P_2, \ldots, P_{a}$ are the path clauses and $D_{1}, D_{2}, \ldots, D_{b}$ are the discrete clauses of $F$. For each $i \in [a]$, let $F_i := P_i \vee \bigvee_{j \in [b]} D_j$.
	Then the class $\dcY_F$ is \chibounded if and only if $\dcY_{F_i}$ is \chibounded for every $i \in [a]$. 
\end{lemma}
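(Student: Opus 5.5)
The plan is to prove both directions directly. One direction uses the partition into functional independent sets from \cref{th:diagonal-multi}. The other compares realisations of $F$ and $F_i$ over the same vertex set. Throughout, fix an order-uniform injective set $V \subseteq \Nd$. Let $G_D$ be the realisation of $\bigvee_{j\in[b]} D_j$ over $V$, let $G_{P_i}$ be the realisation of $P_i$ over $V$, and write $G_F$ and $G_{F_i}$ for the realisations of $F$ and $F_i$ over $V$. By \cref{lem:union}, $G_F = G_D \vee \bigvee_{i} G_{P_i}$ and $G_{F_i} = G_D \vee G_{P_i}$.

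($\Leftarrow$) Assume each $\dcY_{F_i}$ is \chibounded with binding function $h_i$, and let $G = G_F \in \dcY_F$.
\begin{enumerate}
\item By \cref{th:diagonal-multi} applied to the discrete DNF $\bigvee_j D_j$, partition $V$ into at most $(d^2\cdot\omega(G_D))^b \le (d^2\cdot \omega(G))^b$ sets $V_1,\dots,V_s$, each independent in $G_D$.
\item On each part $V_t$, the realisation of $F_i$ over $V_t$ is exactly $G_{P_i}[V_t]$, since the discrete clauses contribute no edges there. As $V_t$ is again order-uniform and injective, this gives $G_{P_i}[V_t] \in \dcY_{F_i}$.
\item By \cref{lem:path-clause-shift-colorable}, $\omega(G_{P_i}[V_t]) \le 2$, and hence $\chi(G_{P_i}[V_t]) \le h_i(2)$.
\item Since $G[V_t] = \bigvee_i G_{P_i}[V_t]$, taking the product colouring gives $\chi(G[V_t]) \le \prod_{i\in[a]} h_i(2)$.
\item Summing over the parts yields $\chi(G) \le (d^2\omega(G))^b \prod_i h_i(2)$, a polynomial $\chi$-binding function.
\end{enumerate}

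($\Rightarrow$) Suppose some $\dcY_{F_i}$ is \chiunbounded. Then there is an $s$ and graphs $G' \in \dcY_{F_i}$ with $\omega(G') < s$ and arbitrarily large $\chi(G')$. Realise each such $G'$ as $G_{F_i}$ over an order-uniform injective $V$, and consider $G_F$ over the same $V$; this lies in $\dcY_F$.
\begin{enumerate}
\item Since $G_{F_i}$ is a spanning subdigraph of $G_F$, we get $\chi(G_F) \ge \chi(G')$, which is unbounded.
\item We have $G_F = G_{F_i} \vee \bigvee_{j\neq i} G_{P_j}$. Here $\omega(G_{F_i}) < s$, and $\omega(G_{P_j}) \le 2$ for each $j$ by \cref{lem:path-clause-shift-colorable}.
\item By \cref{lem:omega-of-union}, $\omega(G_F) < R(s, 3, \dots, 3)$, a bound independent of $G'$.
\end{enumerate}
Hence $\dcY_F$ contains graphs of bounded clique number and unbounded chromatic number, so $\dcY_F$ is \chiunbounded.

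The point that needs care is in the forward direction. $\dcY_{F_i}$ need not be a subclass of $\dcY_F$, because the realisation of $F$ over the same $V$ may have extra edges coming from the other path clauses. This is exactly why the argument passes to a spanning supergraph, which is a member of the monotone closure, rather than an induced subgraph. It must then control the clique number of the supergraph, and that control comes from each path clause contributing only triangle-free (indeed shift-colorable) digraphs. Everything else is bookkeeping with \cref{th:diagonal-multi}.
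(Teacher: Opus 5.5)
Your proof is correct. The direction in which $\dcY_F$ is assumed \chibounded is the paper's argument in contrapositive form. The paper likewise realises $F$ over the same vertex set and uses that the other path clauses contribute only triangle-free digraphs. It then bounds $\omega(G)\le(\omega(H_t)+c)^c$ with $c=R_{a-1}(3)$ via \cref{lem:omega-of-union}.

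For the converse you take a genuinely different route. The paper simply observes $\dcY_F\subseteq\bigvee_{i}\dcY_{F_i}$ (\cref{lem:union}) and applies Gy\'arf\'as's lemma (\cref{lem:union-chi-bounded}). This gives the binding function $\prod_i h_i(\omega)$ without examining the discrete clauses at all. You instead first use \cref{th:diagonal-multi} to split $V$ into at most $(d^2\omega(G))^b$ parts on which the discrete clauses contribute no edges. On each part only the triangle-free realisations of the $P_i$ remain, so only the values $h_i(2)$ are needed.

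This invokes heavier machinery, but it yields more. It shows that bounded chromatic number of the members of each $\dcY_{F_i}$ with clique number at most $2$ already suffices. It also produces the explicitly polynomial binding function $(d^2\omega)^b\prod_i h_i(2)$, essentially anticipating the argument of \cref{th:diaganal-to-path}. The paper's $\prod_i h_i(\omega)$, by contrast, inherits whatever growth the $h_i$ have.
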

\begin{proof}
	Since $F = \bigvee_{i=1}^a F_i$, by \cref{lem:union}, we have that $\dcY_F \subseteq \bigvee_{i=1}^a \dcY_{F_i}$.
	Thus, if, for every $i \in [a]$, the class $\dcY_{F_i}$ is \chibounded with a $\chi$-binding function $h_i$, then, by \cref{lem:union-chi-bounded}, $\dcY_F$ is \chibounded with a $\chi$-binding function $g = \prod_{i=1}^a h_i$.
	
	Now, assume that $\dcY_{F}$ is \chibounded with a $\chi$-binding function $h : \bN \rightarrow \bN$, and let $t \in [a]$. We will show that $\dcY_{F_t}$ is \chibounded. Denote $I = [a] \setminus \{t\}$ and note that $F = F_t \vee \bigvee_{i \in I} P_i$.
	
	Let $H_t$ be an arbitrary digraph in $\dcY_{F_t}$, which is a realisation of $F_t$ over an order-uniform set $V \subseteq \Nd$.
	For every $i \in I$, let $H_i$ be the realisation of $P_i$  over $V$, and let $G$ be the realisation of $F$ over $V$. Then $G = H_t \vee \bigvee_{i \in I} H_i \in \dcY_F$.
	
	For each $i \in I$, since the digraph $H_i$ belongs to the class $\dcY_{P_i}$ defined by a path clause, by \cref{lem:path-clause-shift-colorable}, we have $\omega(H_i) \leq 2$.
	Thus, by \cref{lem:omega-of-union}, $\omega(\bigvee_{i \in I} H_i) \leq c$, where $c = R_{a-1}(3)$.
	Furthermore, since $G = H_t \vee \bigvee_{i \in I} H_i$, again by \cref{lem:omega-of-union}, $\omega(G) \leq R(\omega(H_t)+1, c+1) \leq ( \omega(H_t) + c)^{c}$.
	Consequently, we have 
	\[
	\chi(H_t) \leq \chi(G) \leq h(\omega(G)) \leq h((\omega(H_t) + c)^{c}).
	\] 
	Since $H_t$ is an arbitrary digraph from $\dcY_{F_t}$, we conclude that $\dcY_{F_t}$ is \chibounded with a $\chi$-binding function $g(\omega) = h((\omega + c)^{c})$. 
\end{proof}

We now consider classes defined by acyclic DNFs with exactly one path clause, and reduce them to classes defined by DNFs with exactly one \emph{loopless} path clause.
Let $F = P \vee \bigvee_{i=1}^k D_i$ be a DNF over $d^2$ variables, where $P$ is a path clause and $D_1, D_2, \ldots, D_k$ are discrete clauses. Let $L \subset [d]$ be the set of loop coordinates of $P$, and let $S = [d] \setminus L$.
Let $D_1, D_2, \ldots, D_{t}$ be the discrete clauses of $F$ whose every coordinate in $L$ is a loop coordinate, and $D_{t+1}, D_{t+2}, \ldots, D_k$ be all the other discrete clauses of $F$. Denote 
\[
F_1 = P \vee \bigvee_{i=1}^{t} D_i
\quad \text{and} \quad
F_2 = \bigvee_{i=t+1}^{k} D_i,
\]
and note that if $t = 0$, then $F_1 = P$.
Let $P' = P_{|S}$ and $D_i' = {D_i}_{|S}$ for $i \in [t]$, i.e.\ $P', D_1', D_2', \ldots, D_{t}'$ are obtained from $P, D_1, D_2, \ldots, D_{t}$, respectively, by removing literals corresponding to variables with an index in $L$.
Denote $F_1' = P' \vee \bigvee_{i=1}^{t} D_i'$. 

\begin{lemma}\label{lem:loop-elimination}
	With the notation above, $\dcY_{F_1'} \subseteq \dcY_F$, and $\dcY_{F_1'}$ is \chibounded if and only if $\dcY_F$ is.
\end{lemma}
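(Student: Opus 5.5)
For the inclusion $\dcY_{F_1'} \subseteq \dcY_F$, I will start from a realisation $G'$ of $F_1'$ over an order-uniform injective set $V' \subseteq \Ninj{S}$ and lift it. Let $m$ exceed every value appearing in $V'$. Fix fresh distinct values $c_\ell > m$ for $\ell \in L$, chosen in an increasing order consistent with one fixed order type. Extend each $v' \in V'$ to $v \in \Nd$ by $v_{|S}=v'$ and $v_\ell = c_\ell$ for $\ell \in L$. The resulting set $V$ is injective and order-uniform, since all vectors share the same pattern on $L$ and the values on $L$ exceed everything on $S$.

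For $u,w \in V$, every $L$-coordinate of $u$ equals the same coordinate of $w$ and differs from every other coordinate of $w$. Hence all literals of $P$ and of $D_1,\dots,D_t$ involving an index in $L$ are satisfied, because in these clauses each $\ell\in L$ is a loop coordinate. So $P(u,w)=P'(u',w')$ and $D_i(u,w)=D_i'(u',w')$ for $i\le t$. For $i>t$, the clause $D_i$ has some $\ell \in L$ that is not a loop, so it requires $u_\ell \ne w_\ell$, which fails. Thus $F_2$ contributes nothing. It follows that the realisation of $F$ over $V$ is isomorphic to $G'$, which gives the inclusion and hence one direction of the equivalence.

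For the converse, assume $\dcY_{F_1'}$ is \chibounded with binding function $h$, and take $G \in \dcY_F$ realised over an order-uniform injective $V$. I plan to split $G$ as $G_1 \vee G_2$, where $G_i$ is the realisation of $F_i$ over $V$. For $G_1$, I partition $V$ into the classes $V_{L,a}=\{v : v_{|L}=a\}$. Every clause of $F_1$ treats $L$ as loops, so $G_1$ has no edges between different classes. Within a class, the argument above shows $G_1[V_{L,a}]$ is isomorphic to the realisation of $F_1'$ over $(V_{L,a})_{|S}$, which is order-uniform and injective and hence lies in $\dcY_{F_1'}$. Therefore $\chi(G_1) \le \max_a h(\omega(G_1[V_{L,a}])) \le h(\omega(G))$.

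For $G_2$, a discrete DNF, I would use \cref{th:diagonal-multi}. It partitions $V$ into at most $(d^2\omega(G_2))^{k-t} \le (d^2\omega(G))^{k}$ sets, each independent in $G_2$. Colouring each part with $\chi(G_1)$ colours then gives $\chi(G) \le (d^2\omega(G))^k h(\omega(G))$, a $\chi$-binding function. I can avoid any Ramsey step for the union, since $\omega(G_1),\omega(G_2) \le \omega(G)$ directly.

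The main obstacle I expect is the lifting step: verifying carefully that order-uniformity and injectivity survive the extension, and that $(V_{L,a})_{|S}$ is injective and order-uniform as a subset of $\Ninj{S}$. It also needs care that the clauses $D_i'$ remain discrete, with $P'$ still a loopless path clause, after restriction to $S$. Both follow because restricting to a coordinate subset preserves order type and distinctness, and because $L$ consists of isolated loop vertices in $\ClauseDigraph_P$.
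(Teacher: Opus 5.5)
Your proposal is correct and follows the paper's proof. For the inclusion you lift by fixed fresh values on $L$. For the converse you observe that $F_1$-edges never join vertices with different restrictions $v_{|L}$, reduce each piece to $\dcY_{F_1'}$ by restricting to $S$, and handle the discrete part $F_2$ with \cref{th:diagonal-multi}.

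The differences are minor:
\begin{itemize}
\item You group vertices by the classes $V_{L,a}$ rather than by connected components.
\item You inline the product colouring instead of citing \cref{lem:union-chi-bounded}.
\item You choose the fresh values above every value occurring in $V'$. This actually secures order-uniformity of the lifted set, which the paper's condition $a(L)\cap v'(S)=\emptyset$ alone does not guarantee.
\end{itemize}
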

\begin{proof}
	We start by proving $\dcY_{F_1'} \subseteq \dcY_F$.
	Let $G'=(V',E')$ be an arbitrary digraph in $\dcY_{F_1'}$, which is a realisation of $F_1'$ over an order-uniform set $V' \subseteq \Ninj{S}$. 
	To show that $G'$ belongs to $\dcY_F$, fix an element $a \in \Ninj{L}$ such that for every $v' \in V'$ it holds that $a(L) \cap v'(S) = \emptyset$.
	Define $V := \{ v \in \Nd : v_{|S} = v' \in V' \text{ and } v_{|L} = a \}$.
	It is easy to see that any ordered pair $(v',u') \in V' \times V'$, satisfies $F_1'$ if and only if the ordered pair $(v,u)$ satisfies $F$. Thus, the realisation of $F$ over $V$ is isomorphic to $G'$, and hence $G'$ belongs to $\dcY_F$.
	
	Since $\dcY_{F_1'} \subseteq \dcY_F$, if $\dcY_F$ is \chibounded, then so is $\dcY_{F_1'}$.
	We now prove that the \chiboundedness of $\dcY_{F_1'}$ implies that of $\dcY_F$.
	Suppose $\dcY_{F_1'}$ is \chibounded with a $\chi$-bounding function $g$. We begin by showing that $\dcY_{F_1}$ is \chibounded with the same $\chi$-binding function $g$. For this we argue that every connected digraph in $\dcY_{F_1}$ belongs to $\dcY_{F_1'}$.
	Let $G=(V,E)$ be a connected digraph in $\dcY_{F_1}$, which is a realisation of $F_1$ over an order-uniform set $V \subseteq \Nd$. Since in all clauses of $F_1$ all coordinates in $L$ are loop coordinates, if $u,v \in \Nd$ are such that $u_{|L} \neq v_{|L}$, then neither $(u,v)$ nor $(v,u)$ satisfy $F_1$.
	This together with the connectedness of $G$ imply that all 
	vertices $v \in V$ have the same restriction $v_{|L}$.
	Thus, since elements in $V$ are injective, we have that 
	$v(L) \cap u(S) = \emptyset$ and $u(L) \cap v(S) = \emptyset$ hold for any $u,v \in V$.
	Therefore, an ordered pair $(v,u)$ of vertices in $V$ satisfies $F_1$ if and only if the ordered pair $(v_{|S}, u_{|S})$ satisfies $F_1'$.
	This implies that the realisation of $F_1'$ over the order-uniform set $V' := \{ v_{|S} : v \in V \}$ is isomorphic to $G$, and thus $G$ belongs to $\dcY_{F_1'}$. Consequently, for every connected digraph $G$ in $\dcY_{F_1}$ we have that $\chi(G) \leq g(\omega(G))$. Since the chromatic number of a digraph is the maximum of the chromatic numbers of its components, and the same holds for the clique number, it is easy to see that the inequality $\chi(H) \leq g(\omega(H))$ holds for all digraphs $H \in \dcY_{F_1}$.
	
	To complete the proof, we observe that $F = F_1 \vee F_2$ and thus, by \cref{lem:union}, 
	$\dcY_F \subseteq \dcY_{F_1} \vee \dcY_{F_2}$. Therefore, since $\dcY_{F_2}$ is \chibounded (which follows from \cref{th:diagonal-multi}), we have that, by \cref{lem:union-chi-bounded}, $\dcY_F$ is \chibounded too.
\end{proof}

\begin{lemma}\label{th:diaganal-to-path}
	Let $F = P \vee \bigvee_{i=1}^k D_i$ be a DNF, where 
	$P$ is a loopless path clause and $D_1, D_2, \ldots, D_k$ are discrete clauses with $L_i := L(D_i) \neq [d]$, $i \in [k]$. Then, 
	
	\begin{enumerate}
		\item[(1)] $\dcY_{P,Z} \subseteq \dcY_F$ for every $Z = ((L_1,\lambda_1), (L_2,\lambda_2), \ldots, (L_k,\lambda_k))$ with $\lambda_i \in \overline{L_i}$, $i \in [k]$; and
		
		\item[(2)] $\dcY_F$ is \chibounded if and only if $\dcY_{P,Z}$ has bounded chromatic number for every $Z = ((L_1,\lambda_1), (L_2,\lambda_2), \ldots, (L_k,\lambda_k))$ with $\lambda_i \in \overline{L_i}$, $i \in [k]$.
	\end{enumerate}
\end{lemma}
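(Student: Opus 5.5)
For item (1), I will show that the realisation of $F$ over an order-uniform $Z$-functional set coincides with the realisation of $P$ over the same set. By definition, such a set $V \subseteq \Nd$ is $(L_i,\lambda_i)$-functional, and in particular $L_i$-functional, for every $i \in [k]$. Applying \cref{lem:multi-functional-independent-set} to the discrete DNF $\bigvee_{i=1}^k D_i$ shows that $V$ is independent in the realisation of $\bigvee_i D_i$. So the discrete clauses contribute no edges inside $V$. By \cref{lem:union}, the realisation of $F$ over $V$ then equals the realisation of $P$ over $V$. Since $V$ is order-uniform and injective, the digraph lies in $\dcY_F$, and therefore $\dcY_{P,Z} \subseteq \dcY_F$.

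For the forward direction of (2), suppose $\dcY_F$ is \chibounded with binding function $h$. Every digraph in $\dcY_{P,Z}$ lies in $\dcY_F$ by (1). It is also a realisation of the path clause $P$ over an order-uniform injective set, so by \cref{lem:path-clause-shift-colorable} its clique number is at most $2$. Hence its chromatic number is at most $h(2)$, which bounds $\chi$ on $\dcY_{P,Z}$ uniformly.

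For the converse, assume each of the finitely many classes $\dcY_{P,Z}$ has chromatic number at most $M$. Take $G \in \dcY_F$ realised over an order-uniform injective $V$.
\begin{enumerate}
\item \cref{th:diagonal-multi} partitions $V$ into at most $(d^2\omega(G))^k$ sets $V_1,\dots,V_r$, each $(L_1,\dots,L_k)$-functional.
\item Each $V_j$ is $((L_1,\lambda_1),\dots,(L_k,\lambda_k))$-functional for some choice of the $\lambda_i$, i.e.\ it is $Z_j$-functional for some admissible $Z_j$. (This is how the sets are built in \cref{lem:functional-covering}: the sets $W_{i,j}$ are $(L,i)$-functional, so the $\lambda$'s are recorded.)
\item Each $V_j$ is order-uniform and injective as a subset of $V$.
\item As in (1), $G[V_j]$ is the realisation of $P$ over $V_j$, so $G[V_j] \in \dcY_{P,Z_j}$ and $\chi(G[V_j]) \le M$.
\end{enumerate}
Therefore $\chi(G) \le M (d^2\omega(G))^k$, a polynomial binding function.

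The only delicate point is step 2, namely that the partition from \cref{th:diagonal-multi} has parts with a specified $\lambda_i$ for each $i$. This follows directly from the definition of $L$-functional as $(L,\lambda)$-functional for some $\lambda$, so I expect no real obstacle. The remaining care is just to note that $P$ is loopless, so it fits the definition of $\dcY_{P,Z}$, and that $L_i \neq [d]$, so the $\lambda_i$ exist.
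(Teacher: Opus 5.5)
Your proof is correct and follows the paper's argument essentially step for step. Item (1) uses \cref{lem:multi-functional-independent-set} to show the discrete clauses contribute no edges on $Z$-functional sets. The forward direction of (2) uses $\omega\le 2$ from \cref{lem:path-clause-shift-colorable}. The converse uses the partition of \cref{th:diagonal-multi} into $(L_1,\dots,L_k)$-functional parts, each of which is $Z$-functional for some admissible $Z$ and so induces a digraph in $\dcY_{P,Z}$.

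Two small points. The theorem's bound is stated in terms of $\omega(H_D)$ for the discrete part $H_D$, which is at most $\omega(G)$. Your explicit remark that there are only finitely many admissible $Z$, which justifies a single uniform bound $M$, is something the paper leaves implicit.
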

\begin{proof}
    Denote $D = \bigvee_{i=1}^k D_i$.
	We start by proving item (1).
	Let $H_P \in \dcY_{P,Z}$ be the realisation of $P$ over an order-uniform $(L_1,L_2, \ldots, L_k)$-functional set $V \subseteq \Nd$, and $H_D$ be the realisation of $D$ over $V$. Then, by \cref{lem:multi-functional-independent-set}, $H_D$ is an edgeless digraph, and  therefore $H_P = H_P \vee H_D \in \dcY_F$, as desired. 
	
	Now, item (1) implies one direction in item (2). Indeed, if $\dcY_F$ is \chibounded with a $\chi$-binding function $h$, then any hereditary subclass of $\dcY_F$, including the classes $\dcY_{P,Z}$, is \chibounded with the same $\chi$-binding function. Since $\dcY_{P,Z} \subseteq \dcY_{P}$, by \cref{lem:path-clause-shift-colorable}, every digraph in $\dcY_{P,Z}$ has clique number at most $2$, and therefore, any digraph in $\dcY_{P,Z}$  has chromatic number at most $h(2)$.
	
	To prove the other direction in item (2), suppose  that graphs in all classes $\dcY_{P,Z}$ have  chromatic number at most $c$ for some $c \in \bN$.
	We will show that $\dcY_F$ is \chibounded.
	
	Let $G$ be an arbitrary digraph in $\dcY_F$, which is a realisation of $F$ over an order-uniform set $V \subseteq \Nd$. Denote by $H_P$ and $H_{D}$ the realisations of $P$ and $D$ over $V$, respectively. Then $H_P \in \dcY_P$, $H_D \in \dcY_D$, and, by  \cref{lem:union}, $G = H_P \vee H_D$.
	
	By \cref{th:diagonal-multi}, the set $V$ partitions into at most $(d^2 \cdot \omega(H_D))^k \leq (d^2 \cdot \omega(G))^k$ subsets each of which is $(L_1,L_2, \ldots, L_k)$-functional and independent in $H_D$. 
	Let $U$ be such a set and let $\lambda_s \in \overline{L_s}, s \in [k]$ be such that $U$ is $((L_1,\lambda_1), (L_2,\lambda_2), \ldots, (L_k,\lambda_k))$-functional. 
	By definition, $H_P[U]$ belongs to the class $\dcY_{P,Z}$ for $Z = ((L_1,\lambda_1), (L_2,\lambda_2), \ldots, (L_k,\lambda_k))$, and therefore $\chi(H_P[U]) \leq c$. Furthermore, since $U$ is an independent set in $H_D$, we have $G[U] = H_P[U] \vee H_D[U] = H_P[U]$, and hence $\chi(G[U]) = \chi(H_P[U]) \leq c$. Thus
	\[
	\chi(G) \leq \sum_{U} \chi(G[U]) \leq c \cdot (d^2 \cdot \omega(G))^k,
	\] 
	where the sum is over the $(L_1,L_2, \ldots, L_k)$-functional sets $U$ that partition $V$.  
\end{proof}

We now have everything ready to prove the main theorem of this section, which we restate below for convenience.

\fromXtoYPZ*
\begin{proof}
	By \cref{lem:fromXtoY_H}, there is a family $\cF$ of at most $d^{2d}$ acyclic DNFs each over at most $d^2$ variables such that $\dcX_f$ is \chibounded if and only if all classes $\dcY_{F}$, $F \in \cF$ are \chibounded.
	
	Fix a DNF $F \in \cF$. If $F$ has no path clauses, then it can be ignored. Indeed, in such a case $\dcY_{F}$ is  \chibounded by \cref{th:diagonal-multi}, and thus $\dcY_{F}$ does not affect \chiboundedness of $\dcX_f$. 
	So suppose that $F$ contains at least one path clause, and write
	\[
		F=\bigvee_{i\in[a]} Q_{i}\ \vee\ \bigvee_{j\in[b]} D_{j},
	\]
	where $Q_{1},\ldots,Q_{a}$ are the path clauses of $F$ and $D_{1},\ldots,D_{b}$ are its discrete clauses.
	For every $i\in[a]$, let
	\[
		F_{i}:=Q_{i}\ \vee\ \bigvee_{j\in[b]} D_{j}.
	\]
	By \cref{lem:path+discrete-clauses}, the class $\dcY_{F}$ is \chibounded if and only if each of the classes $\dcY_{F_{i}}$, $i\in[a]$, is \chibounded.
	
	Now fix $i\in[a]$. 
	By \cref{lem:loop-elimination} there exists a DNF
	\[
		F_i'=Q_{i}'\ \vee\ \bigvee_{\ell\in[k_{i}]} D_{i,\ell},
	\]
	over $d'^2$ variables for some $d' \leq d$,
	where $Q_{i}'$ is a \emph{loopless} path clause and
	$D_{i,1},\ldots,D_{i, \ell}$ are discrete clauses with
	$L(D_{i,\ell})\neq [d']$, such that $\dcY_{F_i}$ is \chibounded if and only if
	$\dcY_{F_i'}$ is \chibounded.
	
	By \cref{th:diaganal-to-path}, $\dcY_{F_i'}$ is \chibounded if and only if
	every class of the form $\dcY_{P_i',Z}$ has bounded chromatic number, where 
	$Z=\bigl((L(D_{i,1}),\lambda_1),\ldots,(L(D_{i,k_{i}}),\lambda_{k_{i}})\bigr)$
	with $\lambda_\ell\in [d']\setminus L(D_{i,\ell})$ for $\ell\in[k_{i}]$.
	
	Consider all classes $\dcY_{P_i',Z}$ arising in this way over all DNFs $F \in \cF$ and path clauses of $F$ and denote them by $\dcY_{P_1,Z_1},\dcY_{P_2,Z_2},\ldots,\dcY_{P_r,Z_r}$.
	Chaining together the equivalences above, we obtain $\dcX_f$ is \chibounded if and only if each class $\dcY_{P_i,Z_i}$, $i \in [r]$, has bounded chromatic number.
	Since each DNF over at most $d^2$ variables has at most $2^{d^2}$ clauses, it follows that $r \leq d^{2d} \cdot 2^{d^2} \cdot d^{2^{d^2}}$.
	
	Finally, we note that all pairs $(P_i,Z_i)$ can be computed from $f$. First, the acyclic DNFs in $\cF$ can be computed from $f$ by \cref{lem:fromXtoY_H}. Then, for each $F \in \cF$, one can enumerate its path clauses, apply the construction from \cref{lem:loop-elimination}, and finally enumerate all admissible tuples $Z$.
\end{proof}

\section{\texorpdfstring{$\chi$}{Chi}-boundedness dichotomy via tropical algebra and mean payoff games}\label{sec:chi-dichotomy}

In this section, we establish \cref{thm:tropical_dichotomy} which gives a correspondence between \chiboundedness of a class of digraphs defined by a path clause over functional vertex sets, and solutions to certain systems of tropical inequalities.

\begin{theorem}\label{thm:tropical_dichotomy}
    Let $\dcY_{P,Z}$ be a class of digraphs defined by a loopless $d$-dimensional path clause $P$ on $Z$-functional vertex sets in $\Nd$.
    Let $m$ be the number of maximal paths in the clause digraph $\Delta_P$ of $P$ and $k$ the number of functional constraints in $Z$.
    Then, there exist matrices $\widehat{A}, \widehat{B} \in \Zmin^{(2d-2m+k) \times d}$ and $\widetilde{A}, \widetilde{B} \in \Zmax^{(2d-2m+k) \times d}$ such that the following are equivalent:
    \begin{enumerate}[label=(\roman*)]
        \item the chromatic number of $\dcY_{P,Z}$ is unbounded;
        \label{item:unbounded_chromatic_number}
        \item $\dcY_{P,Z}$ contains the class $\dcS_D$ of $D$-dimensional shift digraphs for some $D \geq 2$;
        \label{item:contains_shift_graphs}
        \item each of the tropical inequalities $\widehat{A} \minmu x \geq \widehat{B} \minmu x$ and $\widetilde{A} \maxmu y \leq \widetilde{B} \maxmu y$ has a finite solution.
        \label{item:tropicals_have_solution}
    \end{enumerate}
\end{theorem}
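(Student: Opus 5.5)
The plan is to prove the cycle of implications \ref{item:contains_shift_graphs}$\Rightarrow$\ref{item:unbounded_chromatic_number}$\Rightarrow$\ref{item:tropicals_have_solution}$\Rightarrow$\ref{item:contains_shift_graphs}. The implication \ref{item:contains_shift_graphs}$\Rightarrow$\ref{item:unbounded_chromatic_number} is immediate from Fact~\ref{fact:shift}(2).

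\textbf{The two systems.} Since $P$ is loopless, $\Delta_P$ is a disjoint union of $m$ directed paths covering $[d]$, so it has $d-m$ edges. For a vertex $v$ realised inside a shift structure, I think of each coordinate $j$ as encoding a contiguous window $[p_j,q_j]$ of indices of an underlying increasing sequence $a_1<a_2<\cdots$. An edge $i\to j$ of $\Delta_P$ forces $u_i=w_j$ along an edge $(u,w)$. A shift step moves every window down by one, so I want
\[
p_i=p_j+1 \quad\text{and}\quad q_i=q_j+1 .
\]
Each of these equalities is written as two inequalities, one pair in the $p$-system and one pair in the $q$-system; this gives the $2(d-m)$ rows. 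A functional constraint $(L_s,\lambda_s)$ holds automatically if the window of $\lambda_s$ is contained in the hull of the windows of $L_s$. That containment reads
\[
p_{\lambda_s}\ge \min_{l\in L_s}p_l \quad\text{and}\quad q_{\lambda_s}\le \max_{l\in L_s}q_l ,
\]
and these give the $k$ remaining rows of each system. Together they define $\widehat A,\widehat B$ (min-plus, in the variable $x=p$) and $\widetilde A,\widetilde B$ (max-plus, in the variable $y=q$). When $L_s=\emptyset$ the row has no finite entry on one side; I would handle this by a dummy convention, or check that \cref{assump:valid_moves_exist} still holds after a trivial modification.

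\textbf{\ref{item:tropicals_have_solution}$\Rightarrow$\ref{item:contains_shift_graphs}.} First I would use that a system with integer entries and a real solution also has an integer solution: tropical inequalities with integer coefficients are preserved when every variable is replaced by its floor. Both systems are invariant under adding a constant to all variables, so I can shift $y$ until $q_j\ge p_j$ for every $j$, which makes each window nonempty. Take $D$ larger than the total span of the windows. A vertex $(a_1<\dots<a_D)$ of $\vec S(n,D)$ is mapped to the vector $v$ with
\[
v_j=\mathrm{code}\bigl(j\text{-independent tag},\, a_{p_j},\dots,a_{q_j}\bigr).
\]
The coding must be order-preserving, e.g.\ lexicographic with the window offset as tag, so that the image set is order-uniform. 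It must also make $u_i\neq w_j$ for every pair $(i,j)$ that is not an edge of $\Delta_P$, and in particular for pairs of coordinates with different window lengths or offsets. Then $(u,w)$ satisfies $P$ exactly when $w$ is the shift of $u$, so the image realises $\vec S(n,D)$ as an induced subdigraph over a $Z$-functional set. Since $\dcS_D$ is the hereditary closure of the $\vec S(n,D)$, this gives \ref{item:contains_shift_graphs}.

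\textbf{\ref{item:unbounded_chromatic_number}$\Rightarrow$\ref{item:tropicals_have_solution}, the main obstacle.} I argue the contrapositive. Suppose, say, the min-plus system has no finite solution. By Corollary~\ref{cor:min_plus_no_solution_implies_positive_mean_payoff} and Observation~\ref{obs:restricted_game_graph}, the column player has a positional strategy that wins on a nonempty set $X$ of states. A column move is a functional constraint, a row move is a path edge, and the weights are the $\pm1$ shifts. Following $\sigma_{|S}$, I would build a dependency digraph on pairs (coordinate, time step along a directed path of $G$) and a covering projection onto $\Gamma[X]$. By Observation~\ref{obs:walks_via_covering_projection}, walks in the dependency digraph correspond to plays. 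By Observation~\ref{obs:negative_mena_loss_implies_divergence}, along every play the accumulated weight tends to infinity.

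This should translate into the following claim: in any directed path $v^0\to v^1\to\cdots\to v^N$ of $G$ with $N\ge N_0$, every coordinate of $v^N$ is a fixed function of the coordinates of $v^0$. The mechanism is that each coordinate is either copied along path edges or functionally determined by earlier coordinates, and divergence forces the chain of dependencies to terminate back at time $0$. Proving this claim carefully is the step I expect to be hardest. In particular, the functions $\varphi$ must be the same for both paths, which is where $Z$-functionality of the whole set matters. The max-plus case is symmetric, with the directions of paths reversed.

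Granting the claim, $G$ contains no oriented tree consisting of two internally disjoint directed paths of length $N_0$ from a common root. Such paths would have equal endpoints by the claim, contradicting that the tree uses distinct vertices. By Burr's theorem, digraphs of chromatic number at least $(k-1)^2$ contain every oriented tree on $k$ vertices as a subgraph. Hence $\chi$ is bounded on $\dcY_{P,Z}$, which contradicts \ref{item:unbounded_chromatic_number}.
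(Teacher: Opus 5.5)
Your architecture matches the paper's. You use windows (interval representations) for (iii)$\Rightarrow$(ii), and game-based coordinate tracking plus exclusion of $\vec{\Lambda}_t$ for (i)$\Rightarrow$(iii); Burr's bound is a legitimate substitute for the two-block-path theorem. However, both nontrivial implications rest on a step that fails as stated.

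\textbf{(iii)$\Rightarrow$(ii).} The assertion that the images of $u,w$ satisfy $P$ \emph{exactly} when $w$ is the shift of $u$ is false for an arbitrary solution. An edge $i\to j$ of $\Delta_P$ only forces $u_{t+1}=w_t$ for $t\in[p_j,q_j]$, and your encoding is injective only if every index of $[D]$ lies in some window. For example, take $d=3$, the single positive literal $q_{2,1}$, $Z$ empty, and windows $[1,2],[2,3],[1,4]$, so $D=4$. Then the images of $(1,2,3,4)$ and $(2,3,5,6)$ satisfy $P$ although $u_4=4\neq 5=w_3$. Taking $D$ larger than the span only adds uncovered indices and destroys injectivity. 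In general you obtain only a homomorphism from $\vec{S}(n,D)$, which gives (iii)$\Rightarrow$(i) but not (ii).

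The missing idea is to pass to a \emph{minimal} interval representation and show it is irreducible. That is, every index in $[D-1]$ must be an endpoint of the window of some coordinate that is the target of an edge of $\Delta_P$. Otherwise an unguarded index $K$ can be merged with $K+1$, lowering the dimension (\cref{lem:minimal_representation_is_irreducible}). Irreducibility gives both injectivity and the converse direction.

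Two smaller repairs are also needed. First, containment of the $\lambda_s$-window in the convex hull of the $L_s$-windows does not imply containment in their union. You must translate $y$ until $\max_i p_i\le\min_j q_j$, so that all windows share a point, not merely until each window is nonempty. Second, the tag must separate distinct maximal paths of $\Delta_P$; the window offset alone does not.

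\textbf{(i)$\Rightarrow$(iii).} Your central claim, that the endpoint $v^N$ of a long directed path is a function of $v^0$, is false.

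In your chosen case, take $d=2$, the literal $q_{2,1}$ and $Z=((\{2\},1))$. The min-plus system contains $x_1\ge x_2=x_1+1$, so it is infeasible. Yet increasing pairs $(a_\ell,a_{\ell+1})$ with $v_1=\varphi(v_2)$ can branch at a root into two arbitrarily long directed paths, so $\vec{\Lambda}_N$ occurs for every $N$. Here it is the in-degree that is at most one, so the determination runs backwards.

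Even in the forward (max-plus) case the endpoint is not determined. Use the data of \cref{ex:chi_bnd_more_interesting}, where $S=[5]$. Let $v^\ell=(r_\ell,r_{\ell+1},s_\ell,s_{\ell+1},s_{\ell+2})$ for $0\le\ell\le N$ and $\tilde v^N=(r_N,\tilde r_{N+1},s_N,s_{N+1},\tilde s_{N+2})$. Choose $r_0<\dots<r_{N+1}$ and $s_0<\dots<s_{N+2}$, fresh values $\tilde r_{N+1}>r_N$ and $\tilde s_{N+2}>s_{N+1}$, and all $r$-values below all $s$-values. This gives, for every $N$, an order-uniform $Z$-functional set containing the path $v^0\to\dots\to v^N$ and the edge $v^{N-1}\to\tilde v^N$, with $v^N\neq\tilde v^N$.

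The reason is that dependency chains are not monotone in time. Shift edges go to both neighbouring levels, and a tracking walk may climb up to $|S|-1$ levels against the drift. So a vertex is recoverable only if the path continues beyond it. The correct statement is \cref{lem:first_determines_second}: $v^2=f(v^1)$ whenever $v^1\to\dots\to v^t$ with $t\le|S|+1$. In the min-plus case it is mirrored as $v^{t-1}=f(v^t)$. This still excludes $\vec{\Lambda}_t$ (respectively its reverse), so your Burr finish then works.

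Finally, you never project onto $S=X\cap\cC$. Coordinates outside $S$ need not be determined at all, and $G$ itself can contain $\vec{\Lambda}_N$ for all $N$. For instance, with paths $(2,1),(4,3)$ and $Z=((\{1\},2))$, coordinate $4$ is free. You must argue on the realization of $P_{|S}$ over $V_{|S}$, which receives a homomorphism from $G$ by \cref{lem:subclause}. For this you also need to check that its game digraph is exactly $\Gamma[X]$, so that the restricted strategy still wins from every state.
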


We complement the qualitative dichotomy from \cref{thm:tropical_dichotomy} with quantitative bounds.
Namely, we show that 
if the class $\dcY_{P,Z}$ has unbounded chromatic number, then $\dcS_D \subseteq \dcY_{P,Z}$ for $D \leq 2d-2$ (Corollary~\ref{cor:functional_contains_SD}), and
if the class $\dcY_{P,Z}$ has bounded chromatic number, then $\chi(\dcY_{P,Z}) \leq 2d+1$ (Corollary~\ref{cor:bound_on_chromatic_number}).

In this section, we prefer simplicity of the tropical representation of the pair $(P,Z)$ over efficiency.
In \cref{sec:compressed_matrices}, we will consider a more efficient representation enabling a two-way strongly polynomial-time reduction between arbitrary mean payoff games and boundedness of chromatic number of classes $\dcY_{P,Z}$.

\subsection{Outline of the proof}\label{sec:dichotomy-outline}

In this section, we provide an outline of the proof of Theorem~\ref{thm:tropical_dichotomy}.
To fix the notation, let $P$ be a loopless path clause and $Z = ((L_1,\lambda_1), (L_2,\lambda_2), \ldots, (L_k,\lambda_k))$ a tuple of pairs, where, for $s \in [k]$, $L_s \subsetneq [d]$ and $\lambda_s \in [d] \setminus L_s$.
Our focus is on \chiboundedness of the class $\dcY_{P,Z}$ of digraphs defined by the clause $P$ on $Z$-functional vertex subsets of $\Nd$.
This is equivalent to the boundedness of chromatic number of $\dcY_{P,Z}$ as graphs in this class are triangle-free due to Lemma~\ref{lem:path-clause-shift-colorable}.

We prove three individual implications of Theorem~\ref{thm:tropical_dichotomy}.
The implication~\ref{item:contains_shift_graphs} $\Rightarrow$~\ref{item:unbounded_chromatic_number} is immediate.

The proof of the implication~\ref{item:tropicals_have_solution} $\Rightarrow$~\ref{item:contains_shift_graphs} is given in Section~\ref{ssec:construction-of-shift-graphs}.
Towards this goal, we introduce a construction of a $D$-dimensional shift digraph in the class $\dcY_{P,Z}$ using an \emph{interval representation} for the vertices' coordinates, see Definition~\ref{def:interval_system}.
Lemma~\ref{lem:interval_representation_implies_shift_graphs} states that if the pair $(P,Z)$ admits an interval representation of dimension $D$, the class $\dcY_{P,Z}$ contains the class $\dcS_D$ of $D$-dimensional shift digraphs.
We express the necessary conditions on the existence of an interval representation in terms of tropical inequalities; thus constructing the matrices $\widehat{A}, \widehat{B}, \widetilde{A}, \widetilde{B}$.
The equivalence between the existence of finite solutions to these systems and the existence of an interval representation is given by Lemma~\ref{lem:interval_representation_iff_tropical_solutions}.

In Section~\ref{ssec:bounding-the-chromatic-number}, we prove the implication \ref{item:unbounded_chromatic_number} $\Rightarrow$~\ref{item:tropicals_have_solution} by proving its contrapositive.
We assume that the max-plus system $\widetilde{A} \maxmu y \leq \widetilde{B} \maxmu y$ does not have a finite solution.
The key Lemma~\ref{lem:first_determines_second} states that in a certain projection of digraphs from $\dcY_{P,Z}$, long directed paths have the following property: coordinates of the first vertex fully determine those of the second vertex via the constraints imposed by the path clause $P$ and $Z$-functionality of the vertex set.
From this property we derive that digraphs in $\dcY_{P,Z}$ do not contain as a subdigraph a directed tree consisting of two long paths starting from a common vertex. This together with known results imply that graphs in $\dcY_{P,Z}$ have bounded chromatic number, which establishes the implication \ref{item:unbounded_chromatic_number} $\Rightarrow$~\ref{item:tropicals_have_solution}.

In order to prove Lemma~\ref{lem:first_determines_second}, we employ a natural BFS-like procedure to keep track of the coordinates of vertices in a long directed path whose values can be determined from the coordinates of the path's first vertex (see Definition~\ref{def:trackable_vertices} and Observation~\ref{obs:trackable_vertices_are_function_of_first_level}).
To show that all coordinates of the second vertex can be determined in this way, i.e.\ to prove Lemma~\ref{lem:first_determines_second}, we crucially employ Theorem~\ref{thm:connection_of_tropical_and_mean_payoff} connecting tropical inequalities and mean payoff games.
More specifically, in Section~\ref{sssec:determining_set_S}, we construct a subdigraph $\Gamma_{|S}$ of the game digraph $\Gamma(\widetilde{A}, \widetilde{B})$, where the column player is guaranteed to win.
Then, in Section~\ref{sssec:finishing_proof}, we use the optimal strategy of the column player to prove that each coordinate of the path's second vertex admits a search trajectory such that backtracking along this trajectory ends in the coordinates of the first vertex; thus, proving Lemma~\ref{lem:first_determines_second}.

In the intermediate Sections~\ref{sssec:multidigraph_cQ},~\ref{sssec:relating_cQ_and_Gamma|S},~and~\ref{sssec:utilizing_correspondence}, we develop the formal connection between coordinate tracking in the multidigraph $\cK$ representing dependencies among coordinates of vertices in the long path from Lemma~\ref{lem:first_determines_second} and the strategies in the mean payoff game on the digraph $\Gamma_{|S}$.

Finally, in Section~\ref{sssec:what_if_minplus_fails}, we explain the necessary changes in the proof of \ref{item:unbounded_chromatic_number} $\Rightarrow$~\ref{item:tropicals_have_solution} if it is the min-plus system $\widehat{A} \minmu x \geq \widehat{B} \minmu x$ that has no finite solution.

\subsubsection{Notation}\label{sssec:dichotomy_notation}

For the whole section, we fix a loopless $d$-dimensional path clause $P$ and a collection of functional constraints $Z = ((L_1,\lambda_1), \ldots, (L_k, \lambda_k))$.
For each $(L_s, \lambda_s)$, we have $L_s \subsetneq [d]$ and $\lambda_s \in [d] \setminus L_s$.
We let $\rho_1, \dots, \rho_m$ enumerate the maximal paths of the clause digraph $\Delta_P$ of $P$ (counting isolated vertices as paths).
We assume that the positive literals of $P$ form a subset of the set $\{q_{c+1,c} : c \in [d-1]\}$ (which is exactly the set of positive literals of the clause defining $d$-dimensional shift graphs).
We let $p \colon [d] \to [m]$ be the function mapping an element of $[d]$ to the index of the maximal path in $\Delta_P$ containing it.
We recall that $\dcY_{P,Z}$ denotes the set of digraphs induced by $P$ on $Z$-functional order-uniform subsets of $\Nd$.

\begin{remark}\label{rem:each_discrete_clause_has_a_loop}
   Without loss of generality, we assume that $L_s \not= \emptyset$ for each $s \in [k]$.
   Indeed, if $Z$ contains a functional constraint $(\emptyset, \lambda)$, the class $\dcY_{P,Z}$ contains only edgeless digraphs.
   This is because all vectors in an $(\emptyset, \lambda)$-functional set have the same $\lambda$-th coordinate as discussed in Section~\ref{sec:functional-sets}.
   Since the path clause $P$ is loopless, the literal $q_{\lambda, \lambda}$ appears negatively in $P$.
   Therefore, $P$ is never satisfied and the resulting digraph is edgeless.
\end{remark}

\subsection{Construction of shift digraphs}\label{ssec:construction-of-shift-graphs}

Here we prove the implication \ref{item:tropicals_have_solution}~$\Rightarrow$~\ref{item:contains_shift_graphs} of Theorem~\ref{thm:tropical_dichotomy}. We begin with two examples.
The first is fairly simple, but the second contains the essential ideas of our construction.
These examples avoid the use of tropical algebra, but later examples will show how to arrive at the main ingredient of the solutions systematically using tropical algebra.

\begin{example}
	Let $d = 3$, let $P$ be a path clause with a single positive literal $q_{2,1}$, and $Z=(\{1\}, 3)$.
	 We will show that $\dcY_{P,Z}$ contains the class of 2-dimensional shift digraphs.
    
	Given $v = (x, y) \in \N^2$, let $\pi(v) = (2x, 2y, 2x+1)$. Given a 2-dimensional shift digraph $S$ with representations for the vertices by increasing pairs of integers, let $\pi(S)$ be the digraph induced by $P$ on $\pi[V(S)]$. We will show $\pi \colon S \to \pi(S)$ is an isomorphism.
	
	It is clear from the definition of $\pi$ that it is injective. Let $(a, b), (c, d) \in V(S)$. Let $v= \pi((a, b)) = (2a, 2b, 2a+1) = v$ and $w = \pi((c, d)) = (2c, 2d, 2c+1)$.
	
	If $((a, b), (c, d))$ is an edge of $S$, then $b = c $. So $\pi((a, b)) = (2a, 2b, 2a+1) = v$ and $\pi((c, d)) = (2b, 2d, 2b+1) = w$, and we verify that $P$ induces the edge $(v,w)$. Since $v_2 = w_1$, the required equality is satisfied. To see that $v_1 \neq w_2$, note that $2a \neq 2d$ since $a < b = c < d$. Similar arguments show $v_1 \neq w_1, v_2 \neq w_2$, and $v_3 \neq w_3$. Also, we have $v_1, v_2 \neq w_3$ since $v_1, v_2$ are even and $w_3$ is odd, and similarly, we have $v_3 \neq w_1, w_2$. Thus $\pi$ is a homomorphism.
	
	If $((a,b), (c,d))$ is a non-edge in $S$, then $b \neq c$. So $v_2 \neq w_1$, and $P$ does not induce an edge from $v$ to $w$. 
\end{example}

In the previous example, in order to encode a 2-dimensional shift digraph, we considered vertices (in $\N^3$) where every coordinate was associated to an element of $\{1,2\}$. More generally, in order to encode a $D$-dimensional shift digraph, we will want to consider vertices where every coordinate is associated to an interval of $\{1, \dots, D\}$, as we will see in the next example.

\begin{example} \label{ex:encodeshift2}
	Let $d = 3$, and let $P$ be a path clause with a single positive literal $q_{2,1}$, and $Z = ((\{1\}, 3),( \{2\}, 3))$.
	We will show that $\dcY_{P,Z}$ contains the class of $3$-dimensional shift digraphs.
	
	Let $\pi_0 \colon \N^3 \to \N^2 \times \N^2 \times \N$ be defined by $\pi_0((a, b, c)) = ((a, b), (b, c), b)$. Let $g_1 \colon \N^2 \to 2\N$ and $g_2 \colon \N \to 2\N+1$ be injections. Let $\pi \colon \N^3 \to \N^3$ be defined by $\pi((a, b, c)) = (g_1((a, b)), g_1((b, c)), g_2(b))$.
	
	Given a 3-dimensional shift digraph $S$ with representations for the vertices by increasing triples of integers, let $\pi(S)$ be the digraph induced by $P$ on $\pi[V(S)]$. Note that $\pi_0[V(S)]$ is $((\{1\}, 3), (\{2\}, 3))$-functional since $b \in (a, b)$ and $b \in (b, c)$. We will show $\pi \colon S \to \pi(S)$ is an isomorphism.
	
	It is clear that $\pi_0$ is injective, and since $\pi$ is obtained by composing $\pi_0$ with the injections $g_1, g_2$, it is injective as well. Let $(a,b,c), (d,e,f) \in V(S)$. So $\pi((a,b,c)) = (g_1(a, b), g_1(b, c), g_2(b))$ and $\pi((d,e,f)) = (g_1(d, e), g_1(e, f), g_2(e))$.
	
	If $((a, b, c), (d, e, f))$ is an edge in $S$, then $b = d$ and $c = e$. Let $v = \pi((a,b,c)) = (g_1(a, b), g_1(b, c), g_2(b))$ and $w = \pi((d,e,f)) = (g_1(b, c), g_1(c, f), g_2(c))$, and we verify that $P$ induces the edge $(v,w)$. Since $v_2 = w_1$, the required equality is satisfied. To see $v_1 \neq w_2$, note that $a < c$, so $(a, b) \neq (c, f)$, and so $g_1(a, b) \neq g_1(c, f)$. Similar arguments show $v_1 \neq w_1, v_2 \neq w_2$, and $v_3 \neq w_3$. Also, we have $v_1, v_2 \neq w_3$ since $v_1, v_2$ are even and $w_3$ is odd, and similarly, we have $v_3 \neq w_1, w_2$. Thus $\pi$ is a homomorphism.
	
	If $((a, b, c), (d, e, f))$ is a non-edge in $S$, then either $b \neq d$ or $c \neq e$. So $v_2 \neq w_1$, and $P$ does not induce an edge from $v$ to $w$. 
\end{example}

Given a path clause $P$ and a set of functional constraints $Z$, we define an \emph{interval representation for $P$ over $Z$} that allows us to build shift digraphs within $\dcY_{P,Z}$.

\begin{notation}
	Given $X \subset \Z$ and $y \in \Z$, we let $X+y$ denote $\{x+y : x \in X\}$.
\end{notation}

\begin{definition}[Interval representation]\label{def:interval_system}
    Fix $D \in \N$ and consider a collection of non-empty intervals $\cI = \{I_c \subseteq [D]: c \in [d]\}$.
	We say that $\cI$ is a \emph{($D$-dimensional) interval representation for $P$ over $Z$} if it satisfies the following:
	\begin{enumerate}[label=(\roman*)]
		\item\label{item:equalities} For each positive literal $q_{c+1,c}$ of $P$, we have
		\[
		    I_{c+1} = I_c + 1.
		\]
		\item\label{item:functionality} For each functional constraint $(L, \lambda)$ in $Z$, we have
		\[
		    I_{\lambda} \subseteq \bigcup_{c \in L} I_{c}.
		\]
	\end{enumerate}
	An interval representation $\cI$ for $P$ over $Z$ is \emph{minimal} if it has the smallest possible dimension among interval representations for $P$ over $Z$.
\end{definition}

\begin{example} \label{example:int rep}
    Recall the setting of Example~\ref{ex:encodeshift2}.
    Let $d = 3$, and let $P$ be a path clause with a single positive literal $q_{2,1}$, and let $Z =((\{1\}, 3), (\{2\}, 3))$.
	Then $I_1 = \{1,2\}, I_2 = \{2,3\}, I_3 = \{2\}$ is a 3-dimensional interval representation for $P$ over $Z$. This interval representation gives rise to the map $\pi_0$ sending $(a,b,c) \mapsto ((a,b), (b,c), b)$ used in Example \ref{ex:encodeshift2}. 
\end{example}

The implication \ref{item:tropicals_have_solution}~$\Rightarrow$~\ref{item:contains_shift_graphs} of Theorem~\ref{thm:tropical_dichotomy} readily follows from the two lemmas below. \Cref{lem:interval_representation_iff_tropical_solutions} first reduces a system of tropical inequalities to an interval representation.  

\begin{lemma}\label{lem:interval_representation_iff_tropical_solutions}
	There exist matrices $\widehat{A}, \widehat{B} \in \Zmin^{(2d-2m+k) \times d}$ and $\widetilde{A}, \widetilde{B} \in \Zmax^{(2d-2m+k) \times d}$ such that $P$ admits an interval representation over $Z$ if and only if each tropical inequality $\widehat{A} \minmu x \geq \widehat{B} \minmu x$ and $\widetilde{A} \maxmu y \leq \widetilde{B} \maxmu y$ has a finite solution.
\end{lemma}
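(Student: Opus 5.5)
The matrices will encode the left and right endpoints of the intervals separately. The min-plus system $\widehat{A} \minmu x \geq \widehat{B} \minmu x$ will constrain a vector $x \in \R^d$ of left endpoints $x_c = \min I_c$. The max-plus system $\widetilde{A} \maxmu y \leq \widetilde{B} \maxmu y$ will constrain a vector $y\in\R^d$ of right endpoints $y_c=\max I_c$. The row count $2d-2m+k$ suggests the construction directly. The clause digraph $\Delta_P$ has $d-m$ edges, because it is a disjoint union of $m$ paths on $d$ vertices. Each positive literal $q_{c+1,c}$ contributes two rows: for $x$ these are $x_{c+1} \geq x_c + 1$ and $x_c \geq x_{c+1} - 1$, i.e.\ the equality $x_{c+1}=x_c+1$, and similarly $y_{c+1}=y_c+1$ for $y$. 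Each functional constraint $(L_s,\lambda_s)$ contributes one more row, namely
\[
x_{\lambda_s} \geq \min_{c\in L_s} x_c \quad\text{and}\quad y_{\lambda_s}\leq \max_{c\in L_s} y_c .
\]
Every row has the form ``single tropical monomial versus tropical sum'', which is a row of $A\minmu x\ge B\minmu x$ (resp.\ $Ay\le By$) with integer finite entries and $\infty$ / $-\infty$ elsewhere. By Remark~\ref{rem:each_discrete_clause_has_a_loop}, $L_s\neq\emptyset$, so the right-hand sides are never empty.

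For the forward direction, let $\cI$ be an interval representation and put $x_c=\min I_c$ and $y_c=\max I_c$. Condition \ref{item:equalities} of Definition~\ref{def:interval_system} gives the path equalities. Condition \ref{item:functionality} gives $I_\lambda\subseteq\bigcup_{c\in L}I_c$. Then $\min I_\lambda$ lies in some $I_c$, so $\min I_\lambda \geq x_c\ge \min_{c\in L} x_c$. Symmetrically, $y_\lambda\le\max_{c\in L}y_c$. Thus both systems have finite (integer) solutions.

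For the converse, take finite solutions $x$ and $y$. First I will make them integral by replacing them with $\lfloor x\rfloor$ and $\lfloor y\rfloor$. Since all finite matrix entries are integers, $\lfloor \max_j (A_{ij}+y_j)\rfloor=\max_j (A_{ij}+\lfloor y_j\rfloor)$, and similarly for $\min$. Since $\lfloor\cdot\rfloor$ is monotone, each inequality is preserved.

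The main obstacle is that $x$ and $y$ come from independent systems, so $x_c\le y_c$ may fail. Moreover, a union of intervals need not be an interval, so the endpoint inequalities alone do not give the containment in \ref{item:functionality}. I will resolve both at once with a large shift $M$. Solutions of the max-plus system are invariant under adding a constant to every coordinate, so $y+M$ is again a solution. Choose $M$ so large that $\max_c x_c \le \min_c (y_c+M)$. Set $I_c=[x_c,\,y_c+M]$, and translate everything into $[D]$ by adding a common constant to both vectors. Every interval is then nonempty and all intervals share a common point. Hence $\bigcup_{c\in L}I_c=[\min_{c\in L}x_c,\ \max_{c\in L}(y_c+M)]$, and the two endpoint inequalities give exactly $I_\lambda\subseteq\bigcup_{c\in L}I_c$. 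The path equalities $x_{c+1}=x_c+1$ and $y_{c+1}=y_c+1$ give $I_{c+1}=I_c+1$. Therefore $\cI$ is an interval representation, which completes the equivalence.
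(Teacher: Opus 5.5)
Your proposal is correct and follows the paper's proof essentially step for step. The common points are the same encoding of left and right endpoints, rounding down to integer solutions, and shifting $x$ and $y$ so that $0 < x_i \le y_j$ for all $i,j$. As in the paper, the pairwise overlaps then turn the endpoint inequalities into the containment $I_\lambda \subseteq \bigcup_{c \in L} I_c$; the paper uses two witnesses $c_\ell, c_r \in L$ where you use a common point of all intervals, which is only a cosmetic difference.
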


Then, to complete the implication, \Cref{lem:interval_representation_implies_shift_graphs} builds shift digraphs from this interval representation.

\begin{lemma}\label{lem:interval_representation_implies_shift_graphs}
	If $P$ admits a $D$-dimensional interval representation over $Z$, then the class $\dcY_{P,Z}$ contains~$\dcS_D$.
\end{lemma}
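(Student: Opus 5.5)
We generalise the construction of Example~\ref{ex:encodeshift2}. Let $\cI = \{I_c : c \in [d]\}$ be a $D$-dimensional interval representation for $P$ over $Z$, and write $I_c = [\alpha_c, \beta_c]$. For an increasing tuple $x = (x_1,\dots,x_D)$ with $x_1<\cdots<x_D$, define $\pi_0(x)_c := x_{|I_c} = (x_{\alpha_c},\dots,x_{\beta_c})$, which is an increasing tuple of length $|I_c|$.

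To make distinct coordinates take values in disjoint sets, we fix, for every $c\in[d]$, an injection $g_c$ from finite tuples into $\bN$. We want $g_c$ and $g_{c'}$ to agree on their common domain whenever $c,c'$ lie in the same maximal path of $\Delta_P$, and to have disjoint images otherwise. A concrete choice is $g_c(t) := \mathrm{code}(p(c), t)$ for an injective encoding $\mathrm{code}$ of pairs (path index, tuple). Then we set $\pi(x)_c := g_c(\pi_0(x)_c)$.

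Given $S\in\dcS_D$ realised on increasing $D$-tuples, let $V := \pi[V(S)]$. We then check four things in order.

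\emph{(a) $\pi$ is injective.} Every $i\in[D]$ lies in some $I_c$. Indeed, each maximal path $\rho$ has intervals that shift by one along the path, so together they cover an interval. If some $i$ were uncovered, we could drop it and obtain a smaller representation. So we may assume $\cI$ is minimal, or simply add a harmless argument, or note that it suffices to embed $\dcS_D$ into $\dcS_{D'}$-style tuples with $D'$ the covered size, using Fact~\ref{fact:shift}(3).

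\emph{(b) $V$ is $Z$-functional.} For $(L,\lambda)\in Z$ we have $I_\lambda\subseteq\bigcup_{c\in L}I_c$, so $x_{|I_\lambda}$ is determined by the tuples $x_{|I_c}$, $c\in L$, and these are recovered from $v_{|L}$ via $g_c^{-1}$. Hence $v_\lambda$ is a function of $v_{|L}$.

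\emph{(c) $V$ is order-uniform.} This is the delicate point. We would choose $\mathrm{code}$ to be order-compatible: first order by the path index $p(c)$ (using disjoint, increasing blocks of $\bN$ for different paths), and within a block order tuples so that the relative order of $x_{|I_c}$ and $x_{|I_{c+1}}$ is the same for all increasing $x$. For instance, we can order by the first entry lexicographically, since $x_{\alpha_c}<x_{\alpha_c+1}$ always holds. Equal intervals on one path cannot occur, because consecutive intervals are shifted, so within a path the order type is fixed. Injectivity of vertices then also follows.

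\emph{(d) $\pi$ is an isomorphism onto the realisation of $P$ over $V$.} For an edge $x\to y$ of $S$ we have $y_i = x_{i+1}$. Each positive literal $q_{c+1,c}$ holds because $I_{c+1}=I_c+1$ gives $x_{|I_{c+1}} = y_{|I_c}$, and $c,c+1$ lie on the same path, so they use the same $g$. For each negative literal $q_{c,c'}$ we need $x_{|I_c}\neq y_{|I_{c'}}$. If $p(c)\neq p(c')$ this follows from disjoint images. Otherwise, write $y_{|I_{c'}} = x_{|I_{c'}+1}$ and argue that equality of these increasing tuples forces $I_c = I_{c'}+1$, hence $c = c'+1$ by the shift structure on the path; this is exactly a positive literal, a contradiction. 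The edge case $c$ at the end of a path, where $I_{c'}+1$ may exceed $D$, needs separate care, but there the tuple lengths or entries differ because of strict monotonicity.

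For a non-edge $x\not\to y$ there is an index $i$ with $y_i\neq x_{i+1}$. We want some positive literal $q_{c+1,c}$ to witness this, that is, $i\in I_c$ with $c+1$ on the path. This requires that the intervals $I_c$ with $c$ not the last vertex of its path cover $[1,D-1]$. We expect this to follow from minimality of $\cI$: trimming the top element of each path would otherwise give a smaller representation.

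We expect this non-edge and covering step, together with the order-uniformity bookkeeping, to be the main obstacle. To handle it we would first attempt to prove a normal-form lemma for minimal interval representations, and if that fails, restrict to an induced subfamily of $\dcS_D$ and use Fact~\ref{fact:shift}(3).
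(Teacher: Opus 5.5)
\textbf{Where the gap is.} Your construction is the paper's, with the same ingredients:
\begin{itemize}
\item the same $\pi_0$, and injections with disjoint images, one per path;
\item the same literal-by-literal homomorphism check;
\item the same non-edge argument, in which a failure $y_i\neq x_{i+1}$ is caught by a positive literal $q_{c+1,c}$ with $i\in I_c$;
\item the same final appeal to minimality plus Fact~\ref{fact:shift}(3) to pass from $D'\le D$ back to $D$.
\end{itemize}
The genuine gap is exactly the step you flag and leave as an expectation. You need that in a minimal representation every $K\in[D-1]$ lies in some $I_c$ with $p(c)=p(c+1)$. Without this, your non-edge step has nothing to work with, since no positive literal of $P$ compares $y_K$ with $x_{K+1}$. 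The justification you sketch, ``trimming the top element of each path'', does not work. Shortening intervals from the top changes their lengths, can empty singleton intervals, and can break the containments $I_\lambda\subseteq\bigcup_{c\in L}I_c$. It also never removes the offending coordinate $K$, so there is no reason it yields a valid, let alone smaller, representation. This covering property is the one non-routine ingredient of the paper's proof (\cref{lem:minimal_representation_is_irreducible}). The paper proves the slightly stronger claim that $K$ is an endpoint of such an $I_c$, but your covering form is all the isomorphism argument uses.

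\textbf{How to fill it.} Collapse $K$ instead. Suppose $K$ lies in no $I_c$ with $p(c)=p(c+1)$. Then each such $I_c$ lies in $[1,K-1]$ or in $[K+1,D]$, so $I_{c+1}=I_c+1$ lies in $[1,K]$ or in $[K+2,D]$ correspondingly. Let $\theta\colon[D]\to[D-1]$ fix each $i\le K$ and send each $i>K$ to $i-1$, and set $I'_c=\theta(I_c)$.
\begin{itemize}
\item $\theta$ is monotone with increments $0$ or $1$, so each $I'_c$ is a non-empty interval.
\item $\theta$ is a translation on each of $[1,K]$ and $[K+1,D]$, so the equalities $I'_{c+1}=I'_c+1$ survive.
\item Images commute with unions, so $I'_\lambda\subseteq\bigcup_{c\in L}I'_c$ for all $(L,\lambda)\in Z$.
\end{itemize}
This gives a $(D-1)$-dimensional representation, contradicting minimality. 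It also gives your step (a), since $D\in I_{c+1}$ whenever $D-1\in I_c$ with $p(c)=p(c+1)$.

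\textbf{Two minor points.} First, your care about order-uniformity in (c) is warranted: membership in $\dcY_{P,Z}$ requires it, and the paper's write-up leaves the $g_j$ arbitrary. However, $(\bN^k,<_{\mathrm{lex}})$ does not order-embed into $(\bN,<)$ for $k\ge 2$, and infinite blocks cannot be stacked in $\bN$. So choose the codes separately for each finite $S$, which is harmless. Second, in (d) the end-of-path case is settled by comparing first entries, $x_{\alpha_c}$ versus $y_{\alpha_{c'}}$:
\begin{itemize}
\item if $\alpha_{c'}<D$, then $y_{\alpha_{c'}}=x_{\alpha_{c'}+1}$, which differs from $x_{\alpha_c}$ unless $c=c'+1$;
\item if $\alpha_{c'}=D$, then $y_{\alpha_{c'}}$ exceeds $y_{D-1}=x_D\ge x_{\alpha_c}$.
\end{itemize}
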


We first describe the tropical systems of Lemma \ref{lem:interval_representation_iff_tropical_solutions}.
The idea is that they should capture the requirements of an interval representation for $P$ over $Z$.
An interval $I_c \subseteq [D]$ can be represented by the positions of its endpoints $x_c$ and $y_c$.
The system $\widehat{A} \minmu x \geq \widehat{B} \minmu x$ encodes relative positions of the left endpoints $x$, while $\widetilde{A} \maxmu y \leq \widetilde{B} \maxmu y$ encodes the right endpoints $y$.
Both systems consist of two parts that correspond to the two conditions in Definition~\ref{def:interval_system}.
That is,
\begin{equation}\label{eq:defABhat}
	\widehat{A} =
	\begin{pmatrix}
		\widehat{A}^\shift\\
		\widehat{A}^\func
	\end{pmatrix}
	\;\;\;\textrm{and}\;\;\;
	\widehat{B} =
	\begin{pmatrix}
		\widehat{B}^\shift\\
		\widehat{B}^\func
	\end{pmatrix},
\end{equation}
where $\widehat{A}^\shift, \widehat{B}^\shift \in \Zmin^{(2d-2m) \times d}$ and $\widehat{A}^\func, \widehat{B}^\func \in \Zmin^{k \times d}$, and likewise for $\widetilde{A}$ and $\widetilde{B}$.

It is clear how to represent the first condition of Definition~\ref{def:interval_system}.
That is, we impose the conditions 
\begin{equation}\label{eq:equality_condition}
	\begin{split}
		x_{c+1} &= x_c + 1
		,\\
		y_{c+1} &= y_c + 1
		,
	\end{split}
\end{equation}
for each positive literal $q_{c+1,c}$ of $P$.
This gives $d-m$ pairs of
equalities for left and right endpoints, respectively.
After splitting each equality into two inequalities, we obtain the systems $\widehat{A}^\shift \minmu x \geq \widehat{B}^\shift \minmu x$ (for left endpoints) and $\widetilde{A}^\shift \maxmu y \leq \widetilde{B}^\shift \maxmu y$ (for right endpoints) with matrices of dimensions $(2d-2m) \times d$ (see \cref{ex:shift_graph_tropical_system}).

Capturing the condition
\[
    I_{\lambda} \subseteq \bigcup_{c \in L} I_{c} \qquad \textrm{ for each $(L,\lambda) \in Z$}
\]
is less straightforward.
Instead, we express a formally weaker condition that the interval $I_{\lambda}$ is enclosed both from left and right by the set $\bigcup_{c \in L} I_{c}$.
That is,
\begin{equation}\label{eq:endpoint_condition}
	\begin{split}
		x_{\lambda} &\geq \min_{c \in L} x_c
		,\\
		y_{\lambda} &\leq \max_{c \in L} y_c
		.
	\end{split}
\end{equation}
These $k$ conditions for left and right endpoints, one pair for each 
$(L, \lambda) \in Z$, comprise the systems $\widehat{A}^\func \minmu x \geq \widehat{B}^\func \minmu x$ and $\widetilde{A}^\func \maxmu y \leq \widetilde{B}^\func \maxmu y$.

Clearly, there is a close connection between matrices of the min-plus and max-plus systems.
To explain the connection, we introduce the following definition.
\begin{definition}\label{def:twin}
    For $x \in \R \cup \{-\infty, \infty\}$, we define its \emph{twin} $\twin{x} \in \R \cup \{-\infty, \infty\}$ by
    \begin{align*}
        \twin{x} =
        \begin{cases}
            x & \text{ if } x \in \R, \\
            +\infty & \text{ if } x = -\infty, \\
            -\infty & \text{ if } x = +\infty.
        \end{cases}
    \end{align*}
    Note that the twin operation maps values from the min-plus semiring to the max-plus semiring and vice versa.    
    For a matrix $A \in (\R \cup \{-\infty, \infty\})^{m \times n}$, we say that $\twin{A}$, where the twin operation is applied entry-wise, is the \emph{twin} matrix of $A$.
\end{definition}

\begin{observation}\label{obs:tropical_systems_are_the_same}
	The matrices $\widehat{A}$ and $\widetilde{A}$ are twins to each other; likewise for $\widehat{B}$ and $\widetilde{B}$.
\end{observation}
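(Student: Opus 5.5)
The plan is to write the matrices down explicitly and compare them entry by entry. Every row of the four systems comes either from a positive literal $q_{c+1,c}$ of $P$ (the $\shift$ block) or from a constraint $(L,\lambda)\in Z$ (the $\func$ block). The row orderings are the same in both systems: the same enumeration of positive literals, each split into its two inequalities in the same order, followed by the same enumeration of $Z$. So it suffices to show that corresponding rows are twins.

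\textbf{Step 1: the $\shift$ block.} Split $x_{c+1}=x_c+1$ into two min-plus inequalities of the form $A\minmu x\ge B\minmu x$:
\[
\min\{x_{c+1}\}\ge \min\{x_c+1\}, \qquad \min\{x_c+1\}\ge \min\{x_{c+1}\}.
\]
The first row of $\widehat A$ therefore has entry $0$ in column $c+1$ and $+\infty$ elsewhere. The first row of $\widehat B$ has entry $1$ in column $c$ and $+\infty$ elsewhere. The second row is the same with the roles of $\widehat A$ and $\widehat B$ exchanged.

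The max-plus splitting of $y_{c+1}=y_c+1$ is written in the form $A\maxmu y\le B\maxmu y$:
\[
\max\{y_{c+1}\}\le \max\{y_c+1\}, \qquad \max\{y_c+1\}\le \max\{y_{c+1}\}.
\]
Take the first max-plus row to have $\widetilde A$ entry $0$ at column $c+1$ and $\widetilde B$ entry $1$ at column $c$, with $-\infty$ elsewhere. Then it has exactly the same finite entries as the corresponding min-plus row, with $\pm\infty$ swapped, so the two rows are twins.

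\textbf{Step 2: the $\func$ block.} For $(L,\lambda)$, the min-plus condition $x_\lambda\ge\min_{c\in L}x_c$ gives a $\widehat A$ row with $0$ at $\lambda$ and $+\infty$ elsewhere. It gives a $\widehat B$ row with $0$ on $L$ and $+\infty$ off $L$. This is well defined because $\lambda\notin L$.

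The max-plus condition $y_\lambda\le\max_{c\in L}y_c$ gives a $\widetilde A$ row with $0$ at $\lambda$ and $-\infty$ elsewhere. It gives a $\widetilde B$ row with $0$ on $L$ and $-\infty$ off $L$. Again these are entrywise twins.

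\textbf{Conclusion and the one delicate point.} The observation amounts to stating the construction precisely. The only real care needed is to fix the orientation conventions. The inequality $x_\lambda\ge\min$ must place $x_\lambda$ on the $A$ side in the min-plus system, and the inequality $y_\lambda\le\max$ must place $y_\lambda$ on the $A$ side in the max-plus system. The orderings of the split $\shift$ inequalities must also agree between the two systems. Once these choices are fixed as above, the twin relation holds by direct inspection, and that verification is the proof.
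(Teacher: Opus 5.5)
Your proposal is correct and matches the paper, which states this as an observation that holds by inspection of the construction: each row of the min-plus and max-plus systems comes from the same positive literal or functional constraint, the finite entries coincide, and $+\infty$ is replaced by $-\infty$. One minor difference: the paper splits $x_{c+1}=x_c+1$ as $x_{c+1}\ge x_c+1$ and $x_c\ge x_{c+1}-1$ (see Example~\ref{ex:shift_graph_tropical_system}), so every finite entry of $\widehat{A}$ and $\widetilde{A}$ is $0$ and the offsets $\pm 1$ sit in $\widehat{B}$ and $\widetilde{B}$, as Definition~\ref{def:game_graph} reflects; your second row instead puts the $1$ into the $A$-matrix, but the same entrywise comparison goes through unchanged.
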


\begin{example}\label{ex:shift_graph_tropical_system}
We continue by considering the setting of Example \ref{ex:encodeshift2}, so we let $d = 3$, $P$ be a path clause with a single positive literal $q_{2,1}$, and $Z =((\{1\}, 3), (\{2\}, 3))$ be a set of functional constraints.
	Then we have $\widehat{A}^\shift \minmu x \geq \widehat{B}^\shift \minmu x$ of the form
	\begin{align*}
		\begin{pmatrix}
			\infty &0 & \infty \\
			0 & \infty &\infty
		\end{pmatrix}
		\minmu
		\begin{pmatrix}
			x_1 \\ x_2 \\ x_3
		\end{pmatrix}
		\geq
		\begin{pmatrix}
			1 & \infty & \infty \\
			\infty &-1 & \infty
		\end{pmatrix}
		\minmu
		\begin{pmatrix}
			x_1 \\ x_2 \\ x_3 
		\end{pmatrix}
		,
	\end{align*}
	and the system $\widehat{A}^\func \minmu x \geq \widehat{B}^\func \minmu x$ of the form
	\begin{align*}
		\begin{pmatrix}
			\infty & \infty & 0 \\
			\infty &\infty & 0 
		\end{pmatrix}
		\minmu
		\begin{pmatrix}
			x_1 \\ x_2 \\ x_3
		\end{pmatrix}
		\geq
		\begin{pmatrix}
			0 &\infty &\infty \\
			\infty & 0 &\infty \\
		\end{pmatrix}
		\minmu
		\begin{pmatrix}
			x_1 \\ x_2 \\ x_3
		\end{pmatrix}
		.
	\end{align*}
	
	By Observation~\ref{obs:tropical_systems_are_the_same}, the matrices in $\widetilde{A} \maxmu y \leq \widetilde{B} \maxmu y$ have the same form, except that $\infty$ is replaced by $-\infty$.
\end{example}

Before proving Lemma~\ref{lem:interval_representation_iff_tropical_solutions}, let us point out a simple observation. 
Let $\Zmin$ stand for $\Z \cup \{\infty\}$.

\begin{lemma}\label{obs:integer_solution}
	Let $A,B \in \Zmin^{m \times n}$.
	If the system $A \minmu x \geq B \minmu x$ has a finite solution in $\R^n$, then it has a finite  solution in $\Z^n$.
	The same holds true for max-plus systems.
\end{lemma}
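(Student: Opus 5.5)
The plan is to round a real solution down to an integer one, entrywise, using the floor function. Suppose $x \in \R^n$ is a finite solution of $A \minmu x \geq B \minmu x$, where all finite entries of $A$ and $B$ are integers. Define $z \in \Z^n$ by $z_j := \lfloor x_j \rfloor$ for every $j \in [n]$. I would then show that $z$ satisfies every row of the system.

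Fix a row $i$. The inequality reads $\min_j (A_{ij} + x_j) \geq \min_j (B_{ij} + x_j)$, where terms with infinite entries are dropped; if all $A_{ij}$ are infinite the left side is $\infty$ and the row is trivial for any vector. The key observation is that for an integer $a$ and a real $t$ we have $\lfloor a + t\rfloor = a + \lfloor t \rfloor$, and that $\lfloor\cdot\rfloor$ is monotone and commutes with $\min$, i.e.\ $\lfloor \min_j t_j \rfloor = \min_j \lfloor t_j \rfloor$ over a finite index set. Hence
\[
\min_j (A_{ij} + z_j) = \Big\lfloor \min_j (A_{ij} + x_j) \Big\rfloor \geq \Big\lfloor \min_j (B_{ij} + x_j) \Big\rfloor = \min_j (B_{ij} + z_j),
\]
using monotonicity of the floor in the middle step, with the convention $\lfloor \infty \rfloor = \infty$. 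So $z$ is an integer finite solution.

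For max-plus systems $A \maxmu y \leq B \maxmu y$ with $A,B \in \Zmax^{m\times n}$ the identical argument works, since the floor also commutes with $\max$ and with adding integers, with $\lfloor -\infty \rfloor = -\infty$; alternatively one can pass through the isomorphism $x \mapsto -x$ described in \cref{sec:min_plus_prelims} and use the ceiling instead. There is no real obstacle here; the only point needing care is handling rows in which some or all entries are infinite, which is dispatched by the convention that infinite terms are ignored in the $\min$ (resp.\ $\max$).
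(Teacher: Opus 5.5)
Your proposal is correct and follows the same approach as the paper: round a real solution down entrywise with the floor function. The paper simply asserts that $x^*=(\lfloor x_1\rfloor,\ldots,\lfloor x_n\rfloor)$ is again a solution. Your argument supplies the justification, namely that the floor is monotone, commutes with $\min$ and $\max$, and commutes with adding integer entries, together with the handling of infinite entries.
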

\begin{proof}
    It is easy to check that if $x \in \R^n$ satisfies $A \minmu x \geq B \minmu x$ (respectively, $A \maxmu x \leq B \maxmu x$), then so does $x^*=(\lfloor x_1 \rfloor, \lfloor x_2 \rfloor, \ldots, \lfloor x_n \rfloor) \in \Z^n$.
\end{proof}

\begin{proof}[Proof of Lemma~\ref{lem:interval_representation_iff_tropical_solutions}]
	Fix $\widehat{A}, \widehat{B}, \widetilde{A}, \widetilde{B}$ to be the matrices associated with $P$ and $Z$ as defined above.
	
	``$\Rightarrow$''
	This is a trivial implication.
	Assume that $P$ admits an interval representation $\cI = \{I_c: c \in [d]\}$ over $Z$.
	We denote the left and right endpoints of $I_c$ by $x_c$ and $y_c$, respectively.
	Then the vectors $x, y \in \Z^d$ clearly satisfy the respective systems $\widehat{A} \minmu x \geq \widehat{B} \minmu x$ and $\widetilde{A} \maxmu y \leq \widetilde{B} \maxmu y$, as follows from the properties~\ref{item:equalities}~and~\ref{item:functionality} from Definition~\ref{def:interval_system}.
	
	``$\Leftarrow$''
	Suppose that we have finite solutions $x, y \in \R^d$ of the respective systems $\widehat{A} \minmu x \geq \widehat{B} \minmu x$ and $\widetilde{A} \maxmu y \leq \widetilde{B} \maxmu y$.
	By Lemma~\ref{obs:integer_solution}, we may assume that $x, y \in \Z^d$.
    Furthermore, we may assume that 
    \begin{equation}\label{eq:xs_leq_ys}
	    0 < x_i \leq y_j \qquad \text{ for every }i,j \in [d].
	\end{equation}
    Indeed, if $x$ and $y$ are solutions to the systems, then, for all $\alpha,\beta\in\Z$, the vectors $x+\alpha$ and $y+\beta$ are also solutions. Thus, by choosing $\alpha$ large enough that $x_i+\alpha>0$ for all $i\in[d]$, and then choosing $\beta$ large enough that $\min_{j\in[d]} y_j+\beta \geq \max_{i\in[d]} x_i+\alpha$, we may replace $x$ and $y$ by these shifted solutions and obtain \eqref{eq:xs_leq_ys}.  
    
	Now, for each $c \in [d]$ define the (closed) interval $I_c = [x_c, y_c] \subseteq \N$, and
	note that~\cref{eq:xs_leq_ys} implies that all $I_c$ are non-empty and $I_i \cap I_j \not= \emptyset$ for every $i,j \in [d]$.
	
	We prove that $\cI = \{I_c: c \in [d]\}$ is a valid interval representation for $P$ over $Z$ (of dimension $D=\max_c{y_c}$).
	Clearly, it follows from the system of inequalities in~\cref{eq:equality_condition} that Definition~\ref{def:interval_system}\ref{item:equalities} is satisfied.
	To verify Definition \ref{def:interval_system}\ref{item:functionality} for a particular functional constraint $(L, \lambda) \in Z$, we observe that the validity of the inequalities~\cref{eq:endpoint_condition} guarantees existence of some $c_\ell, c_r \in L$ satisfying
	\[
        	x_{c_\ell} \leq x_{\lambda} \leq y_{\lambda} \leq y_{c_r}
	    .
	\]
	Since $I_{c_\ell} \cap I_{c_r} \neq \emptyset$, the set $I_{c_\ell} \cup I_{c_r}$ is also an interval, namely $[x_{c_\ell}, y_{c_r}]$, and so contains $I_{\lambda}$.
	Thus, we have
	\[
        	I_{\lambda} \subseteq \bigcup_{c \in L} I_{c}
        	,
	\]
	as claimed.\end{proof}

\begin{example}[Continuation of Example~\ref{ex:shift_graph_tropical_system}]\label{ex:shift_graph_interval_representation}
	The system $\widehat{A} \minmu x \geq \widehat{B} \minmu x$ has a solution $x = (1, 2, 2) \in \Z^3$, while $\widetilde{A} \maxmu y \leq \widetilde{B} \maxmu y$ has a solution $y = (2, 3, 2) \in \Z^3$.
	By the proof of Lemma~\ref{lem:interval_representation_iff_tropical_solutions}, we define a $3$-dimensional interval representation for $P$ over $Z$ by
	\[
	    I_1 = \{1,2\}, \quad I_2 = \{2,3\}, \quad I_3 = \{2\}
        	.
	\] 
	This recovers the interval representation from Example \ref{example:int rep}.
	
	There are also other solutions to the system. For example, $x = (1, 2, 2)$ and $y = (0, 1, -1)$ is also a solution. Following the proof of Lemma \ref{lem:interval_representation_iff_tropical_solutions}, we would translate $y$ by at least 3, obtaining a solution $(3, 4, 2)$. This gives the 4-dimensional interval representation
    \[
	    I_1 = \{1,2, 3\}, \quad I_2 = \{2,3, 4\}, \quad I_3 = \{2\},
	\] 
	however, this is a non-minimal representation.
\end{example}

As our next step, we aim to prove Lemma~\ref{lem:interval_representation_implies_shift_graphs}.

\tomastodo{The proof of Lemma~\ref{lem:interval_representation_implies_shift_graphs} should be restructured.}

We first explain how we can use an interval representation to produce a mapping from shift digraphs onto a convenient $Z$-functional set.
Let $\cI = \{I_c \subseteq [D]: c \in [d]\}$ be a $D$-dimensional interval representation for $P$ over $Z$, and let $x_c$ and $y_c$ be respectively the 
minimum and the maximum
elements of $I_c$.

We now generalize the passage from the interval representation provided in Example \ref{example:int rep} to the functions $\pi_0$ and $\pi$ used in Example \ref{ex:encodeshift2}. As before $\pi_0(v)$ will be a tuple of tuples, but this time obtained by taking subintervals of $v$ corresponding to the intervals of the interval representation. We then compose with injections from suitable powers of $\N$ into $\N$ with disjoint images, one injection for each maximal path $\rho_i$ of $P$, to obtain $\pi$. Unfortunately the notation in the general setting is somewhat tortuous.

We define $\pi_0 \colon \N^D \to \N^{|I_1|} \times \N^{|I_2|} \times \cdots \times \N^{|I_d|}$ by $\pi_0(v)_c = (v_{x_c}, \dots, v_{y_c})$ for every $c \in [d]$.
Recall that $p \colon [d] \to [m]$ is the function mapping an element of $[d]$ to the index of the maximal path containing it. For each $j \in [m]$ and for some/all $i$ such that $p(i) = j$, we choose an injection $g_j \colon \N^{|I_{i}|} \to \N$, such that $\{g_j : j \in [m]\}$ have pairwise disjoint images.
We then define $\pi \colon \N^D \to \N^d$ by
\[
     \pi(v) = (g_{p(1)}(\pi_0(v)_1), g_{p(2)}(\pi_0(v)_2), \dots, g_{p(d)}(\pi_0(v)_d))
    .
\]

\begin{example}[Continuation of Example~\ref{ex:shift_graph_interval_representation}]\label{ex:shift_graph_projection_definition}
	Consider the first interval representation in Example~\ref{ex:shift_graph_interval_representation}, i.e.\ 
    \[
	    I_1 = \{1,2\}, \quad I_2 = \{2,3\}, \quad I_3 = \{2\}
        	.
	\] 
	Then the corresponding function $\pi_0\colon \N^3 \to \N^2 \times \N^2 \times \N$ is
	\[
	    \pi_0((a, b, c)) = \big( (a, b), (b, c), b \big)
	    .
	\]
	Let $g_1 \colon \N^2 \to 2\N$ and $g_2 \colon \N \to 2\N+1$ be injections. Since the path clause $P$ only has a single positive literal $q_{2,1}$, we have $p \colon [3] \to [2]$ defined by $p(1) = p(2) = 1$ and $p(3) = 2$. Then the corresponding function $\pi\colon \N^3 \to \N^3$~is
    \[
	    \pi((a, b, c)) = \big( g_1((a, b)), g_1((b, c)), g_2(b) \big)
	    .
	\]
	Thus we have recovered the functions $\pi_0$ and $\pi$ used in Example \ref{ex:encodeshift2}.
\end{example}

\begin{example}[Continuation of Example~\ref{ex:shift_graph_projection_definition}] \label{ex:functionality}
	We observe that for an $X \subset \N^D$, the image $\pi_0[X]$ is a $((\{1\}, 3), (\{2\}, 3))$-functional set. Recall that $\pi_0((a,b,c)) = ((a, b), (b, c), b)$. Since $b$ is contained in the tuple $(a, b)$, the set is $((\{1\}, 3))$-functional, and since $b$ is contained in the tuple $(b, c)$, the set is $(\{2\}, 3)$-functional.
	
	We will now observe that $\pi[X]$ is also $((\{1\}, 3), (\{2\}, 3))$-functional. Indeed, given an element $\pi((a,b,c)) = (g_1((a,b)), g_1((b, c)), g_2(b))$ in $\pi[X]$, we may invert $g_1$ and $g_2$ to obtain the element $\pi_0((a,b,c))$ and the functional constraints of this element pass to $\pi((a,b,c))$. Writing it out explicitly, we have $h_1 \colon \N^2 \to \N$ sending $(x, y) \mapsto y$ and $h_2 \colon \N^2 \to \N$ sending $(x, y) \mapsto x$ giving the two functional constraints in $\pi_0[X]$. Then $g_2 \circ h_1 \circ g_1^{-1} \colon \N \to \N$ and $g_2 \circ h_2 \circ g_1^{-1} \colon \N \to \N$ give the two functional constraints in $\pi[X]$.
\end{example}

More generally, we have the following. 

\begin{lemma}\label{clm:pi_produces_functional_set}
	Let $X \subset \N^D$. Then the image $\pi[X]$ is a $Z$-functional set.
\end{lemma}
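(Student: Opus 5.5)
The plan is to follow Example~\ref{ex:functionality} and pass the functional dependencies of $\pi_0[X]$ through the injections $g_j$. Fix a constraint $(L,\lambda)\in Z$. By Definition~\ref{def:interval_system}\ref{item:functionality} we have $I_\lambda \subseteq \bigcup_{c\in L} I_c$. So for every $t\in I_\lambda$ we can choose some $c(t)\in L$ with $t\in I_{c(t)}$. The claim is that $\pi(v)_\lambda$ is a function of $\pi(v)_{|L}$ for $v\in X$. Since $X\subseteq \N^D$ is arbitrary, it suffices to build a single function $\varphi\colon \N^{L}\to\N$, independent of $X$, with $\pi(v)_\lambda=\varphi(\pi(v)_{|L})$ for all $v\in\N^D$. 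Restricting $\varphi$ to $\pi[X]_{|L}$ then gives the witness.

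The construction has three steps.
\begin{itemize}
\item \emph{Decode.} For each $c\in L$, apply $g_{p(c)}^{-1}$ to $\pi(v)_c$. Since $g_{p(c)}$ is injective, this recovers the tuple $\pi_0(v)_c=(v_{x_c},\dots,v_{y_c})$, that is, the values $v_t$ for all $t\in I_c$. Off the image of $g_{p(c)}$ the function may be defined arbitrarily; such inputs never arise from some $\pi(v)$.
\item \emph{Reassemble.} For each $t\in I_\lambda$, read off $v_t$ as the entry at position $t-x_{c(t)}$ of the decoded tuple for $c(t)$. This yields the tuple $(v_{x_\lambda},\dots,v_{y_\lambda})=\pi_0(v)_\lambda$.
\item \emph{Re-encode.} Apply $g_{p(\lambda)}$ to obtain $\pi(v)_\lambda$.
\end{itemize}
Each step depends only on $\pi(v)_{|L}$ and on the fixed data $\cI$, the $g_j$ and the choice $c(\cdot)$. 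The composite is therefore a well-defined map $\varphi$ with the required property.

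There is one subtle point. The domain of $g_{p(c)}$ is $\N^{|I_c|}$, and $g_{p(c)}$ was chosen "for some/all $i$ with $p(i)=p(c)$". This is consistent: by Definition~\ref{def:interval_system}\ref{item:equalities}, intervals on the same maximal path are translates of each other, so they have equal length. The decoding in the first step is thus unambiguous.

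The only real obstacle is bookkeeping: the index shift $t-x_{c(t)}$ and the fact that several $c\in L$ may cover the same $t$. Either copy gives the same value $v_t$, so fixing one choice $c(t)$ avoids any inconsistency. Finally, running the argument for every constraint in $Z$ shows that $\pi[X]$ is $Z$-functional.
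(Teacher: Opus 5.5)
Your proof is correct and takes the same route as the paper: the covering $I_\lambda \subseteq \bigcup_{c\in L} I_c$ shows that $\pi_0(v)_{|L}$ determines $\pi_0(v)_\lambda$, and this dependence is then transported through the injections $g_j$ and their partial inverses on their images. You simply spell out the decode--reassemble--re-encode composition, and check that the $g_j$ are well defined because intervals on a common path have equal length; the paper leaves both of these to the reader as ``gymnastics''.
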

\begin{proof}
    We first consider $\pi_0[X]$. Given a particular functional constraint $(L, \lambda)$ in $Z$, Definition \ref{def:interval_system}\ref{item:functionality} gives the inclusion
\[
    I_\lambda \subseteq \bigcup_{c \in L} I_c
    .
\]

Which is to say that given $v \in \pi_0[X]$, every element of the tuple $v_\lambda$ is contained in some tuple $v_c$ (at some fixed index) for some $c \in L$. Thus $v_{|L}$ determines $v_\lambda$, and so $\pi_0[X]$ is $Z$-functional. Composing with the injections $g_j$ and their partial inverses on their images, similar to the gymnastics performed in Example \ref{ex:functionality}, give that $\pi[X]$ is $Z$-functional as well. 
\end{proof}

Given a $D$-dimensional shift digraph $S$ with vertices labeled by increasing $D$-tuples, we let $\pi(S)$ (resp. $\pi_0(S))$ be the digraph induced by $P$ on $\pi[V(S)]$ (resp. $\pi_0[V(S)]$).

We next wish to generalize the argument in Example~\ref{ex:encodeshift2} that $\pi$ is a homomorphism. As was the case there, we will consider three cases. The equality conditions in $P$ will be ensured by Definition~\ref{def:interval_system}\ref{item:equalities}. The non-equalities will split into two cases: for coordinates in the same maximal path of $\Delta_P$, we will use the fact that the vertices of $S$ are represented by increasing tuples, while for elements of distinct maximal paths of $\Delta_P$ we will use the fact that we have chosen the injections $g_j$ to have disjoint images.

\begin{lemma}\label{lem:pi_is_homomorphism}
	$\pi$ is a homomorphism from $S$ to $\pi(S)$.
\end{lemma}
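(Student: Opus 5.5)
We will take an edge $(u,w)$ of $S$, with $u=(u_1,\dots,u_D)$ and $w=(w_1,\dots,w_D)$ increasing and $u_{t+1}=w_t$ for all $t\in[D-1]$, and put $v=\pi(u)$, $v'=\pi(w)$. We then check every literal of $P$ on $(v,v')$. The literals fall into three groups, exactly as in Example~\ref{ex:encodeshift2}: the positive literals $q_{c+1,c}$; the negative literals $\overline{q_{i,j}}$ with $p(i)\neq p(j)$; and the negative literals $\overline{q_{i,j}}$ with $p(i)=p(j)$.

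\textbf{Positive literals.} Take $q_{c+1,c}$ in $P$. By Definition~\ref{def:interval_system}\ref{item:equalities} we have $x_{c+1}=x_c+1$ and $y_{c+1}=y_c+1$. Hence $\pi_0(u)_{c+1}=(u_{x_c+1},\dots,u_{y_c+1})=(w_{x_c},\dots,w_{y_c})=\pi_0(w)_c$. Since $p(c)=p(c+1)$, both coordinates are encoded by the same injection $g_{p(c)}$, so $v_{c+1}=v'_c$. Here we use that $g_j$ is well defined: all $I_i$ with $p(i)=j$ are translates of one another, so they have the same length.

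\textbf{Negative literals across paths.} If $p(i)\neq p(j)$, then $v_i$ lies in the image of $g_{p(i)}$ and $v'_j$ lies in the image of $g_{p(j)}$. These images are disjoint, so $v_i\neq v'_j$.

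\textbf{Negative literals within a path.} This case is the main obstacle. Suppose $p(i)=p(j)$ and $q_{i,j}$ is not a positive literal of $P$. Along the path we have $I_i=I_{c_0}+(i-c_0)$ with a common length $\ell$, so $I_i=[x_i,x_i+\ell-1]$ and $x_i$ increases by one along the path. Because $g_{p(i)}$ is injective, $v_i=v'_j$ would force $(u_{x_i},\dots,u_{x_i+\ell-1})=(w_{x_j},\dots,w_{x_j+\ell-1})$. Write $w_s=u_{s+1}$ for $s\le D-1$. In particular $u_{x_i}=w_{x_j}$, which equals $u_{x_j+1}$ when $x_j\le D-1$. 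Since $u$ is increasing, hence injective, this gives $x_i=x_j+1$, that is $i=j+1$.

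The remaining corner case is $x_j=D$. Then $\ell=1$, and $w_D$ is compared with $u_{x_i}$. Strict increase gives $w_D>w_{D-1}=u_D\ge u_{x_i}$, so equality is impossible. The same inequality argument, using $w_s>u_s$, handles any leftover comparisons. Therefore equality holds only when $i=j+1$. Because the positive literals of $P$ are exactly the consecutive pairs along the maximal paths, $q_{j+1,j}$ is then a positive literal of $P$, which contradicts our assumption. So every negative literal is satisfied.

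\textbf{Conclusion.} All literals of $P$ hold on $(v,v')$, so $(v,v')$ is an edge of $\pi(S)$. Hence $\pi$ is a homomorphism from $S$ to $\pi(S)$.
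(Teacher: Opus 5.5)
Your proof is correct and follows the paper's three-case structure: positive literals via $I_{c+1}=I_c+1$, cross-path negative literals via the disjoint images of the $g_j$, and same-path negative literals by comparing the first entries $u_{x_i}$ and $w_{x_j}$ of the two tuples. The only difference is the case split in the same-path case. You split on whether $x_j\le D-1$, using $w_{x_j}=u_{x_j+1}$ together with injectivity of $u$. The paper splits on whether $j$ is the last coordinate of its path. Both rest on the same shift-edge relation and strict monotonicity.
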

\begin{proof}
	Consider $u,v \in V(S)$ such that $(u,v)$ is an edge of the shift digraph $S$.
	Our goal is to prove that $\pi(u), \pi(v)$ satisfy the path clause $P$.
	The clause $P$ contains three types of literals: positive literals $q_{i+1,i}$, negative literals $q_{i,j}$ where $p(i) = p(j)$, and negative literals $q_{i,j}$ where $p(i) \neq p(j)$.
	We show that each of these types evaluates to true.
	Recall that for each $c \in [d]$ we let $x_c, y_c$ denote the left and right endpoints of $I_c$.
	\paragraph{Case 1: positive literal $q_{i+1,i}$.}
	We suppose that $P$ contains the positive literal $q_{i+1,i}$, and want to show that $\pi(u)_{i+1} = \pi(v)_i$. Since this implies $p(i) = p(i+1)$, it suffices to prove that $\pi_0(u)_{i+1} = \pi_0(v)_i$, since the same injection $g_{p(i)}$ will be applied to both of them when passing to $\pi$.
	Expanding the definition of the function $\pi_0$, we want that
	\[
        	(u_{x_{i+1}}, \dots, u_{y_{i+1}}) = (v_{x_i}, \dots, v_{y_i})
        	.
	\]
	By Definition~\ref{def:interval_system}\ref{item:equalities}, we have $x_{i+1} = x_i + 1$ and $y_{i+1} = y_i + 1$.
	Thus, our goal is to verify that
	\[
	    (u_{x_i + 1}, \dots, u_{y_i + 1}) = (v_{x_i}, \dots, v_{y_i})
	    .
	\]
	This equality is indeed satisfied as it follows from the assumption that $(u, v)$ is an edge of $S$, i.e.\ $u_{j+1} = v_j$ for all $j \in [D-1]$.
	Therefore, the positive literal $q_{i+1,i}$ is satisfied.
	\paragraph{Case 2: negative literal $q_{i,j}$ with $p(i) = p(j)$.}
	We suppose that $P$ contains the negative literal $q_{i,j}$ for some $i,j$ with $p(i) = p(j)$, and we want to show that $\pi(u)_i \neq \pi(v)_j$. Again, since $p(i) = p(j)$, it suffices to prove that $\pi_0(u)_i \neq \pi_0(v)_j$, since the same injection $g_{p(i)}$ will be applied to both of them when passing to $\pi$.
    In particular, we show that $u_{x_i} \not= v_{x_j}$.

    First we consider the case when $j$ is the last element of the $p(j)$-th path of $\Delta_P$, i.e.\ either $j = d$ or $p(j+1) \not= p(j)$.
    As $p(i) = p(j)$, we have $i \leq j$ and consequently $x_i \leq x_j$. 
    If $x_j > 1$, then we have $u_{x_i} \leq u_{x_j} = v_{x_j-1} < v_{x_j}$, where the first and last inequality follows from the fact that $u$ and $v$ are increasing tuples, while the middle equality follows from the assumption that $(u,v)$ is an edge of $S$.
    Otherwise, if $x_i = x_j = 1$, then clearly $u_1 < u_2 = v_1$.
    
    Otherwise, we have that $p(j+1) = p(j)$, implying that $\pi_0(u)_{j+1} = \pi_0(v)_j$ by Case~1.
    In particular, $u_{x_{j+1}} = v_{x_j}$.
    Since $q_{i,j}$ is a negative literal, we have $i \not= j+1$.
    Therefore, $u_{x_i} \not= u_{x_{j+1}} = v_{x_j}$.
	
    \paragraph{Case 3: negative literal $q_{i,j}$ where $p(i) \neq p(j)$.}	
	We suppose that $P$ contains the negative literals $q_{i,j}$ for some $i,j$ where $p(i) \neq p(j)$, and we want to show that $\pi(u)_{i} \neq \pi(v)_j$. Since $p(i) \neq p(j)$, the injections $g_{p(i)}$ and $g_{p(j)}$ have disjoint images. Since $\pi(u)_{i}$ is in the image of $g_{p(i)}$ and $\pi(v)_{j}$ is in the image of $g_{p(j)}$, the inequality is immediate.
\end{proof}

Knowing that there is a homomorphism from every $D$-dimensional shift digraph to a digraph in $\dcY_{P, Z}$ is enough to conclude that $\dcY_{P, Z}$ has unbounded chromatic number.
Nevertheless, we aim to prove \cref{lem:interval_representation_implies_shift_graphs}, i.e.\ to show that $\dcY_{P, Z}$ \emph{contains} the class of $D$-dimensional shift digraphs for some $D$.
For this, it will suffice to show that $\pi$ is an isomorphism, rather than just a homomorphism.
This will not be true for every choice of $\pi$, but will be true for the mapping $\pi$ associated to a minimal interval representation, which is enough.
We break this into two steps, first identifying a condition on interval representations that ensures the corresponding $\pi$ is an isomorphism, and then verifying that minimal interval representations have this property.

\begin{definition}
    We say that a coordinate $i \in [D-1]$ is \emph{guarded} by a coordinate $c \in [d-1]$ if $p(c) = p(c+1)$ and either $i = \min I_{c}$ or $i = \max I_{c}$.
    A $D$-dimensional interval representation for $P$ over $Z$ is \emph{irreducible} if for every $i \in [D-1]$ there exists $c_i \in [d-1]$ such that $i$ is guarded by $c_i$.
\end{definition}

\begin{lemma} \label{lem:gapless isomorphism}
    Fix an irreducible $D$-dimensional interval representation $\cI$ for $P$ over $Z$ with a corresponding function $\pi \colon \N^D \to \N^d$. Then $\pi$ is an isomorphism from $S$ to $\pi(S)$.
\end{lemma}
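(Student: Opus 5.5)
By \cref{lem:pi_is_homomorphism}, $\pi$ maps edges of $S$ to edges of $\pi(S)$. It therefore remains to show two things: that $\pi$ is injective on $V(S)$, and that $\pi$ reflects edges, i.e.\ if $(\pi(u),\pi(v))$ is an edge of $\pi(S)$ then $(u,v)$ is an edge of $S$. We need no separate surjectivity argument, since $\pi(S)$ is by definition the digraph induced on $\pi[V(S)]$.

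\textbf{Step 1: reflecting edges.} Suppose $(\pi(u),\pi(v))$ satisfies $P$. For each $c\in[d-1]$ with $p(c)=p(c+1)$, the path $\rho_{p(c)}$ contains the edge $c+1\to c$ of $\Delta_P$, so $P$ contains the positive literal $q_{c+1,c}$. Hence $\pi(u)_{c+1}=\pi(v)_c$. Both values lie in the image of the single injection $g_{p(c)}$, so we get $\pi_0(u)_{c+1}=\pi_0(v)_c$. Since $I_{c+1}=I_c+1$, this reads
\[
(u_{x_c+1},\dots,u_{y_c+1})=(v_{x_c},\dots,v_{y_c}),
\]
that is, $u_{j+1}=v_j$ for every $j\in I_c$. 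Irreducibility says that every $i\in[D-1]$ is guarded by some such $c$, so $i\in\{\min I_c,\max I_c\}\subseteq I_c$. Therefore $u_{i+1}=v_i$ for all $i\in[D-1]$, which is exactly the condition for $(u,v)$ to be an edge of $S$.

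\textbf{Step 2: injectivity.} Suppose $\pi(u)=\pi(v)$. Injectivity of the $g_j$ gives $\pi_0(u)=\pi_0(v)$, so $u_i=v_i$ for all $i\in\bigcup_c I_c$. We must check that the intervals cover $[D]$. Each $i\in[D-1]$ lies in some $I_{c}$ by irreducibility. For $D$, take the $c$ guarding $D-1$, so that $p(c)=p(c+1)$. If $D-1=\max I_c$, then $D=\max I_{c+1}\in I_{c+1}$. If instead $D-1=\min I_c$, then $\max I_c\le D$, so either $D\in I_c$ or $I_c=\{D-1\}$ and $I_{c+1}=\{D\}$. In every case $D$ is covered, so $u=v$. (The case $D=1$ is trivial, since then $S$ has no edges. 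When $D\ge2$, $[D-1]$ is nonempty and the argument applies.)

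The main obstacle is ensuring that the guard condition, which only certifies endpoints of intervals, yields the equality $u_{i+1}=v_i$ at every index. This works because a guarding $c$ contributes equalities on all of $I_c$, not just at its endpoints. The other subtlety is the coverage of $D$ in Step 2, handled above. Everything else follows directly from \cref{lem:pi_is_homomorphism}.
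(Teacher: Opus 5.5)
Your proof is correct and follows essentially the same route as the paper. The homomorphism comes from \cref{lem:pi_is_homomorphism}, injectivity comes from the intervals covering $[D]$, and edge-reflection comes from the guarding coordinates (you argue the contrapositive of the paper's non-edge argument, which is equivalent). Two points can be simplified: covering $D$ needs no case split, since $D-1\in I_c$ and $I_{c+1}=I_c+1$ give $D\in I_{c+1}$ directly, and the case $D=1$ never arises, because $P$ has a positive literal $q_{c+1,c}$, which forces both $I_c$ and $I_c+1$ to lie in $[D]$.
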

\begin{proof}
    By Lemma~\ref{lem:pi_is_homomorphism}, $\pi$ is a homomorphism. We first check that $\pi$ is injective. Since $\cI$ is irreducible, every $i \in [D-1]$ is guarded by some $c_i \in [d-1]$ such that $i = \min I_{c}$ or $i = \max I_{c}$; in particular, $i \in I_{c_i}$. Further, we have that $p(c_{D-1}) = p(c_{D-1}+1)$, so Definition \ref{def:interval_system}\ref{item:equalities} ensures that $D \in I_{c_{D-1}+1}$. Thus for every element of $i \in [D]$, there is some $c_i \in [d]$ such that $i \in I_{c_i}$. This ensures that the map $\pi_0$ corresponding to $\cI$ is injective, since every element of $v \in \N^D$ will appear at some determined place in $\pi_0(v)$, and so the tuple $v$ can be recovered from $\pi_0(v)$. Thus $\pi$ is injective as well, since it just coordinate-wise composes $\pi_0$ with further injections.

    Now suppose $(u,v)$ is a non-edge of $S$, and we wish to show that $(\pi(u), \pi(v))$ is a non-edge of $\pi(S)$. If $(u, v)$ is a non-edge of $S$ then there is some $i \in [D-1]$ such that $v_i \neq u_{i+1}$. Since $\cI$ is irreducible, let $c_i \in [d-1]$ be the guarding coordinate for $i$.
    That is, $i \in I_{c_i}$ and $p(c_i) = p(c_i+1)$; in particular, we let $\ell$ be such that $i$ is the $\ell^{th}$ element of $I_{c_i}$.\footnote{We have either $\ell = 1$ or $\ell = |I_{c_i}|$, but it is not necessary to distinguish these cases.}
    By Definition \ref{def:interval_system}\ref{item:equalities}, $i+1$ is the $\ell^{th}$ element of $I_{c_i+1}$. But this implies that $\pi_0(v)_{c_i} \neq \pi_0(u)_{c_i+1}$, since the $\ell^{th}$ element of $\pi_0(v)_{c_i}$ is $v_i$ while the $\ell^{th}$ element of $\pi_0(u)_{c_i+1}$ is $u_{i+1}$, and we have assumed these values are distinct. Thus we also have $\pi(v)_{c_i} \neq \pi(u)_{c_i+1}$. Since $p(c_i) = p(c_{i}+1)$, these values must be equal for $P$ to induce the edge $(\pi(u), \pi(v))$ and so $(\pi(u), \pi(v))$ is also a non-edge in $\pi(S)$.
\end{proof}

Next we prove that a minimal representation is irreducible.
The proof proceeds by contraposition: if an interval representation $\cI$ is \emph{not irreducible}, we can take advantage of the unguarded coordinate and find a new interval representation $\cI'$ with \emph{reduced} dimension.
This explains the term `irreducible'.

\begin{lemma}\label{lem:minimal_representation_is_irreducible}
    If $\cI$ is a minimal interval representation for $P$ over $Z$, then $\cI$ is irreducible.    
\end{lemma}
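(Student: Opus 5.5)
The plan is to argue by contraposition. Suppose $\cI=\{I_c : c\in[d]\}$ is a $D$-dimensional interval representation for $P$ over $Z$ that is not irreducible. We will produce a $(D-1)$-dimensional interval representation, which contradicts minimality. If $D=1$ then $[D-1]=\emptyset$ and $\cI$ is vacuously irreducible, so we may assume $D\ge 2$. Fix a coordinate $i\in[D-1]$ that is not guarded by any $c\in[d-1]$. By the definition of guarding, this means: for every $c\in[d-1]$ with $p(c)=p(c+1)$, we have $i\neq \min I_c$ and $i\neq \max I_c$.

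The construction collapses the two adjacent positions $i$ and $i+1$ into one. Define the monotone surjection $h\colon[D]\to[D-1]$ by $h(j)=j$ for $j\le i$ and $h(j)=j-1$ for $j>i$, and set $I'_c := h(I_c)$ for each $c\in[d]$.

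We then check that $\cI'$ is a representation.
\begin{enumerate}
\item Since $h$ is monotone and never skips a value, the image of an interval $[x,y]$ is the interval $[h(x),h(y)]$. Hence each $I'_c$ is a non-empty interval contained in $[D-1]$.
\item Condition~\ref{item:functionality} of Definition~\ref{def:interval_system} is preserved automatically, because taking images preserves inclusions and commutes with unions: $I_\lambda\subseteq\bigcup_{c\in L}I_c$ implies $h(I_\lambda)\subseteq\bigcup_{c\in L}h(I_c)$.
\end{enumerate}

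The main step is condition~\ref{item:equalities}. Take a positive literal $q_{c+1,c}$ of $P$; then $p(c)=p(c+1)$ and $I_{c+1}=I_c+1$. Write $I_c=[x,y]$, so $I_{c+1}=[x+1,y+1]$ and therefore $I'_{c+1}=[h(x+1),h(y+1)]$, while $I'_c+1=[h(x)+1,h(y)+1]$. Since $h$ is a shift by $0$ or $-1$, we have $h(t+1)=h(t)+1$ for every $t\in[D-1]$ except $t=i$, where $h(i+1)=h(i)=i$. So it suffices to know that $x\neq i$ and $y\neq i$. This is exactly the statement that $i$ is not guarded by $c$, which holds by our choice of $i$. (Note that $y+1\le D$, so $y\le D-1$ and the relevant identity applies.) Hence $I'_{c+1}=I'_c+1$.

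Thus $\cI'$ is a $(D-1)$-dimensional interval representation for $P$ over $Z$, contradicting the minimality of $\cI$. The only delicate point is this verification of condition~\ref{item:equalities}. It is precisely where the notion of ``guarded'' is tailored to make the collapsing map compatible with the shift $I_{c+1}=I_c+1$, so no further obstacle is expected.
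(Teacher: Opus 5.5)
Your proof is correct and is essentially the paper's argument: the paper collapses an unguarded coordinate $K$ via $I'_c=(I_c\cap[K])\cup((I_c\cap[K+1,D])-1)$, which is exactly your $h(I_c)$. It verifies the shift condition by a three-case split on the position of $K$ relative to $x_c,y_c$, which your identity $h(t+1)=h(t)+1$ for $t\neq i$ packages more compactly, and your use of images commuting with unions replaces the paper's pointwise check of the functionality condition.
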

\begin{proof}
    Suppose that a minimal interval representation $\cI$ for $P$ over $Z$ is not irreducible, and let $D$ be such that $\cI$ is $D$-dimensional.
    Let $K \in [D-1]$ be an unguarded coordinate, i.e.\ $K$ is not guarded by any $c \in [d-1]$.
    We will now define a new interval representation for $P$ over $Z$ of dimension $D-1$, contradicting that $\cI$ is minimal.

    Consider the interval $I_c \subseteq [D]$ of $\cI$ for $c \in [d]$.
    We define a new interval $I'_c \subset [D-1]$ by
    \[
        I'_c = (I_c \cap [K]) \cup ((I_c \cap [K+1, D]) -1)
        .
    \]
    We now claim that the interval assignment $\cI' = \{I'_c : c \in [d]\}$ is also an interval representation of $P$ over $Z$ (and visibly has dimension $D-1$).
    Note that each $I'_c$ is non-empty.

    We first check Definition~\ref{def:interval_system}\ref{item:equalities}.
    Let $c \in [d-1]$ be such that $p(c) = p(c+1)$.
    Then $I_{c+1} = I_c + 1$, and want to prove that $I'_{c+1} = I'_c + 1$.
    As these are intervals, it is enough to prove that their endpoints are equal.
    Let $I_c = [x_c,y_c]$, $I'_c = [x'_c,y'_c]$.
    Since $c$ does not guard $K$, we have that $K$ is distinct from both $x_c$ and $y_c$. 
    Then we distinguish three cases: $x_c \leq y_c < K$, $x_c < K < y_c$, and $K < x_c \leq y_c$.
    
    First, let $x_c \leq y_c < K$.
    Then, $I_c, I_{c+1} \subseteq [K]$, so $I'_c = I_c$ and $I'_{c+1} = I_{c+1}$.
    
    Second, let $x_c < K < y_c$.
    Then, $x_c,x_{c+1} \in [K]$, so $x'_c = x_c$ and $x'_{c+1} = x_{c+1}$.
    Moreover, $y_c,y_{c+1} \in [K+1,D]$, so $y'_c = y_c - 1$ and $y'_{c+1} = y_{c+1} - 1$.
    
    Third, $K < x_c \leq y_c$.
    Then, $I_c,I_{c+1} \subseteq [K+1,D]$, so $I'_c = I_c - 1$ and $I'_{c+1} = I_{c+1} - 1$.
    
    Thus, we conclude in all cases that $I'_{c+1} = I'_c + 1$ as we wanted.

    We next check Definition~\ref{def:interval_system}\ref{item:functionality}. 
    Let $(L, \lambda) \in Z$ be a functional constraint.
    We know $I_\lambda \subset \bigcup_{c \in L} I_c$. Let $i \in I'_\lambda$.
    Then either $i \in I_\lambda \cap [K]$ or $i \in (I_\lambda \cap [K+1, D])-1$ by definition of $I'_\lambda$.
    
    In the first case, we have $i \leq K$ and $i \in I_\lambda$.
    Then, by Definition~\ref{def:interval_system}\ref{item:functionality} for $\cI$, we have $i \in I_c$ for some $c \in L$.
    But then $i \in I'_c$.
    
    Now suppose that $i \in (I_\lambda \cap [K+1, D])-1$.
    Then, $i > K$, so $i+1 \in I_\lambda$.
    Again, by Definition~\ref{def:interval_system}\ref{item:functionality} for $\cI$, there is $c \in L$ such that $i+1 \in I_c$.
    Thus, $i \in I'_c$.
    
    Therefore, in both cases we conclude that $I'_\lambda \subset \bigcup_{c \in L} I'_c$.
\end{proof}

With this, we are ready to prove \cref{lem:interval_representation_implies_shift_graphs} to complete the implication \ref{item:tropicals_have_solution}~$\Rightarrow$~\ref{item:contains_shift_graphs} of Theorem~\ref{thm:tropical_dichotomy}.

\begin{proof}[Proof of \cref{lem:interval_representation_implies_shift_graphs}]
    Let $\cI$ be an interval representation of dimension $D$ for $P$ over $Z$ guaranteed by the assumption.
    We pass to a \emph{minimal} interval representation $\cI'$ for $P$ over $Z$ of dimension $D' \leq D$.
    Thus, by Lemmas~\ref{lem:gapless isomorphism}~and~\ref{lem:minimal_representation_is_irreducible}, the class $\dcY_{P,Z}$ contains the class $\dcS_{D'}$ of $D'$-dimensional shift graphs.
    Hence also $\dcS_{D} \subseteq \dcY_{P,Z}$ as $\dcS_{D} \subseteq \dcS_{D'}$ by the third item of Fact~\ref{fact:shift}. 
\end{proof}

\subsubsection{Bounds on dimension of shift graphs}

Although the proof of the implication \ref{item:tropicals_have_solution}~$\Rightarrow$~\ref{item:contains_shift_graphs} of Theorem~\ref{thm:tropical_dichotomy} is now finished, we may further analyze the dimension $D$ of the class $\dcS_D$ in terms of $d$.
In the following statement, we use the irreducibility property to give a tight bound.

Recall that the clause digraph $\Delta_P$ of the $d$-dimensional path clause $P$ consists of $m$ maximal paths.
Since at least one of the maximal paths of $\Delta_P$ contains at least two vertices, we have $d \geq m + 1 \geq 2$.

\begin{corollary}\label{cor:functional_contains_SD}
    If the pair $(P,Z)$ admits an interval representation, then the class $\dcY_{P,Z}$ contains the class $\dcS_D$ of $D$-dimensional shift digraphs for some $2 \leq D \leq 2(d-m) + 1$. 
    Moreover, if $m = 1$, then $2 \leq D \leq 2d-2$.
\end{corollary}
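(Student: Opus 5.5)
Starting from an interval representation, we pass to a minimal one $\cI=\{I_c : c\in[d]\}$ of dimension $D$. By Lemma~\ref{lem:minimal_representation_is_irreducible} it is irreducible, and by Lemmas~\ref{lem:gapless isomorphism} and~\ref{lem:interval_representation_implies_shift_graphs} we get $\dcS_D\subseteq\dcY_{P,Z}$. So it remains only to bound $D$ using irreducibility.

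\textbf{General bound.} Every $i\in[D-1]$ is guarded by some $c\in[d-1]$ with $p(c)=p(c+1)$. There are exactly $d-m$ such ``edge'' coordinates $c$, one for each positive literal $q_{c+1,c}$. Each of them guards at most two coordinates, namely $\min I_c$ and $\max I_c$. Hence $D-1\le 2(d-m)$, that is, $D\le 2(d-m)+1$.

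\textbf{Lower bound $D\ge 2$.} Some maximal path has at least two vertices, so there is a $c$ with $I_{c+1}=I_c+1\subseteq[D]$. This forces $D\ge 2$. If a minimal representation had $D=1$, that would already contradict this, so the lower bound is automatic.

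\textbf{Sharpening when $m=1$.} Now $\Delta_P$ is a single path $1\to\cdots\to d$ in the relabelled orientation, and all $d-1$ coordinates $c\in[d-1]$ are edge coordinates. The intervals then form a chain of translates: $I_c=I_1+(c-1)$, so $x_c=x_1+c-1$ and $y_c=y_1+c-1$. Two saving points are available.

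First, by minimality we may assume $x_1=1$; otherwise we could shift everything left and shrink $D$, or coordinate $1$ would be unguarded. The guarded coordinates are $\{x_c,y_c : c\le d-1\}$. The left endpoints give $\{1,\dots,d-1\}$ and the right endpoints give $\{y_1,\dots,y_1+d-2\}$.

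Second, the union of these two runs must cover $[D-1]$, and $D\ge y_d=y_1+d-1$. If the runs overlapped or abutted, we would get $[D-1]\subseteq[1,y_1+d-2]$, so $D\le y_1+d-1$, giving $D=y_1+d-1$. Coverage requires $y_1\le d$, which yields $D\le 2d-1$.

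To reach $2d-2$, note that the extreme case $y_1=d$ with $x_1=1$ would make the intervals $I_c=[c,c+d-1]$. We then argue that this case is not minimal: the coordinate $y_1=d$ coincides with $x_d$ but is guarded by $c=1$, so a finer count is needed. The plan is to check directly that removing coordinate $d$ (the shrink used in Lemma~\ref{lem:minimal_representation_is_irreducible}) still preserves condition~\ref{item:functionality}. This works because every $I_\lambda$ is a translate containing $d$ only together with neighbours covered by other intervals. Equivalently, one shows that coordinates $x_{d-1}=d-1$ and $y_1=d$ cannot both be needed.

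\textbf{Main obstacle.} This last refinement for $m=1$ is the delicate step. The general count is straightforward, but the exact off-by-one bound $2d-2$ requires care about which guarded coordinates coincide, together with a verification that the shrink preserves functionality. If the direct argument fails, a fallback is to exhibit the overlap $x_{c}=y_{c'}$ forced by the constraint $I_\lambda\subseteq\bigcup_{c\in L}I_c$ with $L\ne\emptyset$ (Remark~\ref{rem:each_discrete_clause_has_a_loop}), which saves one guard.
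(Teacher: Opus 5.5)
Your general bound is the paper's argument: pass to a minimal, hence irreducible, representation, note that there are $d-m$ guarding coordinates, and observe that each guards at most two points of $[D-1]$. That gives $D\le 2(d-m)+1$, and your $D\ge 2$ observation is fine. The gap is in the $m=1$ refinement, which you yourself flagged.

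Your guard count for $m=1$ only yields $D\le 2d-1$, with the extreme case $I_c=[c,c+d-1]$ left open. The proposed way of eliminating that case fails. The shrink from Lemma~\ref{lem:minimal_representation_is_irreducible} at $K=d$ destroys condition~(i) of Definition~\ref{def:interval_system}, not merely (ii). Indeed $d=\max I_1$ is guarded by $c=1$, and the shrink produces $I'_1=[1,d]$ but $I'_2=[2,d]\neq I'_1+1$. That lemma's proof of (i) relies on $K$ not being an endpoint of any $I_c$ with $p(c)=p(c+1)$, which is false here. Your fallback, that an overlap forced by $L\neq\emptyset$ saves one guard, is not an argument. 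Nothing in the guard count rules out $y_1=d$; you have to actually produce a smaller representation.

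The missing idea, which is how the paper proceeds, is to exhibit a representation of dimension $2d-2$ directly rather than bounding the minimal one by counting.
\begin{itemize}
\item \emph{Necessity.} With a single path, every representation consists of translates $I_c=I_1+(c-1)$ of one interval. So $I_\lambda\subseteq\bigcup_{c\in L}I_c$ forces $x_\lambda\ge x_{\min L}$ and $y_\lambda\le y_{\max L}$, i.e.\ $\min L\le\lambda\le\max L$ for every $(L,\lambda)\in Z$.
\item \emph{Sufficiency.} Under this condition the family $I_c=[c,c+d-2]$, $c\in[d]$, satisfies (ii). Take $a\le\lambda\le b$ to be the nearest elements of $L$ on either side of $\lambda$. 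Then $b-a\le d-1$, so $[a,a+d-2]$ and $[b,b+d-2]$ overlap or abut, and their union $[a,b+d-2]$ contains $I_\lambda$.
\item Condition (i) holds automatically for this family.
\end{itemize}
Hence the minimal dimension is at most $2d-2$, and Lemma~\ref{lem:interval_representation_implies_shift_graphs} finishes the proof. In your terms: starting from $[c,c+d-1]$, you should shorten every interval by one uniformly, which preserves (i) for free, rather than delete a point of $[D]$.
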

\begin{proof}
    Let $\cI$ be a minimal interval representation of dimension $D$ for $P$ over $Z$; it is irreducible by \cref{lem:minimal_representation_is_irreducible}.
    That is, each $K \in [D-1]$ is guarded by some coordinate $c_K$ with $p(c_K) = p(c_K + 1)$ such that $K = \min I_{c_K}$ or $K = \max I_{c_K}$.
    By definition, the last coordinate of a path of $\Delta_P$ cannot guard. Thus there are at most $d-m$ guarding (non-last) coordinates, each of which may guard at most $2$ coordinates from $[D-1]$.
    Therefore, $D \leq 2(d-m) + 1$, where the $+1$ is for the coordinate $D$, which needs not to be guarded.
    
    Moreover, if $m = 1$, we claim that $\cI = \{[c, c+d-2] : c \in [d]\}$ is a valid interval representation of dimension $2d-2$ for $P$ over $Z$, whenever the pair $(P,Z)$ admits one.
    Note that since $P$ has a single path, all intervals of \emph{any} interval representation for $P$ over $Z$ have some fixed length $\ell$ and the intervals of adjacent coordinates are shifted by one;
    indeed, this follows directly from Definition~\ref{def:interval_system}\ref{item:equalities}.
    Consequently, if there is $(L, \lambda) \in Z$ with $\lambda \not\in [\min L, \max L]$, then no interval representation exists in the first place as Definition~\ref{def:interval_system}\ref{item:functionality} cannot be satisfied.
    Hence, it holds $\lambda \in [\min L, \max L]$ for each $(L, \lambda) \in Z$.
    It follows that $\cI$ is a valid interval representation for $P$ over $Z$ as for any $a < c < b$ we have $I_c \subseteq I_a \cup I_b$, which proves Definition~\ref{def:interval_system}\ref{item:functionality} holds.
    Definition~\ref{def:interval_system}\ref{item:equalities} is satisfied trivially.
\end{proof}

The following examples show that the bounds given by Corollary~\ref{cor:functional_contains_SD} are tight.
In both examples, we implicitly assume that the pair $(P,Z)$ admits an interval representation to comply with the assumption of Corollary~\ref{cor:functional_contains_SD}.

\begin{example}
    Let $P$ have the path clause with a single path and consider $Z$ consisting of a single functional constraint $(\{1,d\}, \lambda)$ for an arbitrary $1 < \lambda < d$.
    We claim that the minimal dimension $D$ of an interval representation $\cI$ for $P$ over $Z$ is $2d-2$.
    Assuming that $\min I_1 = 1$, there is $\ell \in \N$ such that $I_c = [c,c+\ell]$ for each $c \in [d]$; then, $D = d+\ell$.
    To satisfy Definition~\ref{def:interval_system}\ref{item:functionality}, we need that $I_1 \cup I_d$ is an interval.
    Hence, $\ell \geq d-2$, implying that $D \geq 2d-2$, which agrees with the bound from Corollary~\ref{cor:functional_contains_SD}.
\end{example}

\begin{example}
    Consider a path clause $P$ consisting of a path $\{1, \dots, d-m\}$ and other $m \geq 1$ single-vertex paths $\{i\}$ for $i \in [d-m+1, d]$.
    Let $Z = \big(\:(\{1\},d), (\{d-m\},d)\:\big)$.
    Note that any interval representation $\cI$ for $P$ over $Z$ needs to satisfy that $I_d \subseteq I_1$ and $I_d \subseteq I_{d-m}$.
    In particular, $I_1 \cap I_{d-m} \not= \emptyset$.
    Therefore, $|I_1| \geq d-m$ due to Definition~\ref{def:interval_system}\ref{item:equalities}, so $|I_1 \cup I_{d-m}| \geq 2(d-m)-1$.
    Consequently, the dimension of $\cI$ is at least $2(d-m)-1$.
    Since $\Delta_P$ has $m+1$ paths, Corollary~\ref{cor:functional_contains_SD} gives the matching upper bound of $2(d-m-1)+1 = 2(d-m)-1$.
\end{example}

\subsubsection{Preparation for \cref{th:main-set-defined}}

We now prove a preparatory lemma for \cref{th:main-set-defined}. Since it involves the use of interval representations, we prefer to prove it here rather than later.

Consider an interval representation $\cI$ for $P$ over $Z$.
For the statement of the lemma, let $H_{P,\cI}$ be the (undirected) intersection graph of representations of \emph{non-singleton} paths of $P$.
That is, for a path $\rho$ of $P$, let $J_{\rho} = \bigcup_{c \in \rho} I_c$ and set
\begin{align*}
    V(H_{P,\cI}) &= \{j \in [m] : |\rho_j| > 1\}, \\
    E(H_{P,\cI}) &= \{(j,j') : J_{\rho_j} \cap J_{\rho_{j'}} \not= \emptyset\}.
\end{align*}

\begin{lemma}\label{lem:irreducible_representation_connected}
    Let $\cI$ be a minimal interval representation for $P$ over $Z$.
    Then the graph $H_{P,\cI}$ is connected.
\end{lemma}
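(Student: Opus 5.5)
The plan is to argue by contradiction, following the template of the proof of \cref{lem:minimal_representation_is_irreducible}: if $H_{P,\cI}$ is disconnected, we produce an interval representation of strictly smaller dimension. First observe that each $J_{\rho_j}$ is an interval. Indeed, by Definition~\ref{def:interval_system}\ref{item:equalities}, consecutive coordinates of a non-singleton path have intervals $I_{c+1}=I_c+1$, and these overlap or abut. Hence the union of the $J_{\rho_j}$ over a connected component of $H_{P,\cI}$ is again an interval. So if $H_{P,\cI}$ is disconnected, its vertex set splits into two non-empty families $\mathcal A$ and $\mathcal B$ with no intersecting pair between them.

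The next step is to find a coordinate $K\in[D-1]$ that is not guarded by any coordinate. By definition, guarding coordinates $c$ satisfy $p(c)=p(c+1)$, so they lie in non-singleton paths, and a guarded $K$ lies in $J_{\rho_{p(c)}}$. Moreover, $K=\min I_c$ or $K=\max I_c$, and both $I_c$ and $I_{c+1}$ sit inside $J_{\rho_{p(c)}}$; hence, in the max case, $K+1\in I_{c+1}\subseteq J_{\rho_{p(c)}}$ as well.

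Order the components of $H_{P,\cI}$ by position. Their unions are pairwise disjoint intervals, since two components whose unions overlap would have intersecting $J$'s. Between two consecutive component-unions $U_1<U_2$, consider $K=\max U_1$. If $K$ were guarded via $K=\max I_c$, then $K+1\in J_{\rho_{p(c)}}$. That path lies in some component, so its $J$ is contained in $U_1$ or in a later union, and either option contradicts $K=\max U_1$ together with disjointness. The case $K=\min I_c$ with $K=\max U_1$ forces $I_c=\{K\}$, and then $I_{c+1}=\{K+1\}\subseteq U_1$, again a contradiction. Thus $K$ is unguarded, and $K\le D-1$ because $U_2$ lies to the right of $U_1$. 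Alternatively, if no guarding coordinate has $K$ in its $J$ at all, then $K$ is trivially unguarded.

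We then contradict minimality. An unguarded $K\in[D-1]$ makes $\cI$ not irreducible, which contradicts \cref{lem:minimal_representation_is_irreducible}. Alternatively, the explicit compression $I'_c=(I_c\cap[K])\cup((I_c\cap[K+1,D])-1)$ from that proof yields a representation of dimension $D-1$. I expect the main obstacle to be the boundary analysis above. Specifically, we must check that the gap coordinate $K$ chosen at the end of one component-union genuinely escapes both guarding conditions $K=\min I_c$ and $K=\max I_c$. We must also handle singleton-path intervals, which may straddle the gap but never guard, so they do not affect this argument.
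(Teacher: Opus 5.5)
Your proposal is correct and follows essentially the paper's argument. You take a component whose union of $J$'s ends at some $K<D$, show that $K$ is unguarded, and thereby contradict \cref{lem:minimal_representation_is_irreducible}. Your case split into $K=\min I_c$ and $K=\max I_c$ is unnecessary: $K\in I_c$ together with $I_{c+1}=I_c+1$ already gives $K+1\in I_{c+1}\subseteq J_{\rho_{p(c)}}$, which is how the paper concludes in one step.
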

\begin{proof}
    For contradiction, suppose that $H_{P,\cI}$ is not connected. 
    Consider a connected component $C$ of $H_{P,\cI}$ such that $K = \max_{j \in V(C)} \max J_{\rho_j} < D$, where $D$ is the dimension of $\cI$.    
    We claim that the coordinate $K$ is unguarded implying that $\cI$ is not irreducible, which contradicts the minimality of $\cI$ using Lemma~\ref{lem:minimal_representation_is_irreducible}.
    
    Suppose that $K$ is guarded, i.e.\ there is $c_K \in [d-1]$ such that $p(c_K) = p(c_K+1)$ and $K \in I_{c_K}$.
    Let $j \coloneqq p(c_K) = p(c_K+1)$.
    As $K \in I_{c_K}$, we have $j \in V(C)$.
    Then, since $K \in I_{c_K}$ together with $p(c_K) = p(c_K+1)$ implies $K+1 \in I_{c_K+1}$, it holds that $\max I_{\rho_j} \geq K+1 > K$, contradicting the definition of $K$.
    Therefore, the coordinate $K$ must indeed be unguarded, so $\cI$ cannot be a minimal representation. 
\end{proof}

The reader particularly interested in \cref{th:main-set-defined} may wish to continue to Section \ref{sec:induced-shift-graphs} now, with the previous sections still somewhat fresh, and take the following result proved in Section \ref{ssec:bounding-the-chromatic-number} for granted.

\begin{fact}[to be proved in Section~\ref{ssec:bounding-the-chromatic-number}]\label{fact:unbounded_implies_solution}
    If $\dcY_{P,Z}$ has unbounded chromatic number, then each of the associated tropical inequalities $\widehat{A} \minmu x \geq \widehat{B} \minmu x$ and $\widetilde{A} \maxmu y \leq \widetilde{B} \maxmu y$ has a finite solution.
\end{fact}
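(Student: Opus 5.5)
We argue the contrapositive: assuming one of the two tropical systems has no finite solution, we show that every digraph in $\dcY_{P,Z}$ has bounded chromatic number. First take the case where $\widetilde{A} \maxmu y \leq \widetilde{B} \maxmu y$ has no finite solution. The min-plus case should then follow by a symmetric argument, which we outline at the end.

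\textbf{Step 1: extract a winning strategy.} By Corollary~\ref{cor:no_solution_implies_negative_mean_payoff}, the set $X$ of winning states of the column player in $\Gamma(\widetilde{A},\widetilde{B})$ is non-empty. By Theorem~\ref{thm:positional_strategy} the column player has an optimal positional strategy $\sigma$. By Observation~\ref{obs:restricted_game_graph}, its restriction $\sigma_{|S}$ with $S = X\cap\cC$ keeps the game inside $\Gamma[X]$ and wins from every state there.

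Let $\Gamma_{|S}$ be the subgraph of $\Gamma[X]$ in which each column vertex $s\in S$ keeps only its edge to $\sigma(s)$, while every row vertex $r \in T = X \cap \cR$ keeps all its edges. A row vertex encodes either a shift equality $y_{c+1}=y_c+1$ or a functional constraint $y_\lambda\le \max_{c\in L} y_c$. Its out-edges therefore go to column vertices $c$, i.e.\ to coordinates, with weight $+1$, $-1$ or $0$. By Observation~\ref{obs:negative_mena_loss_implies_divergence}, every infinite walk in $\Gamma_{|S}$ has total weight tending to $-\infty$. Since $\Gamma_{|S}$ is finite, there is therefore $N=N(d,k)$ such that every walk of length at least $N$ has weight at most $-2$, say.

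\textbf{Step 2: coordinate tracking along long paths.} Fix a directed path $v^0\mapsto v^1\mapsto\cdots\mapsto v^n$ in some $G\in\dcY_{P,Z}$. Consider the dependency multidigraph $\cK$ on the pairs (level $t$, coordinate $c$). For each positive literal $q_{c+1,c}$ there is an edge saying that $v^{t+1}_c$ is determined by $v^t_{c+1}$. For each $(L,\lambda)\in Z$ there is a hyperedge saying that $v^t_\lambda$ is determined by $(v^t_i)_{i\in L}$. Call a pair (level, coordinate) \emph{trackable} if it can be derived from the coordinates of $v^0$ by a BFS-like closure under these rules. By construction, trackable values are functions of $v^0$.

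The heart of the matter, corresponding to Lemma~\ref{lem:first_determines_second} in the outline, is to show that if $n\ge N'$ for some $N'$ depending only on $d,k$, then every coordinate of $v^1$ is trackable. To prove it, build a covering projection from a suitable unfolding of $\cK$ onto $\Gamma_{|S}$. Level changes in $\cK$ correspond to the weight $\pm1$ shift edges, and functional hyperedges correspond to row vertices with $\sigma$ choosing one representative. Then use Observation~\ref{obs:walks_via_covering_projection} to lift walks.

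Starting from a coordinate of $v^1$, we follow the strategy $\sigma$ backwards: at a row vertex all options are available, at a column vertex $\sigma$ fixes the choice. Every such search trajectory corresponds to a walk in $\Gamma_{|S}$, whose weight is the net level change. Divergence to $-\infty$ forces every branch to descend to level $0$ within boundedly many steps, without overshooting level $n$. Hence the whole backtracking tree terminates in coordinates of $v^0$, which gives trackability.

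This step is the main obstacle. One must set up the correspondence between the AND-branching at functional hyperedges and the row player's free choices so that the signs and orientation of the weights match exactly. One must also handle coordinates outside $S$, which should be exactly those never reached from winning states.

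\textbf{Step 3: bounding the chromatic number.} Suppose $G$ contained two edge-disjoint directed paths of length $N'$ leaving a common root $r$, with distinct second vertices $w\neq w'$. Applying Step 2 to each path, both $w$ and $w'$ would be the same function of $r$, so $w=w'$, a contradiction. Thus no digraph in $\dcY_{P,Z}$ contains, as a subdigraph, the oriented path with two blocks of lengths $N'$ and $N'$ sharing their source. By the theorem of Addario-Berry, Havet and Thomass\'e on two-block paths, a digraph with chromatic number at least $2N'+1$ contains every oriented path with two blocks of lengths summing to $2N'$. Hence $\chi\le 2N'$ on $\dcY_{P,Z}$.

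For the min-plus system, we reverse all edges: replace $P$ by its reversed clause and use left endpoints in place of right endpoints. Via Corollary~\ref{cor:min_plus_no_solution_implies_positive_mean_payoff} the same argument applies with paths entering a common vertex. Since $\chi$ is invariant under edge reversal, this finishes the proof.
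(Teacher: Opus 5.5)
Your overall architecture matches the paper's. You argue by contraposition, take the winning region $X$ with $S=X\cap\cC$ and an optimal positional strategy, track coordinates along a long directed path, exclude a two-block path via Addario-Berry--Havet--Thomass\'e, and mirror the min-plus case.

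\textbf{The gap: coordinates outside $S$.} Step~2 claims that \emph{every} coordinate of $v^1$ is trackable. Step~3 concludes that no digraph in $\dcY_{P,Z}$ contains two long directed paths from a common root. But the game restricted to $X$ only visits column vertices $C_c$ with $c\in S$. So your argument can only determine coordinates in $S$, and both claims are false in general.

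For instance, take $d=3$, let $P$ have the single positive literal $q_{2,1}$ (coordinate $3$ isolated), and let $Z=((\{1\},2))$.
\begin{itemize}
\item The max-plus system contains $y_2=y_1+1$ and $y_2\le y_1$, so it has no finite solution.
\item $C_3$ has only the trivial loop of Remark~\ref{rem:system_assumption_might_be_violated}, so it is a draw state and $S=\{1,2\}$.
\item On vertices $(i,i+1,c)$ with distinct $c>i+1$, the vertex $(1,2,100)$ has two out-neighbours $(2,3,101)$ and $(2,3,102)$. Each starts an arbitrarily long directed path of vertices $(j,j+1,\cdot)$ with fresh third coordinates.
\end{itemize}
Hence $\vec{\Lambda}_t$ is a subdigraph for every $t$, although $\chi\le 3$ on this class. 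The third coordinate of the second vertex is not a function of the first vertex.

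\textbf{The missing step: project onto $S$ before tracking.} This is why the paper states Lemma~\ref{lem:first_determines_second} for $\dcY_{P,Z}\restriction S$.
\begin{itemize}
\item Given $G$ realised on $V$, let $G'$ be the realisation of $P_{|S}$ on $V_{|S}$. The set $V_{|S}$ is injective, order-uniform and $Z_{|S}$-functional.
\item The map $v\mapsto v_{|S}$ is a homomorphism $G\to G'$ by Lemma~\ref{lem:subclause}.
\item A row vertex of $X$ has all its out-neighbours in $S$. So every rule that $\sigma$ selects at some $c\in S$ is a positive literal of $P_{|S}$ or a constraint of $Z_{|S}$.
\item Hence your tracking argument runs inside $G'$. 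It shows that the $S$-part of the second vertex is determined by the $S$-part of the first.
\item So $G'$ excludes the two-block path, and $\chi(G)\le\chi(G')\le 2N'$.
\end{itemize}

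With this projection inserted in both the max-plus and min-plus cases, the rest of your outline agrees with the paper. There, your Step~2 correspondence is a covering projection from the infinite coordinate digraph onto the reduced graph of $\Gamma_{|S}$ on its column vertices. In it, $\sigma$ selects the row vertex (the rule) at each coordinate, and all branches of that row vertex must be tracked.
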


\subsection{Bounding the chromatic number}\label{ssec:bounding-the-chromatic-number}

\colorlet{HighlightOrange}{orange!100}
\colorlet{HighlightBlue}{blue!100}
\colorlet{HighlightGray}{gray!100}
\colorlet{LightOrange}{orange!70}
\colorlet{LightBlue}{blue!70}
\colorlet{LightGray}{gray!70}
\newcommand{\orangeText}[1]{\textcolor{LightOrange}{#1}}
\newcommand{\blueText}[1]{\textcolor{LightBlue}{#1}}
\newcommand{\grayText}[1]{\textcolor{LightGray}{#1}}
\newcommand{\highlightOrangeText}[1]{\textcolor{HighlightOrange}{#1}}
\newcommand{\highlightBlueText}[1]{\textcolor{HighlightBlue}{#1}}
\newcommand{\highlightGrayText}[1]{\textcolor{HighlightGray}{#1}}

In this section, we prove the implication \ref{item:unbounded_chromatic_number} $\Rightarrow$~\ref{item:tropicals_have_solution} of \cref{thm:tropical_dichotomy}.
In the previous section, we showed that if certain tropical systems admit a finite solution, then class $\dcY_{P,Z}$ is \chiunbounded, as it contains a full class of shift digraphs.
Now we prove the opposite: if one of these systems fails to have a finite solution, then class $\dcY_{P,Z}$ has bounded chromatic number.
We assume throughout that it is the corresponding max-plus system that has no finite solution.
The other case, i.e.\ that the min-plus system has no finite solution, is analogous and we comment on the necessary modifications of the proof at the end of this section.

We start with a couple of illuminating examples of classes $\dcY_{P,Z}$ of bounded chromatic number.
They illustrate basic ideas behind the proof.
Although the examples do not explicitly refer to the tropical machinery, let us remark that in both examples we keep the convention that it is the max-plus system with no finite solution.

\begin{example}[A simple example]
Let $d=2$, and $P$ be a path clause with a single positive literal $q_{2,1}$. 
Let $L = \{1\}, \lambda = 2$, and $Z = ((L, \lambda))$.
Let $G \in \dcY_{P,Z}$.
We argue that every vertex in $G$ has at most one out-neighbor, which clearly would imply that the chromatic number of $G$ is at most 3.
Let $\phi: \bN \rightarrow \bN$ be a function witnessing the $(L,\lambda)$-functionality of $V(G)$, i.e.\ for every vertex $v \in V(G)$ we have $v_2 = \phi(v_1)$. Suppose $u$ is an out-neighbor of $v$, then, due to the literal $q_{2,1}$, we have that $u_1 = v_2$. Furthermore, due to the $(L,\lambda)$-functionality we have $u_2 = \phi(u_1) = \phi(v_2)$. That is, $u = (u_1, u_2)$ is uniquely determined by $v$, and thus $v$ cannot have two or more distinct out-neighbors.
\end{example}

A digraph has maximum out-degree one if and only if it contains no subdigraph isomorphic to $\vec{\Lambda}_2$, where $\vec{\Lambda}_t$ is the digraph obtained from two copies of the directed $t$-vertex path $\vec{P}_t$ by identifying their initial (source) vertices. It turns out that, in general, classes $\dcY_{P,Z}$ have bounded chromatic number if and only if they exclude some $\vec{\Lambda}_t$ or its reverse\footnote{The reverse of a digraph $G$ is the digraph obtained from $G$ by reversing its edge orientations.} as a subdigraph. The if direction of this equivalence is a consequence of the following fact. 

\begin{fact}\label{fact:no_Lambda_small_chromatic_number}
    If a digraph $G$ does not contain $\vec{\Lambda}_t$ or the reverse of $\vec{\Lambda}_t$ as a subdigraph, then $\chi(G) \leq 2t-1$.
\end{fact}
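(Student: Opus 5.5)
This is a known result (it follows, for instance, from results on oriented trees with two blocks due to Addario-Berry, Havet and Thomass\'e). A self-contained route that I would try adapts the Gallai--Roy argument, which shows that $\chi(G)\ge k$ forces a directed path on $k$ vertices. First, choose a spanning subdigraph $D$ of $G$ that is acyclic and maximal with respect to edge inclusion. Every edge $(u,v)$ of $G$ not in $D$ then closes a directed cycle, so $D$ contains a directed path from $v$ to $u$. For each vertex $v$ let $o(v)$ be the number of vertices on a longest directed path in $D$ starting at $v$, and let $\iota(v)$ be the analogous quantity for paths ending at $v$.

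Next, split the vertex set into $A=\{v : o(v)\le t-1\}$ and $B=V\setminus A$. Colour each $v\in A$ by $o(v)\in[t-1]$, and each $v\in B$ by $t-1+\min(\iota(v),t)$. This uses at most $(t-1)+t=2t-1$ colours. The aim is to show that any monochromatic edge yields a copy of $\vec{\Lambda}_t$ or of its reverse as a subdigraph.

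Consider a monochromatic edge $(u,v)$, starting with the case where both endpoints lie in $A$. If $(u,v)\in D$, then $o(u)>o(v)$, a contradiction. If $(u,v)\notin D$, then $D$ has a path $R$ from $v$ to $u$. Prepending $R$ to a longest out-path from $u$ gives a directed walk from $v$. If this walk is a path, then $o(v)>o(u)$, again a contradiction. Otherwise, the out-path from $u$ meets $R$, and this is exactly where acyclicity of $D$ must be exploited, as in Gallai--Roy.

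For the case of $B$, a vertex $v\in B$ has a $D$-path of $t$ vertices leaving it. If $\iota(v)\ge t$, then $v$ is also the endpoint of a $D$-path of $t$ vertices. Two such paths must be vertex-disjoint apart from $v$, since $D$ is acyclic. The monochromatic edge $(u,v)$, combined with the structure of $D$, should then be used to produce two long paths leaving a common vertex (giving $\vec{\Lambda}_t$) or two long paths entering a common vertex (giving its reverse).

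The main obstacle is this last step: extracting two internally disjoint long branches with a common root from a monochromatic non-$D$ edge together with the cycle it closes. I would first try to choose the linear extension, and with it $D$, extremally, for example by minimising the number of backward edges, so that the relevant paths are forced to be disjoint. If that fails, the fallback is to cite the known theorem directly, as the paper does.
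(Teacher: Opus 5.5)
Your Gallai--Roy colouring does not just have an unfinished last step: the claim that every monochromatic edge yields $\vec{\Lambda}_t$ or its reverse is false.

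\textbf{Counterexample.} Take $G=\vec{P}_{2t}$, namely $v_1\to v_2\to\cdots\to v_{2t}$ with $t\ge 2$. Since $G$ is acyclic, the maximal acyclic spanning subdigraph is forced to be $D=G$, so no extremal choice of linear extension can help. Here $o(v_i)=2t-i+1$ and $\iota(v_i)=i$. Hence $v_t$ and $v_{t+1}$ both lie in $B$, both receive colour $t-1+t=2t-1$, and they are adjacent. Yet every in- and out-degree of $G$ is at most $1$, so $G$ contains neither $\vec{\Lambda}_t$ nor its reverse.

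\textbf{Why it fails.} On $A$ and on $\{v\in B:\iota(v)<t\}$ your colouring is a genuine Gallai--Roy colouring. The case you left open there is immediate: a vertex $w\ne u$ common to $R$ and to the out-path from $u$ would close a directed cycle in $D$. So everything hinges on the top class $\{v: o(v)\ge t,\ \iota(v)\ge t\}$. Interior vertices of a long non-branching path lie in that class and are adjacent, because path lengths carry no information about branching. A correct argument must use the absence of a vertex with two long disjoint out-branches directly, and that is the real content of the Addario-Berry--Havet--Thomass\'e theorem.

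\textbf{The citation route.} The proof therefore has to rest on citing that theorem, which is what the paper does. Your fallback needs three details to be correct.
\begin{enumerate}
\item The theorem concerns oriented \emph{paths} with two blocks, not trees. It says every $n$-chromatic digraph contains every such path of order $n$. Since $\vec{\Lambda}_t$ is a two-block path of order $2t-1$, for $t\ge 3$ excluding $\vec{\Lambda}_t$ alone gives $\chi(G)\le 2t-2$.
\item The theorem cannot cover order $3$: the directed triangle is $3$-chromatic with all out-degrees $1$. So $t=2$ needs a separate argument. Excluding $\vec{\Lambda}_2$ means maximum out-degree at most $1$, so the underlying graph is a pseudoforest and $\chi\le 3$.
\item Excluding only the reverse of $\vec{\Lambda}_t$ is handled by reversing $G$. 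This ``either one suffices'' form is what the paper actually needs downstream: Observation~\ref{obs:Lambda_not_subdigraph} rules out only $\vec{\Lambda}_t$, and the min-plus case rules out only its reverse. Your scheme aims to find ``$\vec{\Lambda}_t$ or its reverse'', so even if repaired it would only prove the weaker statement in which both are excluded.
\end{enumerate}
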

\begin{proof}
    It is enough to consider $\vec{\Lambda}_t$ as a forbidden subdigraph, as the claim for the reverse of $\vec{\Lambda}_t$ follows by considering the reverse of $G$.
    If $G$ does not contain $\vec{\Lambda}_2$, the maximal outdegree of $G$ is $1$.
    Hence, $G$ is a pseudoforest and we have $\chi(G) \leq 3$.
    For $t \geq 3$, we even have a stronger bound $\chi(G) < 2t-1$ by a result of Addario-Berry, Havet, and Thomassé \cite{ADDARIOBERRY2007620}, which states that any digraph $H$ with $\chi(H) \geq 2t-1$ contains as a subdigraph every orientation of the path on $2t-1$ vertices with 2 directed blocks (including~$\vec{\Lambda}_t$).
\end{proof}

This reduces showing bounded chromatic number of a digraph to showing that it does not contain some $\vec{\Lambda}_t$ or its reverse as a subdigraph. The following example illustrates a more involved situation.

\begin{example}[A more interesting example]\label{ex:chi_bnd_more_interesting}
Let $d=5$, and $P$ be a path clause whose positive literals are \highlightGrayText{$q_{2,1}$}, \highlightGrayText{$q_{4,3}$}, \highlightGrayText{$q_{5,4}$}.
In other words, the clause digraph $\Delta_P$ of $P$ consists of two directed paths: $(2,1)$ and $(5,4,3)$.
Let \blueText{$L_1 = \{2,3\}, \lambda_1 = 5$}, \orangeText{$L_2 = \{ 3 \}, \lambda_2 = 1$}, and $Z = (\blueText{(L_1, \lambda_1)}, \orangeText{(L_2, \lambda_2)})$, and
let $G \in \dcY_{P,Z}$. 
We will show that $G$ does not contain $\vec{\Lambda}_3$ as a subdigraph, and therefore, by Fact~\ref{fact:no_Lambda_small_chromatic_number}, $G$ has chromatic number at most $5$.

To show that $G$ does not contain $\vec{\Lambda}_3$, we will show that in any $3$-vertex directed path $(a,b,c)$ in $G$ the first vertex functionally determines the second vertex. More formally, for every $i \in [5]$, there exist a function $f_i : \bN^5 \rightarrow \bN$ such that $b_i = f_i(a)$. Clearly, this would imply that $G$ does not contain $\vec{\Lambda}_3$ as a subdigraph, as the two 3-vertex paths constituting $\vec{\Lambda}_3$ share the first vertex, but have different second vertices.

Let $\blueText{\phi_1} : \bN^2 \rightarrow \bN$ and $\orangeText{\phi_2} : \bN \rightarrow \bN$ be functions witnessing the respective $\blueText{(L_1,\lambda_1)}$ and $\orangeText{(L_2,\lambda_2)}$ functional constraints of $V(G)$.
Due to the literals \highlightGrayText{$q_{2,1}$}, \highlightGrayText{$q_{4,3}$}, \highlightGrayText{$q_{5,4}$}, we have that $b_1 = a_2$, $b_3 = a_4$, $b_4 = a_5$, so we can immediately define $f_1(a) = a_2$, $f_3(a) = a_4$, $f_4(a) = a_5$.
To determine the value of $b_2$, we observe that $b_2$ is equal to $c_1$, which in turn is functionally determined by $c_3$ via $\orangeText{\phi_2}$. 
Furthermore, $c_3$ is equal to $b_4$, which is equal to $a_5$ as we already know. Thus we can express $b_2$ as a function of $a$ as follows $b_2 = \orangeText{\phi_2}(a_5) = f_2(a)$. Finally, the value of $b_5$ is functionally determined by $b_2$ and $b_3$ via $\blueText{\phi_1}$. Thus, using already known derivations of $b_2$ and $b_3$, we can express $b_5$ as a function of $a$ as follows $b_5 = \blueText{\phi_1}(b_2,b_3) = \blueText{\phi_1}(\orangeText{\phi_2}(a_5),a_4) = f_5(a)$.
Figure~\ref{fig:tracking-ex1} illustrates how each coordinate of $b$ is `derived' from the coordinates of $a$; this picture will be formalized and becomes clearer later in \cref{sssec:multidigraph_cQ}.%
\footnote{Let us note the general proof strategy yields $f_1(a) = \orangeText{\phi_2}(a_4)$ instead of $f_1(a) = a_2$.}
\end{example}

\begin{figure}
    \centering
    \begin{tikzpicture}[>=stealth]

  \begin{scope}[shift={(5,0)}, scale=1]
  \foreach \i in {1,...,3} {
    \node[draw, minimum size=0.1cm] (v\i) at (0,5-\i) {};
  }

  \node[left=4pt] at (v1) {$a$};
  \node[left=4pt] at (v2) {$b$};
  \node[left=4pt] at (v3) {$c$};

  \foreach \i in {1,...,2} {
    \pgfmathtruncatemacro{\next}{\i+1}
    \draw[->] (v\i) -- (v\next);
  }

  \foreach \row in {1,...,3} {
    \foreach \col in {1,...,5} {
      \node[circle, draw, inner sep=2.5pt]
        (s\row\col) at (1 + \col, 5-\row) {};
    }
  }

  \foreach \col in {1,...,5} {
    \node[above=10pt] at (s1\col) {$\col$};
  }

  \foreach \row in {1,...,2} {
    \pgfmathtruncatemacro{\nextrow}{\row+1}
    \draw[->,LightGray,bend left=15]  (s\row2)   to (s\nextrow1);
    \draw[->,LightGray,bend left=15]  (s\nextrow1) to (s\row2);
    \draw[->,LightGray,bend left=15]  (s\row5)   to (s\nextrow4);
    \draw[->,LightGray,bend left=15]  (s\nextrow4) to (s\row5);
    \draw[->,LightGray,bend left=15]  (s\row4)   to (s\nextrow3);
    \draw[->,LightGray,bend left=15]  (s\nextrow3) to (s\row4);
  }

  \foreach \row in {1,...,3} {
    \draw[->,LightBlue] (s\row5) to[bend right=20] (s\row2);
    \draw[->,LightBlue] (s\row5) to[bend right=20] (s\row3);

    \draw[->,LightOrange] (s\row1) to[bend left=25] (s\row3);
  }
  \end{scope}
  \node[align=center] at (5 + 4, 0+1) {Dependencies between coordinates of the\\ vertices of the directed path $(a,b,c)$ due to the literals \\ \highlightGrayText{$q_{2,1}$}, \highlightGrayText{$q_{4,3}$}, \highlightGrayText{$q_{5,4}$}, and functional constraints $\blueText{\phi_1}$ and $\orangeText{\phi_2}$};


  \begin{scope}[shift={(0,-4.5)}, scale=0.8]
    \foreach \row in {1,...,3} {
      \foreach \col in {1,...,5} {
        \node[circle, draw, inner sep=2.5pt]
          (c1s\row\col) at (1 + \col, 5-\row) {};
      }
    }
    \node[circle, draw, inner sep=2.5pt, fill=LightGray] at (c1s12) {};
    \node[circle, draw, inner sep=2.5pt, fill=black] at (c1s21) {};

    \foreach \col in {1,...,5} {
      \node[above=10pt] at (c1s1\col) {$\col$};
    }

    \foreach \row in {1,...,1} {
      \pgfmathtruncatemacro{\nextrow}{\row+1}
      \draw[->,thick,HighlightGray,bend left=15]  (c1s\nextrow1) to (c1s\row2);
    }

  \end{scope}
  \node at (0 + 3.25, -4.5 + 1) {$b_1 = \highlightGrayText{a_2} = f_1(a)$};

  \begin{scope}[shift={(5.5,-4.5)}, scale=0.8]
    \foreach \row in {1,...,3} {
      \foreach \col in {1,...,5} {
        \node[circle, draw, inner sep=2.5pt]
          (c2s\row\col) at (1 + \col, 5-\row) {};
      }
    }
    \node[circle, draw, inner sep=2.5pt, fill=black] at (c2s22) {};
    \node[circle, draw, inner sep=2.5pt, fill=LightGray] at (c2s31) {};
    \node[circle, draw, inner sep=2.5pt, fill=LightGray] at (c2s33) {};  
    \node[circle, draw, inner sep=2.5pt, fill=LightGray] at (c2s24) {};
    \node[circle, draw, inner sep=2.5pt, fill=LightGray] at (c2s15) {};

    \foreach \col in {1,...,5} {
      \node[above=10pt] at (c2s1\col) {$\col$};
    }

    \draw[->,thick,HighlightOrange] (c2s31) to[bend left=25] (c2s33);
    \draw[->,thick,HighlightGray,bend left=15]  (c2s22) to (c2s31);
    \draw[->,thick,HighlightGray,bend left=15]  (c2s33) to (c2s24);
    \draw[->,thick,HighlightGray,bend left=15]  (c2s24) to (c2s15);
  \end{scope}
  \node at (5.5 + 3.25, -4.5 + 1) {\small $b_2 = \highlightGrayText{c_1} = \highlightOrangeText{\phi_2}(c_3) = \highlightOrangeText{\phi_2}(\highlightGrayText{b_4}) = \highlightOrangeText{\phi_2}(\highlightGrayText{a_5}) = f_2(a)$};

  \begin{scope}[shift={(11,-4.5)}, scale=0.8]
    \foreach \row in {1,...,3} {
      \foreach \col in {1,...,5} {
        \node[circle, draw, inner sep=2.5pt]
          (c3s\row\col) at (1 + \col, 5-\row) {};
      }
    }
    \node[circle, draw, inner sep=2.5pt, fill=LightGray] at (c3s14) {};
    \node[circle, draw, inner sep=2.5pt, fill=black] at (c3s23) {};

    \foreach \col in {1,...,5} {
      \node[above=10pt] at (c3s1\col) {$\col$};
    }

    \foreach \row in {1,...,1} {
      \pgfmathtruncatemacro{\nextrow}{\row+1}
      \draw[->,thick,HighlightGray,bend left=15]  (c3s\nextrow3) to (c3s\row4);
    }
  \end{scope}
  \node at (11 + 3.25, -4.5 + 1) {$b_3 = \highlightGrayText{a_4} = f_3(a)$};

  \begin{scope}[shift={(0,-8.5)}, scale=0.8]
    \foreach \row in {1,...,3} {
      \foreach \col in {1,...,5} {
        \node[circle, draw, inner sep=2.5pt]
          (c4s\row\col) at (1 + \col, 5-\row) {};
      }
    }
    \node[circle, draw, inner sep=2.5pt, fill=LightGray] at (c4s15) {};
    \node[circle, draw, inner sep=2.5pt, fill=black] at (c4s24) {};

    \foreach \col in {1,...,5} {
      \node[above=10pt] at (c4s1\col) {$\col$};
    }

    \draw[->,thick,HighlightGray,bend left=15]  (c4s24) to (c4s15);
  \end{scope}
  \node at (0 + 3.25, -8.5 + 1) {$b_4 = \highlightGrayText{a_5} = f_4(a)$};

  \begin{scope}[shift={(5.5,-8.5)}, scale=0.8]
    \foreach \row in {1,...,3} {
      \foreach \col in {1,...,5} {
        \node[circle, draw, inner sep=2.5pt]
          (c5s\row\col) at (1 + \col, 5-\row) {};
      }
    }
    \node[circle, draw, inner sep=2.5pt, fill=LightGray] at (c5s22) {};
    \node[circle, draw, inner sep=2.5pt, fill=LightGray] at (c5s31) {};
    \node[circle, draw, inner sep=2.5pt, fill=LightGray] at (c5s33) {};  
    \node[circle, draw, inner sep=2.5pt, fill=LightGray] at (c5s23) {};
    \node[circle, draw, inner sep=2.5pt, fill=LightGray] at (c5s24) {};
    \node[circle, draw, inner sep=2.5pt, fill=LightGray] at (c5s14) {};
    \node[circle, draw, inner sep=2.5pt, fill=LightGray] at (c5s15) {};
    \node[circle, draw, inner sep=2.5pt, fill=black] at (c5s25) {};

    \foreach \col in {1,...,5} {
      \node[above=10pt] at (c5s1\col) {$\col$};
    }

    \draw[->,thick,HighlightGray,bend left=15]  (c5s24) to (c5s15);
    \draw[->,thick,HighlightGray,bend left=15]  (c5s23) to (c5s14);
    \draw[->,thick,HighlightBlue] (c5s25) to[bend right=20] (c5s22);
    \draw[->,thick,HighlightBlue] (c5s25) to[bend right=20] (c5s23);
    \draw[->,thick,HighlightOrange] (c5s31) to[bend left=25] (c5s33);
    \draw[->,thick,HighlightGray,bend left=15]  (c5s22) to (c5s31);
    \draw[->,thick,HighlightGray,bend left=15]  (c5s33) to (c5s24);
  \end{scope}
    \node[align=center] at (5.5 + 3.25, -8.5 + 0.75)
    { \small
      $b_5 = \highlightBlueText{\phi_1}(b_2,b_3)$  
      $= \highlightBlueText{\phi_1}(\highlightGrayText{c_1},\highlightGrayText{a_4})$ 
      $= \highlightBlueText{\phi_1}(\highlightOrangeText{\phi_2}(c_3),\highlightGrayText{a_4})$\\
      \small
      $= \highlightBlueText{\phi_1}(\highlightOrangeText{\phi_2}(\highlightGrayText{b_4}),\highlightGrayText{a_4})$
      $= \highlightBlueText{\phi_1}(\highlightOrangeText{\phi_2}(\highlightGrayText{a_5}),\highlightGrayText{a_4})$ 
      $= f_5(a)$
    };

\end{tikzpicture}
    \caption{Coordinate tracking using the functional constraints and the equality relations between coordinates in Example~\ref{ex:chi_bnd_more_interesting}. It shows how each \textbf{coordinate} of the second vertex $b$ of the directed path $(a,b,c)$ can be computed from the coordinates of its first vertex $a$ using the functional constraints \blueText{$\phi_1$}, \orangeText{$\phi_2$} and equalities between coordinates of adjacent vertices imposed by the literals \highlightGrayText{$q_{2,1}$}, \highlightGrayText{$q_{4,3}$}, \highlightGrayText{$q_{5,4}$}.}
    \label{fig:tracking-ex1}
\end{figure}

Example~\ref{ex:chi_bnd_more_interesting} reveals the most important ideas of the proof.
It remains to note that in full generality, we need to \emph{project} the given digraph $G \in \dcY_{P,Z}$, or more precisely its vertex set, to a subset of coordinates.

\begin{notation}
    Consider a non-empty $S \subseteq [d]$.
    Recall that $P_{|S}$ stands for the (path) clause obtained from $P$ by removing literals corresponding to variables $q_{i,j}$ with $\{i,j\} \not\subseteq S$ and define $Z_{|S}$ similarly to be the restriction of $Z$ to the set $S$, i.e.\ each $(L,\lambda) \in Z$ belongs to $Z_{|S}$ iff $L \subseteq S$ and $\lambda \in S$.
    Denote by $\dcY_{P,Z} \restriction S$ the class of realizations of the clause $P_{|S}$ on all (finite) $Z_{|S}$-functional vertex sets.
\end{notation}

The key lemma in full generality reads as follows.

\begin{lemma}\label{lem:first_determines_second}
    Suppose that the max-plus system $\widetilde{A}y \leq \widetilde{B}y$ has no finite solution. Then, there exists a non-empty set $S \subseteq [d]$ and an integer $t \leq |S|+1$ such that for every $G \in \dcY_{P,Z} \restriction S$, there exists a function $f: \N^S \to \N^S$ that satisfies the following: 
    whenever $Q$ is a subdigraph of $G$ on vertices $v^1, \dots, v^t$ that is isomorphic to $\vec{P}_t$, we have $v^2 = f(v^1)$.
\end{lemma}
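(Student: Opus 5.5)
The plan is to extract $S$ from the mean payoff game. Then I would show that the column player's optimal positional strategy, restricted to $S$, is exactly a recipe for computing every coordinate of $v^2$ from those of $v^1$, with $f$ built from the $Z_{|S}$-witness functions.

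\textbf{Choosing $S$.} Since $\widetilde{A}y \leq \widetilde{B}y$ has no finite solution, \cref{cor:no_solution_implies_negative_mean_payoff} gives a non-empty set $X$ of winning states for the column player. Set $S := X \cap \cC$ and $T := X \cap \cR$. By \cref{obs:restricted_game_graph}, the restricted positional strategy $\sigma_{|S}\colon S \to T$ wins every state of $\Gamma[X]$. Because the row player has no edges leaving $X$, every row $r \in T$ has all its right-hand-side columns in $S$. There are two kinds of rows.
\begin{itemize}
\item A shift row coming from \eqref{eq:equality_condition} has a single right-hand column $c$ or $c+1$. Hence if $\sigma$ selects such a row, the literal $q_{c+1,c}$ survives in $P_{|S}$.
\item A functional row for $(L,\lambda)$ has right-hand columns $L$. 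Hence if $\sigma(\lambda)$ is that row, then $L \cup \{\lambda\} \subseteq S$, so $(L,\lambda) \in Z_{|S}$.
\end{itemize}
So the strategy $\sigma_{|S}$ only refers to constraints that are still present in $\dcY_{P,Z}\restriction S$. This is why we project to $S$.

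\textbf{Coordinate tracking.} Fix $G \in \dcY_{P,Z}\restriction S$ with witness functions for $Z_{|S}$, and a path $v^1 \mapsto \dots \mapsto v^t$. Consider the dependency multidigraph $\cK$ on pairs $(c,s) \in S \times [t]$, as in \cref{fig:tracking-ex1}. A shift literal $q_{c+1,c}$ gives $v^{s+1}_c = v^s_{c+1}$, which links $(c,s+1)$ with $(c+1,s)$ in either direction. A functional constraint $(L,\lambda)$ gives an arc from $(\lambda,s)$ to all $(c,s)$ with $c \in L$. Call $(c,s)$ \emph{trackable} if $s=1$, or if $\sigma(c)$ is a row whose dependencies are all trackable. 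Then a routine induction shows that trackable values are functions of $v^1$; this is the analogue of \cref{obs:trackable_vertices_are_function_of_first_level}.

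\textbf{Covering projection.} The weights should be arranged so that each move of the game changes the time index $s$ by exactly the edge weight. A shift row moves between times $s$ and $s \pm 1$ with weight $\pm 1$. A functional row keeps $s$ fixed and has weight $0$ on both of its edges. With this arrangement, the pair (column, time) unfolding of $\Gamma|_S$ under $\sigma$ is a covering projection of the subgraph of $\cK$ used by the strategy. By \cref{obs:walks_via_covering_projection}, derivation chains starting at $(c,2)$ correspond bijectively to plays from $c$ in which the column player follows $\sigma$ and the row player is arbitrary. The time along a chain equals $2$ plus the accumulated weight.

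\textbf{Main step: every $(c,2)$ with $c \in S$ is trackable.} Since $\sigma$ wins everywhere, every cycle of $\Gamma[X]$ consistent with $\sigma$ has negative total weight. By \cref{obs:negative_mena_loss_implies_divergence}, every play's partial sums tend to $-\infty$, so every chain eventually reaches time $1$ and stops. What remains is to keep chains inside the window $[1,t]$. Decompose a play into a simple path plus cycles. Each cycle contributes weight $\le -1$, and a simple path in $\Gamma[X]$ visits at most $|S|$ columns. So the partial sums never exceed $+(|S|-1)$ above the start. Starting at time $2$, the time therefore stays at most $|S|+1$, which is why $t \le |S|+1$ should suffice. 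Combined with termination, this gives a finite derivation tree ending at time-$1$ leaves.

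\textbf{Main obstacle.} I expect the precise weight bookkeeping to be the hardest part: getting the orientation and sign conventions right so that the max-plus weights $\pm 1$ really equal the time shifts, and getting the exact constant so that $t \le |S|+1$ comes out. Once that is settled, define $f(v^1)_c$ as the composition of witness functions along the derivation tree of $(c,2)$. This gives $v^2 = f(v^1)$.
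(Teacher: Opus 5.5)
Your proposal is correct and is essentially the paper's proof. It uses the same $S = X\cap\cC$, the same coordinate multidigraph $\cK$ tracked along the column player's positional strategy, the same covering-projection correspondence with $\sigma$-consistent plays, and the same weight bound (negative $\sigma$-cycles plus simple paths of weight at most $|S|-1$) giving $t\le |S|+1$. The differences are cosmetic: you get termination from Observation~\ref{obs:negative_mena_loss_implies_divergence} together with finite branching, where the paper uses acyclicity of $\Theta^{\opt}$ (Lemma~\ref{lem:Theta_and_cQ_acyclic}); and you only use that $\sigma$-selected rows have all out-neighbours in $S$, where the paper uses the full identification $\Gamma_{|S}\cong\Gamma[X]$.

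One technicality to add: Corollary~\ref{cor:no_solution_implies_negative_mean_payoff} presupposes Assumption~\ref{assump:valid_moves_exist}, which can fail for $\widetilde{A}$ when a column has no finite entry. The fix is to add the trivial rows $y_c\le y_c$, as in Remark~\ref{rem:system_assumption_might_be_violated}.
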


Let us remark that the function $f$ depends, in a sense, only on the path clause $P$ and the collection of functional constraints $Z$.
The proof of Lemma~\ref{lem:first_determines_second} shows that there is a uniform `blueprint' expression, which yields $f$ upon substituting the particular functions $\phi_s$ witnessing the functional constraints $(L_s,\lambda) \in Z$ in the vertex set of $G$.

With this, it is easy to prove that $\dcY_{P,Z}$ has bounded chromatic number.

\begin{observation}\label{obs:Lambda_not_subdigraph}
    Assuming Lemma~\ref{lem:first_determines_second}, the digraph $G \in \dcY_{P,Z} \restriction S$ does not contain $\vec{\Lambda}_t$ as a subdigraph.
\end{observation}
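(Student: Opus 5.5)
The argument is a direct contradiction from Lemma~\ref{lem:first_determines_second}. Fix $S$, $t$, $G \in \dcY_{P,Z} \restriction S$ and the function $f\colon \N^S \to \N^S$ provided by that lemma. Suppose, for contradiction, that $G$ contains $\vec{\Lambda}_t$ as a subdigraph. By definition, $\vec{\Lambda}_t$ consists of two copies of $\vec{P}_t$ whose initial vertices are identified and which share nothing else. So in $G$ there are two directed paths $Q = (v^1, v^2, \dots, v^t)$ and $Q' = (w^1, w^2, \dots, w^t)$ with $v^1 = w^1$, and whose remaining $2t-2$ vertices are pairwise distinct. In particular $v^2 \neq w^2$. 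Here $t \geq 2$; in the degenerate case $t \le 1$, $\vec{\Lambda}_t$ is a single vertex and there is nothing to prove.

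Each of $Q$ and $Q'$ is a subdigraph of $G$ isomorphic to $\vec{P}_t$, with first vertex $v^1$ and second vertex $v^2$, respectively $w^1$ and $w^2$. Applying Lemma~\ref{lem:first_determines_second} to both gives $v^2 = f(v^1)$ and $w^2 = f(w^1)$. Since $v^1 = w^1$, this yields $v^2 = f(v^1) = f(w^1) = w^2$, contradicting $v^2 \neq w^2$. Hence $G$ has no subdigraph isomorphic to $\vec{\Lambda}_t$.

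The only point needing care is that the same $f$ serves every copy of $\vec{P}_t$ in $G$. The lemma provides exactly this: $f$ depends on $G$ (through the functions $\phi_s$ witnessing $Z_{|S}$-functionality of $V(G)$) but not on the chosen path. Vertices of $G$ are distinct elements of $\N^S$, so equal images under $f$ force equal vertices. I expect no real obstacle here; all the difficulty lies in Lemma~\ref{lem:first_determines_second} itself.
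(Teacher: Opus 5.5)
Your argument is correct and is exactly the paper's proof: the two branches of $\vec{\Lambda}_t$ are copies of $\vec{P}_t$ with a common first vertex, so Lemma~\ref{lem:first_determines_second} forces their second vertices to coincide, which contradicts the definition of $\vec{\Lambda}_t$. One small slip: the case $t \le 1$ is not vacuous, because a single vertex is a subdigraph of every nonempty $G$. This case never arises, however, since the lemma's conclusion refers to $v^2$ and its proof produces $t \ge 2$.
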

\begin{proof}
    For contradiction, consider a subdigraph of $G$ isomorphic to $\vec{\Lambda}_t$ on vertices $v^1, \dots, v^t$ and $u^1, \dots, u^t$ with $v^1 = u^1$.
    Since on both $v^1, \dots, v^t$ and $u^1, \dots, u^t$, we have a subgraph of $G$ isomorphic to $\vec{P}_t$, Lemma~\ref{lem:first_determines_second} implies that $v^2 = f(v^1) = f(u^1) = u^2$.
    However, this contradicts the definition of $\vec{\Lambda}_t$, which requires that $v^2$ and $u^2$ are distinct vertices.
\end{proof}

\begin{proof}[Proof of \ref{item:unbounded_chromatic_number} $\Rightarrow$~\ref{item:tropicals_have_solution} of Theorem~\ref{thm:tropical_dichotomy} assuming Lemma~\ref{lem:first_determines_second}]\label{prf:(i)->(iii)}
    Suppose that the max-plus system $\widetilde{A}y \leq \widetilde{B}y$ has no finite solution (we comment on the min-plus alternative in \cref{sssec:what_if_minplus_fails}) and let $S$ and $t$ be as in Lemma~\ref{lem:first_determines_second}.
    Consider an arbitrary digraph $G \in \dcY_{P,Z}$, which is the realisation of $P$ over a $Z$-functional set $V$, and denote by $G'$ the realization of $P_{|S}$ over $V_{|S}$. Note that $V_{|S}$ is a $Z_{|S}$-functional set, so $G' \in \dcY_{P,Z} \restriction S$.
    We claim that
    \[
        \chi(G) \leq \chi(G') \leq 2t-1
        .
    \]
    Indeed, the first inequality follows from the fact that $G$ is homomorphic to $G'$ (due to Lemma~\ref{lem:subclause}), while the second inequality is a consequence of Observation~\ref{obs:Lambda_not_subdigraph} (relying on Lemma~\ref{lem:first_determines_second}) and Fact~\ref{fact:no_Lambda_small_chromatic_number}.
\end{proof}

By the previous proof and the fact that $t \leq |S|+1 \leq d+1$ by Lemma~\ref{lem:first_determines_second}, we obtain the following quantitative corollary.

\begin{corollary}\label{cor:bound_on_chromatic_number}
    If the class $\dcY_{P, Z}$ has bounded chromatic number, then $\chi(G) \leq 2d+1$ for each $G \in \dcY_{P, Z}$.
\end{corollary}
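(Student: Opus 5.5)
This follows directly from the proof of \ref{item:unbounded_chromatic_number} $\Rightarrow$~\ref{item:tropicals_have_solution} given just above, made quantitative. Assume $\dcY_{P,Z}$ has bounded chromatic number. By the implication \ref{item:tropicals_have_solution} $\Rightarrow$~\ref{item:contains_shift_graphs} (Lemmas~\ref{lem:interval_representation_iff_tropical_solutions} and~\ref{lem:interval_representation_implies_shift_graphs}), at least one of the two tropical systems has no finite solution. Otherwise $\dcY_{P,Z}$ would contain $\dcS_D$ for some $D$, and by Fact~\ref{fact:shift} this class has unbounded chromatic number.

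Suppose first that the max-plus system $\widetilde{A}y \leq \widetilde{B}y$ has no finite solution. Then Lemma~\ref{lem:first_determines_second} gives a non-empty $S \subseteq [d]$ and an integer $t \leq |S|+1 \leq d+1$. Take any $G \in \dcY_{P,Z}$, realised by $P$ over a $Z$-functional order-uniform set $V$, and let $G'$ be the realisation of $P_{|S}$ over $V_{|S}$. Since $V_{|S}$ is $Z_{|S}$-functional, $G' \in \dcY_{P,Z}\restriction S$. By Lemma~\ref{lem:subclause}, $G$ is homomorphic to $G'$, so $\chi(G) \leq \chi(G')$. Observation~\ref{obs:Lambda_not_subdigraph} says $G'$ has no subdigraph isomorphic to $\vec{\Lambda}_t$. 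Fact~\ref{fact:no_Lambda_small_chromatic_number} then gives $\chi(G') \leq 2t-1 \leq 2(d+1)-1 = 2d+1$.

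Suppose instead that the min-plus system fails. The argument is symmetric, as indicated in Section~\ref{sssec:what_if_minplus_fails}. The analogue of Lemma~\ref{lem:first_determines_second} says that in long directed paths the last vertex determines the second-to-last. This excludes the reverse of $\vec{\Lambda}_t$, with the same bound $t \leq |S|+1$. Fact~\ref{fact:no_Lambda_small_chromatic_number} covers the reversed digraph as well, so we again get $\chi(G) \leq 2t-1 \leq 2d+1$.

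The only point needing care is that the bound on $t$ carries over unchanged to the min-plus case. This is guaranteed once that case is established by reversing edge directions and applying the same tracking argument.
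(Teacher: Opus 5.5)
Your proposal is correct and is essentially the paper's argument: bounded chromatic number forces one of the two tropical systems to be infeasible (contrapositive of (iii)$\Rightarrow$(ii)$\Rightarrow$(i)), and the proof of (i)$\Rightarrow$(iii) then gives $\chi(G)\le\chi(G')\le 2t-1\le 2d+1$ because $t\le |S|+1\le d+1$. For the min-plus case you need not appeal to reversing edge directions, since Lemma~\ref{lem:last_determines_second_last} already states the bound $t\le |S|+1$ explicitly.
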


Hence, the sole goal of the remainder of this section is to prove Lemma~\ref{lem:first_determines_second}.
We split the proof into several stages. 
First, in \cref{sssec:determining_set_S}, we determine the set $S \subseteq [d]$ using the connection between tropical algebra and mean payoff games; here we define important game digraphs $\Gamma$ and $\Gamma_{|S}$.
Next, in \cref{sssec:multidigraph_cQ}, we define a multidigraph $\cK$ on individual coordinates of the vertices of the directed path $Q$ whose edges show the relations among the coordinates, and state the conclusion of Lemma~\ref{lem:first_determines_second} in terms of these relations (see Observation~\ref{obs:trackable_vertices_are_function_of_first_level}).
Then, in \cref{sssec:relating_cQ_and_Gamma|S,sssec:utilizing_correspondence}, we establish a correspondence between the multidigraph $\cK$ (or rather its supergraph $\Theta$) with the game digraph $\Gamma_{|S}$, , which allows us to further simplify the desired statement (see Observation~\ref{obs:trackable_iff_all_walks_in_cQOpt_reach_first_level}).
Finally, in \cref{sssec:finishing_proof}, we determine the value of $t$ and prove Lemma~\ref{lem:first_determines_second}.

\subsubsection{Determining the set $S$}\label{sssec:determining_set_S}

Recall the game digraph $\Gamma$ corresponding to the system $Ax \leq Bx$, where $A,B \in \Rmax^{m' \times n'}$.
It is a bipartite digraph with parts $\cC = [n']$ and $\cR = [m']$ indexed by columns of $A$ and rows of $B$ with edges between $\cC$ and $\cR$ weighted according to the entries of $A$ and $B$, respectively.

In our case, since the max-plus system $\widetilde{A}y \leq \widetilde{B}y$ with $\widetilde{A},\widetilde{B} \in \Rmax^{(2d-2m+k) \times d}$ from the previous section conveys an inherent interpretable meaning regarding the class $\dcY_{P,Z}$, we may also interpret the structure of the corresponding game digraph in greater detail.

Recall that both $\widetilde{A},\widetilde{B}$ are composed of two submatrices $\widetilde{A}^\shift, \widetilde{B}^\shift$ and $\widetilde{A}^\func, \widetilde{B}^\func$ of dimensions $(2d-2m) \times d$ and $k \times d$, respectively, given by \eqref{eq:defABhat}.
The matrices $\widetilde{A}^\shift, \widetilde{B}^\shift$ contain two rows for each positive literal in $P$ while $\widetilde{A}^\func, \widetilde{B}^\func$ contain a single row for each functional constraint in $Z$.
The entries of the vector $y$ and the columns of the matrices correspond to the coordinates of vertices of digraphs $G \in \dcY_{P,Z}$. 
These insights allow us to define the game digraph from scratch using only the path clause $P$ and the functional constraints $Z$ without mentioning the matrices $\widetilde{A}$ and $\widetilde{B}$.
We encourage the reader to verify that the following definition exactly corresponds to the original matrix-based definition of the game digraph.

\begin{definition}[Game digraph $\Gamma(P,Z)$]\label{def:game_graph}
    The bipartite (weighted) digraph $\Gamma = \Gamma(P,Z)$ has two parts $\cC = \cC(\Gamma)$ and $\cR = \cR(\Gamma)$, with $\cR$ consisting of two disjoint sets $\cR^\shift$ and $\cR^\func$.
   These are as follows:
    \begin{align*}
        \cC &= \{C_c : c \in [d]\},\\
        \cR^\shift &= \{ R^\shift_{c,+}, R^\shift_{c+1,-} : \text{for every positive literal $q_{c+1, c}$ of $P$}\},\\
        \cR^\func &= \{ R^\func_{s} : \text{for every $(L_s, \lambda_s) \in Z$}\}.
    \end{align*}
    For every positive literal $q_{c+1, c}$ of $P$, we have the following four edges
    \begin{alignat*}{2}
        C_c &\mapsto R^\shift_{c,+} &&\text{ with weight $0$}, \\
        R^\shift_{c,+} &\mapsto C_{c+1} &&\text{ with weight $-1$}, \\
        C_{c+1} &\mapsto R^\shift_{c+1,-} &&\text{ with weight $0$}, \\
        R^\shift_{c+1,-} &\mapsto C_c &&\text{ with weight $+1$}.
    \end{alignat*}
    Moreover, for each functional constraint $(L_s, \lambda_s) \in Z$, we have the edges
    \begin{alignat*}{2}
        C_{\lambda_s} &\mapsto R^\func_{s} &&\text{ with weight $0$}, \\
        R^\func_{s} &\mapsto C_c &&\text{ with weight $0$ \quad\quad\quad for every $c \in L_s$}.
    \end{alignat*}
\end{definition}

\begin{example}[Continuation of Example~\ref{ex:chi_bnd_more_interesting}]\label{ex:chi_bnd_more_interesting_matrices}
    Recall that Example~\ref{ex:chi_bnd_more_interesting} consists of a path clause with literals \highlightGrayText{$q_{2,1}$}, \highlightGrayText{$q_{4,3}$}, \highlightGrayText{$q_{5,4}$},  and has functional constraints $Z = (\blueText{(L_1, \lambda_1)}, \orangeText{(L_2, \lambda_2)})$ where \blueText{$L_1 = \{2,3\}, \lambda_1 = 5$}, \orangeText{$L_2 = \{ 3 \}, \lambda_2 = 1$}. 
    This corresponds to the system $\widetilde{A}y \leq \widetilde{B}y$ given by
    \begin{align*}
    x_2 &= x_1 + 1, \\
        x_4 &= x_3 + 1, \\
        x_5 &= x_4 + 1, \\
        x_5 &\leq \max\{x_2, x_3\}, \\
        x_1 &\leq \max\{x_3\},
    \end{align*}
    where the first three equalities come from the positive literals of $P$, while the last two inequalities come from the functional constraints $Z$.
    When written in a matrix form, we have
    \begin{align*}
        \begin{pmatrix}
        -\infty & 0 &-\infty &-\infty &-\infty \\
        0 &-\infty &-\infty &-\infty &-\infty \\
        -\infty &-\infty &-\infty & 0 &-\infty \\
        -\infty &-\infty & 0 &-\infty &-\infty \\
        -\infty &-\infty &-\infty &-\infty & 0 \\
        -\infty &-\infty &-\infty & 0 &-\infty \\
        -\infty &-\infty &-\infty & -\infty & 0 \\
        0 &-\infty &-\infty & -\infty & -\infty
        \end{pmatrix}
        \maxmu
        \begin{pmatrix}
        x_1 \\ x_2 \\ x_3 \\ x_4 \\ x_5
        \end{pmatrix}
         \leq
        \begin{pmatrix}
        1 &-\infty &-\infty &-\infty &-\infty \\
        -\infty & -1 &-\infty & -\infty &-\infty \\
        -\infty & -\infty & 1 & -\infty &-\infty \\
        -\infty &-\infty & -\infty & -1 &-\infty \\
        -\infty &-\infty & -\infty & 1 & -\infty \\
        -\infty &-\infty &-\infty & -\infty & -1 \\
        -\infty & 0 & 0 & -\infty & -\infty \\
        -\infty & -\infty & 0 & -\infty & -\infty
        \end{pmatrix}
        \maxmu
        \begin{pmatrix}
        x_1 \\ x_2 \\ x_3 \\ x_4 \\ x_5
        \end{pmatrix}
        .
    \end{align*}
    The corresponding game digraph $\Gamma$ is displayed in Figure~\ref{fig:game_graph-ex1}.
\end{example}

\begin{figure}[ht!]
    \centering
    \begin{tikzpicture}[
  dot/.style={circle,fill,inner sep=1.8pt},
  weight/.style={font=\scriptsize, fill=white, inner sep=1pt},
  >= {Stealth[scale=1.25]}
]

		\tikzset{
			circnode/.style={
				draw,
				circle,
				minimum size=0.2cm,
				inner sep=0pt
			}
		}

\def\Ybot{-2cm}
\def\Ymid{0cm}
\def\Ytop{2cm}

\def\hsep{1.5cm}

\node[circnode,label=right:$C_{1}$] (Cx1) at ({-2*\hsep},\Ymid) {};
\node[circnode,label=right:$C_{2}$] (Cx2) at ({-1*\hsep},\Ymid) {};
\node[circnode,label=right:$C_{3}$] (Cx3) at ({0*\hsep},\Ymid) {};
\node[circnode,label=right:$C_{4}$] (Cx4) at ({1*\hsep},\Ymid) {};
\node[circnode,label=right:$C_{5}$] (Cx5) at ({2*\hsep},\Ymid) {};

\node[circnode,label=below:$R^{\textsf{shift}}_{1,+}$] (Ls1) at ({-2.5*\hsep},\Ybot) {};
\node[circnode,label=below:$R^{\textsf{shift}}_{2,-}$] (Ls2) at ({-1.5*\hsep},\Ybot) {};
\node[circnode,label=below:$R^{\textsf{shift}}_{3,+}$] (Ls3) at ({-0.5*\hsep},\Ybot) {};
\node[circnode,label=below:$R^{\textsf{shift}}_{4,-}$] (Ls4) at ({0.5*\hsep},\Ybot) {};
\node[circnode,label=below:$R^{\textsf{shift}}_{4,+}$] (Ls5) at ({1.5*\hsep},\Ybot) {};
\node[circnode,label=below:$R^{\textsf{shift}}_{5,-}$] (Ls6) at ({2.5*\hsep},\Ybot) {};

\node[circnode,label=above:$R^{\textsf{func}}_{1}$] (Rf1) at ({-0.5*\hsep},\Ytop) {};
\node[circnode,label=above:$R^{\textsf{func}}_{2}$] (Rf2) at ({0.5*\hsep},\Ytop) {};

\newcommand{\edge}[4]{%
  \ifnum#3=0
    \draw[->] (#1) -- (#2);
  \else
    \draw[->] (#1) -- node[pos=#4, weight] {$#3$} (#2);
  \fi
}

\newcommand{\gedge}[4]{%
  \ifnum#3=0
    \draw[LightGray,->] (#1) -- (#2);
  \else
    \draw[LightGray,->] (#1) -- node[pos=#4, weight] {$#3$} (#2);
  \fi
}

\def\pos{0.25}

\gedge{Cx1}{Ls1}{0}{\pos}
\gedge{Cx2}{Ls2}{0}{\pos}
\gedge{Cx3}{Ls3}{0}{\pos}
\gedge{Cx4}{Ls4}{0}{\pos}
\gedge{Cx4}{Ls5}{0}{\pos}
\gedge{Cx5}{Ls6}{0}{\pos}

\gedge{Ls1}{Cx2}{-1}{\pos}
\gedge{Ls2}{Cx1}{+1}{\pos}
\gedge{Ls3}{Cx4}{-1}{\pos}
\gedge{Ls4}{Cx3}{+1}{\pos}
\gedge{Ls5}{Cx5}{-1}{\pos}
\gedge{Ls6}{Cx4}{+1}{\pos}

\draw[->, LightBlue] (Cx5) -- (Rf1);
\draw[->, LightBlue] (Rf1) -- (Cx2);
\draw[->, LightBlue] (Rf1) -- (Cx3);

\draw[->, LightOrange] (Cx1) -- (Rf2);
\draw[->, LightOrange] (Rf2) -- (Cx3);

\end{tikzpicture}
    \caption{Game digraph $\Gamma$ from Example~\ref{ex:chi_bnd_more_interesting_matrices}. Only non-zero weights are displayed.}
    \label{fig:game_graph-ex1}
\end{figure}

Next, we want to exploit the connection between systems of tropical inequalities and mean payoff games from Section~\ref{sec:tropical-prelim}, in particular Theorem~\ref{thm:connection_of_tropical_and_mean_payoff}.
We need to point out, though, that the system $\widetilde{A}y \leq \widetilde{B}y$ might not necessarily satisfy Assumption~\ref{assump:valid_moves_exist}.
However, this is only a minor technical issue which is safe to ignore as we discuss in Remark~\ref{rem:system_assumption_might_be_violated} below.

Assuming that the system $\widetilde{A}y \leq \widetilde{B}y$ has no finite solution, Corollary~\ref{cor:no_solution_implies_negative_mean_payoff} states that there is a non-empty set $X$ of states winning for the column player (in a game on $\Gamma$), which is witnessed by a positional strategy $\sigma^\opt$.
Let $S = X \cap \cC(\Gamma)$.
Abusing notation, this is the set $S$ from Lemma~\ref{lem:first_determines_second}, using the natural bijection between $\cC$ and $[d]$.

We consider the analogously defined digraph $\Gamma_{|S} = \Gamma(P_{|S}, Z_{|S})$ for the class $\dcY_{P,Z} \restriction S$.
That is, we have $\cC(\Gamma_{|S}) = S$, while $\cR(\Gamma_{|S})$ and the edges of $E(\Gamma_{|S})$ are defined by $P_{|S}$ and $Z_{|S}$ in place of $P$ and $Z$, respectively.

\begin{lemma}\label{lem:all_states_in_Gamma|S_are_winning}
    The digraphs $\Gamma_{|S}$ and $\Gamma[X]$ are isomorphic as weighted digraphs.
    Therefore, the restriction $\sigma^\opt_{|S} : S \to \cR$ of the strategy $\sigma^\opt$ witnesses that the column player wins the game $(\Gamma_{|S}, s_0)$ for all $s_0 \in V(\Gamma_{|S})$.
\end{lemma}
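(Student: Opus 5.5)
We construct an explicit isomorphism between $\Gamma_{|S}$ and $\Gamma[X]$. On column vertices it is the identity $C_c \mapsto C_c$ for $c \in S$, which is forced since $S = X \cap \cC(\Gamma)$. On row vertices we send $R^\shift_{c,\pm}$ and $R^\func_s$ of $\Gamma_{|S}$ to the vertices of $\Gamma$ with the same name. The proof then reduces to identifying the row vertices of $\Gamma$ that lie in $X$, and checking that edges and weights correspond.

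The key structural fact is the standard closure property of the column player's winning set $X$.
\begin{enumerate}
\item Every out-neighbour of a row vertex in $X$ lies in $X$. Otherwise the row player could escape to a non-losing state.
\item Every column vertex in $X$ has at least one out-neighbour in $X$, namely its image under $\sigma^\opt$.
\item Conversely, a row vertex all of whose out-neighbours lie in $X$ is itself in $X$. Its value is the maximum of the values of its successors plus the edge weights, and since the game is positional with values given by averages, the value is the maximum of negative successor values, hence negative.
\end{enumerate}
Combining (1) and (3), a row vertex lies in $X$ if and only if all of its out-neighbours are column vertices in $S$. We then check this against the definition of $\Gamma(P_{|S},Z_{|S})$ case by case:
\begin{itemize}
\item $R^\shift_{c,+}$ has unique out-neighbour $C_{c+1}$. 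So it lies in $X$ iff $c+1\in S$.
\item $R^\shift_{c+1,-}$ has unique out-neighbour $C_c$. So it lies in $X$ iff $c \in S$.
\item $R^\func_s$ has out-neighbours $\{C_c : c\in L_s\}$. So it lies in $X$ iff $L_s \subseteq S$.
\end{itemize}

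There is a mismatch here. $\Gamma_{|S}$ contains $R^\shift_{c,+}$ only when the literal $q_{c+1,c}$ survives in $P_{|S}$, i.e.\ when both $c,c+1\in S$. Likewise it contains $R^\func_s$ only when both $L_s\subseteq S$ and $\lambda_s\in S$. So the two vertex sets agree only if every row vertex in $X$ also has its in-neighbour column (the one with the weight-$0$ edge into it) in $S$. I expect this point to be the main obstacle. A row vertex whose unique in-neighbour is a column outside $X$ is unreachable inside $\Gamma[X]$, but as stated it still belongs to $\Gamma[X]$.

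I would handle this in one of two ways. The first is to argue that each row vertex has a unique in-neighbour, and that this in-neighbour is in $X$ whenever the row vertex is. For $R^\shift_{c,+}\in X$, with $C_{c+1}\in X$, we would want $C_c\in X$. That does not follow from the definition of $X$. The second option, which I would adopt, is to interpret $\Gamma[X]$ modulo such isolated sources, or equivalently to restrict to row vertices having both in- and out-neighbours in $X$. Such sources are never visited by a play starting in $X\cap\cC$, and a play starting at one of them moves immediately into $X$, so they do not affect winning. With this convention the edges correspond exactly, weights included, because both graphs inherit their weights from the same definition.

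Finally, for each $s\in S$ the strategy $\sigma^\opt(C_s)$ is an out-neighbour of $C_s$ in $X$, so it is a vertex of $\Gamma_{|S}$. Therefore $\sigma^\opt_{|S}$ is a legal positional strategy on $\Gamma_{|S}$. Every play in $\Gamma_{|S}$ that follows it is also a play in $\Gamma$ that follows $\sigma^\opt$ from a state in $X$. Such a play has negative mean loss, so by Observation~\ref{obs:restricted_game_graph} the column player wins from every state of $\Gamma_{|S}$.
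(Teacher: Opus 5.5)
There is a genuine gap. You never prove the isomorphism $\Gamma_{|S}\cong\Gamma[X]$, and the obstruction you identify does not actually occur. The missing ingredient is the column-player counterpart of your item (1). If $(C,r)$ is an edge of $\Gamma$ from a column vertex $C$ to a row vertex $r$, then $\nu(\Gamma,C)\le\nu(\Gamma,r)$. This holds because the column player may move to $r$ and then follow $\sigma^{\opt}$, and a single initial edge does not affect the mean loss.

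Every row vertex of $\Gamma$ has a unique in-neighbour: $C_c$ for $R^{\shift}_{c,+}$, $C_{c+1}$ for $R^{\shift}_{c+1,-}$, and $C_{\lambda_s}$ for $R^{\func}_s$. So $r\in X$ forces its in-neighbour into $S$. In particular, your assertion that $R^{\shift}_{c,+}\in X$ does not imply $C_c\in X$ is false, since $\nu(\Gamma,C_c)\le\nu(\Gamma,R^{\shift}_{c,+})=\nu(\Gamma,C_{c+1})<0$. Consequently every maximal path of $\Delta_P$ lies entirely inside or entirely outside $S$, and $L_s\subseteq S$ forces $\lambda_s\in S$.

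Combine this with your items (1) and (3). A row vertex then lies in $X$ if and only if its in-neighbour and all its out-neighbours lie in $S$. By your case analysis, this is exactly membership in $\cR(\Gamma_{|S})$. Both graphs are induced subgraphs of $\Gamma$ with inherited weights, so this gives the isomorphism as stated. Reinterpreting $\Gamma[X]$ modulo ``isolated sources'' proves a weaker statement than the lemma, and should be replaced by this argument.

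Two smaller points. In (3), the value of a row vertex is the maximum of the values of its successors, with no edge weights added. Adding them would break the sign argument, since the edge $R^{\shift}_{c+1,-}\to C_c$ has weight $+1$.

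Also, the step you worried about is exactly the one the paper's proof passes over. The paper asserts that every $s\in\cR(\Gamma)\setminus\cR(\Gamma_{|S})$ has an out-neighbour outside $S$. This does not follow from the definitions alone: a vertex $R^{\func}_s$ with $L_s\subseteq S$ and $\lambda_s\notin S$ would violate it. It is the in-neighbour inequality above that rules such vertices out.

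Otherwise your argument coincides with the paper's. Your proof of the ``therefore'' part is sound as written: $\sigma^{\opt}(C_c)$ has in-neighbour $C_c\in S$ and all out-neighbours in $S$, hence lies in $\cR(\Gamma_{|S})$.
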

\begin{proof}
    Regarding the isomorphism, since both are induced subgraphs of $\Gamma$, we only need to verify that $V(\Gamma_{|S}) = V(\Gamma[X])$.
    That is, that $V(\Gamma_{|S})$ is exactly the set $X$ of winning states for the column player in a game on $\Gamma$.
    In particular, we only need to focus on states of the row player as $\cC(\Gamma_{|S}) = S = X \cap \cC(\Gamma)$ by definition.

    Indeed, for $s \in \cR(\Gamma_{|S})$, we have $\nu(\Gamma, s) < 0$ as all the out-edges of $s$ end in vertices of $S$; thus, $s \in X$.
    On the other hand, a state $s \in \cR(\Gamma) \setminus \cR(\Gamma_{|S})$ has an out-edge to a state $s' \in \cC(\Gamma) \setminus S$.
    The existence of the edge $(s,s')$ proves $\nu(\Gamma, s) \geq \nu(\Gamma, s')$ since using the edge $(s,s')$ in the first round of the game $(\Gamma, s)$ is \emph{one of} row player's options (hence, $\nu(\Gamma,s')$ lower-bounds $\nu(\Gamma,s)$).
    Moreover, as $s' \not\in S$, we have $\nu(\Gamma, s') \geq 0$.
    Joining these inequalities, we conclude that $\nu(\Gamma, s) \geq 0$; thus, $s \not\in X$.
    
    The ``Therefore'' part is a direct consequence of Observation~\ref{obs:restricted_game_graph}.
\end{proof}

\begin{example}[Continuation of Example~\ref{ex:chi_bnd_more_interesting_matrices}]\label{ex:chi_bnd_more_interesting_set_S}
    With the setting from Example~\ref{ex:chi_bnd_more_interesting}, one may verify that all states of $\Gamma$ are winning for the column player.
    Thus, $X = V(\Gamma)$ and $S = \cC(\Gamma)$, and we have $\Gamma_{|S} = \Gamma$.
\end{example}

\begin{remark}\label{rem:system_assumption_might_be_violated}
    As mentioned above, the system $\widetilde{A}y \leq \widetilde{B}y$ does not necessarily satisfy \cref{assump:valid_moves_exist}.
    We first note that the part of \cref{assump:valid_moves_exist} requiring each row of $B$ to contain a finite entry is always satisfied.
    This is clear for the rows of $B^\shift$, while for the rows of $B^\func$ this follows from \cref{rem:each_discrete_clause_has_a_loop}.
    However, the other part of \cref{assump:valid_moves_exist} requiring
    that each column of $A$ contains a finite entry is not necessarily satisfied.
    This is a merely technical issue, which we address by the following correction.%
    \footnote{A similar technical correction is used in \cite{akian2012tropical}, see the discussion following Assumptions~2.1~and~2.2 in~\cite{akian2012tropical}.}
    
    For each $c \in [d]$, we add the trivial inequality $y_c \leq y_c$, expanding the matrices by $d$ rows and ensuring that Assumption~\ref{assump:valid_moves_exist} is satisfied.
    Such a modification has no influence on the solution set of the system.
    In particular, the modified system $A' y \leq B' y$ has no finite solution whenever $\widetilde{A}y \leq \widetilde{B}y$ has no one.
    
    The game digraph $\Gamma' = \Gamma(A', B')$ contains for each $c \in [d]$ an additional vertex $R'_c$ and two edges $C_c \mapsto R'_c, R'_c \mapsto C_c$ with weight $0$.
    However, when focusing on the behavior of an optimal strategy on the set of winning states for the column player, the additional vertices play no role.
    Indeed, let $X' \subseteq V(\Gamma')$ be the set of states of $\Gamma'$ winning for the column player (in a game on $\Gamma'$) and let $\sigma'$ be the witnessing strategy.
    Clearly, $\sigma'(C_c) \not= R'_c$ for each $C_c \in X'$ as otherwise $\nu(\Gamma', C_c) = 0$ because the token would only move between $C_c$ and $R'_c$.
    Therefore, each state $C_c \in S' = X' \cap \cC(\Gamma')$ has at least one out-neighbor other than $R'_c$.
    Thus, the additional vertices $R'_c$ become irrelevant once we pass to the digraph $\Gamma'_{|S'}$ and we may remove them (retaining the property that each vertex of the resulting game digraph has an out-neighbor).
\end{remark}

\subsubsection{Constructing the multidigraph $\cK$}\label{sssec:multidigraph_cQ}

Here we define the multidigraph $\cK$ on individual coordinates of the vertices $v^1, \dots, v^t$ of the directed path $Q$ whose edges capture the relations among the coordinates.
We refer to the vertices of $\cK$ as \emph{positions} to distinguish them from the vertices of $Q$.

\begin{definition}[Coordinate multidigraph $\cK$]\label{def:multidigraph_cK}
    We define the multidigraph $\cK$ as follows.
    We set
    \[
        V(\cK) = \{ v^\ell_c : c \in S, \ell \in [t]\}.
    \]
    The edges of $\cK$ come from two sources: the \emph{shift} edges for the positive literals of $P_{|S}$, and the \emph{functional} edges for the individual functional constraints from $Z_{|S}$.
    We define the following \emph{bunches} of edges
    \begin{align*}
        \Esh_{\ell, c, +}(\cK) &= \{v^\ell_c \mapsto v^{\ell-1}_{c+1}\} \quad &\text{for every positive literal $q_{c+1, c}$ of $P_{|S}$ and $\ell \in [2,t]$,}\\
        \Esh_{\ell, c+1, -}(\cK) &= \{v^\ell_{c+1} \mapsto v^{\ell+1}_c\} \quad &\text{for every positive literal $q_{c+1, c}$ of $P_{|S}$ and $\ell \in [t-1]$,}\\
        \Efn_{\ell, s}(\cK) &= \{v^\ell_{\lambda_s} \mapsto v^{\ell}_c : c \in L_s\} \quad &\text{for every $(L_s, \lambda_s) \in Z_{|S}$ and $\ell \in [t]$.}
    \end{align*}
    The edge-set of $\cK$ is the \emph{disjoint} union of all these bunches, creating a multidigraph (see Remark~\ref{rem:multidigraph} below).
    
    The \emph{level} of a position $u \in V(\cK)$ is the integer $\ell = \ell(u)$ such that $u = v^\ell_c$ for some $c \in S$.
    The \emph{weight} $w(e)$ of an edge $e$ from $u$ to $w$ is defined as $\ell(w) - \ell(u)$.
    The common source vertex of all edges in a bunch is the \emph{center} of the bunch.
\end{definition}

\begin{remark}\label{rem:multidigraph}
    Each bunch has a common source vertex, the center of the bunch, and distinct target vertices.
    Hence, within a bunch, an edge may be identified solely by its source and target vertex.
    However, two or more bunches may contain an edge with the same pair of source and target vertices.%
    \footnote{For example, let $Z$ contain the functional constraints $(\{2,3\},1)$ and $(\{3,4\},1)$. Then the bunches for both of these constraints contain an edge from $v^\ell_1$ to $v^\ell_3$ (for any given $\ell \in [t]$).}
    The disjoint union of bunches then creates a multidigraph, where each of these individual edges is present.
    Thus, an edge in $\cK$ may be identified by its source and target vertices, \emph{and} the bunch where it belongs.
\end{remark}

\begin{example}[Continuation of Example~\ref{ex:chi_bnd_more_interesting_set_S}]\label{ex:chi_bnd_more_interesting_bunches}
    With the setting from Example~\ref{ex:chi_bnd_more_interesting}, we eventually determine that $t = 3$.
    The multidigraph $\cK$ is then exactly the graph from the top of Figure~\ref{fig:tracking-ex1}.
    Each grey shift edge forms a singleton bunch, while the functional edges coming from the same functional constraint form a bunch at any given level.
    There are exactly 18 bunches in this digraph.
\end{example}

Our primary interest is to prove that we may express every coordinate of the vertex $v^2 \in V(Q)$ as a function of the vertex $v^1 \in V(Q)$.
This motivates the following inductive definition, which illuminates the significance of bunches of $\cK$.

\begin{definition}[Trackable positions]\label{def:trackable_vertices}
    Let $T$ be the minimal subset of $V(\cK)$ satisfying both:
    \begin{enumerate}[label=(\roman*)]
        \item all the positions at the first level of $V(\cK)$ belong to $T$,
        \item if there is a bunch $B$ centered in a position $u$ such that for each $(u,w) \in B$ it holds $w \in T$, then $u$ belongs to $T$.
    \end{enumerate}
    Members of $T$ are called \emph{trackable} positions.
\end{definition}

Note that we may construct the set of trackable positions $T$ by setting $T_0 = \{v_c^1 : c \in S\}$ and then iteratively adding positions that that fail to comply with the second property.
This construction implies that trackability of a position can be validated by a finite certificate, which we use in the following proof.

\begin{observation}\label{obs:trackable_vertices_are_function_of_first_level}
    If $u \in \cK$ is trackable, it can be expressed as a function of $v^1$.
\end{observation}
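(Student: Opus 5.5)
We argue by induction along the iterative construction of $T$ described just above: set $T_0=\{v^1_c : c\in S\}$, and let $T_{j+1}$ be $T_j$ together with every position $u$ that is the center of some bunch $B$ all of whose target positions lie in $T_j$. Since $V(\cK)$ is finite, $T=T_J$ for some $J$. We fix $G\in\dcY_{P,Z}\restriction S$ with vertex set $V$, and a subdigraph $Q$ of $G$ isomorphic to $\vec{P}_t$ on $v^1,\dots,v^t$. We interpret each position $v^\ell_c$ as the value of the $c$-th coordinate of the vertex $v^\ell$. We then prove, by induction on $j$, that for every $u\in T_j$ there is a function $g_u\colon\N^S\to\N$, depending only on $G$ (through the witnessing functions $\phi_s$) and on $u$, but not on $Q$, such that the value of $u$ equals $g_u(v^1)$.

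The base case is immediate: for $u=v^1_c$ we take $g_u$ to be the projection onto coordinate $c$.

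For the inductive step, let $u\in T_{j+1}\setminus T_j$, and let $B$ be a bunch centered at $u$ whose targets all lie in $T_j$. There are three kinds of bunch, and each gives $g_u$ directly.
\begin{enumerate}
\item If $B=\Esh_{\ell,c,+}(\cK)$ is the single edge $v^\ell_c\mapsto v^{\ell-1}_{c+1}$, then $(v^{\ell-1},v^\ell)$ is an edge of $G$. Hence it satisfies the literal $q_{c+1,c}$ of $P_{|S}$, so $v^\ell_c=v^{\ell-1}_{c+1}$, and we set $g_u:=g_{v^{\ell-1}_{c+1}}$.
\item If $B=\Esh_{\ell,c+1,-}(\cK)$, the argument is symmetric: the edge $(v^\ell,v^{\ell+1})$ gives $v^\ell_{c+1}=v^{\ell+1}_c$.
\item If $B=\Efn_{\ell,s}(\cK)$ with $(L_s,\lambda_s)\in Z_{|S}$, then $Z_{|S}$-functionality of $V$ provides a fixed $\phi_s$ with $v^\ell_{\lambda_s}=\phi_s\bigl(v^\ell_{|L_s}\bigr)$. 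Every $v^\ell_c$ with $c\in L_s$ is a target of $B$, so we set $g_u:=\phi_s\circ(g_w)_{w}$, where $w$ ranges over the targets of $B$.
\end{enumerate}
This completes the induction.

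The only point needing care is that each $g_u$ must be defined uniformly, that is, independently of the particular path $Q$. This is what allows the later lemma to define $f$ once and for all. Uniformity holds because the inductive construction refers only to positions, which are abstract pairs $(\ell,c)$, and to the fixed functions $\phi_s$; each equality used above holds for every copy of $\vec{P}_t$ in $G$. To make this airtight, we fix for each $u\in T$ a certificate: the first index $j$ with $u\in T_j$ together with a chosen bunch $B$. With these choices fixed, $g_u$ is well defined and does not depend on $Q$.
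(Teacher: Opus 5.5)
Your proof is correct and follows the paper's route: induction on the depth of a trackability certificate (your index $j$), with a coordinate projection at level $1$, the same expression copied along a single shift edge, and $\phi_s$ composed with the targets' expressions for a functional bunch. Fixing a certificate so that each $g_u$ does not depend on $Q$ makes explicit what the paper states only after the lemma that uses this observation, namely that there is a uniform ``blueprint'' expression.
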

\begin{proof}
    We proceed by induction on the nesting depth of a certificate that $u$ is trackable.
    As a base case, if $u$ is at level $1$, it can be expressed by the identity function of the corresponding coordinate.
    So, for the induction step, suppose that there is a bunch $B$ centered in $u$ such that for each $(u,w) \in B$ is the position $w$ trackable.
    If $B$ is a singleton bunch containing a shift edge to a position $w = f(v^1)$, where $f$ is from the induction hypothesis, then we take the same function for $u$ as $u = w = f(v^1)$.
    If $B$ is the bunch corresponding to a functional constraint $(L_s, \lambda_s)$, we apply the witness $\phi_s$ for $(L_s, \lambda_s)$ to the expressions of the out-neighbor positions.
    That is, let $f_c(v^1)$ be the expression for the out-neighbor position $w_c, c \in L_s$ from the induction hypothesis, then
    \[
        u 
        = \phi_s\big(w_c :c \in L_s\big)
        = \phi_s\big(f_c(v^1) : c \in L_s\big)
        , 
    \]
    as claimed. 
\end{proof}

\subsubsection{Relating $\cK$ and $\Gamma_{|S}$}\label{sssec:relating_cQ_and_Gamma|S}

We start by embedding $\cK$ into a more convenient infinite multidigraph $\Theta$ which captures the relations between coordinates of a \emph{doubly infinite path} induced by $P_{|S}$ on a $Z_{|S}$-functional vertex set.
The advantage of working with $\Theta$ compared to $\cK$ is that all levels are the same, so we do not have to treat the first and last level separately.

\begin{definition}[Coordinate multidigraph $\Theta$]
    We define a multidigraph $\Theta$ as follows.
    We set
    \[
        V(\Theta) = \{ v^\ell_c : c \in S, \ell \in \Z\}.
    \]
    The edge set of $\Theta$ is the disjoint union of the following bunches (see Remark~\ref{rem:multidigraph})
    \begin{align*}
        \Esh_{\ell, c, +}(\Theta) &= \{v^\ell_c \mapsto v^{\ell-1}_{c+1}\} \quad &\text{for every positive literal $q_{c+1, c}$ of $P_{|S}$ and $\ell \in \Z$,}\\
        \Esh_{\ell, c+1, -}(\Theta) &= \{v^\ell_{c+1} \mapsto v^{\ell+1}_c\} \quad &\text{for every positive literal $q_{c+1, c}$ of $P_{|S}$ and $\ell \in \Z$,}\\
        \Efn_{\ell, s}(\Theta) &= \{v^\ell_{\lambda_s} \mapsto v^{\ell}_c : c \in L_s\} \quad &\text{for every $(L_s, \lambda_s) \in Z_{|S}$ and $\ell \in \Z$.}
    \end{align*}
    The definitions of the level of a position of $\Theta$, the weight of an edge, and the center of a bunch are as in Definition~\ref{def:multidigraph_cK}.
\end{definition}

\begin{example}[Continuation of Example~\ref{ex:chi_bnd_more_interesting_bunches}]\label{ex:chi_bnd_more_interesting_Theta}
    For the setting from Example~\ref{ex:chi_bnd_more_interesting}, the multidigraph $\Theta$ is displayed in Figure~\ref{fig:Theta_ex1}.
\end{example}

\begin{figure}[ht!]
    \centering
        \begin{tikzpicture}[
		>= {Stealth[scale=1.25]}]
	
	\begin{scope}[shift={(5,0)}, scale=1]

		\tikzset{
			circnode/.style={
				draw,
				circle,
				minimum size=0.2cm,
				inner sep=0pt
			}
		}
		
			\def\xpos{1}
		\def\ypos{8.5}
		\draw (\xpos+0.15,\ypos-0.15) -- (\xpos-0.15,\ypos+0.15);
		\node at (\xpos+0.18,\ypos+0.18) {$c$};
		\node at (\xpos-0.18,\ypos-0.18) {$\ell$};

		\foreach \col in {1,...,5} {
			\node[font=\normalsize] at (1+\col, 8.5) {$\col$};
		}

		\foreach \row in {-2,-1,0,1,2}{
			\node[font=\normalsize] at (1, 5-\row) {$\row$};
		}
		
		\foreach \row in {-2,-1,0,1,2} {
			\foreach \col in {1,...,5} {
				\node[circnode
				] (s\row\col) at (1 + \col, 5-\row) {};
			}
		}
		
		\foreach \col in {1,...,5} {
			\node[font=\large] at (1+\col, 8) {$\vdots$};
			\node[font=\large] at (1+\col, 2.5) {$\vdots$};
		}
		
		\foreach \row in {-2,...,1} {
			\pgfmathtruncatemacro{\nextrow}{\row+1}
			\draw[->,LightGray,bend left=15]  (s\row2)     to (s\nextrow1);
			\draw[->,LightGray,bend left=15]  (s\nextrow1) to (s\row2);
			
			\draw[->,LightGray,bend left=15]  (s\row5)     to (s\nextrow4);
			\draw[->,LightGray,bend left=15]  (s\nextrow4) to (s\row5);
			
			\draw[->,LightGray,bend left=15]  (s\row4)     to (s\nextrow3);
			\draw[->,LightGray,bend left=15]  (s\nextrow3) to (s\row4);
		}
		
		\foreach \row in {-2,...,2} {
			\draw[->,LightBlue]   (s\row5) to[bend right=30] (s\row2);
			\draw[->,LightBlue]   (s\row5) to[bend right=30] (s\row3);
			\draw[->,LightOrange] (s\row1) to[bend left=60]  (s\row3);
		}
	\end{scope}
	
\end{tikzpicture}
    \caption{The multidigraph $\Theta$ from Example~\ref{ex:chi_bnd_more_interesting_Theta}. The vertex in row $\ell$ and column $c$ is $v_{c}^\ell$.}
    \label{fig:Theta_ex1}
\end{figure}

Clearly, the identity is an embedding of $\cK$ to $\Theta$.
While trivial, let us explicitly mention that the embedding respects levels, weights of edges, and membership to bunches.

We want to relate the (multi)digraphs $\Theta$ and $\Gamma_{|S}$.
To do so, we define another multidigraph $\Phi$ that will serve as an intermediary.
In short, we define $\Phi$ as the second power of $\Gamma_{|S}$ restricted to the vertices from $\cC(\Gamma_{|S})$.
More formally, we have $V(\Phi) = \cC(\Gamma_{|S})$ and an edge $ee'$ from $u \in \cC(\Gamma_{|S})$ to $v \in \cC(\Gamma_{|S})$ for every directed $2$-edge path $(e,e')$ from $u$ to $v$, where $e = u \mapsto r$ and $e' = r \mapsto v$ for some $r \in \cR(\Gamma_{|S})$.
Note that the middle vertex $r$ connecting $e$ and $e'$ is necessarily from $\cR(\Gamma_{|S})$ as $\Gamma_{|S}$ is bipartite.
We set the weight of $ee' \in E(\Phi)$ to $w(e) + w(e')$. We remark that this graph is described in \cite[Chapter 9.4]{Joswig2021}, where it is referred to as the \emph{reduced graph}.

However, as in the case of $\Gamma_{|S}$, we may define $\Phi$ directly, referring only to the clause $P_{|S}$ and constraints $Z_{|S}$.

\begin{definition}[Multidigraph $\Phi$]\label{def:multidigraph_Phi}
    We set $V(\Phi) = \cC(\Gamma_{|S})$.
    
    That is, the multidigraph $\Phi$ has the following bunches
    \begin{align*}
        \Esh_{c, +}(\Phi) &= \{C_c \mapsto C_{c+1} \text{ with weight $-1$}\} \qquad\qquad  &\text{for every positive literal $q_{c+1, c}$ of $P_{|S}$,}\\
        \Esh_{c+1, -}(\Phi) &= \{C_{c+1} \mapsto C_c \text{ with weight $+1$}\} \quad &\text{for every positive literal $q_{c+1, c}$ of $P_{|S}$,}\\
        \Efn_{s}(\Phi) &= \{C_{\lambda_s} \mapsto C_{c} : c \in L_s \text{ with weight $0$}\} \quad &\text{for every $(L_s,\lambda_s) \in Z_{|S}$.}
    \end{align*}
    The edge-set of $\Phi$ is the disjoint union of these bunches.
    Again, we say that the common source vertex of all edges in a bunch is the center of the bunch.
\end{definition}

\begin{remark}\label{rem:bunches_in_Phi}
    The edges of $\Phi$ can be viewed as pairs of edges $ee'$ from $\Gamma_{|S}$.
    The partition of $E(\Phi)$ into bunches is defined per the vertex of $\cR(\Gamma_{|S})$ joining $e$ and $e'$.
\end{remark}

\begin{example}[Continuation of Example~\ref{ex:chi_bnd_more_interesting_Theta}]\label{ex:chi_bnd_more_interesting_Phi}
    For the setting from Example~\ref{ex:chi_bnd_more_interesting}, the multidigraph $\Phi$ is displayed in Figure~\ref{fig:second_power-ex1}.
\end{example}

\begin{figure}[ht!]
    \centering
    	\begin{tikzpicture}[
		dot/.style={circle,fill,inner sep=1.8pt},
		weight/.style={font=\scriptsize, fill=white, inner sep=1pt},
		>= {Stealth[scale=1.25]}
		]

        \tikzset{
			circnode/.style={
				draw,
				circle,
				minimum size=0.2cm,
				inner sep=0pt
			}
		}

		\def\hsep{1.5cm}
		
		\node[circnode,label=right:$C_{1}$] (Cx1) at ({-2*\hsep},0) {};
		\node[circnode,label=right:$C_{2}$] (Cx2) at ({-1*\hsep},0) {};
		\node[circnode,label=right:$C_{3}$] (Cx3) at ({0*\hsep},0) {};
		\node[circnode,label=right:$C_{4}$] (Cx4) at ({1*\hsep},0) {};
		\node[circnode,label=right:$C_{5}$] (Cx5) at ({2*\hsep},0) {};
		
	\def\looseN{2}   
	\def\looseP{1}   
	\def\looseR{2.2}   
	
	\def\pos{0.5}
	
	\newcommand{\edgegray}[4]{%
		\ifnum#3=-1
		\draw[LightGray, ->, out=290, in=250, looseness=\looseN]
		(#1) to node[pos=#4, weight] {$-1$} (#2);
		\else
		\draw[LightGray, ->, out=230, in=310, looseness=\looseP]
		(#1) to node[pos=#4, weight] {$+1$} (#2);
		\fi
	}
		
		\newcommand{\edgetop}[4]{%
			\draw[->, out=120, in=60, looseness=\looseR] (#1) to (#2);
		}
		
		\newcommand{\oedgetop}[4]{%
			\draw[LightOrange, ->, out=60, in=120] (#1) to (#2);
		}
		
		\newcommand{\bedgetop}[4]{%
			\draw[LightBlue, ->, out=120, in=60] (#1) to (#2);
		}
		
		\edgegray{Cx1}{Cx2}{-1}{\pos}
		\edgegray{Cx2}{Cx1}{+1}{\pos}
		\edgegray{Cx3}{Cx4}{-1}{\pos}
		\edgegray{Cx4}{Cx3}{+1}{\pos}
		\edgegray{Cx4}{Cx5}{-1}{\pos}
		\edgegray{Cx5}{Cx4}{+1}{\pos}
		
		\bedgetop{Cx5}{Cx2}{0}{\pos}
		\bedgetop{Cx5}{Cx3}{0}{\pos}
		\oedgetop{Cx1}{Cx3}{0}{\pos}
		
	\end{tikzpicture}
    \caption{The multidigraph $\Phi$ from Example~\ref{ex:chi_bnd_more_interesting_Phi}. Only non-zero weights are displayed. }
    \label{fig:second_power-ex1}
\end{figure}

The relation between $\Gamma_{|S}$ and $\Phi$ is clear from the definition of $\Phi$ (or rather the discussion preceding Definition~\ref{def:multidigraph_Phi}).
To relate $\Theta$ and $\Phi$, we show that the following `level-squeezing' mapping from $V(\Theta) \cup E(\Theta)$ to $V(\Phi) \cup E(\Phi)$ is a covering projection (see Section~\ref{sec:covering-projections}).

\begin{definition}[Covering projection $\zeta$]\label{def:zeta}
    Consider a function $\zeta: V(\Theta) \cup E(\Theta) \to V(\Phi) \cup E(\Phi)$  such that the vertices are mapped as
    \[
        v_c^\ell \text{ maps to } C_c \quad \text{for every $c \in S$ and $\ell \in \Z$},  
    \]
    The mapping of edges is defined per bunches. 
    We require that
    \begin{align*}
        \Esh_{\ell,c,+}(\Theta) & \text{ maps to } \Esh_{c,+}(\Phi) \quad \text{for every $c \in S$ for which it makes sense and $\ell \in \Z$.}\\
        \Esh_{\ell,c,-}(\Theta) & \text{ maps to } \Esh_{c,-}(\Phi) \quad \text{for every $c \in S$ for which it makes sense and $\ell \in \Z$,}\\
        \Efn_{\ell,s}(\Theta) & \text{ maps to } \Efn_{s}(\Phi) \quad \text{for every $(L_s,\lambda_s) \in Z_{|S}$ and $\ell \in \Z$.}
    \end{align*}
    Specifically, the mapping of from $\Efn_{\ell,s}(\Theta)$ to $\Efn_{s}(\Phi)$ is as follows
    \[
        v^\ell_{\lambda_s} \mapsto v^{\ell}_c 
        \text{ maps to }
        C_{\lambda_s} \mapsto C_c 
        \quad
        \text{for every $c \in L_s$ and $\ell \in \Z$}.
    \]
\end{definition}

Note that the mapping between bunches of shift edges needs not to be specified further as each such bunch contains only a single edge.

We say that $\zeta$ \emph{respects bunches} if each bunch of $\Theta$ is bijectively mapped to a bunch of $\Phi$.

\begin{observation}\label{lem:level_squeeze_is_covering_projection}
    The mapping $\zeta$ is a covering projection from $\Theta$ to $\Phi$ that respects bunches and preserves edge-weights.
\end{observation}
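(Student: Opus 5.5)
The plan is to verify the three defining properties directly from the definitions of $\Theta$, $\Phi$ and $\zeta$: that $\zeta$ respects endpoints, that it is bijective on the out-edges at every position, and that it preserves bunches and weights. Everything reduces to a bunch-by-bunch comparison. The key structural point is that every edge of $\Theta$ and of $\Phi$ belongs to exactly one bunch. The bunches centered at $v^\ell_c$ in $\Theta$ are indexed by exactly the same data as the bunches centered at $C_c$ in $\Phi$: the positive literals $q_{c,c-1}$ or $q_{c+1,c}$ of $P_{|S}$ involving $c$, and the constraints $(L_s,\lambda_s)\in Z_{|S}$ with $\lambda_s=c$. The only extra parameter on the $\Theta$ side is the level $\ell$.

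\textbf{Endpoints.} First I would check that $\zeta$ respects endpoints. A shift edge $v^\ell_c\mapsto v^{\ell-1}_{c+1}$ in $\Esh_{\ell,c,+}(\Theta)$ is sent to $C_c\mapsto C_{c+1}$, and $\zeta$ maps the endpoints $v^\ell_c$ and $v^{\ell-1}_{c+1}$ to $C_c$ and $C_{c+1}$. The $-$ shift edges are handled symmetrically. For functional edges, $v^\ell_{\lambda_s}\mapsto v^\ell_c$ is sent to $C_{\lambda_s}\mapsto C_c$ by definition. In each case the image edge runs between the images of the endpoints.

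\textbf{Local bijectivity and bunches.} Next, fix a position $v^\ell_c$. Its out-edges form the disjoint union of the bunches centered at it, one for each index listed above with $\ell$ fixed. The out-edges of $C_c$ in $\Phi$ are the disjoint union of the bunches with the same indices. By Definition~\ref{def:zeta}, $\zeta$ maps each bunch of the first family onto the corresponding bunch of the second. Shift bunches are singletons on both sides, and for a functional bunch the map $c'\mapsto c'$ ($c'\in L_s$) is a bijection between $\Efn_{\ell,s}(\Theta)$ and $\Efn_s(\Phi)$. Since the index sets coincide and the unions are disjoint (Remark~\ref{rem:multidigraph}), $\zeta$ restricts to a bijection $E^+(v^\ell_c)\to E^+(C_c)$. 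This simultaneously shows that $\zeta$ respects bunches.

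\textbf{Weights.} Finally I would compare weights. An edge of $\Theta$ has weight equal to the difference of levels of its endpoints: $-1$ for $+$ shift edges, $+1$ for $-$ shift edges, and $0$ for functional edges. These are exactly the weights of the corresponding bunches in Definition~\ref{def:multidigraph_Phi}.

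The only mildly delicate point is the multigraph bookkeeping. Parallel edges from different bunches must be kept distinct, and the bijectivity argument must go through the bunch indexing rather than through endpoint pairs. I expect no genuine obstacle here, since this is a direct unwinding of definitions.
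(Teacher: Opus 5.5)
Your proposal is correct and takes the same approach as the paper, which justifies the observation only by ``a direct comparison of definitions of $\Theta$ and $\Phi$''. Your bunch-by-bunch check of endpoints, local bijectivity (indexed by bunches rather than by endpoint pairs, which correctly handles parallel edges) and weights is exactly that comparison written out in full.
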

\begin{proof}
    Follows from a direct comparison of definitions of $\Theta$ and $\Phi$.
\end{proof}

\subsubsection{Utilizing the correspondence}\label{sssec:utilizing_correspondence}

Having established a relationship between $\Theta$ and $\Gamma_{|S}$ via $\Phi$, we may modify the digraph $\Gamma_{|S}$ and transfer the modifications to the multidigraph $\Theta$ and further to the multidigraph $\cK$ of our main interest.

Recall the winning positional strategy $\sigma_{|S}^\opt$ for the column player from Lemma~\ref{lem:all_states_in_Gamma|S_are_winning}.
Since the strategy $\sigma_{|S}^\opt$ is positional, we can capture it by a subdigraph $\Gamma_{|S}^\opt$ of $\Gamma_{|S}$ on the same vertex set.

\begin{definition}[Optimal game digraph $\Gamma_{|S}^\opt$]
    The spanning subdigraph $\Gamma_{|S}^\opt$ of $\Gamma_{|S}$ is defined by setting
    \begin{align*}
        E(\Gamma_{|S}^\opt) &= \{ C_c \mapsto \sigma_{|S}^\opt(C_c) : c \in S \} \cup \Big( (\cR(\Gamma_{|S}) \times \cC(\Gamma_{|S}))  \cap E(\Gamma_{|S}) \Big).
    \end{align*}
    In words, for each vertex $C_c \in \cC(\Gamma_{|S})$, we keep in $\Gamma_{|S}^\opt$ only the edge $C_c \mapsto \sigma_{|S}^\opt(C_c)$, which is the outgoing edge that the column player chooses according to $\sigma_{|S}^\opt$ when the current state of the game is $C_c$.
    The edges from vertices of $\cR(\Gamma_{|S})$ remain in $\Gamma_{|S}^\opt$ the same as in $\Gamma_{|S}$.
    Edge-weights are inherited from $\Gamma_{|S}$.
\end{definition}

We propagate this restriction further to the multidigraph $\Phi$ by considering its subdigraph%
\footnote{A consequence of Observation~\ref{obs:single_bunch_remains_in_opt} is that $\Phi^\opt$ does not have parallel edges.}
$\Phi^\opt$ of $\Phi$ taken as the second power of $\Gamma_{|S}^\opt$ restricted to $\cC(\Gamma_{|S}^\opt)$.

\begin{definition}[Optimal digraph $\Phi^\opt$]\label{def:multidigraph_Phi^opt}
    Representing each edge of $\Phi$ as a pair of edges from $\Gamma_{|S}$, the spanning subdigraph $\Phi^\opt$ of $\Phi$ is obtained by setting.
    \[
        E(\Phi^\opt) = \{ee' \in E(\Phi) : e \in E(\Gamma_{|S}^\opt)\}.
    \]
    Edge-weights and partition of edges to bunches in $\Phi^\opt$ are inherited from $\Phi$.
\end{definition}

Note that for $ee' \in E(\Phi)$, the edge $e' \in E(\Gamma_{|S})$ from $\cR(\Gamma_{|S})$ to $\cC(\Gamma_{|S})$ \emph{always} belongs to $E(\Gamma_{|S}^\opt)$ (as the edges from $\cR(\Gamma_{|S})$ to $\cC(\Gamma_{|S})$ remain unchanged in $\Gamma_{|S}^\opt$).
Thus, $\Phi^\opt$ is indeed the second power of $\Gamma_{|S}^\opt$ restricted to $\cC(\Gamma_{|S}^\opt)$.

Finally, we the transfer this restriction of edges to $\Theta$, creating a subdigraph $\Theta^\opt$, via the covering projection~$\zeta$.

\begin{definition}[Optimal coordinate digraph $\Theta^\opt$]
    The spanning subdigraph $\Theta^\opt$ of $\Theta$ is obtained by setting
    \[
        E(\Theta^\opt) = \{e \in E(\Theta) : \zeta(e) \in E(\Phi^\opt) \}.
    \]
    Edge-weights and partition of edges to bunches in $\Theta^\opt$ are inherited from $\Theta$.
\end{definition}

We accordingly obtain the digraph $\cK^\opt$, a spanning subdigraph of $\cK$, as the subdigraph of $\Theta^\opt$ induced by $V(\cK)$.

\begin{example}[Continuation of Example~\ref{ex:chi_bnd_more_interesting_Phi}]
\label{ex:chi_bnd_more_interesting_restricted_graphs}
For the setting from Example~\ref{ex:chi_bnd_more_interesting}, the digraphs $\Gamma_{|S}^\opt, \Phi^\opt$, and $\Theta^\opt$ are displayed in Figure~\ref{fig:combinedOPT_example}.
\end{example}

\begin{figure}[ht!]
    \centering 
 

\begin{tikzpicture}[  dot/.style={circle,fill,inner sep=1.8pt},
	weight/.style={font=\scriptsize, fill=white, inner sep=1pt},
	>= {Stealth[scale=1.25]}]  

\tikzset{
	circnode/.style={
		draw,
		circle,
		minimum size=0.20cm,
		inner sep=0pt
	}
}

\begin{scope}[shift={(-1,5.25)}, scale=1]
 
 \def\Ybot{-2cm}
 \def\Ymid{0cm}
 \def\Ytop{2cm}
 
 \def\hsep{1.5cm}
 
 \node[circnode,label=right:$C_{1}$] (Cx1) at ({-2*\hsep},\Ymid) {};
 \node[circnode,label=right:$C_{2}$] (Cx2) at ({-1*\hsep},\Ymid) {};
 \node[circnode,label=right:$C_{3}$] (Cx3) at ({0*\hsep},\Ymid) {};
 \node[circnode,label=right:$C_{4}$] (Cx4) at ({1*\hsep},\Ymid) {};
 \node[circnode,label=right:$C_{5}$] (Cx5) at ({2*\hsep},\Ymid) {};
 
 \node[circnode,label=below:$R^{\textsf{shift}}_{1,+}$] (Ls1) at ({-2.5*\hsep},\Ybot) {};
 \node[circnode,label=below:$R^{\textsf{shift}}_{2,-}$] (Ls2) at ({-1.5*\hsep},\Ybot) {};
 \node[circnode,label=below:$R^{\textsf{shift}}_{3,+}$] (Ls3) at ({-0.5*\hsep},\Ybot) {};
 \node[circnode,label=below:$R^{\textsf{shift}}_{4,-}$] (Ls4) at ({0.5*\hsep},\Ybot) {};
 \node[circnode,label=below:$R^{\textsf{shift}}_{4,+}$] (Ls5) at ({1.5*\hsep},\Ybot) {};
 \node[circnode,label=below:$R^{\textsf{shift}}_{5,-}$] (Ls6) at ({2.5*\hsep},\Ybot) {};
 
 \node[circnode,label=above:$R^{\textsf{func}}_{1}$] (Rf1) at ({-0.5*\hsep},\Ytop) {};
 \node[circnode,label=above:$R^{\textsf{func}}_{2}$] (Rf2) at ({0.5*\hsep},\Ytop) {};
 
 \newcommand{\edge}[4]{%
 	\ifnum#3=0
 	\draw[->] (#1) -- (#2);
 	\else
 	\draw[->] (#1) -- node[pos=#4, weight] {$#3$} (#2);
 	\fi
 }
 
 \newcommand{\gedge}[4]{%
 	\ifnum#3=0
 	\draw[LightGray,->] (#1) -- (#2);
 	\else
 	\draw[LightGray,->] (#1) -- node[pos=#4, weight] {$#3$} (#2);
 	\fi
 }
 
 \def\pos{0.25}
 
 \gedge{Cx2}{Ls2}{0}{\pos}
 \gedge{Cx3}{Ls3}{0}{\pos}
 \gedge{Cx4}{Ls5}{0}{\pos}
 
 \gedge{Ls1}{Cx2}{-1}{\pos}
 \gedge{Ls2}{Cx1}{+1}{\pos}
 \gedge{Ls3}{Cx4}{-1}{\pos}
 \gedge{Ls4}{Cx3}{+1}{\pos}
 \gedge{Ls5}{Cx5}{-1}{\pos}
 \gedge{Ls6}{Cx4}{+1}{\pos}
 
 \draw[->, LightBlue] (Cx5) -- (Rf1);
 \draw[->, LightBlue] (Rf1) -- (Cx2);
 \draw[->, LightBlue] (Rf1) -- (Cx3);
 
 \draw[->, LightOrange] (Cx1) -- (Rf2);
 \draw[->, LightOrange] (Rf2) -- (Cx3);
 
	\end{scope}

	\begin{scope}[shift={(5,0)}, scale=1]

		\foreach \row in {-2,-1,0,1,2} {
			\foreach \col in {1,...,5} {
				\node[circnode] (s\row\col) at (1 + \col, 5-\row) {};
			}
		}
		
		\foreach \col in {0,1,...,5} {
			\node[font=\large] at (1+\col, 8) {$\vdots$};
			\node[font=\large] at (1+\col, 2.5) {$\vdots$};
		}

		\def\xpos{1}
		\def\ypos{8.5}
  \draw (\xpos+0.15,\ypos-0.15) -- (\xpos-0.15,\ypos+0.15);
\node at (\xpos+0.18,\ypos+0.18) {$c$};
\node at (\xpos-0.18,\ypos-0.18) {$\ell$};

		\foreach \col in {1,...,5} {
		\node[font=\large] at (1+\col, 8.5) {$\col$};
		}

		\foreach \row in {-2,-1,0,1,2}{
		\node[font=\large] at (1, 5-\row) {$\row$};
	}

		\foreach \row in {-2,...,1} {
			\pgfmathtruncatemacro{\nextrow}{\row+1}
			\draw[->,LightGray,bend left=15]  (s\row2)     to (s\nextrow1);

			\draw[->,LightGray,bend left=15]  (s\nextrow4) to (s\row5);

			\draw[->,LightGray,bend left=15]  (s\nextrow3) to (s\row4);
		}
		
		\foreach \row in {-2,...,2} {
			\draw[->,LightBlue]   (s\row5) to[bend right=30] (s\row2);
			\draw[->,LightBlue]   (s\row5) to[bend right=30] (s\row3);
			\draw[->,LightOrange] (s\row1) to[bend left=60]  (s\row3);
		}
	\end{scope}

\draw[
line width=1.4pt,
-{Stealth[length=7pt,width=9pt]}
] (8.5,1.5)
.. controls (8.5,.5) and  (8,0) ..
node[right=8pt, font=\LARGE] {$\zeta$}
(6,-0.75);

\draw[
line width=1.4pt,
-{Stealth[length=7pt,width=9pt]}
] (-2.5,2)
.. controls (-2.5,.5) and  (-2,0) ..
node[left=8pt, font=\large] {square +}
(0,-0.75);
\node[font=\large] at (-3.5,-.1) {restriction to $\cC$};

\node[font=\large] at (-4,8.5) {$\Gamma_{|S}^\opt$};
\node[font=\large] at (3,-.5) {$\Phi^\opt$};
\node[font=\large] at (5,8.5) {$\Theta^\opt$};

\begin{scope}[shift={(3,-2)}, scale=1]

\def\hsep{1.5cm}

\node[circnode,label=right:$C_{1}$] (Cx1) at ({-2*\hsep},0) {};
\node[circnode,label=right:$C_{2}$] (Cx2) at ({-1*\hsep},0) {};
\node[circnode,label=right:$C_{3}$] (Cx3) at ({0*\hsep},0) {};
\node[circnode,label=right:$C_{4}$] (Cx4) at ({1*\hsep},0) {};
\node[circnode,label=right:$C_{5}$] (Cx5) at ({2*\hsep},0) {};

\def\looseN{2}   
\def\looseP{1}   
\def\looseR{2.2}   

\def\pos{0.5}

\newcommand{\edgegray}[4]{%
	\ifnum#3=-1
	\draw[LightGray, ->, out=290, in=250, looseness=\looseN]
	(#1) to node[pos=#4, weight] {$-1$} (#2);
	\else
	\draw[LightGray, ->, out=230, in=310, looseness=\looseP]
	(#1) to node[pos=#4, weight] {$+1$} (#2);
	\fi
}

\newcommand{\edgetop}[4]{%
	\draw[->, out=120, in=60, looseness=\looseR] (#1) to (#2);
}

\newcommand{\oedgetop}[4]{%
	\draw[LightOrange, ->, out=60, in=120] (#1) to (#2);
}

\newcommand{\bedgetop}[4]{%
	\draw[LightBlue, ->, out=120, in=60] (#1) to (#2);
}


\edgegray{Cx2}{Cx1}{+1}{\pos}
\edgegray{Cx3}{Cx4}{-1}{\pos}

\edgegray{Cx4}{Cx5}{-1}{\pos}

\bedgetop{Cx5}{Cx2}{0}{\pos}
\bedgetop{Cx5}{Cx3}{0}{\pos}
\oedgetop{Cx1}{Cx3}{0}{\pos}

\end{scope}
	
\end{tikzpicture}
    \caption{The graphs $\Gamma_{|S}^\opt$, $\Phi^\opt$, and $\Theta^\opt$ from \cref{ex:chi_bnd_more_interesting_restricted_graphs}.}
    \label{fig:combinedOPT_example}
\end{figure}

Note that by keeping only those edges of $\Theta$ whose $\zeta$-image is present in $\Phi^\opt$, the restriction $\zeta^\opt : \Theta^\opt \to \Phi^\opt$ of $\zeta$ is a covering projection.
Obviously, $\zeta^\opt$ still respects bunches and preserves edge-weights.
Moreover, all the described restrictions to subgraphs behave well with bunches in the sense that each bunch is either entirely removed or entirely remains in the subdigraph.

\begin{observation}\label{obs:single_bunch_remains_in_opt}
    Each vertex of $\Phi^\opt$ is the center of a unique bunch of $\Phi^\opt$.
    The same holds true for each position of $\Theta^\opt$.
    Moreover, each position of $\cK^\opt$ is the center of \emph{at most one} bunch, and \emph{exactly one} in case of a position at an \emph{internal level} of $\cK$ (i.e.\ any level except $1$ and~$t$). 
\end{observation}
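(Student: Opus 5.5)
The plan is to prove the statement first for $\Phi^\opt$, using the correspondence with $\Gamma_{|S}^\opt$ from Remark~\ref{rem:bunches_in_Phi}, and then lift it to $\Theta^\opt$ and $\cK^\opt$ through the covering projection $\zeta$.

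\textbf{Step 1: $\Phi^\opt$.} Fix a vertex $C_c$ of $\Phi^\opt$. By Remark~\ref{rem:bunches_in_Phi}, the bunches of $\Phi$ centred at $C_c$ correspond bijectively to the out-neighbours $r \in \cR(\Gamma_{|S})$ of $C_c$ in $\Gamma_{|S}$. The bunch for $r$ consists of the edges $ee'$ with $e = C_c \mapsto r$, where $e'$ ranges over the out-edges of $r$.

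In $\Gamma_{|S}^\opt$ the vertex $C_c$ keeps exactly one out-edge, namely $e_0 = C_c \mapsto \sigma^\opt_{|S}(C_c)$. All out-edges of row vertices are kept. By Definition~\ref{def:multidigraph_Phi^opt}, a bunch therefore survives in $\Phi^\opt$ entirely if its first edge is $e_0$, and vanishes entirely otherwise. Hence exactly one bunch survives, namely the one for $r_0 = \sigma^\opt_{|S}(C_c)$.

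This surviving bunch is nonempty. By Lemma~\ref{lem:all_states_in_Gamma|S_are_winning} and Remark~\ref{rem:system_assumption_might_be_violated}, every vertex of $\Gamma_{|S}$, in particular $r_0$, has at least one out-edge. Recall also that $\sigma^\opt_{|S}$ maps $S$ into $\cR(\Gamma_{|S})$.

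\textbf{Step 2: $\Theta^\opt$.} Take a position $v^\ell_c$ of $\Theta$. By Observation~\ref{lem:level_squeeze_is_covering_projection}, $\zeta$ is a covering projection that respects bunches. So $\zeta$ maps the bunches of $\Theta$ centred at $v^\ell_c$ bijectively, bunch to bunch, onto the bunches of $\Phi$ centred at $C_c$.

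One should check from Definition~\ref{def:zeta} that this correspondence is indeed a bijection between bunches at the given level $\ell$. This holds because each bunch type of $\Phi$ has exactly one copy per level $\ell \in \Z$, and its centre lies at level $\ell$.

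Since $E(\Theta^\opt)$ is the $\zeta$-preimage of $E(\Phi^\opt)$ and bunches are mapped bijectively, a bunch of $\Theta$ survives entirely or not at all, exactly as its image does. By Step 1, exactly one bunch centred at $v^\ell_c$ survives.

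\textbf{Step 3: $\cK^\opt$.} The graph $\cK^\opt$ is the subgraph of $\Theta^\opt$ induced on levels $1,\dots,t$. Each bunch of $\cK$ is the bunch of $\Theta$ with the same index, provided that bunch lies within levels $[t]$.

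\emph{Internal levels.} At a level $\ell \in [2,t-1]$, the unique surviving bunch of $\Theta^\opt$ centred at $v^\ell_c$ has all its targets at levels $\ell-1$, $\ell$ or $\ell+1$. These all lie in $[t]$, so the bunch is present in $\cK$ (Definition~\ref{def:multidigraph_cK}) and hence in $\cK^\opt$. Every other bunch centred there is absent already in $\Theta^\opt$. So there is exactly one bunch.

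\emph{Levels $1$ and $t$.} The bunches of $\cK$ centred at $v^\ell_c$ are a subset of those of $\Theta$, since shift bunches pointing outside $[t]$ are omitted. Hence at most one of them survives.

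\textbf{Main difficulty.} The only point needing care is the bookkeeping: the bunch structure must be preserved as a bijection at each centre and not just as a map on edges. This is what turns the statement for $\Phi^\opt$ into statements for $\Theta^\opt$ and $\cK^\opt$. It also needs the nonemptiness of the surviving bunch, which relies on the out-edge assumption for row vertices.
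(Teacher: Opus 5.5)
Your proposal is correct and follows the same route as the paper's proof. The unique out-edge of $C_c$ in $\Gamma_{|S}^\opt$ yields a unique bunch at each vertex of $\Phi^\opt$, this transfers to $\Theta^\opt$ through the bunch-respecting covering projection $\zeta$, and restricting to levels $1,\dots,t$ keeps or drops whole bunches, with internal levels keeping everything because edges only join equal or consecutive levels. Your extra check that the surviving bunch is nonempty, because every row vertex has an out-edge, makes explicit a point the paper leaves implicit but later relies on when it says that every position of $\Theta^\opt$ has an out-neighbour.
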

\begin{proof}
    For $\Phi^\opt$, this follows from the fact that each vertex $C_c \in \cC(\Gamma_{|S})$ has a unique outgoing edge in $\Gamma_{|S}^\opt$, and the definition of bunches in $\Phi$ (see Remark~\ref{rem:bunches_in_Phi}).
    This property of $\Phi^\opt$ is transferred to $\Theta^\opt$ by the covering projection $\zeta^\opt$.
    
    Since $\cK^\opt$ is the restriction of $\Theta^\opt$ to $V(\cK)$, some edges adjacent to $V(\cK)$ might be removed.
    Note, however, that each bunch from $\Theta^\opt$ is either entirely removed or entirely belongs to $\cK^\opt$.
    Thus, each position of $\cK^\opt$ is the center of at most one bunch.
    Moreover, since $\Theta^\opt$ has edges only within a level or between consecutive levels, all the edges of $\Theta^\opt$ adjacent to an internal level of $\cK^\opt$ belong to $\cK^\opt$.
    Thus, the internal vertices are centers of exactly one bunch.
\end{proof}

The uniqueness of bunches given by Observation~\ref{obs:single_bunch_remains_in_opt} allows for simplification of the defining condition of trackable positions (cf. Definition~\ref{def:trackable_vertices}).

\begin{observation}\label{obs:trackable_iff_all_walks_in_cQOpt_reach_first_level}
    A position $u \in V(\cK) = V(\cK^\opt)$ is trackable if either
    \begin{enumerate}[label=(\roman*)]
        \item the level of $u$ is $1$, or
        \item $u$ has at least one out-neighbor in $\cK^\opt$ and all of them are trackable.
    \end{enumerate}
    Unfolding the inductive definition, this is equivalent to saying that all maximal walks from the position $u$ in the digraph $\cK^\opt$ enter the first level of $\cK^\opt$.
\end{observation}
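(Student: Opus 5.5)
This observation has two parts. First, the recursive condition (i)/(ii) is equivalent to Definition~\ref{def:trackable_vertices}. Second, trackability is equivalent to a statement about maximal walks in $\cK^\opt$.

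\textbf{First part.} I would compare the recursive condition with Definition~\ref{def:trackable_vertices}, read relative to $\cK^\opt$. There, trackability of a non-first-level position $u$ requires a bunch centered at $u$ all of whose targets are trackable. By Observation~\ref{obs:single_bunch_remains_in_opt}, in $\cK^\opt$ each position is the center of at most one bunch. Hence the out-edges of $u$ in $\cK^\opt$ are exactly the edges of that unique bunch, when it exists. Bunches are nonempty: shift bunches are singletons, and functional bunches are nonempty by Remark~\ref{rem:each_discrete_clause_has_a_loop}. So the existential condition ``some bunch has all targets trackable'' collapses to ``$u$ has at least one out-neighbour and all out-neighbours are trackable''. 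Formally, I would argue by induction on the stages $T_0\subseteq T_1\subseteq\cdots$ of the iterative construction. Both definitions produce the least fixed point of the same monotone operator, so they yield the same set $T$.

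\textbf{Second part.} A walk is maximal if it is infinite, or finite and ending at a sink. For $\Rightarrow$, I would induct on the stage at which $u$ enters $T$. If $u$ is at level $1$, every walk from $u$ has already entered level $1$. Otherwise, $u$ has an out-neighbour, so every maximal walk from $u$ takes a first step to some trackable $w$ added at an earlier stage. By induction, the remainder of the walk enters level $1$.

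For $\Leftarrow$, I would argue by contrapositive. Suppose $u\notin T$. Then $u$ is not at level $1$, and either $u$ is a sink or some out-neighbour $w$ of $u$ is not in $T$. Iterating this choice builds a walk avoiding $T$ and hence never reaching level $1$. The walk is either finite, ending at a sink outside level $1$, or infinite. In both cases it is a maximal walk that never enters the first level, which contradicts the hypothesis.

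\textbf{Main obstacle.} The delicate point is the careful use of least-fixed-point semantics. In the $\Leftarrow$ direction, an infinite walk circling among untrackable positions must count as a counterexample, not as a vacuously satisfied witness. One also has to check that the unique-bunch property really makes ``all out-neighbours'' equal to ``all targets of one bunch''; this is exactly where Observation~\ref{obs:single_bunch_remains_in_opt} is needed.
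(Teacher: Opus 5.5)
Your proof is correct. The first part and the forward direction match the paper's argument. The forward direction says that a trackable position has every maximal walk in $\cK^\opt$ entering level~$1$; it goes by induction on the stage at which $u$ enters $T$, with Observation~\ref{obs:single_bunch_remains_in_opt} reducing ``some bunch with all targets trackable'' to ``at least one out-neighbour, all trackable''. Your remark that bunches are nonempty (Remark~\ref{rem:each_discrete_clause_has_a_loop}) makes explicit a point the paper uses without comment.

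Where you go beyond the paper is the converse. The paper's proof stops after the induction on certificate depth and declares the equivalence proved. It never shows that a position all of whose maximal walks enter level~$1$ is trackable. That converse is exactly the direction invoked in the proof of Lemma~\ref{lem:first_determines_second}: every maximal walk from level~$2$ ends at level~$1$, hence the level-$2$ positions are trackable. Your contrapositive construction of a maximal walk avoiding $T$ supplies it, using only that $T$ is closed under rule~(ii). Because you allow infinite walks, your argument does not need the acyclicity of $\cK^\opt$, which Lemma~\ref{lem:Theta_and_cQ_acyclic} establishes only later. Given acyclicity, induction on the length of a longest walk from $u$ would also work.

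One point worth stating explicitly concerns the set you get by reading Definition~\ref{def:trackable_vertices} relative to $\cK^\opt$. It is contained in the set that definition gives for $\cK$, because every bunch of $\cK^\opt$ is an entire bunch of $\cK$. This inclusion justifies the ``is trackable if'' in the statement. It also justifies the later appeal to Observation~\ref{obs:trackable_vertices_are_function_of_first_level}.
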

\begin{proof}
    The inductive characterization of trackable vertices clearly agrees with Definition~\ref{def:trackable_vertices}.
    In particular, the second conditions of the observation and Definition~\ref{def:trackable_vertices} are equivalent under the assumption that each position of $\cK^\opt$ is a center of at most one bunch by Observation~\ref{obs:single_bunch_remains_in_opt}.
    
    The unfolded characterization can be proved by induction on the nesting depth of a certificate that $u$ is trackable.
    Indeed, the base case that the $\ell(u) = 1$ is trivial as each walk from $u$ starts at the first level.
    As for the induction step, each maximal walk $W$ starting from $u$ continues to an out-neighbor $w$ of $u$, which exists by~(ii).
    Since the position $w$ is trackable by~(ii), the induction hypothesis says that $W$ enters the first level of $\cK^\opt$ (or more precisely the walk obtained from $W$ by removing the first vertex, which is a maximal walk starting from $w$).
    This proves the unfolded characterization.
\end{proof}

\subsubsection{Finishing the proof}\label{sssec:finishing_proof}

In view of Observation~\ref{obs:trackable_iff_all_walks_in_cQOpt_reach_first_level}, our task is reduced to proving that if $t$, the number of levels of $\cK$, is sufficiently large, all maximal walks in $\cK^\opt$ from any position $v^2_c, c \in [d]$ reach the first level.
This will be an easy consequence of the following series of lemmas on weights of walks in the weighted digraphs $\Theta^\opt$ and $\Gamma_{|S}^\opt$.

\begin{lemma}\label{lem:weight_of_closed_walk}
    Let $W$ be a closed walk in $\Gamma_{|S}^\opt$, then $w(W) < 0$.
\end{lemma}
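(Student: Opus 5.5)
The plan is to argue by contradiction from Lemma~\ref{lem:all_states_in_Gamma|S_are_winning}. Let $W$ be a closed walk in $\Gamma_{|S}^\opt$ with $w(W) \geq 0$, starting and ending at some vertex $s_0$. First we may assume $s_0 \in \cC(\Gamma_{|S})$: $\Gamma_{|S}$ is bipartite, so $W$ has even length and contains a column vertex, and rotating the closed walk to start there does not change its weight. Consider the game $(\Gamma_{|S}, s_0)$ in which the column player follows $\sigma^\opt_{|S}$. The row player then chooses the infinite realization $W^\infty$ obtained by repeating $W$ forever.

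The key point is that $W^\infty$ is consistent with the column player's strategy. Every edge of $W$ leaving a column vertex $C_c$ lies in $\Gamma_{|S}^\opt$. Hence it is exactly the edge $C_c \mapsto \sigma^\opt_{|S}(C_c)$. Since the strategy is positional, at each visit to $C_c$ the column player indeed moves along this edge. The row player's moves along $W$ are edges of $\Gamma_{|S}$ from row vertices, all of which are legal moves. So $W^\infty$ is a legitimate realization of the game from $s_0$ under $\sigma^\opt_{|S}$.

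Next we compute the mean loss. Let $|W| = n$. The partial sums along $W^\infty$ satisfy
\[
\sum_{k=1}^{qn+r} w(s_{k-1},s_k) = q\,w(W) + O(1).
\]
Therefore $\nu_C(W^\infty) = w(W)/n \geq 0$. Alternatively, Observation~\ref{obs:negative_mena_loss_implies_divergence} rules this out directly, since the sums do not tend to $-\infty$.

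This contradicts Lemma~\ref{lem:all_states_in_Gamma|S_are_winning}. That lemma, via Theorem~\ref{thm:positional_strategy} and the definition of optimality, says $\sigma^\opt_{|S}$ guarantees mean loss at most $\nu(\Gamma_{|S}, s_0) < 0$ against every row strategy, in particular against this one.

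The main subtlety I expect is justifying that $\sigma^\opt_{|S}$, which was optimal in $\Gamma$, guarantees a negative mean loss in $\Gamma_{|S}$ against arbitrary row play. Here we use the identification $\Gamma_{|S} \cong \Gamma[X]$ together with Observation~\ref{obs:restricted_game_graph}. The row player has no edges leaving $X$, so every play consistent with the strategy in $\Gamma_{|S}$ is also a play in $\Gamma$ from a winning state. The strategy therefore ensures mean loss at most $\nu(\Gamma,s_0)<0$. If needed, we also invoke Remark~\ref{rem:system_assumption_might_be_violated} to ensure that the dummy vertices $R'_c$ play no role.
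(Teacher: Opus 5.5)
Your proposal is correct and follows the same route as the paper. If $w(W)\ge 0$, the row player can repeat $W$ forever; this play is consistent with the positional strategy $\sigma^\opt_{|S}$, because $\Gamma^\opt_{|S}$ keeps exactly the strategy's edges at column vertices, and it gives mean loss $w(W)/|W|\ge 0$, contradicting Lemma~\ref{lem:all_states_in_Gamma|S_are_winning}. You merely spell out details the paper's two-line proof leaves implicit: the mean-loss computation and the transfer of the guarantee from $\Gamma$ to $\Gamma[X]\cong\Gamma_{|S}$.
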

\begin{proof}
    If $w(W) \geq 0$, the row player has a non-losing strategy in the game $(\Gamma_{|S}^\opt, s)$ for every $s \in \cR(\Gamma_{|S}) \cap V(W)$ by following the edges of $W$.
    This is a contradiction with Lemma~\ref{lem:all_states_in_Gamma|S_are_winning} stating that all states of $\Gamma_{|S}^\opt$ are winning for the column player.  
\end{proof}

\begin{lemma}\label{lem:weight_of_simple_path}
    Let $W$ be a simple path in $\Gamma_{|S}^\opt$, then $w(W) \leq |S| - 1$.
\end{lemma}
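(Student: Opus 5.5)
The plan is to bound the weight of a simple path in $\Gamma_{|S}^\opt$ by looking at how weights are distributed along it. By Definition~\ref{def:game_graph}, every edge out of a column vertex $C_c$ has weight $0$. Every edge out of a row vertex has weight in $\{-1,0,+1\}$, and weight $+1$ occurs only on the edges $R^\shift_{c+1,-}\mapsto C_c$. So only row-to-column edges contribute, and each contributes at most $+1$.

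A simple path $W$ alternates between $\cC$ and $\cR$. Since it is simple, it visits each column vertex at most once. It therefore uses at most $|S|$ column vertices, because $\cC(\Gamma_{|S}) = S$.

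Each row-to-column edge $r\mapsto C_c$ on $W$ enters a distinct column vertex. The path visits at most $|S|$ column vertices, and its first vertex receives no entering edge. Hence if $W$ starts at a column vertex, the number of row-to-column edges on $W$ is at most $|S|-1$. This gives $w(W)\le |S|-1$.

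If $W$ starts at a row vertex, the count could reach $|S|$, and this case is the only place needing care. Here I would use Lemma~\ref{lem:weight_of_closed_walk} to exclude the bad situation. Suppose every row-to-column edge of $W$ had weight $+1$ and $W$ entered all $|S|$ column vertices. Each $+1$ edge $R^\shift_{c+1,-}\mapsto C_c$ descends in the coordinate index, so the $+1$ edges alone cannot form such a long chain without the path also passing through other weight-$0$ or weight-$-1$ edges. A cleaner route is to treat the first edge separately. The initial vertex $r$ is a row vertex whose out-edges all go to $\cC$, so $W$ enters at most $|S|$ column vertices. To save one unit, note that the strategy edge $C\mapsto\sigma^\opt(C)$ from the last column vertex of $W$ leads back into $\cR(\Gamma_{|S})$. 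From there one can close up into a cycle, and negativity of closed walks forces some compensating $-1$.

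If this compensation argument proves awkward, I would instead settle for the cruder bound that the positive contribution is at most the number of column vertices visited, minus one for the start. I would also check whether the intended statement implicitly has $W$ starting in $\cC$, since paths in $\Phi^\opt$ are the relevant objects for the later argument. In that reading the first case already gives $w(W)\le|S|-1$ directly.

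The main obstacle is exactly this off-by-one in the row-start case. I would first try to show directly that a row vertex of type $R^\shift_{c+1,-}$ at the start contributes its $+1$ while the path then cannot reach $C_{c+1}$ again with a $+1$ edge. If that fails, I would fall back on the $\cC$-start convention.
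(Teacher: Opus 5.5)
Your argument for paths that start at a column vertex is exactly the paper's proof. Column-to-row edges have weight $0$, and row-to-column edges have weight at most $+1$. A simple path that starts in $\cC(\Gamma_{|S})$ and meets at most $|S|$ column vertices has at most $|S|-1$ row-to-column edges. The paper does not treat the row-start case separately: its claim that the weight grows ``only between visits of vertices of $\cC(\Gamma_{|S})$'' tacitly assumes $W$ starts in $\cC$. In the proof of \cref{lem:max_level_of_walks_from_second_level} the bound is only needed for walks obtained by unfolding walks of $\Phi^\opt$, and these begin at a column vertex. So your fallback reading is what is actually used downstream.

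For the statement as literally written, however, your proposal leaves the row-start case open, and the compensation route you sketch would not close it. Two things go wrong:
\begin{itemize}
\item The start vertex of $W$ need not have any in-edge in $\Gamma_{|S}^\opt$, because no column vertex's strategy may point to it. So there may be no way to close $W$ into a cycle.
\item Even when a closing walk $W''$ exists, \cref{lem:weight_of_closed_walk} only gives $w(W)<-w(W'')$. This is useless without an independent bound on $w(W'')$.
\end{itemize}
The missing observation is local. Let $W$ start at a row vertex and enter column vertices $C_{c_1},\dots,C_{c_j}$ in order, with $j\le|S|$. If some entering edge has weight at most $0$, then $w(W)\le j-1\le|S|-1$. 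Otherwise every entering edge is of the form $R^\shift_{c_i+1,-}\mapsto C_{c_i}$. For $i\ge 2$ the predecessor of $R^\shift_{c_i+1,-}$ on $W$ is $C_{c_{i-1}}$. The only in-neighbour of $R^\shift_{c_i+1,-}$ is $C_{c_i+1}$, so $c_{i-1}=c_i+1$. Moreover, the start vertex $R^\shift_{c_1+1,-}$ exists in $\Gamma_{|S}$ only if $q_{c_1+1,c_1}$ is a literal of $P_{|S}$, so $c_1+1\in S$. Hence $c_1+1>c_1>\dots>c_j$ are $j+1$ distinct elements of $S$, and $w(W)=j\le|S|-1$. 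Your remark that $+1$ edges descend is the right idea. The point is not that a descending $+1$ chain cannot be long; it is that the starting row vertex already uses up the coordinate $c_1+1\in S$, which the chain never visits.
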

\begin{proof}
    The maximal weight of an edge from $\cR(\Gamma_{|S}^\opt)$ to $\cC(\Gamma_{|S}^\opt)$ is $1$, while all edges from $\cR(\Gamma_{|S}^\opt)$ to $\cR(\Gamma_{|S}^\opt)$ have weight $0$.
    Therefore, the weight of $W$ may increase by $1$ only between visits of vertices of $\cC(\Gamma_{|S})$.
    Since the path is simple and $|\cC(\Gamma_{|S})| = |S|$, there is at most $|S|-1$ of such occasions.
    Therefore, $W$ may accumulate weight of at most $|S|-1$.
\end{proof}

\begin{observation}\label{obs:total_weight_is_level_difference}
    Let $W = (u_0, \dots, u_n)$ be a walk in $\Theta^\opt$, then $w(W) = \ell(u_n) - \ell(u_0)$.
\end{observation}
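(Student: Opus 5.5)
The statement is Observation~\ref{obs:total_weight_is_level_difference}. The plan is a telescoping argument.

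The weight of a walk is the sum of the weights of its edges. Each edge $e$ of $\Theta^\opt$ is inherited from $\Theta$ with its weight unchanged. By the definition of $\Theta$ (following Definition~\ref{def:multidigraph_cK}), an edge from a position $u$ to a position $w$ has weight $w(e) = \ell(w) - \ell(u)$.

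It is worth checking this against the three kinds of bunches. A shift edge $v^\ell_c \mapsto v^{\ell-1}_{c+1}$ has weight $-1$. A shift edge $v^\ell_{c+1} \mapsto v^{\ell+1}_c$ has weight $+1$. A functional edge $v^\ell_{\lambda_s} \mapsto v^\ell_c$ has weight $0$. These values also agree with the weights in $\Phi$, as Observation~\ref{lem:level_squeeze_is_covering_projection} asserts, but that agreement is not needed here.

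For a walk $W = (u_0, \dots, u_n)$ with edges $e_1, \dots, e_n$, where $e_i$ goes from $u_{i-1}$ to $u_i$, we get
\[
w(W) = \sum_{i=1}^n w(e_i) = \sum_{i=1}^n \bigl(\ell(u_i) - \ell(u_{i-1})\bigr) = \ell(u_n) - \ell(u_0).
\]
The last equality is telescoping.

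The only subtle point is the multigraph structure. Parallel edges from different bunches have the same endpoints, so they carry the same weight, and the argument goes through unchanged.
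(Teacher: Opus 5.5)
Your proof is correct and matches the paper's: it observes that each edge of $\Theta^\opt$ inherits from $\Theta$ the weight $\ell(\text{target}) - \ell(\text{source})$, and then telescopes the sum along the walk. The bunch-by-bunch check and the remark on parallel edges are harmless extras.
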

\begin{proof}
    Indeed, $w(W) = \sum_{i=1}^n w(u_{i-1}, u_i) = \sum_{i=1}^n \ell(u_i) - \ell(u_{i-1}) = \ell(u_n) - \ell(u_0)$.
\end{proof}

\begin{lemma}\label{lem:max_level_of_walks_from_second_level}
    There is $t \leq |S| + 1$ such that all walks starting from the second level of $\Theta^\opt$ reach at most the level $t$ of $\Theta^\opt$.
\end{lemma}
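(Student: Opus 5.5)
The plan is to transfer walks in $\Theta^\opt$ down to $\Gamma_{|S}^\opt$ and bound their weights there. Take any walk $W=(u_0,\dots,u_n)$ in $\Theta^\opt$ with $\ell(u_0)=2$. By Observation~\ref{obs:total_weight_is_level_difference}, the level of its endpoint is $2+w(W)$. So it suffices to show $w(W)\le |S|-1$, which gives the bound with $t:=|S|+1$. Applying the same argument to every prefix of $W$ controls every intermediate level as well, since each prefix is itself a walk from the second level.

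To bound $w(W)$, project $W$ by the covering projection $\zeta^\opt$ to a walk $\zeta^\opt(W)$ in $\Phi^\opt$. Since $\zeta^\opt$ preserves edge weights, this walk has the same weight. Each edge of $\Phi^\opt$ is a pair $ee'$ of edges of $\Gamma_{|S}^\opt$ with $w(ee')=w(e)+w(e')$. Expanding each such edge yields a walk $W'$ in $\Gamma_{|S}^\opt$ with $w(W')=w(W)$.

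Next, decompose $W'$ in the standard way. Repeatedly excise closed subwalks: whenever a vertex repeats, remove the closed walk between the two occurrences. By Lemma~\ref{lem:weight_of_closed_walk}, each excised closed walk has negative weight, so removing it strictly increases the weight. The process ends in a simple path $W''$ in $\Gamma_{|S}^\opt$ with $w(W')\le w(W'')$. Lemma~\ref{lem:weight_of_simple_path} then gives $w(W'')\le |S|-1$, and hence $w(W)\le|S|-1$.

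The main point needing care is the identification of weights and levels. The covering projection and the edge expansion must preserve weights exactly; this holds by Observation~\ref{lem:level_squeeze_is_covering_projection} and the definition of $\Phi$. The cycle excision only ever removes closed walks, which are exactly the pieces that Lemma~\ref{lem:weight_of_closed_walk} controls.

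Two small checks remain. First, Lemma~\ref{lem:weight_of_simple_path} is stated for paths that may start in either part of the bipartition; a path starting in $\cC$ still gains at most $+1$ per visit to $\cC$ after the first, so the count of $|S|-1$ is unaffected. Second, $t\le|S|+1$ is exactly the bound claimed.
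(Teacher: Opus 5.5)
Your proposal is correct and follows essentially the same route as the paper: you project the walk via $\zeta^\opt$ to $\Phi^\opt$, unfold it into a walk of equal weight in $\Gamma_{|S}^\opt$, and bound that weight using \cref{lem:weight_of_closed_walk,lem:weight_of_simple_path} together with \cref{obs:total_weight_is_level_difference}. Your explicit excision of closed subwalks leaves a single simple path that starts in $\cC$, which is a tidier rendering of the paper's ``prefix path, cycles, suffix path'' decomposition and correctly handles the one place where the starting vertex matters.
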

\begin{proof}
    Let $W = (u_0, u_1, \dots)$ be a walk in $\Theta^\opt$ and consider the walk $\zeta^\opt(W) = (\zeta^\opt(u_0), \zeta^\opt(u_1), \dots)$ in $\Phi^\opt$.
    Since $\Phi^\opt$ is the second power of the digraph $\Gamma_{|S}^\opt$, see Definition~\ref{def:multidigraph_Phi^opt}, we may unfold the edges of $\Phi^\opt$ to pairs of edges of $\Gamma_{|S}^\opt$.
    Hence, we unfold the walk $\zeta^\opt(W)$ to a walk $W'$ in the digraph $\Gamma_{|S}^\opt$.
    
    Let $W_n$ be the prefix $(u_0, \dots, u_n)$ of $W$ and $W'_n$ be the corresponding prefix of $W'$ (consisting of $2n$ edges).
    Since both steps of the transition from $W$ to $W'$ preserve weights,%
    \footnote{Due to the definition of weights of $\Phi^\opt$ via $\Gamma_{|S}^\opt$ and the fact that $\zeta^\opt$ is weight-preserving.}
    we have $w(W_n) = w(W'_n)$.
    
    Decomposing $W'$ into a prefix simple path, union of cycles, and a suffix simple path, Lemmas~\ref{lem:weight_of_closed_walk}~and~\ref{lem:weight_of_simple_path} imply that $\max_n w(W_n) = w(W'_n) \leq |S| - 1$.
    Therefore, by \cref{obs:total_weight_is_level_difference}, if the walk $W$ starts at a position at the second level, it reaches at most level $t \leq (|S|- 1) + 2 = |S| + 1$.
\end{proof}

\begin{lemma}\label{lem:Theta_and_cQ_acyclic}
    The digraph $\Theta^\opt$, and consequently $\cK^\opt$, is acyclic.
\end{lemma}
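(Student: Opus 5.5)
We argue by contradiction via the covering projections. Suppose $\Theta^\opt$ contains a directed cycle $W=(u_0,u_1,\dots,u_n=u_0)$ with $n\ge 1$. Since it is a closed walk, \cref{obs:total_weight_is_level_difference} gives $w(W)=\ell(u_n)-\ell(u_0)=0$.

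Next we push $W$ down. By \cref{lem:level_squeeze_is_covering_projection}, $\zeta^\opt\colon\Theta^\opt\to\Phi^\opt$ is a covering projection preserving edge-weights, so $\zeta^\opt(W)$ is a closed walk in $\Phi^\opt$ of weight $0$. It is closed because $\zeta^\opt(u_n)=\zeta^\opt(u_0)$, and it is nontrivial, having $n\ge1$ edges.

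By \cref{def:multidigraph_Phi^opt}, every edge $ee'$ of $\Phi^\opt$ is a two-edge path in $\Gamma_{|S}^\opt$ whose weight is $w(e)+w(e')$. Unfolding $\zeta^\opt(W)$ edge by edge therefore gives a closed walk $W'$ in $\Gamma_{|S}^\opt$ with $2n$ edges and $w(W')=0$. This contradicts \cref{lem:weight_of_closed_walk}, which says every closed walk in $\Gamma_{|S}^\opt$ has negative weight. Hence $\Theta^\opt$ is acyclic.

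For $\cK^\opt$: it is the subdigraph of $\Theta^\opt$ induced by $V(\cK)$, so any cycle in $\cK^\opt$ would already be a cycle in $\Theta^\opt$. It is therefore acyclic as well.

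The only point needing care is \cref{lem:weight_of_closed_walk}, specifically that it covers arbitrary closed walks and not just simple cycles. If only the simple-cycle version were available, we would decompose the closed walk $W'$ into simple cycles. Its weight is the sum of their weights, each negative, so the total is still negative. Everything else is direct bookkeeping through the weight-preserving maps.
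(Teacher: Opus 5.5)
Your proof is correct and is the same argument as the paper's: a directed cycle in $\Theta^\opt$ has weight $0$ by the level-difference observation. Pushing it through $\zeta^\opt$ and unfolding the edges of $\Phi^\opt$ gives a closed walk of weight $0$ in $\Gamma_{|S}^\opt$, contradicting \cref{lem:weight_of_closed_walk}; your simple-cycle fallback is unnecessary, since that lemma is already stated and proved for arbitrary closed walks.
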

\begin{proof}
    Using the same transformation of a walk $W$ from $\Theta^\opt$ to a walk $W'$ from $\Gamma_{|S}^\opt$ as above, we see that a cycle in $\Theta^\opt$ corresponds to a closed walk of weight $0$ in $\Gamma_{|S}^\opt$.
    This is a contradiction with Lemma~\ref{lem:weight_of_closed_walk}.
\end{proof}

Finally, we are ready to prove Lemma~\ref{lem:first_determines_second}.

\begin{proof}[Proof of Lemma~\ref{lem:first_determines_second}]
    Let $S$ be the set from Lemma~\ref{lem:all_states_in_Gamma|S_are_winning} and $t$ be from Lemma~\ref{lem:max_level_of_walks_from_second_level}.
    Denote the set of maximal walks in $\cK^\opt$ from the second level by $\cW$ and the set of infinite walks in $\Theta^\opt$ from the second level by $\cW'$.
    Note that every walk from $\cW$ is finite by Lemma~\ref{lem:Theta_and_cQ_acyclic}, but it can be extended to a walk from $\cW'$, which follows from the fact that each position in $\Theta^\opt$ has an out-neighbor due to Observation~\ref{obs:single_bunch_remains_in_opt}.
    By Observation~\ref{obs:single_bunch_remains_in_opt}, if a position of $\cK^\opt$ does not have out-edges, then it lies at the first or last level of $\cK^\opt$.
Hence, by maximality, the last position of a walk $W \in \cW$ lies either at the first or last level of $\cK^\opt$.
    We want to refute the second possibility.
    Recall that the number of levels of $\cK^\opt$ is $t$; hence, we want to refute that last position of $W$ lies at level $t$.

    Consider a walk $W \in \cW$ and its arbitrary extension $W' \in \cW'$.
    If the last position of $W$ lies at level $t$, the next vertex in $W'$, which lies outside of $\cK^\opt$ due to maximality of $W$, is at level $t+1$.
    However, this is a contradiction with the choice of $t$ from Lemma~\ref{lem:max_level_of_walks_from_second_level}.
    
    Therefore, the last position of every $W \in \cW$ lies at level $1$.
    By \cref{obs:trackable_iff_all_walks_in_cQOpt_reach_first_level}, this implies that all the positions at the second level of $\cK$ are trackable.
    Thus, by Observation~\ref{obs:trackable_vertices_are_function_of_first_level}, there is a function $f$ such that $f(v^1) = v^2$, which concludes the proof.
\end{proof}

\subsubsection{What changes if the min-plus system has no finite solution}\label{sssec:what_if_minplus_fails}

In the case that the min-plus system $\widehat{A}x \geq \widehat{B}x$ has no finite solution, the overall structure of the proof is the same with the following modifications.
The following min-plus counterpart of \cref{lem:first_determines_second} asserts that in an appropriate restriction of the digraph, the last vertex of a sufficiently long directed path determines the second last vertex.

\begin{lemma}\label{lem:last_determines_second_last}
    Suppose that the min-plus system $\widehat{A}x \geq \widehat{B}x$ has no finite solution.
    Then, there exists a non-empty set $S \subseteq [d]$ and an integer $t \leq |S|+1$ such that for every $G \in \dcY_{P,Z} \restriction S$, there exists a function $f: \N^S \to \N^S$ that satisfies the following: 
    whenever $Q$ is a subdigraph of $G$ on vertices $v^1, \dots, v^t$ that is isomorphic to $\vec{P}_t$, we have $v^{t-1} = f(v^t)$.
\end{lemma}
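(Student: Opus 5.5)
The plan is to mirror the proof of Lemma~\ref{lem:first_determines_second} verbatim, with the roles of ``first'' and ``last'' (equivalently, the direction of levels) reversed and the max-plus game replaced by its min-plus dual. Concretely, I will first pass to the max-plus reformulation of $\widehat{A}x \geq \widehat{B}x$ via $x \mapsto -x$ (Section~\ref{sec:min_plus_prelims}). By Observation~\ref{obs:tropical_systems_are_the_same}, the resulting game digraph $\Gamma_{\min}(\widehat{A},\widehat{B})$ has the same underlying bipartite graph as $\Gamma(P,Z)$ of Definition~\ref{def:game_graph}, but with all weights negated: the shift edges $R^\shift_{c,+}\mapsto C_{c+1}$ get weight $+1$, the edges $R^\shift_{c+1,-}\mapsto C_c$ get weight $-1$, and functional edges keep weight $0$. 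Corollary~\ref{cor:min_plus_no_solution_implies_positive_mean_payoff} then supplies a nonempty set $X$ of states winning for the column player (now with $\rho>0$), together with an optimal positional strategy. I set $S = X\cap\cC$ and verify, exactly as in Lemma~\ref{lem:all_states_in_Gamma|S_are_winning} with inequalities reversed, that $\Gamma_{\min}[X]\cong \Gamma_{\min}(P_{|S},Z_{|S})$ and that every state there is winning; the fix of Remark~\ref{rem:system_assumption_might_be_violated} applies unchanged.

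Next I will rebuild $\cK$, $\Theta$, $\Phi$, the covering projection $\zeta$, and the optimal subdigraphs $\Gamma^\opt_{|S},\Phi^\opt,\Theta^\opt,\cK^\opt$ with exactly the same definitions. The key point to check is sign bookkeeping. An edge of $\Theta$ has level-weight $\ell(w)-\ell(u)$, which is $-1$ on $\Esh_{\cdot,c,+}$ and $+1$ on $\Esh_{\cdot,c+1,-}$. This is precisely the \emph{negative} of the min-plus game weight. Hence $\zeta$ is a bunch-respecting covering projection that negates weights, and $w_\Theta(W) = -w_{\Gamma_{\min}}(W')$ for corresponding walks. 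The analogues of Lemmas~\ref{lem:weight_of_closed_walk} and~\ref{lem:weight_of_simple_path} become: every closed walk in $\Gamma^\opt_{|S}$ has positive game weight, since otherwise the row player could force a nonpositive mean payoff; and every simple path has game weight at least $-(|S|-1)$. Translating back, every walk in $\Theta^\opt$ starting at level $t-1$ stays at level at least $t-1-(|S|-1)$, and $\Theta^\opt$ is acyclic. Observation~\ref{obs:single_bunch_remains_in_opt} holds verbatim.

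Finally, I will redefine trackable positions in $\cK$ with the base case being level $t$ instead of level $1$. The analogue of Observation~\ref{obs:trackable_vertices_are_function_of_first_level} then shows that trackable positions are functions of $v^t$, and Observation~\ref{obs:trackable_iff_all_walks_in_cQOpt_reach_first_level} becomes ``all maximal walks in $\cK^\opt$ reach level $t$.'' Choosing $t = |S|+1$, a maximal walk from level $t-1$ ends at a position with no out-bunch, which must lie at level $1$ or level $t$. Ending at level $1$ would mean its extension in $\Theta^\opt$ reaches level $0 < t-1-(|S|-1) = 1$, which is impossible. Hence every such walk reaches level $t$, and $v^{t-1}=f(v^t)$ follows.

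The main obstacle is getting the sign conventions right. One has to confirm that the min-plus game graph, as defined through negation in Section~\ref{sec:min_plus_prelims}, really assigns the negated weights to the same edges, so that ``column player wins'' translates to walks in $\Theta^\opt$ drifting \emph{downward} in level (toward $t$) rather than upward. I would verify this first on Example~\ref{ex:shift_graph_tropical_system} before writing the general argument.
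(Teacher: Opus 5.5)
Your plan follows the paper's route: choose $S$ via Corollary~\ref{cor:min_plus_no_solution_implies_positive_mean_payoff}, reuse $\cK,\Theta,\Phi,\zeta$, base trackability at level $t$, take $t=|S|+1$, and exclude maximal walks from level $t-1$ that end at level $1$. However, the sign bookkeeping you single out as the crux is wrong as stated. The min-plus game graph is \emph{not} $\Gamma(P,Z)$ with negated weights. Passing to $A'=-\widehat{A}$, $B'=-\widehat{B}$ negates once, and the definition of $\Gamma_{\min}$ negates again. So $\Gamma_{\min}(\widehat{A},\widehat{B})$ has column-to-row edges $(j,i)$ of weight $-\widehat{A}_{ij}$ and row-to-column edges $(i,j)$ of weight $\widehat{B}_{ij}$. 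By Observation~\ref{obs:tropical_systems_are_the_same} this is precisely $\Gamma_{\max}(\widetilde{A},\widetilde{B})\cong\Gamma(P,Z)$ with the \emph{same} weights; this is the chain of isomorphisms the paper uses. You can check it on Example~\ref{ex:shift_graph_tropical_system}: the row $x_2\ge 1+x_1$ gives $C_2\mapsto R$ of weight $0$ and $R\mapsto C_1$ of weight $+1$. That is $R^{\shift}_{2,-}\mapsto C_1$ with weight $+1$, exactly as in Definition~\ref{def:game_graph}. Consequently $\zeta$ \emph{preserves} weights, and $w_\Theta(W)=w_{\Gamma_{\min}}(W')$.

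This is not cosmetic. Suppose your identity $w_\Theta(W)=-w_{\Gamma_{\min}}(W')$ were true. Then positive closed walks and the lower bound $-(|S|-1)$ on simple paths in $\Gamma^{\opt}_{|S}$ would translate into two facts about $\Theta^{\opt}$: negative level change around cycles, and level change \emph{at most} $|S|-1$ along walks. That is just the max-plus situation of Lemma~\ref{lem:max_level_of_walks_from_second_level}. It caps how far up a walk from level $t-1$ can go but says nothing about how far down. So it cannot stop a maximal walk from level $t-1$ from ending at level $1$, and it would point back to the ``first determines second'' conclusion instead.

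Your sentence ``translating back, every walk in $\Theta^{\opt}$ starting at level $t-1$ stays at level at least $t-1-(|S|-1)$'' therefore does not follow from your own premises. It is true only because the weights are in fact preserved, so two sign slips cancel. Replace the negation claim by the correct isomorphism $\Gamma_{\min}(\widehat{A},\widehat{B})\cong\Gamma(P,Z)$. Then positive closed walks mean upward level drift (toward level $t$) in $\Theta^{\opt}$, and the bound $-(|S|-1)$ gives exactly the level bound you state. Acyclicity follows as before, and the rest of your argument goes through as written.
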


Having \cref{lem:last_determines_second_last} at hand, the proof proceeds as before with the observation that the chromatic number of the class $\dcY_{P,Z}$ is bounded as the graphs in $\dcY_{P,Z} \restriction S$ exclude the reverse of $\vec{\Lambda}_t$ as a subdigraph (cf. \cref{obs:Lambda_not_subdigraph}).

The proof of \cref{lem:last_determines_second_last} requires the following modifications.
To determine the set $S$, we use Corollary~\ref{cor:min_plus_no_solution_implies_positive_mean_payoff}.
In particular, we claim that the game digraphs $\Gamma_{\min}(A,B)$ from Section~\ref{sec:min_plus_prelims} and $\Gamma(P,Z)$ from Definition~\ref{def:game_graph} are isomorphic as weighted digraphs.
Indeed, writing $-H$ for the weighted digraph obtained from $H$ by reversing the signs of all its weights, and $\cong$ for isomorphism of weighted digraphs, we have
\[
    \Gamma_{\min}(A,B) \cong -\Gamma_{\max}(A',B') \cong \Gamma_{\max}(\widetilde{A},\widetilde{B}) \cong \Gamma(P,Z)
    ,
\]
where $A' = -A$ and $B' = -B$.
The first isomorphism is by the definition of $\Gamma_{\min}(A,B)$ in Section~\ref{sec:min_plus_prelims}, the second is by Observation~\ref{obs:tropical_systems_are_the_same}, and the last isomorphism is justified just before Definition~\ref{def:game_graph}.

Thus, we create $\Gamma_{|S} = \Gamma(P_{|S},Z_{|S})$ from $\Gamma(P,Z)$ restricting to the set of winning states $X$ for the column player (in the `min-plus' game on $\Gamma_{\min}$, i.e.\ the column player tries to maximize their payoff), where $S = X \cap \cC(\Gamma_{\min})$, and proceed by establishing the correspondence with the digraph $\cK$ via $\Theta$ and $\Phi$ whose definitions are unchanged.
However, the definition of trackability is changed as the `trivially' trackable positions are at the last level of $\cK$.
Then, by restricting $\Gamma_{|S}$ to $\Gamma_{|S}^\opt$ and propagating the modification to obtain $\cK^\opt$, we reduce the trackability of level $t-1$ of $\cK$ to the condition that each maximal walk in $\cK^\opt$ from level $t-1$ reaches level $t$.

By Corollary~\ref{cor:min_plus_no_solution_implies_positive_mean_payoff}, the strategy $\sigma^\opt$, or rather $\sigma^\opt_{|S}$, guarantees \emph{positive} mean payoff for the column player in the game played on $\Gamma_{|S}$.
Hence, the weights of closed walks in $\Gamma_{|S}^\opt$ is \emph{positive}, and the weight of a simple path in $\Gamma_{|S}^\opt$ is lower-bounded by $-(|S|-1)$ (similarly to Lemmas~\ref{lem:weight_of_closed_walk}~and~\ref{lem:weight_of_simple_path}), which allows us obtain a counterpart of Lemma~\ref{lem:max_level_of_walks_from_second_level}.
Therefore, it is possible to choose the value of $t \leq |S| + 1$ to be large enough so that the walks from level $t-1$ of $\Theta^\opt$ reach at most level $1$, which proves  Lemma~\ref{lem:last_determines_second_last}.

\section{Main results for full set-defined classes}\label{sec:deciding-chi-boundedness}

In this section, we show that Theorem~\ref{th:main-set-defined}, Corollary~\ref{cor:GS}, and Theorem~\ref{th:full-set-defined-decision} follow from the analysis in the previous sections.

\subsection{Induced shift digraphs in \texorpdfstring{$\chi$}{chi}-unbounded full set-defined classes}
\label{sec:induced-shift-graphs}

The main difficulty in proving Theorem~\ref{th:main-set-defined} is that when reducing down to a single path clause in Lemma~\ref{lem:path+discrete-clauses}, we passed to subgraphs rather than induced subgraphs. In the following key lemma, we will reprove (the contrapositive of) Lemma~\ref{lem:path+discrete-clauses} in strengthened form.
We recall that in the set-up for Lemma~\ref{lem:path+discrete-clauses}, we have a class $\dcY_F$ consisting of digraphs induced by a DNF $F$ in which each clause is injective and acyclic.
Also, we have written $F = \bigvee_{i \in [a]} P_i \vee \bigvee_{j \in [b]} D_j$, where each $P_i$ is a path clause and each $D_j$ is a discrete clause.
We let $F_i = P_i \vee \bigvee_{j \in [b]} D_j$ and $\dcY_{F_{i}}$ consist of digraphs induced by $F_i$, which is contained in the monotone closure of $\dcY_F$.

Since the loop-elimination of Lemma~\ref{lem:loop-elimination} occurred after Lemma~\ref{lem:path+discrete-clauses}, we will need to consider path clauses with loops. So we also recall some points about Lemma~\ref{lem:loop-elimination}.
Recall that $L(C) \subseteq [d]$ denotes the set of loop coordinates of a clause $C$.
For each $L \subset [d]$, we consider the set $B_L = \{ j \in [b] : L \subseteq L(D_j)\}$.
For each DNF $F_i$ we consider the reduced DNF $F'_i$ defined as 
\[
    F'_i = P'_i \vee \bigvee_{j \in B_{L(P_i)}} D'_j
    ,
\]
where $P'_i, D'_j, j \in B_{L(P_i)}$ are obtained from $P_i, D_j, j \in B_{L(P_i)}$, respectively, by removing literals corresponding to variables with an index in $L$.
Then, Lemma~\ref{lem:loop-elimination} states that (i) $\dcY_{F'_i} \subseteq \dcY_{F_{i}}$; and (ii) $\dcY_{F_{i}}$ is \chiunbounded if and only if $\dcY_{F'_i}$ is \chiunbounded.
The idea of the construction for the first item is as follows.
Given a representation of a graph $G' \in \dcY_{F'_i}$, fix a vector $w \in \Ninj{L}$ whose values do not appear in the representation of $G'$.
Then we extend each vertex $v' \in V(G')$ by the vector $w \in \Ninj{L}$ on coordinates in $L$.

Let $P$ and $Q$ be path clauses with clause digraphs $\Delta_P, \Delta_Q$ on the same vertex-set $[d]$.
We say that $Q$ is the \emph{reverse} of $P$ if $\Delta_Q$ is obtained by reversing the direction of all edges of $\Delta_P$.

\begin{lemma}\label{lem:unbounded_Y_F_contains_shift_or_symmetrized_shift}
    Suppose the class $\dcY_F$ is \chiunbounded.
    Then there is some $i^* \in [a]$ and a collection $\cV$ of vertex sets $V \subseteq \Ninj{d}$ such that:
    \begin{enumerate}[label=(\roman*)]
        \item\label{it:i*_induces_shift_graphs} $F_{i^*}$ induces on $\cV$ exactly the class $\dcS_D$ of $D$-dimensional shift graphs for some $D \geq 2$; and
         \item\label{it:others_induce_bounded_chromatic_number} For every $j \in [a]$, if $P_j$ is distinct from $P_{i^*}$ and the reverse of $P_{i^*}$, then $P_j$ induces on $\cV$ a class of digraphs of bounded chromatic number.
    \end{enumerate}
\end{lemma}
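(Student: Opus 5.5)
Since $\dcY_F$ is \chiunbounded, \cref{lem:path+discrete-clauses} provides some $i\in[a]$ with $\dcY_{F_i}$ \chiunbounded. Among all such indices we will choose $i^*$ so that $P_{i^*}$ is extremal for a suitable order on path clauses; the natural first candidate is the number of positive literals, i.e.\ we take $P_{i^*}$ with the fewest edges in $\Delta_{P_{i^*}}$ among the $\chi$-unbounded choices.

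We then chain the earlier reductions. By \cref{lem:loop-elimination}, $\dcY_{F'_{i^*}}$ is \chiunbounded. By \cref{th:diaganal-to-path}, some $\dcY_{P'_{i^*},Z}$ has unbounded chromatic number. \cref{thm:tropical_dichotomy} together with \cref{lem:interval_representation_implies_shift_graphs} then gives a minimal interval representation $\cI$ of some dimension $D$. The collection $\cV$ is defined as follows: take the sets $\pi[V(S)]$ for $S\in\dcS_D$, where $\pi$ is the isomorphism of \cref{lem:gapless isomorphism}, and pad them on $L(P_{i^*})$ with a fresh injective constant vector, as in the first part of the proof of \cref{lem:loop-elimination}. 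By \cref{lem:multi-functional-independent-set} the discrete clauses contribute no edges. Hence $F_{i^*}$ induces exactly $\dcS_D$ on $\cV$, which gives item \ref{it:i*_induces_shift_graphs}. We keep freedom in choosing the injections $g_j$; in particular we choose them with huge, well-separated images, so that any coincidence $\pi(u)_a=\pi(v)_b$ forces $p(a)=p(b)$ and equality of the underlying tuples $\pi_0(u)_a=\pi_0(v)_b$.

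For item \ref{it:others_induce_bounded_chromatic_number}, fix $P_j$ different from $P_{i^*}$ and from its reverse, and let $G_j$ be the digraph it induces on a set in $\cV$. First, all vertices share the same values on $L(P_{i^*})$. If $L(P_j)\neq L(P_{i^*})$, then either a loop of $P_j$ is not a loop of $P_{i^*}$, or vice versa. In the first case the loop demands $\pi(u)_c=\pi(v)_c$, which forces an equality of intervals in $u,v$ that is incompatible with $u\neq v$ by irreducibility. In the second case $P_j$ demands an inequality that fails. Either way $G_j$ has very few edges. So we may assume the loops agree.

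Next, an edge $(\pi(u),\pi(v))$ of $G_j$ requires, for each positive literal $q_{a,b}$ of $P_j$, that $\pi_0(u)_a=\pi_0(v)_b$, i.e.\ that the window $I_a$ of $u$ equals the window $I_b$ of $v$. Since $u,v$ are increasing tuples, this forces $|I_a|=|I_b|$ and an order-type constraint: $v$ is a ``shift'' of $u$ by $\min I_a-\min I_b$ on those coordinates. If all literals of $P_j$ give the same offset $+1$, then $P_j\subseteq P_{i^*}$ as edge sets. If all give offset $-1$, then $P_j$ is contained in the reverse of $P_{i^*}$. Since the negative literals must also hold, containment is exact up to the already discussed cases. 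Otherwise there are either offsets with $|\text{offset}|\ge 2$ or offsets of mixed sign. In either situation the edge relation of $G_j$ is that of an ``order-inconsistent'' shift. We would show that each vertex has at most a bounded number of out-neighbours, or that $G_j$ contains no $\vec{\Lambda}_t$, and then apply \cref{fact:no_Lambda_small_chromatic_number}. A cleaner alternative is to note that $G_j$ is homomorphic to the realisation of $P_j$ viewed as a clause on $[D]$ over increasing tuples, and to bound that class directly; this is the step where minimality of $i^*$ should enter.

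The main obstacle is this last case analysis: showing that a path clause $P_j$ that is neither $P_{i^*}$ nor its reverse induces bounded chromatic number on the specific encoded sets. This requires controlling accidental equalities between overlapping interval windows, and we expect the careful choice of $i^*$ (minimality) and of $\cI$ (minimal dimension, irreducibility) to be essential there.
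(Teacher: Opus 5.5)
Your construction for item (i) matches the paper. Item (ii), which is the real content of the lemma, is not established, and the extremal choice you propose is wrong.

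\textbf{The choice of $i^*$ and the loop case.} Counting positive literals with loops included favours \emph{fewer} loops. But an extra loop of $P_j$ at a non-loop coordinate $c$ of $P_{i^*}$ only demands that $u$ and $v$ agree on the window $I_c$. That is perfectly compatible with $u\neq v$, and irreducibility says nothing about it. Concretely, take $d=4$, let $P_1$ have positive literals $q_{2,1},q_{3,2}$, let $P_2$ have positive literals $q_{2,1},q_{3,3},q_{4,4}$, and add one discrete clause $D_1$ with $L(D_1)=\{3\}$.
\begin{itemize}
\item Both $\dcY_{F_1}$ and $\dcY_{F_2}$ are \chiunbounded, and your rule selects $i^*=1$.
\item The only usable constraint tuple is $Z=((\{3\},4))$. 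Its unique minimal representation $I_1=\{1\}$, $I_2=\{2\}$, $I_3=I_4=\{3\}$ gives $\pi(u)=(g_1(u_1),g_1(u_2),g_1(u_3),g_2(u_3))$.
\item For $V(S)=\{(a,b,N):a<b<N\}$, the clause $P_2$ induces on $\pi[V(S)]$ exactly the edges $\pi(a,b,N)\mapsto\pi(b,e,N)$. This is a copy of $\vec{S}(N-1,2)$, so (ii) fails.
\end{itemize}
The paper instead first takes $L(P_{i^*})$ \emph{inclusion-maximal} among the loop sets of \chiunbounded $F_i$. (Minimising only the number of non-loop positive literals can also be made to work.) Then every $P_j$ with strictly more loops has $\dcY_{F_j}$ \chibounded. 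The discrete clauses induce no edges on the sets of $\cV$, and the induced digraph is triangle-free, so its chromatic number is at most $h_j(2)$ for a $\chi$-binding function $h_j$ of $\dcY_{F_j}$.

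\textbf{Proper subsets of literals.} The same mechanism, with $i^*$ minimal for refinement of the path partition, is what handles a $P_j$ whose positive literals form a proper subset of those of $P_{i^*}$. Your claim that negative literals make the containment exact is false. If $P_{i^*}$ has positive literals $q_{2,1},q_{3,2}$ with $I_c=\{c\}$ and $P_j$ has only $q_{2,1}$, then $P_j$ induces a shift graph on the images of the tuples $(a,b,c_{ab})$ with distinct huge $c_{ab}$.

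\textbf{The remaining case is left open.} This is the case where $P_j$ has the same loops and the same path partition as $P_{i^*}$; edges between different paths are excluded by the disjoint images of the $g_j$, as you note. The route you suggest, bounding out-degree or excluding $\vec{\Lambda}_t$, is not the right one. Every such $P_j$ other than $P_{i^*}$ and its reverse induces \emph{no} edges at all.
\begin{itemize}
\item A jumbled path gives $\alpha\mapsto\beta\mapsto\gamma$ with $\beta$ extremal, hence a cyclic chain of (in)equalities.
\item The genuinely missing idea is the case where $P_j$ has both a descending path $\rho_k$ and an ascending path $\rho_\ell$. You need that, in a \emph{minimal} representation, the windows $J_\rho=\bigcup_{c\in\rho}I_c$ of the non-singleton paths have a connected intersection graph (\cref{lem:irreducible_representation_connected}, a consequence of irreducibility).
\item Connectivity lets you choose $\rho_k,\rho_\ell$ with a common position $t\in I_\alpha\cap I_\beta$ for some $\alpha\in\rho_k$, $\beta\in\rho_\ell$. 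For any edge $\pi(u)\mapsto\pi(v)$, the descending path through $\alpha$ forces $u_t<v_t$, while the ascending path through $\beta$ forces $v_t<u_t$.
\end{itemize}
Minimality is essential here. Take $P_{i^*}$ with positive literals $q_{2,1},q_{4,3}$ and the non-minimal representation $I_c=\{c\}$. Then the clause with positive literals $q_{2,1},q_{3,4}$ induces a shift graph on the images of the tuples $(a,b,M-b,M-a)$.
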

\begin{proof}
    Let $L_{\max}$ be an arbitrary inclusion-maximal member of $\{L(P_i) : i \in [a], \dcY_{F_i} \text{ is \chiunbounded}\}$.
    Since $\dcY_F$ is \chiunbounded, Lemma~\ref{lem:path+discrete-clauses} implies that $L_{\max}$ is well-defined.
    Consider the set $A = \{i \in [a] : L(P_i) = L_{\max}\}$.
    The set $A$ is non-empty by the choice of $L_{\max}$.
    Let $B = \{j \in [b] : L_{\max} \subseteq L(D_j)\}$.
    Then for each $i \in A$ and $j \in B$, respectively, consider the reduced clauses $P'_i$ and $D'_j$ obtained by removing literals corresponding to variables with an index in $L_{\max}$.
    Denote by $F'_i$ the reduced DNF of the form
    \[
        F'_i = P'_i \vee \bigvee_{j \in B} D'_j
        .
    \]
    By Lemma~\ref{lem:loop-elimination}, the class $\dcY_{F_i}$ is \chiunbounded if and only if the class $\dcY_{F'_i}$ is \chiunbounded.

    Let $d' = d - |L_{\max}|$.  
    Without loss of generality (by permuting indices if needed), we may assume that $L_{\max} = [d' + 1, d]$.
    Thus, the coordinates of the reduced clauses are exactly $[d']$.

    We define a preorder $\preceq$ on $[A]$ as follows: set $j \preceq k$ if for each maximal path $\rho$ of $\Delta_{P'_j}$ there is a maximal path $\rho^\sharp$ of $\Delta_{P'_k}$ such that $V(\rho) \subseteq V(\rho^\sharp)$.%
    \footnote{Recall that the clause digraphs $\Delta_{P_j}, \Delta_{P_k}$ have the same vertex set $[d']$.}
    Let $\prec$ be the corresponding strict preorder defined by $j \prec k$ if and only if $j \preceq k$ and $k \not\preceq j$.
    Consider the set $A^* = \{i \in A : \cY_{F'_i} \text{ is \chiunbounded}\}$.
    By the definition of $A$ and Lemma~\ref{lem:loop-elimination}, the set $A^*$ is non-empty.
    Fix an arbitrary $\preceq$-minimal element $i^*$ of $A^*$.
    That is, there is no $j \in A$ with $j \prec i^*$.
    We will show that $i^*$ is as desired.
    Without loss of generality, by permuting the indices as needed, we may assume that all the positive literals of $P'_{i^*}$ have the form $q_{c+1, c}$.
    
    Since $\dcY_{F'_{i^*}}$ is \chiunbounded, Lemma~\ref{th:diaganal-to-path} says that there is a collection of functional constraints $Z = ((L(D'_j), \lambda_j), j \in [B])$ with $\lambda_j \in [d'] \setminus L(D'_j)$\footnote{Recall that we may omit the discrete clause $D$ with $L(D) = [d]$ by the discussion at the beginning of Section~\ref{sec:all-discrete-clauses}. Then also $L(D'_j) \not= [d']$ for all $j \in B$.} such that $\dcY_{P'_{i^*}, Z}$ has unbounded chromatic number.
    Combining Fact~\ref{fact:unbounded_implies_solution} and the techniques from Section~\ref{ssec:construction-of-shift-graphs} summarized in Corollary~\ref{cor:functional_contains_SD}, the class $\dcY_{P'_{i^*}, Z}$ contains a class of $D$-dimensional shift graphs.
    We recall that the representation for these shift graphs is obtained by projecting the canonical representation of $D$-dimensional shift graphs $\dcS_D$ via a function $\pi \colon \N^D \to \N^{d'}$ constructed from a minimal interval representation $\cI$ of dimension $D$ for $P'_{i^*}$ over $Z$.
    That is, $\pi$ is obtained by starting with a map $\pi_0 \colon \N^D \to \N^{|I_1|} \times \N^{|I_2|} \times \cdots \times \N^{|I_{d'}|}$ and then composing with further injections $g_j$.
    Assuming the canonical representation of shift graphs in $\dcS_D$, let $\cV' = \{\pi(V(S)) : S \in \dcS_D\}$.
    Furthermore, for each $V' \subset \Ninj{d'}$ from $\cV'$, consider a set $V \subset \Ninj{d}$ obtained by fixing a vector $w \in \Ninj{L_{\max}}$ whose values do not appear in any vector of $V'$, and extending each $v' \in V'$ by $w$.
    By construction, for each $V' \in \cV'$ such that $V' = \pi(V(S))$ for some $S \in \dcS_D$, the path clause $P'_{i^*}$ induces on $V'$ a digraph isomorphic to $S$.
    Similarly, the path clause $P_{i^*}$ induces on $V$ a digraph isomorphic to $S$.
    Therefore, $\dcS_D \subseteq \dcY_{F_{i^*}}$ as claimed by item~\ref{it:i*_induces_shift_graphs} of the lemma.
    
    It remains to prove item~\ref{it:others_induce_bounded_chromatic_number}.
    That is, we prove that each $P_j, j \in [a]$, distinct both from $P_{i^*}$ and the reverse of $P_{i^*}$, induces a digraph class of bounded chromatic number on $\cV$.
    The proof is by case analysis. We first distinguish the following cases:
    \begin{enumerate}[label=(\Alph*), leftmargin=*]
        \item $j \not\in A$,
        \item $j \in A$.
    \end{enumerate}
    
    \paragraph{Case (A):}
    There are two options for why $j \not\in A$: either $L(P_j) \subset L_{\max}$, or $\dcY_{F_j}$ is \chibounded.
    In the former, $P_j$ induced no edges on any $V \in \cV$ as $V$ is constant on each coordinate of $L_{\max}$, but $P_j$ has a non-loop coordinate on at least one of them.
    In the later case, observe that all the discrete clauses $D_k, k \in [b]$ induce an empty graph on $V \in \cV$.
    Thus, the graph induced on $V$ by $F_j$ is the same as the graph induced by $P_j$.
    Moreover, this graph $G$ is triangle-free by Lemma~\ref{lem:path-clause-shift-colorable}.
    Let $f$ be the $\chi$-binding function of $\dcY_{F_i}$.
    As $G \in \dcY_{F_i}$, we have $\chi(G) \leq f(2)$.
    
    \paragraph{Case (B):}
    Here we shall focus on the clause $P'_j$ instead of $P_j$.
    Note that for each $V \in \cV$, the graph induced by $P'_j$ on $V'$ is isomorphic to the graph induced by $P_j$ on $V$ as $L(P_j) = L_{\max}$.
    Thus, our goal is to prove that $P'_j$ induces a digraph class of bounded chromatic number on $\cV$.
    Towards this goal, we fix a canonical representation of a $D$-dimensional shift graph $S$ such that $V' = \pi(V(S))$.
    We shall further distinguish three cases:
    \begin{enumerate}[label=(B)-(\Roman*), leftmargin=*]
        \item $j \prec i^*$,
        \item $j \not\preceq i^*$,
        \item $j \preceq i^*$ and $i^* \preceq j$.
    \end{enumerate}
    
    \paragraph{Case (B)-(I):}
    Since $i^*$ is $\prec$-minimal in $A^*$, $j$ cannot belong to $A^*$.
    Thus, $\dcY_{F'_j}$ is \chibounded; let $f$ be the witnessing $\chi$-binding function.
    Since the graph $G$ induced $P'_j$ on $V'$ is the same as the graph induced by $F'_j$, and since it is triangle-free, we have $\chi(G) \leq f(2)$.
    
    \paragraph{Case (B)-(II):}
    By definition, there is a path $\rho^j$ of $\Delta_{P'_j}$ such that for every path $\rho^{i^*}$ of $\Delta_{P'_{i^*}}$, it holds that $V(\rho^j) \not\subseteq V(\rho^{i^*})$.
    In particular, there is an edge $c \mapsto c'$ in $\rho^j$, such that $c,c'$ belong to different respective paths $\rho^{i^*}_k, \rho^{i^*}_\ell$ of $\Delta_{P^{i*}}$.
    Let $u,v \in V(S)$ be arbitrary.
    Then $\pi(u)_c \not= \pi(v)_{c'}$ as we obtained these values by applying injections $g_k$ and $g_\ell$ with disjoint images to $\pi_0(u)_c$ and $\pi_0(v)_{c'}$, respectively.
    Thus, $P'_j$ cannot induce any edges on $V'$.
    
    \paragraph{Case (B)-(III):}
    Here we assume that $j \preceq i^*$ and $i^* \preceq j$.
    That is, both digraphs $\Delta_{P'_{i^*}}$ and $\Delta_{P'_j}$ have the same partition of coordinates $[d']$ into connected components.
    Each component of these digraphs is a directed path or an isolated vertex.
    We say that a directed edge $c \mapsto c', c \not= c'$ is \emph{ascending} if $c < c'$ and \emph{descending} if $c > c'$.
    We distinguish three kinds of directed (maximal) paths (of a clause digraph) having at least one edge.
    A path is \emph{ascending} if all its edges are ascending, \emph{descending} if all the edges are descending, or \emph{jumbled} if it contains both ascending and descending edges.
    Note that the path clause $P'_{i^*}$ is characterized by the partition of $[d']$ into connected components of $\Delta_{P'_{i^*}}$ together with the fact that all its paths are descending.
    Similarly, the reserve of $P'_{i^*}$ has the same partition of vertices, but each path is ascending.
    Moreover, observe that assuming that $P_j$ is distinct from both $P_{i^*}$ and its reverse implies that $P'_j$ is distinct from both $P'_{i^*}$ and its reverse.
    
    We further distinguish two subcases:
    \begin{enumerate}[label=(B)-(III)-(\alph*), leftmargin=*]
        \item $\Delta_{P'_j}$ contains a jumbled path,
        \item each path of $\Delta_{P'_j}$ is either ascending or descending.
    \end{enumerate}
    
    \paragraph{Case (B)-(III)-(a):}
    We will show $P'_j$ induces no edges on $V' = \pi(V(S))$.
    Consider the jumbled path $\rho^j$ of $\Delta_{P'_j}$ and let $\rho^{i^*}$ be the corresponding path of $\Delta_{P'_{i^*}}$ with $V(\rho^j) = V(\rho^{i^*})$.
    Since $\rho^j$ contains both ascending and descending edges, there is a vertex of $\rho^j$ adjacent to both an ascending and descending edge.%
    \footnote{We may find it e.g.\ by following the path edges from the source vertex and taking the first vertex whose incoming edge is of different kind than the outgoing edge.}
    Therefore, there are $\alpha, \beta, \gamma \in V(\rho^j)$ such that $\alpha \mapsto \beta \mapsto \gamma$ in $\Delta_{P'_j}$, but either $\alpha, \gamma < \beta$ or $\alpha, \gamma > \beta$.
    Suppose the former is true; the latter case is handled similarly.
    
    Towards contradiction, suppose that $P'_j$ induces an edge $(\pi(u),\pi(w))$, where $u, w \in V(S)$; hence, $u, w$ are increasing tuples.
    From the assumption that $P'_j$ induces the edge $(\pi(u),\pi(w))$, we immediately have the following entry-wise equalities
    \begin{align}
        \pi_0(u)_\alpha = \pi_0(w)_{\beta}, \label{eq:simple_equality_alphabeta}\\
        \pi_0(u)_{\beta} = \pi_0(w)_{\gamma}. \label{eq:simple_equality_betagamma}
    \end{align}
    
    Since both $u$ and $w$ are increasing tuples and $\alpha,\beta,\gamma$ share the same path of $\Delta_{P'_{i^*}}$, we get the following entry-wise inequalities
    \begin{align}
        \pi_0(u)_{\beta} > \pi_0(u)_\alpha  \label{eq:simple_inequality_u}\\
        \pi_0(w)_{\beta} > \pi_0(w)_{\gamma}. \label{eq:simple_inequality_w}
    \end{align}
    (If $\alpha, \gamma > \beta$, both inequalities are reversed.)
    Thus, we get the following circular inequality
    \[
        \pi_0(u)_\alpha
        \stackrel{\eqref{eq:simple_equality_alphabeta}}{=}
        \pi_0(w)_{\beta}
        \stackrel{\eqref{eq:simple_inequality_w}}{>}
        \pi_0(w)_{\gamma}
        \stackrel{\eqref{eq:simple_equality_betagamma}}{=}
        \pi_0(u)_{\beta}
        \stackrel{\eqref{eq:simple_inequality_u}}{>}
        \pi_0(u)_\alpha
        ,
    \]
    which is a contradiction.
    Hence, $P'_j$ induces no edges on $V'$.

    \paragraph{Case (B)-(III)-(b):}
    We will show $P'_j$ induces no edges on $V' = \pi(V(S))$; an illustration of the proof is given in Example~\ref{ex:caseBIIIb_example} below.
    We denote by $\rho^{i^*}_1, \dots, \rho^{i^*}_m$ the components of $\Delta_{P'_{i^*}}$, and by $\rho^{j}_1, \dots, \rho^{j}_m$ the components of $\Delta_{P'_{j}}$.
    By the assumption of this case, each path of $P'_j$ is either ascending or descending.
    Thus, $P'_j$ contains both an ascending and descending path as it is distinct from both $P'_{i^*}$ and the reverse of $P'_{i^*}$. 
    Recall that we have chosen a minimal interval representation $\cI$ of dimension $D$ for $P'_{i^*}$ over $Z$. For each $\gamma \in [d']$, let $I_\gamma$ be the interval that $\cI$ assigns to the coordinate $\gamma$.
    We shall first establish the existence of the following configuration:
    there are $k, \ell \in [m]$ such that
    \begin{enumerate}[label=(\roman*)]
        \item[($*$)] the paths $\rho^{i^*}_k, \rho^{i^*}_\ell, \rho^{j}_k$ are descending and $\rho^{j}_\ell$ is ascending,
        \item[($**$)] there are coordinates $\alpha \in V(\rho_k), \beta \in V(\rho_\ell)$ such that $I_\alpha \cap I_\beta \not= \emptyset$.
    \end{enumerate}
    This follows from the Lemma~\ref{lem:irreducible_representation_connected}.
    Indeed, consider the intersection graph $H_{P'_{i^*},\cI}$ defined before Lemma~\ref{lem:irreducible_representation_connected}; it is connected by the lemma.
    We color indices $k \in V(H_{P'_{i^*},\cI})$ by two colors: red if $\rho^j_k$ is descending, and blue if $\rho^j_k$ is ascending.
    Since the graph $H_{P'_{i^*},\cI}$ is connected, it contains a red-blue edge $\{k,\ell\} \in E(H_{P'_{i^*},\cI})$ (as an arbitrary path from a red vertex to a blue vertex necessarily contains an edge whose endpoints have different color).
    Then, condition~($*$) is satisfied as all the $\Delta_{P'_{i^*}}$ paths, including $\rho^{i^*}_k$ and $\rho^{i^*}_\ell$, are descending.
    Condition~($**$) follows from the fact that $k$ and $\ell$ are adjacent in $H_{P'_{i^*},\cI}$, which by definition means $(\bigcup_{\gamma \in V(\rho^k)} I_\gamma) \cap (\bigcup_{\gamma \in V(\rho^\ell)} I_\gamma) \neq \emptyset$. Thus there is some $\alpha \in V(\rho_k)$, $\beta \in V(\rho_\ell)$ such that $I_\alpha \cap I_\beta \not=\emptyset$.
    
    Since we have $k, \ell \in V(H_{P'_{i^*},\cI})$, and elements of $V(H_{P'_{i^*},\cI})$ are non-singleton paths of $P'_{i^*}$, we have both $|V(\rho^{i^*}_k)|, |V(\rho^{i^*}_\ell)| \geq 2$.
    Thus (recalling that $p$ is the function that assigns a coordinate to its path in $P'_{i^*}$), we have either that $p(\alpha + 1) = p(\alpha)$ or $p(\alpha - 1) = p(\alpha)$, and similarly $p(\beta + 1) = p(\beta)$ or $p(\beta - 1) = p(\beta)$.

    Recall that our goal is to show that $P'_j$ induces no edges on $V' = \pi(V(S))$, where $S$ is a shift graph with a canonical representation (in particular, vertices are increasing tuples).
    Towards contradiction, suppose that $P'_j$ induces an edge $(\pi(u),\pi(w))$, where $u, w \in V(S)$; hence, $u, w$ are increasing tuples.
       
    Fix an arbitrary element $t$ of $I_\alpha \cap I_\beta$.
    Let $t_\alpha \in [|I_\alpha|]$ and $t_\beta \in [|I_\beta|]$ the indices such that $t$ is the $t_\alpha$-th element of $I_\alpha$ and $t_\beta$-th element of $I_\beta$.
    By the definition of $t_\alpha$ and $t_\beta$, we immediately have the following the equalities.
    \begin{align}
        (\pi_0(u)_{\alpha})_{t_\alpha} &= (\pi_0(u)_{\beta})_{t_\beta}, \label{eq:equality_u}\\
        (\pi_0(w)_{\beta})_{t_\beta} &= (\pi_0(w)_\alpha)_{t_\alpha}. \label{eq:equality_w}
    \end{align}
    Next, we will establish the two inequalities
    \begin{align}
        (\pi_0(u)_\alpha)_{t_\alpha} &< (\pi_0(w)_{\alpha})_{t_\alpha}, \label{eq:inequality_alpha}\\
        (\pi_0(w)_{\beta})_{t_\beta} &< (\pi_0(u)_{\beta})_{t_\beta}. \label{eq:inequality_beta}
    \end{align}
    These together compose to
    \[
        (\pi_0(u)_{\alpha})_{t_\alpha}
        \stackrel{\eqref{eq:equality_u}}{=}
        (\pi_0(u)_{\beta})_{t_\beta}
        \stackrel{\eqref{eq:inequality_beta}}{>}
        (\pi_0(w)_{\beta})_{t_\beta}
        \stackrel{\eqref{eq:equality_w}}{=}
        (\pi_0(w)_\alpha)_{t_\alpha}
        \stackrel{\eqref{eq:inequality_alpha}}{>}
        (\pi_0(u)_\alpha)_{t_\alpha}
        ,
    \]
    which yields the sought contradiction.

     To prove $(\pi_0(u)_\alpha)_{t_\alpha} < (\pi_0(w)_{\alpha})_{t_\alpha}$, we distinguish whether $p(\alpha + 1) = p(\alpha)$ or $p(\alpha - 1) = p(\alpha)$.
    In the first case, we have 
    \begin{align*}
        (\pi_0(u)_\alpha)_{t_\alpha} &< (\pi_0(u)_{\alpha+1})_{t_\alpha} & \text{as $p(\alpha + 1) = p(\alpha)$ and $u$ is an increasing tuple}, \\
        (\pi_0(u)_{\alpha+1})_{t_\alpha} &= (\pi_0(w)_{\alpha})_{t_\alpha} & \text{as $\rho^j_k$ is an descending path}.
    \end{align*}
    In the second case, we have
    \begin{align*}
        (\pi_0(u)_\alpha)_{t_\alpha} &= (\pi_0(w)_{\alpha-1})_{t_\alpha} & \text{as $\rho^j_k$ is an descending path}, \\
        (\pi_0(w)_{\alpha-1})_{t_\alpha} &< (\pi_0(w)_{\alpha})_{t_\alpha} & \text{as $p(\alpha - 1) = p(\alpha)$ and $w$ is an increasing tuple}.
    \end{align*}

    The second inequality is similar.
    To prove $(\pi_0(w)_{\beta})_{t_\beta} < (\pi_0(u)_{\beta})_{t_\beta}$, we again distinguish whether $p(\beta + 1) = p(\beta)$ or $p(\beta - 1) = p(\beta)$.
    In the first case, we have 
    \begin{align*}
        (\pi_0(w)_{\beta})_{t_\beta} &< (\pi_0(w)_{\beta+1})_{t_\beta} & \text{as $p(\beta + 1) = p(\beta)$ and $w$ is an increasing tuple}, \\
        (\pi_0(w)_{\beta+1})_{t_\beta} &= (\pi_0(u)_{\beta})_{t_\beta} & \text{as $\rho^j_\ell$ is ascending path}.
    \end{align*}
    In the second case, we have
    \begin{align*}
        (\pi_0(w)_{\beta})_{t_\beta} &= (\pi_0(u)_{\beta-1})_{t_\beta} & \text{as $\rho^j_\ell$ is ascending path}, \\
        (\pi_0(u)_{\beta-1})_{t_\beta} &< (\pi_0(u)_{\beta})_{t_\beta} & \text{as $p(\beta - 1) = p(\beta)$ and $w$ is an increasing tuple}.
    \end{align*}
    Hence, we established both~\eqref{eq:inequality_alpha}~and~\eqref{eq:inequality_beta}, thereby settled the last case (B)-(III)-(b), which concludes the proof.
\end{proof}

\begin{example}\label{ex:caseBIIIb_example}
    We illustrate the reasoning in Case (B)-(III)-(b) from the proof above on a simple example.
    Fixing $d' = 4$, we let $P'_{i^*}$ consist of two descending paths $\rho^{i^*}_1, \rho^{i^*}_2$ on coordinates $1,2$ and $3,4$, respectively.
    The path $P'_j$ consists of a descending path $\rho^j_1$ on $1,2$ and an ascending path $\rho^j_2$ on $3,4$.%
    \footnote{In other words, $P'_{i^*}$ contains exactly the positive literals $q_{2,1}, q_{4,3}$, while $P'_j$ contains exactly $q_{2,1}, q_{3,4}$ (note the order of indices!).}
    It is easy to verify that this is an instance of Case (B)-(III)-(b).%
    \footnote{Setting, for example, $d=d', P_{i^*} = P'_{i^*}$, and $P_j = P'_j$, the original paths $P_{i*}$ and $P_j$ have the same set of loop coordinates (Case (B)), the clause digraphs of $P'_{i^*}$ and $P'_j$ have the same partition of vertices into connected components (Case (B)-(III)), and $P'_j$ has no jumbled path (Case (B)-(III)-(b)).}
    Moreover, let $Z$ consist of two functional constraints $L_1 = \{1\}, \lambda_1 = 3$, and $L_2 = \{2\}, \lambda_2 = 3$.
    Then the following is a minimal%
    \footnote{Since $\lambda_1 = \lambda_2$, we have $I_1 \cap I_2 \supset I_3 \not= \emptyset$. Hence, $|I_1| \geq 2$ and $|I_1 \cap I_2| \geq 3$. Therefore, $3$ is a lower bound on the dimension of an interval representation for $P'_{i^*}$ over $Z$.}
    interval representation $\cI$ for $P'_{i^*}$ over $Z$:
    \[
        I_1 = \{1,2\}, \quad I_2 = \{2,3\}, \quad I_3 = \{2\}, \quad I_4 = \{3\}
        .
    \]
    
    In the notation of Case (B)-(III)-(b), we choose $k = 1$ and $\ell = 2$; this satisfies both of the properties ($*$) and ($**$).
    For $(\alpha, \beta)$ we may choose any of $(1,3), (2,3)$, or $(2,4)$.%
    \footnote{The first element and second element of each pair is a coordinate of the first path and second path, respectively.
    Moreover, $I_1 \cap I_3 \not= \emptyset$, $I_2 \cap I_3 \not= \emptyset$, and $I_2 \cap I_4 \not= \emptyset$.}
    Suppose we pick $(\alpha, \beta) = (2,3)$; this implies that $t = 2 \in I_2 \cap I_3$, and $t_\alpha = 1$, $t_\beta = 1$ (as $t = 2$ is the first element of both $I_\alpha$ and $I_\beta$).
    We are in the situation that $p(\alpha - 1) = p(\alpha)$, while $p(\beta) = p(\beta + 1)$.
    
    Consider increasing $4$-tuples $u,w \in V(S) \subset \Ninj{4}$.
    Suppose for contradiction that $P'_j$ induces the edge $(\pi(u), \pi(w))$.
    The relations among elements of $\pi_0(u)$ and $\pi_0(w)$ yielding the contradiction are displayed in Figure~\ref{fig:caseBIIIb_example}.     Indeed, the reasoning, which follows the numbering of the relations in Figure~\ref{fig:caseBIIIb_example}, is as follows:
    \begin{enumerate}
        \item The first ($t_\alpha$-th) element of $\pi_0(u)_2$ is equal to the first ($t_\beta$-th) element of $\pi_0(u)_3$ by the choice of $\alpha, \beta$ and $t_\alpha, t_\beta$.
        \item The first ($t_\alpha$-th) element of $\pi_0(u)_2$ is equal to the first ($t_\alpha$-th) element of $\pi_0(w)_1$ by the assumption that $P'_j$ induces the edge $(\pi(u), \pi(w))$, implying that the positive literal $q_{2,1} \in P'_j$ evaluates to true.%
        \footnote{In fact, we have $\pi_0(u)_2 = \pi_0(w)_1$ as an entry-wise equality.}
        \item The first ($t_\alpha$-th) element of $\pi_0(w)_1$ is \emph{less than} the first ($t_\alpha$-th) element of $\pi_0(w)_2$ since the intervals $I_1$ and $I_2$ defining these vectors satisfy $I_2 = I_1 + 1$ (as they belong to the same path of $\Delta_{P'_j}$), together with the fact that $w$ is an increasing tuple.%
        \footnote{In fact, we have $\pi_0(w)_1 < \pi_0(w)_2$ as an entry-wise inequality.}
        \item This equality follows by the same reasoning as the first equality.
        \item This inequality follows by a similar reasoning as the first inequality, but with the second path and index $\beta$ instead of the first path and $\alpha$.
        \item This equality follows by a similar reasoning as the second equality, but with the positive literal $q_{3,4}$ instead of $q_{2,1}$, both of which are part of $P'_j$.
    \end{enumerate}
\end{example}

\begin{figure}
    \centering


\definecolor{caseBEquality}{RGB}{40,90,180}
\definecolor{caseBOrder}{RGB}{180,60,40}
\definecolor{caseBAccent}{RGB}{112,72,132}

\begin{tikzpicture}[
  tuple/.style={
    column sep=0pt,
    row sep=0pt,
    inner sep=0pt,
    nodes={inner sep=0pt, outer sep=0pt}
  },
  active entry/.style={
    draw=caseBAccent,
    fill=caseBAccent!9,
    line width=0.7pt,
    rounded corners=1.5pt,
    inner xsep=2.5pt,
    inner ysep=2pt,
    outer sep=1pt
  },
  equality edge/.style={
    draw=caseBEquality,
    line width=1pt,
    line cap=round
  },
  order edge/.style={
    draw=caseBOrder,
    line width=1pt,
    line cap=round
  },
  relation label/.style={
    fill=white,
    inner sep=1.4pt,
    font=\normalsize
  },
  step number/.style={
    circle,
    fill=white,
    line width=0.5pt,
    minimum size=3.6mm,
    inner sep=0pt,
    font=\sffamily\small
  }
]

  \node[anchor=east] at (-4.85,1.05) {$\pi_0(u) =$};
  \node at (-4.60,1.05) {$\bigl($};
  \node at (-3.55,1.05) {$(u_1,u_2)$};
  \node at (-2.45,1.05) {$,$};
  \matrix[tuple] at (-1.25,1.05) {
    \node {$($}; &
    \node[active entry] (u-alpha) {$u_2$}; &
    \node {$,u_3)$}; \\
  };
  \node at (-0.05,1.05) {$,$};
  \matrix[tuple] at (1.05,1.05) {
    \node {$($}; &
    \node[active entry] (u-beta) {$u_2$}; &
    \node {$)$}; \\
  };
  \node at (2.05,1.05) {$,$};
  \node at (3.05,1.05) {$(u_3)$};
  \node at (3.90,1.05) {$\bigr)$};

  \node[anchor=east] at (-4.85,-1.05) {$\pi_0(w) =$};
  \node at (-4.60,-1.05) {$\bigl($};
  \matrix[tuple] at (-3.55,-1.05) {
    \node {$($}; &
    \node[active entry] (w-one) {$w_1$}; &
    \node {$,w_2)$}; \\
  };
  \node at (-2.45,-1.05) {$,$};
  \matrix[tuple] at (-1.25,-1.05) {
    \node {$($}; &
    \node[active entry] (w-two) {$w_2$}; &
    \node {$,w_3)$}; \\
  };
  \node at (-0.05,-1.05) {$,$};
  \matrix[tuple] at (1.05,-1.05) {
    \node {$($}; &
    \node[active entry] (w-three) {$w_2$}; &
    \node {$)$}; \\
  };
  \node at (2.05,-1.05) {$,$};
  \matrix[tuple] at (3.05,-1.05) {
    \node {$($}; &
    \node[active entry] (w-four) {$w_3$}; &
    \node {$)$}; \\
  };
  \node at (3.90,-1.05) {$\bigr)$};

  \draw[equality edge]
    (u-alpha.north) to[bend left=35]
    node[midway, relation label] (relation-1) {$=$}
    (u-beta.north);

  \draw[equality edge]
    (u-alpha.south west) --
    node[midway, relation label] (relation-2) {$=$}
    (w-one.north east);

  \draw[order edge]
    (w-one.south) to[bend right=25]
    node[midway, relation label] (relation-3) {$<$}
    (w-two.south);

  \draw[equality edge]
    (w-two.south) to[bend right=25]
    node[midway, relation label] (relation-4) {$=$}
    (w-three.south);

  \draw[order edge]
    (w-three.south) to[bend right=25]
    node[midway, relation label] (relation-5) {$<$}
    (w-four.south);

  \draw[equality edge]
    (u-beta.south east) --
    node[midway, relation label] (relation-6) {$=$}
    (w-four.north west);

  \node[step number, draw=caseBEquality, above=1.5mm of relation-1] {1};
  \node[step number, draw=caseBEquality, left=1.5mm of relation-2] {2};
  \node[step number, draw=caseBOrder, below=1.5mm of relation-3] {3};
  \node[step number, draw=caseBEquality, below=1.5mm of relation-4] {4};
  \node[step number, draw=caseBOrder, below=1.5mm of relation-5] {5};
  \node[step number, draw=caseBEquality, right=1.5mm of relation-6] {6};

\end{tikzpicture}
    \caption{Relations among certain elements of $\pi_0(u)$ and $\pi_0(w)$ from Example~\ref{ex:caseBIIIb_example}. They close a cycle of inequalities, which are impossible to satisfy.}
    \label{fig:caseBIIIb_example}
\end{figure}

Finally, we prove the promised \chiunbounded part of \cref{th:main-set-defined}.

\begin{lemma} \label{lem:full_contains_SD}
    Let $\dcX$ be a full set-defined digraph class that is \chiunbounded. Then $\dcX$ contains a class of shift digraphs or symmetrized shift digraphs with unbounded chromatic number.
\end{lemma}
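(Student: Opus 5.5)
We plan to reduce from $\dcX = \dcX_f$ to a single class $\dcY_F$ defined by an acyclic DNF, and then use Lemma~\ref{lem:unbounded_Y_F_contains_shift_or_symmetrized_shift} to pick out a family of vertex sets on which the union of path clauses looks like a shift digraph, possibly symmetrized.

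First, since $\dcX_f$ is \chiunbounded, Lemma~\ref{lem:fromXtoY_H} gives an acyclic DNF $F$ with $\dcY_F \subseteq \dcX_f$ and $\dcY_F$ \chiunbounded. Write $F = \bigvee_{i\in[a]} P_i \vee \bigvee_{j\in[b]} D_j$. Lemma~\ref{lem:unbounded_Y_F_contains_shift_or_symmetrized_shift} then supplies an index $i^*$ and a collection $\cV$ of injective vertex sets $V=\pi(V(S))$, extended by a fixed vector on $L_{\max}$, indexed by $S\in\dcS_D$.

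We then examine the realisation of the whole DNF $F$ on each $V\in\cV$. By construction, each $V$ is $Z$-functional. By Lemma~\ref{lem:multi-functional-independent-set}, after the loop reduction, and by the constant-coordinate argument in Case (A), the discrete clauses contribute no edges. So the realisation of $F$ on $V$ is the union of three pieces:
\begin{enumerate}
\item $P_{i^*}$, which induces a copy of $S$;
\item possibly the reverse of $P_{i^*}$, which on the same set induces exactly the reverse of $S$ (the edge condition is symmetric), so if it is a clause of $F$ the union is the symmetrized copy of $S$;
\item the remaining path clauses, which by item~\ref{it:others_induce_bounded_chromatic_number} induce digraphs $H_V$ with $\chi(H_V)\le M$ for a uniform constant $M$.
\end{enumerate}
We also need the order-uniformity of $V$ to make sure the realisation lies in $\dcY_F\subseteq\dcX_f$. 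Since $\pi$ uses injections $g_j$, we may have to choose the $g_j$ order-preserving with separated ranges so that all vectors of $V$ have the same order type; we will check that this is compatible with the earlier construction.

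The main obstacle is removing the extra edges of $H_V$ in order to get an \emph{induced} (symmetrized) shift digraph of large chromatic number. The plan is as follows. Properly colour $H_V$ with $M$ colours. Some colour class $U$ satisfies $\chi(S[U]) \ge \chi(S)/M$, because $\chi(S)\le \sum_{\text{classes}}\chi(S[U])$. On $U$ the digraph induced by $F$ is exactly $S[U]$, or its symmetrization, and $S[U]$ is an induced subdigraph of $S$, hence a $D$-dimensional shift digraph. By Fact~\ref{fact:shift}, $\chi(S)$ is unbounded over $\dcS_D$, so $\chi(S[U])$ is unbounded as well. Moreover, $\dcS_D\subseteq \dcS_2$ by Fact~\ref{fact:shift}(3), so these digraphs are shift digraphs in the paper's sense. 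This yields a class of shift digraphs, or symmetrized shift digraphs, of unbounded chromatic number inside $\dcX$.
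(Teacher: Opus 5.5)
Your proposal is correct and follows the paper's proof: reduce to a $\chi$-unbounded $\dcY_F$, apply Lemma~\ref{lem:unbounded_Y_F_contains_shift_or_symmetrized_shift}, and pass to a subset $\hat V \subseteq V$ on which only $P_{i^*}$, and possibly its reverse, contribute edges. Your colour-class argument for the bounded-chromatic union of the remaining path clauses is exactly the step the paper states briefly. Your worry about order-uniformity is a point the paper leaves implicit, and your fix works: on each finite $V(S)$, take the $g_j$ order-preserving for the lexicographic order with images in consecutive disjoint blocks, and take the fixed vector on $L_{\max}$ above all these values. This keeps $V$ injective and order-uniform, and the construction only uses that the $g_j$ are injective with pairwise disjoint images, so nothing else changes.
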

\begin{proof}
By Theorem \ref{th:decomposition-uniform-Y}, we reduce to proving the conclusion for some $\dcY_F$ that is \chiunbounded. We then apply Lemma \ref{lem:unbounded_Y_F_contains_shift_or_symmetrized_shift} to obtain some $i^* \in [a]$ and $\cV \subset \cP(\Ninj{d})$ as in the statement.

Let $C = \{i \in [a] : i \neq i^*, \text{$P_i$ is not the reverse of $P_{i^*}$}\}$. Then $\bigvee_{i \in C} P_i$ induces a digraph class of bounded chromatic number on the vertex sets in $\cV$, while $F_{i^*}$ induces the class $\dcS_D$ for some $D \geq 2$. Thus, for each $V \in \cV$, we may find some $\hat{V} \subset V$ such that $\bigvee_{i \in C} P_i$ induces no edges on each $\hat{V}$, while $F_{i^*}$ induces a subclass of $ \dcS \subset \dcS_D$ of unbounded chromatic number on the set $\{\hat{V} : V \in \cV\}$. Thus $\dcY_F$ contains either $\dcS$ or its symmetrization, depending on whether the clause $F$ does not or does contain the reverse of $P_{i^*}$.
\end{proof}

\MainSetDefined*
\begin{proof}
By Corollary \ref{cor:poly-chi-vs-unions-shift-subgraphs}, if $\dcX$ is \chibounded then it is polynomially \chibounded. Otherwise, by Lemma \ref{lem:full_contains_SD}, $\cX$ contains a class of shift digraphs or symmetrized shift digraphs with unbounded chromatic number.
\end{proof}

\subsection{Gy\'arf\'as--Sumner for full set-defined graph classes}

\GS*
\begin{proof}
      By Theorem \ref{th:main-set-defined}, it suffices to prove that if $\cX$ contains shift graphs of arbitrarily large chromatic number then $\cX$ contains all forests. Since every forest is an induced subgraph of a tree, it suffices to show $\cX$ contains all trees.
      
      Given a shift graph $G$, we let $\overrightarrow{G}$ be its natural orientation. Then $\overrightarrow{G}$ is the directed line graph of some directed graph $\overrightarrow{H}$ (since $\overrightarrow{G}$ is an induced subdigraph of some canonical shift digraph $\overrightarrow{S}$, which is the directed line graph of a directed clique $\overrightarrow{K}$, and removing vertices from $\overrightarrow{S}$ corresponds to removing edges from $\overrightarrow{K}$). There is some $f \colon \N \to \N$ such that $\lim_{n \to \infty} f(n) = \infty$ and if $\chi(\overrightarrow{G}) \geq n$ then $\chi(\overrightarrow{H}) \geq f(n)$, by \cite[Lemma 2.21]{hell2004graphs}. By \cite{burr1980subtrees}, if $\chi(\overrightarrow{H})$ is sufficiently large, then it contains every oriented tree of size $k$ as a subdigraph. The following claim then shows that if $\chi(\overrightarrow{G})$ is sufficiently large, then it contains an orientation of every tree of size $k$ as an induced subdigraph.

      \begin{claim*}
          For every tree $T$, there exists a directed tree $\overrightarrow{H}$ such that the directed line graph $\overrightarrow{L}(\overrightarrow{H})$ of $\overrightarrow{H}$ is an orientation of $T$.
      \end{claim*}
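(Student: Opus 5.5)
We argue by induction on the number of vertices of $T$, constructing $\overrightarrow{H}$ explicitly so that its arcs are in bijection with $V(T)$. Recall that in $\overrightarrow{L}(\overrightarrow{H})$ there is an arc from arc $(a,b)$ to arc $(c,e)$ exactly when $b=c$. So two arcs of $\overrightarrow{H}$ are adjacent in the underlying graph of the line digraph precisely when the head of one equals the tail of the other. It is therefore convenient to strengthen the claim. For every tree $T$ and every vertex $r \in V(T)$, there is a directed tree $\overrightarrow{H}$ and a bijection $\psi$ from $V(T)$ to $E(\overrightarrow{H})$ such that $\psi$ induces an isomorphism between $T$ and the underlying graph of $\overrightarrow{L}(\overrightarrow{H})$, and moreover the arc $\psi(r)=(a,b)$ has its tail $a$ of total degree one in $\overrightarrow{H}$. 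In other words, $a$ is a leaf of $\overrightarrow{H}$ incident to no other arc.

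For the base case $T=K_1$ we take a single arc. For the inductive step, root $T$ at $r$ and let $T_1,\dots,T_k$ be the components of $T-r$, with roots $r_1,\dots,r_k$ the neighbours of $r$. By induction we obtain directed trees $\overrightarrow{H_i}$ in which $\psi_i(r_i)=(a_i,b_i)$ has $a_i$ a leaf. We build $\overrightarrow{H}$ as follows. Take new vertices $a,b$ with the arc $(a,b)=\psi(r)$. Then identify every $a_i$ with $b$, taking disjoint copies of the $\overrightarrow{H_i}$ otherwise.

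Next we check adjacencies. The arc $(a,b)$ has head $b$, which is the tail of exactly the arcs $(b,b_i)$, so its neighbours in the line digraph are exactly the $\psi(r_i)$. Its tail $a$ is a new leaf, so no arc has head $a$. Arcs lying in different $\overrightarrow{H_i}$ share only the vertex $b$. At $b$, every arc other than $(a,b)$ has $b$ as its tail, since each $a_i$ was a leaf incident only to $\psi_i(r_i)$ in $\overrightarrow{H_i}$. Hence no head equals a tail across different branches, and these arcs create no extra adjacencies. Within each $\overrightarrow{H_i}$ the adjacencies are unchanged. Also, $\overrightarrow{H}$ is a tree, since we glued trees at single vertices. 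This gives the required isomorphism, and $a$ is a leaf, which completes the induction.

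The only delicate point is ensuring that gluing at $b$ creates no spurious head-tail coincidences. This is exactly why we carry the leaf-tail invariant through the induction: all the glued arcs leave $b$, and only $(a,b)$ enters it.
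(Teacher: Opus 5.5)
Your proof is correct, and unwinding your induction gives exactly the paper's digraph. That digraph is $T$ with a new vertex $x_0$ attached to the root $r$ and all arcs oriented away from $x_0$, where each vertex $v$ corresponds to the arc $x_{p(v)}x_v$ (your leaf $a$ plays the role of $x_0$, and the head of $\psi(v)$ plays the role of $x_v$). The paper checks this explicit construction directly in one line — $e_ve_u$ is an arc of $\overrightarrow{L}(\overrightarrow{H})$ iff $x_v=x_{p(u)}$ iff $v$ is the parent of $u$ — so the leaf-tail invariant and the gluing verification are not needed.
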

      \begin{claimproof}
           Let $T=(V,E)$ and root $T$ at an arbitrary vertex $r$. Let the vertex set of $\overrightarrow{H}$ be $V'=\{x_v : v\in V\} \cup \{x_0\}$. For $v\in V$, let $p(v)$ be the parent of $v$ in $T$, and define $p(r)=x_0$. Let the directed edge set of $\overrightarrow{H}$ be $E'=\{e_v=x_{p(v)}x_v : v\in V\}$. Note that the vertex set of $\overrightarrow{L}(\overrightarrow{H})$ is $\{e_v : v\in V\}$. Also, $e_ve_u$ is an edge in $\overrightarrow{L}(\overrightarrow{H})$ if and only if $x_v=x_{p(u)}$. This occurs if and only if $v=p(u)$ if and only if $v$ is the parent of $u$ in $T$. Therefore there is an edge in $\overrightarrow{L}(\overrightarrow{H})$ if and only if there is an edge in $T$, and so $T$ is the underlying graph of $\overrightarrow{L}(\overrightarrow{H})$.     
      \end{claimproof}
      
      This completes the proof.
\end{proof}

\subsection{Deciding \texorpdfstring{$\chi$}{Chi}-boundedness in full set-defined digraph classes}\label{sec:deciding-chiboundedness}

\Decision*
\begin{proof}
Let $f \colon \{0,1\}^{d^2} \to \{0,1\}$ be given. By \cref{th:from-full-set-defined-to-shift-colorable}, we can compute finitely many pairs $(P_1,Z_1)$, $(P_2,Z_2), \dots, (P_r,Z_r)$
such that
$\dcX_f$ is \chibounded if and only if $\dcY_{P_i,Z_i}$ has bounded chromatic number for every $i \in [r]$. 
By \cref{thm:tropical_dichotomy}, for each $i \in [r]$, bounded chromatic number of $\dcY_{P_i,Z_i}$ can be decided by constructing and solving a pair of finite systems of tropical inequalities. Since the existence of a finite solution to a finite system of tropical inequalities is decidable (see, for example, \cite{Joswig2021}), the theorem follows.
\end{proof}

\section{Systems of tropical inequalities via deciding \texorpdfstring{$\chi$}{Chi}-boundedness}
\label{sec:compressed_matrices}

In Section~\ref{sec:chi-dichotomy}, we reduced the question of $\chi$-unboundedness of the class $\dcY_{P,Z}$, for a path clause $P$ and a collection of functional constraints $Z$, to the existence of finite solutions for two tropical inequalities (\cref{thm:tropical_dichotomy}).
In this section we show that the existence of finite solutions to an arbitrary tropical system is equivalent to the problem of $\chi$-unboundedness for a set-defined class $\dcY_{P,Z}$ for some $P$ and $Z$. 
Importantly, the reduction is efficient; with an appropriate representation of the path clause and functional constraints, the reduction runs in strongly polynomial time.

We state the main theorem of this section in terms of max-plus systems.
The statement and proof of the min-plus version are analogous.

\begin{theorem}\label{thm:tropics_to_chi_unboundedness}
    There is a strongly polynomial-time algorithm that, given arbitrary matrices $A,B \in \Zmax^{k \times m}$, produces a path clause $P$ and a collection of functional constraints $Z$ such that the following are equivalent:
    \begin{enumerate}[label=(\roman*)]
        \item the chromatic number of $\dcY_{P,Z}$ is unbounded;
        \item the system $A \maxmu y \leq B \maxmu y$ has a finite solution.
    \end{enumerate}
\end{theorem}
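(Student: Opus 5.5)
The plan is to reverse the construction of Section~\ref{sec:chi-dichotomy}: from $(A,B)$ we build $(P,Z)$ whose associated max-plus system is equivalent, as far as finite solvability goes, to $A\maxmu y\le B\maxmu y$. We also arrange that the associated min-plus system is always finitely solvable. Then Theorem~\ref{thm:tropical_dichotomy} gives (i)$\iff$(ii).

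\textbf{Normalization.} First we put the system into a form using only the two kinds of constraints that a pair $(P,Z)$ can express. These are shift equalities $y_{c+1}=y_c+1$ along paths of $\Delta_P$, and inequalities $y_\lambda\le\max_{c\in L}y_c$ coming from functional constraints.

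Each inequality $\max_j(A_{ij}+y_j)\le\max_j(B_{ij}+y_j)$ is split into the conjunction over $j$ with $A_{ij}$ finite of $A_{ij}+y_j\le\max_{j'}(B_{ij'}+y_{j'})$. We then normalize all weights to be nonnegative by adding a large common constant $M$ to every finite entry. This does not change the solution set, and it needs only polynomially many arithmetic operations on the input numbers.

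A term $y_j+a$ with $a\ge 0$ should be a coordinate $a$ steps along a path starting at a coordinate representing $y_j$. Entries can be exponentially large, though, so long paths are forbidden in strongly polynomial time. This is where the \emph{compressed representation} promised in the overview must come in.

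We would use a binary doubling scheme built from functional constraints. Intervals of length $2^r$ can be expressed by requiring one coordinate's interval to be covered by the union of two shifted copies. Equivalently, the inequalities $y_{\lambda}\le \max(y_{c},y_{c'})$ combined with equalities of the form $y_{c'}=y_c+1$ composed along short paths encode $z\le y+2^r$ through auxiliary variables $z_{r}$ with $z_{r+1}\le\max(\ldots)$. We are not fully sure of the exact gadget; the alternative is to allow $P$ of dimension polynomial in $k,m,\log\max|A_{ij}|$, which is polynomial in bit size but not strongly polynomial.

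\textbf{The main obstacle.} The hard step is exactly this weight gadget. We need an additive constant $w$ to be realized by $O(\log w)$ coordinates. We also need the auxiliary constraints to be such that every solution of the gadgeted system restricts to a solution of the original system, and every solution of the original extends to one of the gadgeted system. We would try variables $u_0=y$, $u_{r+1}$ with $u_{r+1}=u_r+2^r$, where the doubling is enforced by path pieces of length one together with $\max$-covering constraints. Whatever gadget we settle on, we would verify it by checking both inclusions of the solution sets.

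\textbf{Completing the reduction.} Next we check that the min-plus twin system $\widehat A\minmu x\ge\widehat B\minmu x$ for the constructed $(P,Z)$ always has a finite solution. The idea is to design $Z$ so that every functional constraint $(L,\lambda)$ has some $c\in L$ with $c\le\lambda$ on a common path, or more simply to add a dummy coordinate making left endpoints trivially satisfiable (e.g.\ all $x_c$ equal up to shifts). Alternatively, the min-plus side could be matched to a trivially solvable game in which the column player cannot force positive payoff.

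Finally, Lemma~\ref{lem:interval_representation_iff_tropical_solutions} and Theorem~\ref{thm:tropical_dichotomy} give that $\dcY_{P,Z}$ has unbounded chromatic number if and only if the max-plus system is finitely solvable, and that in turn holds if and only if $A\maxmu y\le B\maxmu y$ is. The construction uses polynomially many arithmetic operations, which gives the strongly polynomial bound.
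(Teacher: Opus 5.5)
There is a genuine gap at exactly the step you flag, and the gadget you propose to fill it cannot exist. A pair $(P,Z)$ with short paths yields the max-plus system of Section~\ref{sec:chi-dichotomy}: $d$ variables and entries only $0,\pm 1$. Such a system cannot force large differences between variables. Once both systems are feasible (and you arrange the min-plus one to be), Lemma~\ref{lem:interval_representation_iff_tropical_solutions} gives an interval representation. The proof of Corollary~\ref{cor:functional_contains_SD} shows that a minimal one has dimension at most $2(d-m)+1$, so its right endpoints form a solution $y$ with all entries in $[1,2d]$.

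Consequently, with $O(\log w)$ coordinates, or indeed any $d<w/2$, no gadget can enforce $u_{r+1}=u_r+2^r$. Your requirement that every gadgeted solution restricts to an original solution already fails for the one-row system $w+y_1\le y_2$.

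Settling for mere equisatisfiability does not rescue the plan. A polynomial-size system with unit weights is solved in polynomial time by the pseudo-polynomial algorithms of Section~\ref{sssec:algorithms}. Such a gadget would therefore put mean payoff games in polynomial time, which is a well-known open problem.

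Even if the gadget existed, using $\Theta(\log w)$ coordinates per entry makes the output size depend on the bit length of the entries. That is not strongly polynomial, so your own objection to the ``alternative'' applies to the gadget verbatim.

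The missing idea is that long paths are harmless. The path clause is output succinctly, by $d$ and the starting index of each of its $m$ paths. Each functional constraint needs to mention at most two coordinates per path: along a path, $y$ increases by exactly one per step, so only $\max(\rho_j\cap L_s)$ matters on the max-plus side and only $\min(\rho_j\cap L_s)$ on the min-plus side. This is the efficient representation of Section~\ref{sec:compressed_matrices} (Lemma~\ref{lem:efficient_representation_solvable_iff_chiunbounded}).

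The construction then runs as follows.
\begin{enumerate}
\item Apply your normalization: nonnegative entries and a single finite entry $A'_{s,i}$ in each left-hand row. Also handle the degenerate cases of an empty system or an all-$(-\infty)$ right-hand row by outputting a fixed trivially (un)satisfiable instance.
\item Take paths $\rho_j$ of length $\max_s\{A'_{s,j},B'_{s,j},1\}+1$.
\item Set $\lambda_s=\min\rho_i+A'_{s,i}$, and set $\rho_j\cap L_s=\{\min\rho_j,\ \min\rho_j+B'_{s,j}\}$ whenever $B'_{s,j}$ is finite (Lemma~\ref{lem:polynomial_reconstruction}).
\end{enumerate}
With $\com{y}_j=y_{\min\rho_j}$, the max-plus system of $(P,Z)$ is equivalent to $A'\maxmu\com{y}\le B'\maxmu\com{y}$. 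The output consists of $O(km)$ numbers computed with polynomially many arithmetic operations, regardless of the magnitudes of the entries. The extra coordinate $\min\rho_j\in L_s$ is invisible on the max side. On the min side it makes the system solvable by $x_{\min\rho_j}=0$ for all $j$, because $x_{\lambda_s}=A'_{s,i}\ge 0$. This also pins down your vague min-plus step, and Theorem~\ref{thm:tropical_dichotomy} then finishes the argument.
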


In order to achieve strongly polynomial time (see Section~\ref{sssec:algorithms}), we need to represent the path clause $P$ and the collection of functional constraints $Z$ succinctly.
That is, we assume an ordering of coordinates such that the positive literals of $P$ form a subset of $\{q_{c+1,c} : c \in [d-1]\}$.
Thus, a clause $P$ with $m$ paths $\rho_1, \dots, \rho_m$ can be represented by the starting index of each path and the value of $d$.
Each individual functional constraint $(L,\lambda) \in Z$, where $L \subseteq [d], \lambda \in [d]$, is represented naturally, i.e.\ by $|L| + 1$ values for the set $L$ and the value~$\lambda$.

\subsection{Outline of the proofs}\label{sec:meanpayoff-outline}

To prove Theorem~\ref{thm:tropics_to_chi_unboundedness}, we first develop in Section~\ref{ssec:efficient_representation} an efficient tropical representation of the constraints stemming from the pair $(P,Z)$ by a min-plus system $\com{\widehat{A}} \com{x} \geq \com{\widehat{B}} \com{x}$ and a max-plus system $\com{\widetilde{A}} \com{y} \leq \com{\widetilde{B}} \com{y}$.
We may view these systems as a compression of the original representation from Section~\ref{sec:chi-dichotomy}, reducing the dimensions of the original matrices $\widehat{A},\widehat{B}, \widetilde{A},\widetilde{B}$ from $(2d-2m+k)\times d$ to $k\times m$, resulting in matrices $\com{\widehat{A}},\com{\widehat{B}}, \com{\widetilde{A}},\com{\widetilde{B}}$.
Analogously to Theorem~\ref{thm:tropical_dichotomy}, Lemma~\ref{lem:efficient_representation_solvable_iff_chiunbounded} states that the class $\dcY_{P,Z}$ is $\chi$-unbounded if and only if each of the systems $\com{\widehat{A}} \com{x} \geq \com{\widehat{B}} \com{x}$ and $\com{\widetilde{A}} \com{y} \leq \com{\widetilde{B}} \com{y}$ has a finite solution.

On the other hand, it will be apparent from the definition, and we prove it in Section~\ref{ssec:reduction}, that, given systems $\widehat{A}' x \geq \widehat{B}' x$ and $\widetilde{A}' y \leq \widetilde{B}' y$ of an appropriate form (described in Remark~\ref{rem:shape_of_compressed_systems}), we are able to produce in \emph{strongly polynomial} time a path clause $P$ and a collection of functional constraints $Z$ whose efficient tropical representation is given exactly by the systems $\widehat{A}' x \geq \widehat{B}' x$ and $\widetilde{A}' y \leq \widetilde{B}' y$ (Lemma~\ref{lem:polynomial_reconstruction}).

Finally, we exploit the fact that the structure of systems arising in an efficient representation admits considerable flexibility.
Therefore, we may modify (again, in strongly polynomial time) an arbitrary max-plus system to have the form of the system $\com{\widetilde{A}} \com{y} \leq \com{\widetilde{B}} \com{y}$ for some pair $(P,Z)$.
Therefore, by complementing the max-plus system by a trivially satisfiable min-plus system, we obtain the full efficient representation of a certain pair $(P,Z)$.
Since Lemma~\ref{lem:polynomial_reconstruction} allows us to find such a pair in strongly polynomial time, we obtain Theorem~\ref{thm:tropics_to_chi_unboundedness}.

\begin{notation}\label{not:notation_efficient_representation}
    Recall that the clause digraph $\Delta_P$ of the $d$-dimensional path clause $P$ consists of $m$ paths $\rho_1, \dots, \rho_m$, at least one of which satisfies $|\rho_i| \geq 2$, as discussed above.
    Recall that $p$ is the function mapping an element of $[d]$ to the index of a path of $\Delta_P$ containing it.
    We assume that $Z$ contains $k$ functional constraints $(L, \lambda)$.
    To reduce the number of special cases to consider, it will be convenient to allow a relaxed form of the functional constraints, i.e.\ $L$ is an arbitrary non-empty subset of $[d]$ and $\lambda$ is an arbitrary coordinate in $[d]$.
    In particular, we allow $\lambda \in L$.
    See remarks in Section~\ref{sssec:dichotomy_notation} on the effect of admitting such constraints.
    
    We identify a maximal path $\rho_i$ of $P$ with a subset of $[d]$, i.e.\ the set of indices that $\rho_i$ spans.
    Then, it makes sense to speak of the first coordinate of $\rho_i$ as $\min \rho_i$, or to inquire whether $\rho_i \cap L \not= \emptyset$ for some $L \subseteq [d]$, etc.
\end{notation}

\subsection{Efficient representation}\label{ssec:efficient_representation}

To motivate the definition of the efficient representation, let us recall the original systems $\widehat{A}x \geq \widehat{B}x$ and $\widetilde{A} y \leq \widetilde{B} y$ from Section~\ref{sec:chi-dichotomy} capturing a path clause $P$ and a collection of functional constraints $Z$.
We shall primarily focus on the system $\widetilde{A} y \leq \widetilde{B} y$, but the case of $\widehat{A}x \geq \widehat{B}x$ is analogous.
The system $\widetilde{A} y \leq \widetilde{B} y$ consists of two parts $\widetilde{A}^\shift y \leq \widetilde{B}^\shift y$ and $\widetilde{A}^\func y \leq \widetilde{B}^\func y$.
The first part encodes the restrictions stemming from $P$, see Equation~\eqref{eq:equality_condition}, and the second part encodes the restrictions stemming from $Z$, see Equation~\eqref{eq:endpoint_condition}.

While this representation offers some advantages, it is often wasteful.
Indeed, the resulting system $\widetilde{A} y \leq \widetilde{B} y$ has $d$ variables, but variables $y_c, y_{c+1}$ with $p(c) = p(c+1)$ are tightly connected by Equation~\eqref{eq:equality_condition} requiring that $y_{c+1} = y_c + 1$.
Thus, a solution $y \in \Z^d$ can be encoded by only $m$ values as knowing $y_c$ gives $y_{c'} = y_c + c'-c$ for any $c'$ with $p(c') = p(c)$.
Consequently, for a set $S \subseteq \rho_i$, we have that $\max_{c \in S} y_c = y_{\max S}$.
Therefore, we only need to record for each functional constraint $(L, \lambda) \in Z$ the position of at most a single loop per path $\rho_i$ of $P$ (unless $\rho_i \cap L = \emptyset$), namely $\max \rho_i \cap L$, to evaluate the max-plus inequality from Equation~\eqref{eq:endpoint_condition} (and $\min \rho_i \cap L$ to evaluate the min-plus inequality).

This suggests a way of compressing the system $\widetilde{A} y \leq \widetilde{B} y$ to a smaller system $\com{\widetilde{A}} \com{y} \leq \com{\widetilde{B}} \com{y}$.
We will represent the conditions $\widetilde{A}^\shift y \leq \widetilde{B}^\shift y$ implicitly and focus only on the conditions $\widetilde{A}^\func y \leq \widetilde{B}^\func y$.
We squeeze each group of columns corresponding to a single path to a single column, keeping the constraints of Equation~\eqref{eq:endpoint_condition}.
We think of the values $\com{y}_i$ as the values $y_{\min \rho_i}$.
See the precise definition below.

\begin{definition}[Efficient representation]
    Let $P, Z$ be as above.
    We define the \emph{efficient representation} of $(P,Z)$ to be the systems $\com{\widehat{A}} \minmu \com{x} \geq \com{\widehat{B}} \minmu \com{x}$ and $\com{\widetilde{A}} \maxmu \com{y} \leq \com{\widetilde{B}} \maxmu \com{y}$ with matrices of dimensions $k \times m$ such that for each $(L_s, \lambda_s) \in Z$, they contain a row representing the respective inequalities
    \begin{equation}\label{eq:endpoint_condition_compressed}
    \begin{split}
        \com{x}_{p(\lambda_s)} + \delta_s 
        \geq 
        \min_{\substack{i \in [m] \text{ s.t.} \\ \rho_i \cap L_s \not= \emptyset}} \com{x}_i + \widehat{\eps}_{s,i}
        , \\
        \com{y}_{p(\lambda_s)} + \delta_s
        \leq
        \max_{\substack{i \in [m] \text{ s.t.} \\ \rho_i \cap L_s \not= \emptyset}} \com{y}_i + \widetilde{\eps}_{s,i},        
    \end{split}
    \end{equation}
    where $\delta_s = \lambda_s - \min \rho_{p(\lambda_s)}, \widehat{\eps}_{s,i} = \min (\rho_i \cap L_s) - \min \rho_i$, and $\widetilde{\eps}_{s,i} = \max (\rho_i \cap L_s) - \min \rho_i$.
\end{definition}

Note that $\delta_s$ is the distance of $\lambda_s$ from the first coordinate of the path to which it belongs.
Similarly, unless $\rho_i \cap L_s = \emptyset$, $\widehat{\eps}_{s,i}$ and $\widetilde{\eps}_{s,i}$ record the offset of the minimal and maximal loop within $\rho_i$, respectively.

The difference in the definition of $\widehat{\eps}_{s,i}$ and $\widetilde{\eps}_{s,i}$ implies that, unlike for $\widehat{A} x \geq \widehat{B} x,$ and $\widetilde{A} y \leq \widetilde{B} y$, the matrices of systems $\com{\widehat{A}} \com{x} \geq \com{\widehat{B}} \com{x}$ and $\com{\widetilde{A}} \com{y} \leq \com{\widetilde{B}} \com{y}$ are not necessarily twins to each other (cf.\ with Observation~\ref{obs:tropical_systems_are_the_same}).
Let us comment on the general form of systems that may arise as an efficient representation of a pair $(P,Z)$.

\begin{remark}\label{rem:shape_of_compressed_systems}
    Let $\widehat{A}' x \geq \widehat{B}' x$ and $\widetilde{A}' y \leq \widetilde{B}' y$ form an efficient representation of a pair $(P',Z')$.
    Then, each of these systems satisfies that
    \begin{enumerate}[label=(\Roman*)]
        \item\label{it:shape_of_compressed_systems_non-negative} the finite entries of both matrices are non-negative integers,
        \item\label{it:shape_of_compressed_systems_A_finite_entry} each row of the matrix on the left-hand side has exactly one finite entry,
        \item\label{it:shape_of_compressed_systems_B_finite_entry} each row of the matrix on the right-hand side has a finite entry.
    \end{enumerate}
    Moreover, the systems are related as follows
    \begin{enumerate}[label=(\Roman*)]
        \setcounter{enumi}{3}
        \item\label{it:shape_of_compressed_systems_A_twins} the matrices $\widehat{A}'$ and $\widetilde{A}'$ are twins to each other,
        \item\label{it:shape_of_compressed_systems_B_inequality} it holds $\widehat{B}' \leq \widetilde{B}'$; in particular, $\widehat{B}'_{i,j}$ is finite iff $\widetilde{B}'_{i,j}$ is finite.
    \end{enumerate}
\end{remark}

\begin{remark}\label{rem:compression_is_lossy}
    At this point, it should be clear that distinct pairs $(P,Z)$ may have the same efficient representation.
    Notably, we completely lose the information regarding the original path lengths.
    We only know for each $i \in [m]$ that 
    \[
        |\rho_i|
        \geq 
        \max \Big\{ 
            \max_{\substack{s \in [k] \text{ s.t.} \\ \lambda_s \in \rho_i}} \delta_s,            
            \max_{s \in [k]} \widetilde{\eps}_{s,i}
        \Big\}
        + 1
        .
    \]
    (The $+1$ compensates for the fact that the values $\delta_s$ and $\widetilde{\eps}_{s,i}$ record merely the \emph{offset} from the first position of $\rho_i$.)
    Moreover, we generally also lose the information regarding the exact form of the sets $L_s, s \in [k]$.
    We know for each $s \in [k]$ and $i \in [m]$ that
    \begin{align*}
        \min \rho_i \cap L_s &= \min \rho_i + \widehat{\eps}_{s,i}
        , \\
        \max \rho_i \cap L_s &= \min \rho_i + \widetilde{\eps}_{s,i}
        ,
    \end{align*}
    but we cannot determine what happens within this interval.
\end{remark}

\begin{example}
    Continuing from Example~\ref{ex:chi_bnd_more_interesting_matrices}, consider the efficient representation for the path clause $P$, with paths $\{1,2\}$ and $\{3,4,5\}$, and a collection of functional constraints $Z = ((L_1, \lambda_1), (L_2, \lambda_2))$, where $L_1 = \{2,3\}, \lambda_1 = 5, L_2 = \{ 3 \}, \lambda_2 = 1$. 
    The max-plus system has the form
    \begin{align*}
        \begin{pmatrix}
        -\infty & 2 \\
        0 &-\infty 
        \end{pmatrix}
        \maxmu
        \begin{pmatrix}
        \com{y}_1 \\ \com{y}_2
        \end{pmatrix}
         \leq
        \begin{pmatrix}
        1 & 0 \\
        0 & -\infty
        \end{pmatrix}
        \maxmu
        \begin{pmatrix}
        \com{y}_1 \\ \com{y}_2
        \end{pmatrix}
        .
    \end{align*}
    Since both functional constraints have at most a single independent loop per path, the matrices of the min-plus system are the twins of $\com{\widetilde{A}}, \com{\widetilde{B}}$, respectively.
\end{example}

The following lemma shows that we may indeed use the efficient representation to determine whether the class $\dcY_{P,Z}$ has bounded chromatic number.

\begin{lemma}\label{lem:efficient_representation_solvable_iff_chiunbounded}
    Let $P$ be a path clause, $Z$ a collection of functional constraints and let $\com{\widehat{A}} \com{x} \geq \com{\widehat{B}} \com{x}$, $\com{\widetilde{A}} \com{y} \leq \com{\widetilde{B}} \com{y}$ be the efficient representation of $(P,Z)$.
    The following are equivalent:
    \begin{enumerate}[label=(\roman*)]
        \item the chromatic number of $\dcY_{P,Z}$ is unbounded,
        \item each of the tropical inequalities $\com{\widehat{A}} \com{x} \geq \com{\widehat{B}} \com{x}$ and $\com{\widetilde{A}} \com{y} \leq \com{\widetilde{B}} \com{y}$ has a finite solution.
    \end{enumerate}
\end{lemma}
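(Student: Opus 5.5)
By \cref{thm:tropical_dichotomy}, condition (i) is equivalent to both uncompressed systems $\widehat{A}x \geq \widehat{B}x$ and $\widetilde{A}y \leq \widetilde{B}y$ having finite solutions. It therefore suffices to show, separately for the max-plus side and the min-plus side, that the uncompressed system has a finite solution if and only if the compressed one does. I will do the max-plus case in detail; the min-plus case is symmetric, with $\min$ replacing $\max$ and $\widehat{\eps}_{s,i}$ replacing $\widetilde{\eps}_{s,i}$.

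For the direction from uncompressed to compressed, take a finite solution $y \in \R^d$ of $\widetilde{A}y \leq \widetilde{B}y$ and set $\com{y}_i := y_{\min \rho_i}$. The shift part \eqref{eq:equality_condition} forces $y_c = \com{y}_{p(c)} + (c - \min\rho_{p(c)})$ for every $c \in [d]$. Substituting this into \eqref{eq:endpoint_condition} for a constraint $(L_s,\lambda_s)$ gives
\[
  \com{y}_{p(\lambda_s)} + \delta_s \leq \max_{c\in L_s} \bigl(\com{y}_{p(c)} + c - \min\rho_{p(c)}\bigr).
\]
On each path $\rho_i$ with $\rho_i\cap L_s\neq\emptyset$, the inner maximum over $c \in \rho_i \cap L_s$ is attained at $c=\max(\rho_i\cap L_s)$, which contributes exactly $\com{y}_i+\widetilde{\eps}_{s,i}$. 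So the right-hand side equals the right-hand side of \eqref{eq:endpoint_condition_compressed}, and $\com{y}$ is a finite solution of the compressed system.

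For the converse, given a finite solution $\com{y}$ of the compressed system, define $y_c := \com{y}_{p(c)} + c - \min\rho_{p(c)}$. This satisfies \eqref{eq:equality_condition} by construction, and the same computation run backwards shows it satisfies \eqref{eq:endpoint_condition}. Hence $y$ solves $\widetilde{A}y\leq\widetilde{B}y$.

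The only real check is that the identity $\max_{c\in\rho_i\cap L_s}(y_c)=\com{y}_i+\widetilde{\eps}_{s,i}$ (and its $\min$ analogue with $\widehat{\eps}_{s,i}$) holds for every relevant path, including singleton paths and the relaxed constraints of \cref{not:notation_efficient_representation} where $\lambda_s\in L_s$. In the relaxed case the inequality is trivially true in both forms, but I still need to confirm that \cref{thm:tropical_dichotomy} continues to apply to such $Z$; I would appeal to \cref{rem:each_discrete_clause_has_a_loop} and the remarks in \cref{sssec:dichotomy_notation} for this. Apart from that verification, the argument is direct bookkeeping of offsets along paths.
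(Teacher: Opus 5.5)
Your proposal is correct and matches the paper's proof. You reduce via \cref{thm:tropical_dichotomy} to comparing each uncompressed system with its compressed counterpart. In one direction you set $\com{y}_i = y_{\min\rho_i}$, in the other $y_c = \com{y}_{p(c)} + c - \min\rho_{p(c)}$, and you use that the maximum (resp.\ minimum) over $\rho_i \cap L_s$ is attained at $\max(\rho_i\cap L_s)$ (resp.\ $\min(\rho_i\cap L_s)$). The relaxed-constraint concern you flag is settled by discarding every $(L_s,\lambda_s)$ with $\lambda_s \in L_s$: such a constraint is vacuous as a functional constraint and as a row of each of the four systems, so discarding it changes neither $\dcY_{P,Z}$ nor solvability.
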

\begin{proof}
    In view of Theorem~\ref{thm:tropical_dichotomy}, it is enough to prove that the system $\widetilde{A} y \leq \widetilde{B} y$ from the statement of Theorem~\ref{thm:tropical_dichotomy} has a finite solution $y \in \Z^d$ if and only if the system $\com{\widetilde{A}} \com{y} \leq \com{\widetilde{B}} \com{y}$ has a finite solution $\com{y} \in \Z^m$; and likewise for the min-plus systems $\widehat{A} x \geq \widehat{B} x$ and $\com{\widehat{A}} \com{x} \geq \com{\widehat{B}} \com{x}$.
    We prove the equivalence only for max-plus systems.
    The min-plus version is analogous except that the inequalities are reversed and all maxima are replaced by minima.

    ``$\Rightarrow$''
    Let $y \in \Z^d$ be a solution of $\widetilde{A} y \leq \widetilde{B} y$.
    We define $\com{y} \in \Z^m$ by setting $\com{y}_i = y_{\min \rho_i}$ for each $i \in [m]$.
    Consider some $(L_s, \lambda_s) \in Z$ and the corresponding inequality from $\com{\widetilde{A}} \com{y} \leq \com{\widetilde{B}} \com{y}$ of the form 
    \[
        \com{y}_{p(\lambda_s)} + \delta_s
        \leq
        \max_{\substack{i \in [m] \text{ s.t.} \\ \rho_i \cap L_s \not= \emptyset}} \com{y}_i + \widetilde{\eps}_{s,i}
        .
    \]
    By definition of $\delta_s, \widetilde{\eps}_{s,i}$, and Equation~\eqref{eq:equality_condition}, this is equivalent to
    \[
        y_{\lambda_s}
        \leq
        \max_{\substack{i \in [m] \text{ s.t.} \\ \rho_i \cap L_s \not= \emptyset}} y_{\max \rho_i \cap L_s}
        .
    \]
    Since $y_{\max \rho_i \cap L_s} = \max_{c \in \rho_i \cap L_s} y_c$, we may rewrite the right-hand side as follows
    \[
        y_{\lambda_s}
        \leq
        \max_{\substack{i \in [m] \text{ s.t.} \\ \rho_i \cap L_s \not= \emptyset}} \max_{c \in \rho_i \cap L_s} y_c
        ,
    \]
    which, by merging the two maxima, is the same as
    \[
        y_{\lambda_s}
        \leq
        \max_{c \in L_s} y_c
    \]
    from Equation~\eqref{eq:endpoint_condition}.
    Hence, $\com{y}$ is a solution to $\com{\widetilde{A}} \com{y} \leq \com{\widetilde{B}} \com{y}$.
    
    ``$\Leftarrow$''
    Let $\com{y} \in \Z^m$ be a solution of $\com{\widetilde{A}} \com{y} \leq \com{\widetilde{B}} \com{y}$.
    We define $y \in \Z^d$ by setting $y_c = \com{y}_{p(c)} + c - \min \rho_{p(c)}$ for each $c \in [d]$.
    We immediately get that $y$ is a solution to $\widetilde{A}^\shift y \leq \widetilde{B}^\shift y$.
    Hence, it remains to verify that
    \[
        y_{\lambda_s}
        \leq
        \max_{c \in L_s} y_c
        .
    \]
    Same steps as above prove that this inequality is equivalent to
    \[
        \com{y}_{p(\lambda_s)} + \delta_s
        \leq
        \max_{\substack{i \in [m] \text{ s.t.} \\ \rho_i \cap L_s \not= \emptyset}} \com{y}_i + \widetilde{\eps}_{s,i}
    \]
    from Equation~\eqref{eq:endpoint_condition_compressed}.
    Hence, $y$ is a solution to $\widetilde{A} y \leq \widetilde{B} y$.
\end{proof}

\subsection{Reduction}\label{ssec:reduction}

Here we prove Theorem~\ref{thm:tropics_to_chi_unboundedness}.
As the first step, we prove a key lemma stating that we may translate tropical systems of an appropriate form to a pair $(P,Z)$ such that the systems are the efficient representation of $(P,Z)$.
In view of Remark~\ref{rem:compression_is_lossy}, the choice of $(P,Z)$ is not unique.

\begin{lemma}\label{lem:polynomial_reconstruction}
    There is a strongly polynomial-time algorithm that, given systems $\widehat{A}' x \geq \widehat{B}' x$ and $\widetilde{A}' y \leq \widetilde{B}' y$ satisfying conditions \ref{it:shape_of_compressed_systems_non-negative}-\ref{it:shape_of_compressed_systems_B_inequality} from Remark~\ref{rem:shape_of_compressed_systems}, produces a path clause $P$ and a collection of functional constraints $Z$ such that $\widehat{A}' w \geq \widehat{B}' w$ and $\widetilde{A}' z \leq \widetilde{B}' z$ is the efficient representation of $(P,Z)$.
\end{lemma}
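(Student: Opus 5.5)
The plan is to build $(P,Z)$ directly from the matrix entries, by reading the defining equations \eqref{eq:endpoint_condition_compressed} of the efficient representation backwards. Let the matrices be $k\times m$. Each path $\rho_i$, $i\in[m]$, will be a block of consecutive coordinates. Each row $s\in[k]$ will give one functional constraint $(L_s,\lambda_s)$. Conditions \ref{it:shape_of_compressed_systems_non-negative}--\ref{it:shape_of_compressed_systems_B_inequality} are exactly what makes every quantity we read off a legal offset.

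\emph{Paths.} For row $s$, condition \ref{it:shape_of_compressed_systems_A_finite_entry} gives a unique column $j_s$ with $\widetilde{A}'_{s,j_s}$ finite. By condition \ref{it:shape_of_compressed_systems_A_twins}, $\widehat{A}'_{s,j_s}$ has the same value, which we call $\delta_s\ge 0$.
\begin{itemize}
\item We choose the length of $\rho_i$ to be
\[
n_i:=1+\max\Bigl(\{0\}\cup\{\delta_s : j_s=i\}\cup\{\widetilde{B}'_{s,i} : \widetilde{B}'_{s,i}\text{ finite}\}\Bigr),
\]
matching the lower bound in Remark~\ref{rem:compression_is_lossy}.
\item We set $\min\rho_i=1+\sum_{i'<i}n_{i'}$ and $d=\sum_i n_i$.
\item $P$ is the clause whose positive literals are $q_{c+1,c}$ for consecutive $c,c+1$ in the same block.
\end{itemize}

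\emph{Constraints.} For row $s$ we set $\lambda_s:=\min\rho_{j_s}+\delta_s$. We also set
\[
L_s:=\bigcup_{i:\ \widetilde{B}'_{s,i}\text{ finite}}\{\min\rho_i+\widehat{B}'_{s,i},\ \min\rho_i+\widetilde{B}'_{s,i}\}.
\]
By \ref{it:shape_of_compressed_systems_B_inequality}, the two entries are finite simultaneously and $\widehat{B}'_{s,i}\le\widetilde{B}'_{s,i}<n_i$, so these coordinates lie in $\rho_i$. Hence $\min(\rho_i\cap L_s)$ and $\max(\rho_i\cap L_s)$ are exactly the two prescribed positions. By \ref{it:shape_of_compressed_systems_B_finite_entry}, $L_s\neq\emptyset$, which the relaxed notion of functional constraint in Notation~\ref{not:notation_efficient_representation} requires. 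The relaxed notion also allows $\lambda_s\in L_s$.

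\emph{Verification.} Then $p(\lambda_s)=j_s$, and the offset of $\lambda_s$ is $\delta_s$. The index set $\{i:\rho_i\cap L_s\ne\emptyset\}$ is exactly the set of finite positions of row $s$ of the $B$-matrices. We also get $\widehat{\eps}_{s,i}=\widehat{B}'_{s,i}$ and $\widetilde{\eps}_{s,i}=\widetilde{B}'_{s,i}$. So the efficient representation of $(P,Z)$ reproduces all four matrices entry by entry.

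We need at least one path of length $\ge 2$, so that $P$ is a genuine path clause. If every $n_i=1$, we add one extra coordinate to $\rho_1$. This changes none of the offsets.

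\emph{Running time.} In the succinct encoding, $P$ is given by $d$ and the $m$ starting indices, and $Z$ by listing each $L_s$ and $\lambda_s$. Each $L_s$ has at most $2m$ elements. So the output consists of $O(km)$ numbers, each obtained with $O(km)$ additions and comparisons. The numbers have bit length polynomial in the input bit length, since $d$ is at most $m$ times one plus the largest entry. The algorithm is therefore strongly polynomial.

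I expect the main obstacle to be the bookkeeping. The point is that adding only the two extreme coordinates per path to $L_s$ is enough, and that choosing lengths $n_i$ as large as needed keeps every offset inside its path without disturbing any other constraint.
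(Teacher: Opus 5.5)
Your construction is correct and is essentially the paper's proof. You use one block of consecutive coordinates per column, with length one more than the largest offset in that column; you place $\lambda_s$ at offset $\widetilde{A}'_{s,j_s}$ in block $j_s$; and you let $\rho_i\cap L_s$ consist of the two positions at offsets $\widehat{B}'_{s,i}$ and $\widetilde{B}'_{s,i}$. The only difference is cosmetic: the paper includes $1$ in the maximum so that every path has length at least $2$, whereas you pad $\rho_1$ only when all blocks would otherwise be singletons.
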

\begin{proof}
    Suppose that the matrices have dimensions $k \times m$.
    We define a path clause $P$ with $m$ paths $\rho_1, \dots, \rho_m$ with each $\rho_i$ of length
    \[
        \max_{s \in [k]} 
        \Big\{
            \widetilde{A}'_{s,i},            
            \widetilde{B}'_{s,i},
            1
        \Big\}
        + 1
        .
    \]
    Thus, the starting index of $\rho_i$ is $1 + \sum_{j < i} |\rho_j|$, and $d = \sum_{i \in [m]} |\rho_i|$.
    
    Let $Z$ consist of functional constraints $(L_s, \lambda_s), s \in [k]$, one for each row of the systems.
    Suppose that $\widetilde{A}'_{s,i}$ is the unique finite entry of the row $\widetilde{A}'_{s,*}$, see Remark~\ref{rem:shape_of_compressed_systems}\ref{it:shape_of_compressed_systems_A_finite_entry}.
    We set $\lambda_s = \widetilde{A}'_{s,i} + \min \rho_i$ and construct the set $L_s$ by determining its intersection with each path $\rho_i$ (which uniquely defines $L_s$, since $\{\rho_i: i \in [m]\}$ partitions $[d]$).
    That is, for each $i\in [m]$, if $\widetilde{B}'_{s,i} $ is infinite, then set $\rho_i\cap L_s=\emptyset$, and otherwise we define $L_s$ so that
    \[
        \rho_i \cap L_s = \{\widehat{B}'_{s,i} + \min \rho_i, \widetilde{B}'_{s,i} + \min \rho_i\}
        .
    \]
    By remark~\ref{rem:shape_of_compressed_systems}\ref{it:shape_of_compressed_systems_B_inequality}, $\widehat{B}'_{s,i} $ is finite if and only if $\widetilde{B}'_{s,i}$ is finite, so this is well-defined.  
    
    Then, $P$ is a valid path clause with $|\rho_i| \geq 2$ for each $i \in [m]$, and $Z$ a valid collection of functional constraints according to Notation~\ref{not:notation_efficient_representation}.      
    Clearly, the construction of $(P,Z)$ can be performed in strongly polynomial time as each of the linearly many defining expressions can be computed by polynomially many arithmetic operations.
    Moreover, by inspecting the definition of the efficient representation for $(P,Z)$, in particular of the constants $\delta_s, \widehat{\eps}_{s,i}$ and $\widetilde{\eps}_{s,i}$, it is evident that the efficient representation of $(P,Z)$ exactly corresponds to the systems $\widehat{A}' x \geq \widehat{B}' x$ and $\widetilde{A}' y \leq \widetilde{B}' y$.
\end{proof}

As a next step we show that an arbitrary max-plus system can be efficiently turned into a system satisfying the first part of Remark~\ref{rem:shape_of_compressed_systems}.

\begin{lemma}\label{lem:reduction_of_arbitrary_to_reasonable}
    There is a strongly polynomial-time algorithm that, given an arbitrary max-plus system $Ay \leq By$ with $A,B \in \Zmax^{k \times m}$, produces a system $A'y \leq B'y$ with the same finite solutions, which satisfies conditions~\ref{it:shape_of_compressed_systems_non-negative}-\ref{it:shape_of_compressed_systems_B_finite_entry} from Remark~\ref{rem:shape_of_compressed_systems}.
\end{lemma}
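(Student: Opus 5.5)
We need to transform an arbitrary system $Ay \leq By$ over $\Zmax$ into an equivalent one (same finite solutions) satisfying (I) finite entries are non-negative integers, (II) each row of the left matrix has exactly one finite entry, and (III) each row of the right matrix has a finite entry. We handle these three conditions one at a time, each with a strongly polynomial transformation. All operations will be comparisons, additions and subtractions of input entries, so the algorithm is strongly polynomial in the arithmetic model.

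\emph{Condition (II).} A row $s$ of the form $\max_j (A_{s,j}+y_j) \leq \max_j (B_{s,j}+y_j)$ is equivalent to the conjunction over all $j$ with $A_{s,j}$ finite of $A_{s,j}+y_j \leq \max_{j'}(B_{s,j'}+y_{j'})$. We therefore replace row $s$ by one row per finite entry of $A_{s,*}$, each keeping that single finite entry on the left and copying $B_{s,*}$ on the right. A row of $A$ with no finite entry imposes no constraint on finite $y$, so we delete it. This at most multiplies the number of rows by $m$.

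\emph{Condition (III).} After the previous step, consider a row whose right-hand side is entirely $-\infty$ while the left-hand side has a finite entry. Such a row has no finite solution. In that case we output a fixed infeasible system already in the required shape, for instance the single row $1+y_1 \leq 0+y_1$, which has entries $A_{1,1}=1$ and $B_{1,1}=0$. Otherwise every row now has a finite entry on the right.

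\emph{Condition (I).} Finite solutions are invariant under adding the same constant $\alpha_s$ to all finite entries of row $s$ in both $A$ and $B$, since both sides shift by $\alpha_s$. For each row $s$ we choose $\alpha_s = -\min$ of the finite entries in row $s$ of $A$ and $B$. This makes all finite entries in that row non-negative integers while preserving (II) and (III).

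The main subtlety, mild as it is, lies in the degenerate cases: rows with an empty left side, and rows with an empty right side, which make the system infeasible. Beyond those, the proof is a direct check that each transformation preserves the set of finite solutions.
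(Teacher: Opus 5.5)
Your proposal is correct and follows essentially the same route as the paper: split each row into one row per finite left entry, discard rows with an empty left side, output the infeasible $1+y_1 \leq y_1$ if some right-hand row is entirely $-\infty$, and shift entries to make them non-negative. The paper instead adds one global constant to both matrices before splitting (your per-row shift afterwards works equally well), and it also covers a case you omit: when every row is discarded, it outputs the trivially satisfiable $y_1 \leq y_1$. The empty system would satisfy the conditions vacuously, but the paper's choice avoids a zero-row system when the output is later fed into \cref{lem:polynomial_reconstruction}.
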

\begin{proof}
    We perform a series of steps reducing $Ay \leq By$ to the desired system $A'y \leq B'y$.
    First, we add a large enough constant to both $A$ and $B$ to ensure that all the finite entries of the resulting matrices $A^{(1)}, B^{(1)}$ are non-negative to comply with Remark~\ref{rem:shape_of_compressed_systems}\ref{it:shape_of_compressed_systems_non-negative}.
    This clearly does not change the solution set.
    
    As the next step, we split each row
    \[
        \max_{i \in [m]} A^{(1)}_{s,i} + y_i
        \leq
        \max_{i \in [m]} B^{(1)}_{s,i} + y_i
    \]
    of the system into $m$ rows of the form
    \begin{align*}
        A^{(1)}_{s,1} + y_1 &\leq \max_{i \in [m]} B^{(1)}_{s,i} + y_i, \\
        A^{(1)}_{s,2} + y_2 &\leq \max_{i \in [m]} B^{(1)}_{s,i} + y_i, \\
        &\shortvdotswithin{\leq}
        A^{(1)}_{s,m} + y_m &\leq \max_{i \in [m]} B^{(1)}_{s,i} + y_i.
    \end{align*}
    This collection of constraints is clearly equivalent to the original row, so we do not change the solution set.
    Moreover, each row of the resulting matrix $A^{(2)}$ contains at most one finite entry. 
    If there is a row of $A^{(2)}$ without a finite entry, we may delete this row from the system (from both $A^{(2)}$ and $B^{(2)}$) without changing the solution set as the corresponding inequality is satisfied trivially (the left-hand side is always $-\infty$).
    
    Let us call the matrices of the resulting system obtained by removing all such rows $A^{(3)}$ and $B^{(3)}$.
    If $A^{(3)}$ and $B^{(3)}$ do not contain any rows, then we conclude that $Ay \leq By$ is solved by any $y \in \Z^m$.
    Hence, we output the trivially satisfiable system with $A' = B' = (0, -\infty, \dots, -\infty) \in \Zmax^{1 \times m}$, which corresponds to the inequality
    \begin{align*}
        y_1 \leq y_1
        .
    \end{align*}   
    Otherwise, the matrices $A^{(3)}, B^{(3)}$, satisfies both Remark~\ref{rem:shape_of_compressed_systems}\ref{it:shape_of_compressed_systems_non-negative}~and~\ref{it:shape_of_compressed_systems_A_finite_entry}.
    
    If the system does not comply with Remark~\ref{rem:shape_of_compressed_systems}\ref{it:shape_of_compressed_systems_B_finite_entry}, i.e.\ it contains a row where $B^{(3)}$ has all entries $-\infty$, we conclude that the system $Ay \leq By$ has no finite solution.
    Indeed, the left-hand side of the corresponding row is finite due to Remark~\ref{rem:shape_of_compressed_systems}\ref{it:shape_of_compressed_systems_A_finite_entry}, while the right-hand side is $-\infty$.
    Thus, we conclude by outputting the trivially unsatisfiable system with $A' = (1, -\infty, \dots, -\infty) \in \Zmax^{1 \times m}$ and $B' = (0, -\infty, \dots, -\infty) \in \Zmax^{1 \times m}$, which corresponds to the inequality
    \begin{align*}
        1 + y_1 \leq y_1
        .
    \end{align*}
   
    Otherwise, the system given by $A^{(3)}, B^{(3)}$ satisfies conditions~\ref{it:shape_of_compressed_systems_non-negative}-\ref{it:shape_of_compressed_systems_B_finite_entry} from Remark~\ref{rem:shape_of_compressed_systems}, so we output it.
    
    The algorithm clearly runs in strongly polynomial time.
\end{proof}

Now we are ready to prove Theorem~\ref{thm:tropics_to_chi_unboundedness}.

\begin{proof}[Proof of Theorem~\ref{thm:tropics_to_chi_unboundedness}]
    Let $Ay \leq By$ be the given max-plus system with $m$ columns.
    We first apply Lemma~\ref{lem:reduction_of_arbitrary_to_reasonable} to obtain a system $\widetilde{A}'y \leq \widetilde{B}' y$ with the same solution set.
    We consider the min-plus system $\widehat{A}'x \leq \widehat{B}' x$, where $\widehat{A}'$ is the twin of $\widetilde{A}'$ and $\widehat{B}'$ defined as
    \begin{align*}
        \widehat{B}'_{i,j} =
        \begin{cases}
            0 & \text{if $\widetilde{B}'_{i,j}$ is finite}, \\
            \infty & \text{otherwise}.
        \end{cases}
    \end{align*}
    Note that the system $\widehat{A}'x \geq \widehat{B}'x$ is trivially satisfied by the vector $0 \in \Z^m$ as each row of the matrix $\widetilde{B}'$ contains a finite entry, see Remark~\ref{rem:shape_of_compressed_systems}\ref{it:shape_of_compressed_systems_B_finite_entry}.
    
    The systems $\widehat{A}'x \geq \widehat{B}'x$ and $\widetilde{A}'y \leq \widetilde{B}' y$ satisfy all conditions from Remark~\ref{rem:shape_of_compressed_systems}.
    Hence, we may use Lemma~\ref{lem:polynomial_reconstruction} to find a path clause $P$ and collection of functional constraints $Z$ whose efficient representation are exactly the systems $\widehat{A}'x \geq \widehat{B}'x$ and $\widetilde{A}'y \leq \widetilde{B}' y$.
    Therefore, by Lemma~\ref{lem:efficient_representation_solvable_iff_chiunbounded}, the class $\dcY_{P, Z}$ has unbounded chromatic number if and only if each of the systems $\widehat{A}'x \geq \widehat{B}'x$ and $\widetilde{A}'y \leq \widetilde{B}' y$ admits a finite solution.
    As noted above, this is true for the min-plus system $\widehat{A}'x \geq \widehat{B}'x$.
    Therefore, the second part of the equivalence reduces to the statement that the system $\widetilde{A}'y \leq \widetilde{B}' y$ admits a finite solution, which is true if and only if the initial system $Ay \leq By$ admits a finite solution.
    
    In conclusion, the class $\dcY_{P, Z}$ has unbounded chromatic number if and only if the system $Ay \leq By$ admits a finite solution.
    Since all the invoked algorithms, as well as the construction of the system $\widehat{A}'x \geq \widehat{B}'x$, require strongly polynomial time, the whole reduction runs in strongly polynomial time.
\end{proof}

\section{Further results on set-defined classes}

\subsection{Random set-defined classes}

We now give an elementary proof showing that, with high probability in $d$, the  full set-defined class of digraphs given by a random $d$-dimensional function contains the class $\dcS_2$ of $2$-dimensional shift digraphs. 

\randomclass*

\begin{proof}
For a set
	$T\subseteq [d]\times [d]$, let $\mathbf 1(T)\in\{0,1\}^{d^2}$ denote its
	indicator vector. Define
	$I_d=\{(r,r):r\in[d]\}$, and for distinct $i,j\in[d]$, define the following \emph{equality patterns}:
	\[
	\beta_{ij}
	=
	\mathbf 1\Big(I_d\setminus\big\{(i,i),(j,j)\big\}\Big),
\qquad 
	\delta_i
	=
	\mathbf 1\Big( I_d\setminus\big\{(i,i)\big\}\Big),
	\qquad
	\varepsilon_{ij}
	=
	\mathbf 1\Big((I_d\setminus\big\{(i,i),(j,j)\big\})\cup \big\{(i,j)\big\}\Big).
	\]
	Call an ordered pair $(i,j)$ with $i<j$ \textit{good} if $f(\delta_i)=f(\delta_j)=f(\beta_{ij})=f(\varepsilon_{ij})=0$, 
	and $f(\varepsilon_{ji})=1$.
	\begin{claim}\label{clm:goodpairprob} $	\mathbb{P}(\text{there is no good pair})\to 0$ as $d \to \infty$. 
	\end{claim}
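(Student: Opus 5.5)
The claim is that with high probability some pair $i<j$ is good. Each pair $(i,j)$ with $i<j$ is good with probability exactly $2^{-5}$, since the five patterns $\delta_i,\delta_j,\beta_{ij},\varepsilon_{ij},\varepsilon_{ji}$ are distinct vectors of $\{0,1\}^{d^2}$ and $f$ takes independent uniform values on them. I would not run a second-moment computation over all pairs. Instead I would pick a large family of pairs whose relevant patterns are pairwise disjoint, so that goodness events within the family are mutually independent.

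The natural choice is the disjoint pairs $(1,2),(3,4),\dots,(2\lfloor d/2\rfloor-1,\,2\lfloor d/2\rfloor)$.

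First I check that patterns attached to different pairs in this family never coincide. The patterns $\delta_i$ and $\beta_{ij}$ are determined by which diagonal entries are removed. That removed set is $\{i\}$ or $\{i,j\}$, and these sets differ across disjoint pairs. The patterns $\varepsilon_{ij}$ and $\varepsilon_{ji}$ are determined by the removed diagonal $\{i,j\}$ together with the single off-diagonal position $(i,j)$ or $(j,i)$. These are also distinct across disjoint pairs. Finally, $\varepsilon$-type patterns have an off-diagonal $1$, which $\delta$- and $\beta$-type patterns do not, so no cross-type coincidence occurs either.

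Hence the sets of five patterns associated with distinct pairs in the family are pairwise disjoint. The events ``$(2r-1,2r)$ is good'' are therefore independent, each with probability $2^{-5}$. This gives
\[
\mathbb{P}(\text{there is no good pair}) \le (1-2^{-5})^{\lfloor d/2\rfloor} \to 0 \quad \text{as } d\to\infty,
\]
which is the claim. The only step needing care is the bookkeeping that the five patterns within a pair are distinct and that patterns of disjoint pairs do not collide; both follow by inspecting the positions of zeros and off-diagonal ones, as above.
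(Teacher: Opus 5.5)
Your argument is correct, and it takes a different route from the paper's. The paper first conditions on the set $A=\{i : f(\delta_i)=0\}$ and uses Hoeffding's inequality to get $|A|\ge d/4$ except with probability $e^{-d/8}$. Given $A$, the triples $(f(\beta_{ij}),f(\varepsilon_{ij}),f(\varepsilon_{ji}))$ for the $\binom{|A|}{2}$ pairs inside $A$ are independent, and each pair fails with probability $7/8$. This gives the bound $(7/8)^{\binom{\lfloor d/4\rfloor}{2}}+e^{-d/8}$.

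The conditioning is needed there because each $\delta_i$ is shared by all pairs containing $i$, so overlapping pairs are not independent. You avoid overlaps altogether by restricting to the $\lfloor d/2\rfloor$ disjoint pairs $(2r-1,2r)$. Then all $5\lfloor d/2\rfloor$ patterns are distinct and the goodness events are outright independent.

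Your route is shorter and self-contained, with no conditioning and no concentration inequality. The paper's route exploits quadratically many pairs and gets a better exponential rate, $e^{-d/8}$ against your $(31/32)^{\lfloor d/2\rfloor}$. Only the qualitative convergence to $0$ is needed for the theorem, so this difference is immaterial.
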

	\begin{poc} Let $A=\{i\in[d]: f(\delta_i)=0\}$. Since $f$ is chosen uniformly at random, the values 	$f(\delta_1),\dots,f(\delta_d)$ are independent fair bits, and  hence $|A|\sim \operatorname{Bin}(d,1/2)$. Thus, by Hoeffding's inequality \begin{equation}\label{eq:hoeff}	
        \mathbb{P}\Big(|A|<d/4 \Big)
        =
        \mathbb{P}\Big(|A|-\mathbb{E}[|A|]< - d/4\Big)
        \leq 
        \exp\left(-\frac{2(d/4)^2}{d}\right) 
        =  
        \exp(-d/8).  
    \end{equation}
	Conditional on $A$, for every unordered pair $\{i,j\}\subseteq A$ with
	$i<j$, the values $	f(\beta_{ij})$, $f(\varepsilon_{ij})$, and $f(\varepsilon_{ji})$ are independent fair bits, and these triples are independent for different
	pairs $\{i,j\}$. Therefore, conditional on $|A|=m$, the probability that no
	pair in $A$ is good is
	$	\left(1-\frac18\right)^{\binom m2}
	=
	\left(\frac78\right)^{\binom m2}$.
    It follows from \eqref{eq:hoeff} that
	\[
        \mathbb{P}\Big(\text{there is no good pair}\Big)
        \le 	
        \mathbb{P}\Big(\text{no pair in $A$ is good} \bigm| |A|\geq d/4\Big) + 
        \mathbb{P}\Big(|A|<d/4\Big) \\
        \leq \left(\frac78\right)^{\binom{\lfloor d/4\rfloor}{2}} +\exp(- d/8), 
        \] 
        which tends to $0$ as $d\rightarrow \infty$.
\end{poc}
	
	Now suppose $(i,j)$ is a good pair. We will show that $\dcX_f$ contains $\dcS_2$.
	Let $H$ be an arbitrary digraph in $\dcS_2$. Then $H$ is an induced subdigraph of $\vec S(n,2)$ for some
	$n$, so we may write $	V(H)\subseteq \{(x,y):1\le x<y\le n\}$,
	with a directed edge $(x,y)\mapsto (u,v)$ if and only if $y=u$. 
	
	Choose natural numbers $t_1,\dots,t_n$ and, for every $r\in[d]\setminus\{i,j\}$, choose a 
    natural number $c_r$ such that all of these numbers are pairwise distinct. Define an injective map $\phi:V(H)\to\mathbb N^d$ by
	\[
	   \phi(x,y)_i=t_x,
	   \qquad
	   \phi(x,y)_j=t_y,\qquad \text{and}\qquad 
	   \phi(x,y)_r=c_r, 
	   \text{ for every }r\in[d]\setminus\{i,j\}.
	\] 	
	Let $(x,y),(u,v)\in V(H)$ be distinct. Since $x<y$ and $u<v$, at most one of
	the following equalities can hold: $x=u$, $y=v$, $x=v$, $y=u$. Therefore the equality pattern between $\phi(x,y)$ and $\phi(u,v)$ is
	one of the following five patterns:
	\[
	Q_{\phi(x,y),\phi(u,v)}
	=
	\begin{cases}
		\varepsilon_{ji}, & \text{if } y=u,\\
		\varepsilon_{ij}, & \text{if } x=v,\\
		\delta_j, & \text{if } x=u,\\
		\delta_i, & \text{if } y=v,\\
		\beta_{ij}, & \text{otherwise.}
	\end{cases}
	\]
	Since $(i,j)$ is good, we have
	$
	f(\varepsilon_{ij})
	=
	f(\delta_i)
	=
	f(\delta_j)
	=
	f(\beta_{ij})
	=
	0$, and $f(\varepsilon_{ji})=1$.
	Thus, for distinct vertices $(x,y),(u,v)\in V(H)$,
	\[
	   f(Q_{\phi(x,y),\phi(u,v)})=1
	   \iff
	   y=u,
	\]
	which is exactly the adjacency rule of the  shift digraph
	$\vec S(n,2)$. Hence the digraph realised by $f$ on $\phi[V(H)]$ is isomorphic
	to $H$.
	
	Since $H\in\dcS_2$ was chosen to be arbitrary, we have $	\dcS_2\subseteq \dcX_f$	whenever a good pair exists. As the probability that a good pair exists tends
	to $1$ as $d\to \infty$ by \cref{clm:goodpairprob}, the result follows.
\end{proof}

\subsection{Stability and set-defined classes} \label{sec:stability}

We have mentioned in the introduction that set-defined graph classes are \emph{edge-stable}, meaning there is a bound on the size of semi-induced half-graphs in the class. (The half-graph of size $n$ is the bipartite graph $G=(V,E)$ with vertices $V = \{a_1, \dots, a_n\} \cup \{b_1, \dots, b_n\}$ and $(a_i,b_j) \in E \iff i \leq j$. A bipartite graph is semi-induced if it can be obtained by removing vertices and removing edges only between vertices in the same part.) Monadic stability, a strengthening of edge-stability, has played a significant role as a tameness property in generalizing sparsity theory to dense classes, as discussed in \cite[\S 4.1]{pilipczuk2025graph}. Since set-defined classes also generalize the sparsity notion of bounded degeneracy to dense classes, it seems interesting to ask how it interacts with monadic stability and edge-stability.

We first show that within the set-defined classes, and even within $\cS_2$, monadic stability and \chiboundedness are incomparable, before moving onto the interaction of edge-stability and set-defined classes. Rather than defining monadic stability, we state what we need in the following fact; the interested reader may refer to \cite[Definitions 16, 21]{pilipczuk2025graph} for the definitions of nowhere denseness and monadic stability. 

\begin{fact} \label{fact:monstab}
\begin{enumerate}
	\item If $\CC$ contains arbitrarily large $k$-subdivided balanced complete bipartite graphs for some $k \geq 1$, then $\CC$ is not monadically stable.
	\item If $\CC$ is the class of (directed) line graphs of a nowhere dense (directed) graph class $\CC_0$, then $\CC$ is monadically stable. (Since $\CC$ may be transduced from the class of 1-subdivisions of $\CC_0$, which is still nowhere dense.)
    \item If $\cC \subset \cD$ and $\cD$ is monadically stable, then so is $\cC$.
\end{enumerate}
\end{fact}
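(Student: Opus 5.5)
This statement is recorded as a Fact that collects three standard properties of monadic stability. I would prove each item by reducing it to known results in the literature rather than proving anything from first principles.

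For item 1, I use the characterisation of monadic stability through transductions. A monadically stable class cannot transduce the class of all half-graphs. I show that from arbitrarily large $k$-subdivided balanced complete bipartite graphs one can first-order transduce arbitrarily large $1$-subdivided bicliques. To do this, I mark the principal vertices with a unary predicate. Then I define adjacency between a principal vertex and the subdivision vertex at distance $1$ from it along a path. This is done with a formula quantifying over a path of length at most $k+1$ through unmarked vertices.

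From $1$-subdivided bicliques, a further transduction yields all bipartite graphs. It guesses a subset of the subdivision vertices and connects principal vertices whose common subdivision vertex is coloured. In particular it yields all half-graphs, which contradicts monadic stability. Equivalently, I can cite \cite[Definitions 16, 21]{pilipczuk2025graph} together with the known fact that monadically stable classes that are weakly sparse are nowhere dense. The subdivided bicliques are weakly sparse but somewhere dense, so the class is not monadically stable.

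For item 2, I follow the parenthetical hint. The $1$-subdivision of a (directed) graph $G$ has the edges of $G$ as vertices. The directed line graph is obtained by a simple first-order interpretation on this $1$-subdivision. Mark the subdivision vertices. For directed graphs, also mark the orientation by two binary-free colourings: each subdivision vertex remembers which neighbour is the tail, using a unary predicate on a copy. Then make $e\mapsto e'$ adjacent iff some principal vertex $v$ is adjacent to both, with $v$ the head of $e$ and the tail of $e'$.

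The class of $1$-subdivisions of a nowhere dense class is nowhere dense. Nowhere dense classes are monadically stable (Podewski–Ziegler, Adler–Adler), and monadic stability is preserved under transductions. Hence $\cC$ is monadically stable.

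Item 3 is immediate, since monadic stability is defined by excluding a transduction of all graphs. Anything transducible from $\cC$ is transducible from $\cD$.

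The main obstacle is item 2 in the directed case, where the orientation information must be encoded using only unary colours on the undirected $1$-subdivision. I would handle this by subdividing each arc twice, or equivalently by colouring the subdivision vertex's neighbours asymmetrically. Nowhere denseness is preserved under bounded subdivision, so this causes no difficulty.
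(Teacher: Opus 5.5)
Your proposal is correct. For item 2 it follows the paper's parenthetical justification: interpret the line graph in the $1$-subdivision, use that bounded subdivisions preserve nowhere denseness, that nowhere dense classes are monadically stable, and that transductions compose. The paper gives no argument for items 1 and 3 and treats them as standard; yours are the standard arguments.

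If you use your alternative for item 1 via ``weakly sparse and monadically stable implies nowhere dense'', apply it to the subclass of subdivided bicliques, since $\mathcal{C}$ itself need not be weakly sparse, and then invoke item 3.

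The one real divergence is the directed case of item 2. The paper's phrasing suggests taking the $1$-subdivision of a digraph as a digraph, with each arc $u\to v$ replaced by $u\to e\to v$. Then $e\mapsto e'$ is defined by $\exists v\,(E(e,v)\wedge E(v,e'))$ with no colours at all. The price is applying ``nowhere dense $\Rightarrow$ monadically stable'' to classes of digraphs.

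Your $2$-subdivision route stays with undirected graphs. Colour the tail-side subdivision vertex $t_e$ of each arc with $R$ and the head-side one with $B$, and restrict the domain to $R$. Then $t_e\mapsto t_{e'}$ iff $\exists z\,\exists w\,(B(z)\wedge E(t_e,z)\wedge E(z,w)\wedge\neg R(w)\wedge E(w,t_{e'}))$. This is correct, and in effect it reduces the digraph version of the theorem to the undirected one.

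Your other suggestions for this step are not equivalent to the $2$-subdivision and do not work on the $1$-subdivision. These are ``a unary predicate on a copy'' and ``colouring the subdivision vertex's neighbours asymmetrically''. There, the neighbours of $e$ are original vertices, and each one's colour is shared by all its incident edges. So colours can separate the two endpoints of every edge only if they properly colour the graphs of $\mathcal{C}_0$. A bounded palette cannot do this for the class of unbounded chromatic number to which the paper applies the Fact in Proposition~\ref{prop:monstabvschi}.

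In item 3 you misstate the definition. Monadic stability forbids transducing all half-graphs (equivalently, every transduction is edge-stable). Forbidding a transduction of all graphs is monadic dependence. Your item-3 argument is unaffected, since it works for any excluded target class. Your item 1 already targets half-graphs, so it is consistent with the correct definition.
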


\begin{proposition} \label{prop:monstabvschi}
    There are subclasses $\cC, \cC' \subset \cS_2$ such that $\cC$ is monadically stable but \chiunbounded and $\cC'$ is \chibounded but not monadically stable.
\end{proposition}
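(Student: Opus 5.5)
We construct both classes from the directed-line-graph description of $2$-dimensional shift graphs. A shift digraph is an induced subdigraph of $\vec{S}(n,2)$, i.e.\ the directed line graph $\vec{L}(\vec{H})$ of a subdigraph $\vec{H}$ of a transitive tournament. Equivalently, we may take $\vec{H}$ to be any acyclic digraph, after fixing a topological order on its vertices. So every subclass of $\cS_2$ we want can be produced by choosing a suitable class of acyclic digraphs $\vec{H}$ and passing to underlying graphs of $\vec{L}(\vec{H})$.

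For $\cC$ (monadically stable but \chiunbounded), we take $\CC_0$ to be a nowhere dense class of acyclic digraphs with unbounded chromatic number. A concrete choice is acyclic orientations of graphs of large girth and large chromatic number: Erd\H{o}s's graphs, oriented along any linear order, give a class of bounded degeneracy--free but large-girth digraphs, and large girth classes of bounded maximum degree are nowhere dense. To guarantee nowhere density cleanly, we take high-girth graphs with unbounded chromatic number and bounded ratio $|E|/|V|$ on all subgraphs up to depth $r$. The simplest route is Erd\H{o}s--Hajnal shift-type constructions, or just Lov\'asz's large-girth high-chromatic hypergraph constructions, which yield classes of bounded expansion. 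We then let $\cC$ be the hereditary closure of the underlying graphs of $\vec{L}(\vec{H})$ for $\vec{H}\in\CC_0$. By Fact~\ref{fact:monstab}(2) and (3), $\cC$ is monadically stable. By the line-graph lower bound \cite[Lemma 2.21]{hell2004graphs}, $\chi(\vec{L}(\vec{H}))$ is unbounded whenever $\chi(\vec{H})$ is. Since $\cC\subseteq\cS_2$ is triangle-free, it is therefore \chiunbounded. The main obstacle is citing a clean nowhere dense class of unbounded chromatic number. We expect to settle this with a known class of bounded expansion containing graphs of arbitrarily large chromatic number, for instance by noting that bounded-degree graphs of large girth and large chromatic number exist; these yield bounded expansion via bounded degree.

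For $\cC'$ (\chibounded but not monadically stable), we aim for bipartite members of $\cS_2$ containing $k$-subdivided $K_{n,n}$, and then invoke Fact~\ref{fact:monstab}(1). Since bipartite graphs have $\chi\le 2$, any such class is \chibounded. We take $\vec{H}_n$ to be the $1$-subdivision of $K_{n,n}$, oriented from one side through the subdivision vertices to the other; it is acyclic. In $\vec{L}(\vec{H}_n)$, the edge $a_i\to s_{ij}$ is adjacent only to $s_{ij}\to b_j$, so we get a matching rather than the desired structure. Instead we use a longer path structure: replace each edge $a_ib_j$ by a directed path of length $3$, with an extra pendant edge into $a_i$ and out of $b_j$. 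The line graph then contains, for each $a_i$, a vertex (the edge into $a_i$) adjacent to the first edges of all paths leaving $a_i$, and similarly at each $b_j$. The induced structure is a subdivision of $K_{n,n}$ in which each edge is subdivided a fixed number of times. This is the step to check carefully, using that $\vec{L}$ of a digraph with in-degree $1$ at $a_i$ turns the out-star at $a_i$ into a star.

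Finally, we take $\cC'$ to be the hereditary closure of the underlying graphs obtained this way. These graphs are $k$-subdivisions of $K_{n,n}$ with $k$ fixed, and we choose the subdivision parity so that the result is bipartite (even number of subdivision vertices per edge is not required, since subdivisions of the bipartite $K_{n,n}$ with a uniform count are bipartite). Each such graph has $\chi\le 2$, so hereditary subgraphs do too, while Fact~\ref{fact:monstab}(1) gives failure of monadic stability.
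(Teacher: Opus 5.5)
Your construction of $\cC'$ is correct. With the pendant arcs into $a_i$ and out of $b_j$, the directed line graph is exactly the $3$-subdivided $K_{n,n}$, because two arcs leaving $a_i$, or two arcs entering $b_j$, are never adjacent in $\vec{L}$. It lies in $\cS_2$ after a topological ordering, and it is bipartite. Keeping the arcs $a_i\to b_j$ unsubdivided gives the $1$-subdivided $K_{n,n}$ that the paper uses. Your skeleton for $\cC$ is also the paper's: acyclic orientations of a nowhere dense class, directed line graphs, \cite[Lemma 2.21]{hell2004graphs}, Fact~\ref{fact:monstab}(2),(3), and triangle-freeness.

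The gap is the step you single out as the main obstacle, and your proposed way of closing it is impossible. A class of maximum degree at most $\Delta$ has $\chi\le\Delta+1$. More generally, any class of bounded expansion is degenerate, since already bounded $|E(H)|/|V(H)|$ over subgraphs $H$ (depth $0$) bounds the degeneracy, and hence it has bounded chromatic number. So no class of bounded degree, or of bounded expansion, has unbounded chromatic number. In particular, ``bounded-degree graphs of large girth and large chromatic number'' cannot form a bounded-degree class. Likewise Lov\'asz-type or shift-type constructions do not give bounded expansion; shift graphs are not even nowhere dense, since $\cS_2$ contains all $1$-subdivided $K_{n,n}$.

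What you need is a nowhere dense class that is \emph{not} of bounded expansion. The paper takes the class of graphs whose girth is at least their maximum degree \cite[Example~5.1]{nevsetvril2012sparsity}. Your first instinct also works once you drop the degree/expansion requirement. Take Erd\H{o}s graphs $G_k$ with $\chi(G_k)\ge k$ and girth at least $k$, closed under subgraphs. For fixed $r$, a model of $K_3$ whose branch sets have radius at most $r$ contains a cycle of length at most $6r+3$. Hence $K_3$ is not an $r$-shallow minor of any subgraph of $G_k$ once $k>6r+3$. The finitely many remaining $G_k$ exclude $K_t$ for any $t$ exceeding their orders. So the class is nowhere dense, and with this replacement the rest of your argument goes through as in the paper.
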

\begin{proof}
    It is easy to check that $\cS_2$ contains the class of 1-subdivided complete bipartite graphs, which we may take to be $\cC'$, by Fact \ref{fact:monstab}(1).

    We now will define $\cC$. Let $\DD_0$ be a nowhere dense class of unbounded chromatic number, e.g. we may take $\DD_0$ to be the class of all graphs whose girth is at least as large as their maximum degree (see \cite[Example 5.1]{nevsetvril2012sparsity}). Let $\DD_1$ be the class of all acyclic orientations of $\DD_0$, let $\DD_2$ be the class of directed line graphs of graphs in $\DD_1$. Finally, we let $\CC$ be the class of undirected graphs obtained by forgetting the orientations of graphs in $\DD_2$, and we claim $\CC$ is as desired.
	
	Since every acyclic directed graph may be extended to a linear order, $\CC \subset \cS_2$. Since $\DD_2$ is the class of directed line graphs of a class of unbounded chromatic number, $\DD_2$ (and thus $\CC$) has unbounded chromatic number as well \cite[Lemma 2.21]{hell2004graphs}. Since $\CC$ is a class of directed line graphs of a nowhere dense class, it is monadically stable by Fact \ref{fact:monstab}(2).
\end{proof}

It is easy to see that $\cS_k$ contains the class of $(k-1)$-subdivided complete bipartite graphs, and thus is not monadically stable by Fact \ref{fact:monstab}(1). So by Fact \ref{fact:monstab}(3), the class $\cC$ of Proposition \ref{prop:monstabvschi} does not contain any $\cS_k$. Thus, in addition to Example \ref{ex:shift_diag}, $\cC$ is another set-defined class that is \chiunbounded but contains no $\cS_k$ for any $k \geq 2$. In fact, the proof of Proposition \ref{prop:monstabvschi} shows that any nowhere dense class of unbounded chromatic number gives rise to such a class.

We now move on to edge-stability. In \cite{JiangNesOdM20}, the authors also introduced order-defined graph classes, which are like set-defined classes but may use both equality-checks and less-than-checks between the coordinates assigned to vertices. Edge-stability is an assumption that forbids the edges from encoding some sort of order, so since set-defined classes have no access to order-checks, they are edge-stable (and order-defined). 
The converse was posed as an open question in \cite[Problem 2]{JiangNesOdM20}: is every edge-stable order-defined class set-defined? The following provides the  negative answer, even when edge-stability is strengthened to weak sparsity.
\begin{proposition}
    There is a weakly sparse (and thus edge-stable) order-defined class that is not set-defined.
\end{proposition}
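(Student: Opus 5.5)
We will exhibit an explicit order-defined class built from shift-like adjacency, and separate it from set-defined classes using the results of this paper. The natural candidate is a class that uses an order check to destroy the short cycles and $K_{t,t}$'s that any set-defined realisation would produce. Concretely, we take the class $\cC$ of (undirected) graphs with vertex set a subset of $\{(x,y)\in\N^2 : x<y\}$, where $(x,y)$ and $(u,v)$ are adjacent iff $y=u$ and additionally some order condition forces high girth. For example, we could encode vertices by longer tuples and require that the labels along an edge are strictly increasing in a way that makes every graph in $\cC$ a subgraph of a high-girth construction. A cleaner choice is to take a known weakly sparse class of unbounded chromatic number that is order-definable. One such class is the class of all induced subgraphs of the graphs $G_n$ obtained from shift graphs by keeping only edges $(x,y)\sim(y,z)$ with an additional order relation among finitely many auxiliary coordinates (for instance a Burling-type or Tardos-type construction). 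We would then verify by hand that it is order-defined, i.e.\ adjacency is a Boolean combination of $=$ and $<$ checks on coordinates.

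With the class in hand, the argument has three steps. First, we show weak sparsity: no graph in $\cC$ contains $K_{t,t}$ as a subgraph for some fixed $t$, ideally $K_{2,2}$, via a girth argument on the explicit labels. Second, we show unbounded chromatic number, by exhibiting a homomorphism from shift graphs (or from a known triangle-free high-chromatic family) into graphs of $\cC$, or by invoking the known chromatic lower bound for the construction. Third, we prove that $\cC$ is not set-defined. Here we use the fact recalled in the introduction (from \cite[Theorem 35]{JiangNesOdM20}) that a set-defined class with no $K_{t,t}$ subgraph has bounded degeneracy, hence bounded chromatic number. Unbounded chromatic number therefore contradicts set-definedness. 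Edge-stability then follows from weak sparsity, since a semi-induced half-graph of size $2t$ contains $K_{t,t}$ as a subgraph.

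The main obstacle is the first step: finding an order-defined adjacency rule whose graphs simultaneously avoid $K_{t,t}$ and retain unbounded chromatic number. Plain shift graphs contain $C_4$ and indeed $K_{t,t}$ (vertices $(a_i,b)$ and $(b,c_j)$). We therefore need order checks that kill such bicliques without collapsing the chromatic number. The first thing we would try is the well-known order-definable high-girth, high-chromatic constructions, such as iterated shift/line-graph constructions of acyclic orientations of high-girth graphs restricted by order. We would need to check that the restriction is expressible by fixed $<$ and $=$ patterns, possibly by representing vertices as bounded-length tuples and using a tree-like (Burling/Zykov-style) labelling in which $<$ checks encode containment of intervals.
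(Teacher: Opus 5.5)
\textbf{There is a genuine gap: you never produce the class, and the proposition is exactly the existence of that class.} Your third step is the paper's argument. A weakly sparse set-defined class has bounded degeneracy by \cite[Theorem 35]{JiangNesOdM20}, so any weakly sparse order-defined class of unbounded degeneracy cannot be set-defined. Your half-graph remark for edge-stability is also fine. You acknowledge the gap yourself by calling the first step the main obstacle. The candidates you suggest are either unspecified (``some order condition forces high girth'') or do not work:
\begin{itemize}
\item \emph{Directed line graphs, including iterated ones.} Without further edge deletion, these cannot be weakly sparse and have unbounded chromatic number. A vertex $b$ of $H$ with in-degree $s$ and out-degree $r$ gives a $K_{s,r}$ in $\vec L(H)$. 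So if $\vec L(H)$ is $K_{t,t}$-free, every vertex of $H$ has in- or out-degree below $t$. This forces $\chi(H)\le 4t$, and then $\chi(\vec L(H))\le\chi(H)^2$ by colouring $(a,b)$ with $(c(a),c(b))$.
\item \emph{Burling graphs.} These are string graphs (they can be realised by segments in the plane). By a theorem of Fox and Pach, $K_{t,t}$-free string graphs have $O_t(n)$ edges. Since Burling graphs have unbounded chromatic number, they contain arbitrarily large bicliques, so they are not weakly sparse.
\item \emph{High-girth subgraphs of shift graphs.} Nothing you say shows that the chosen edge set is a fixed Boolean combination of $=$ and $<$ tests. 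That is precisely what you would have to prove.
\end{itemize}

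\textbf{The missing idea is that you do not need chromatic number at all.} Unbounded degeneracy already contradicts \cite[Theorem 35]{JiangNesOdM20}, and demanding $\chi\to\infty$ makes the search much harder. The paper takes the class of $K_{2,2}$-free point--box incidence graphs, that is, points versus axis-parallel rectangles in the plane. This class is hereditary and weakly sparse. It is order-defined, since incidence is a conjunction of $\le$-comparisons between coordinates. It is bipartite, so its chromatic number is $2$. But by \cite{basit2021zarankiewicz} it contains graphs with superlinearly many edges, so it has unbounded degeneracy and cannot be set-defined. If you insist on the chromatic-number route, you need a deep external input. One example is Davies' construction of intersection graphs of axis-parallel boxes in $\mathbb{R}^3$ with arbitrarily large girth and chromatic number; box intersection in $\mathbb{R}^3$ is order-defined via six coordinates. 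Your proposal does not supply anything of this kind.
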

    \begin{proof}
        It was shown in \cite{basit2021zarankiewicz} that there exist $K_{2,2}$-free point-box incidence graphs (i.e. incidence graphs of points and axis-parallel rectangles) with superlinear number of edges\footnote{See also \cite[Lemma 4.3]{dallard2024functionality} for a graph-theoretic description of the construction.}. Consequently, the class of $K_{2,2}$-free point-box incidence graphs has unbounded degeneracy. This class is weakly sparse and order-defined, but it cannot be set-defined, because every weakly sparse set-defined class has bounded degeneracy by \cite[Theorem 35]{JiangNesOdM20}.
    \end{proof}

\iftoggle{anonymous}{
}{%
	\bigskip
	\textbf{Acknowledgments.}
We thank Patrice Ossona de Mendez for suggesting the argument in Proposition \ref{prop:monstabvschi}.
We thank Robert Šámal for notifying us that some of our intermediate results might be expressed in the language of tropical algebra; this led us to the duality between tropical algebra and mean payoff games, which had a significant influence on our further work.
We thank Rob Sullivan for inspiring discussions at the beginning of this project.

This work was supported by a Royal Society International Exchanges grant IES$\backslash$R2$\backslash$242173 and a London Mathematical Society Scheme 7 grant SC7-2425-14. The preliminary discussions that led to this project took place during Samuel Braunfeld’s visit to the University of Liverpool, which was supported by the School of Computer Science \& Informatics Visiting Fellowship Scheme. Samuel Braunfeld and Tom\'a\v{s} Hons are further supported by Project 24-12591M of the Czech Science Foundation (GA\v{C}R), and Samuel Baunfeld is also supported by the long-term strategic development financing of the Institute of Computer Science (RVO: 67985807). Viktor Zamaraev is supported by the Leverhulme Research Fellowship (RF-2026-309$\backslash$9).}

\addcontentsline{toc}{section}{References}
\bibliography{references}

@article{dallard2024functionality,
  title={Functionality of box intersection graphs},
  author={Dallard, Cl{\'e}ment and Lozin, Vadim and Milani{\v{c}}, Martin and {\v{S}}torgel, Kenny and Zamaraev, Viktor},
  journal={Results in Mathematics},
  volume={79},
  number={1},
  pages={48},
  year={2024},
  publisher={Springer}
}

@article{basit2021zarankiewicz,
  title={Zarankiewicz’s problem for semilinear hypergraphs},
  author={Basit, Abdul and Chernikov, Artem and Starchenko, Sergei and Tao, Terence and Tran, Chieu-Minh},
  journal={Forum of Mathematics, Sigma},
  volume={9},
  pages={e59},
  year={2021},
  organization={Cambridge University Press}
}

@article{pilipczuk2025graph,
  title={Graph classes through the lens of logic},
  author={Pilipczuk, Micha{\l}},
  journal={arXiv preprint arXiv:2501.04166},
  year={2025}
}

@book{nevsetvril2012sparsity,
  title={{Sparsity: Algorithms and Combinatorics}},
  author={Ne{\v{s}}et{\v{r}}il, Jaroslav and Ossona de Mendez, Patrice},
  year={2012},
  publisher={Springer}
}

@article{krokhin2022invitation,
author = {Krokhin, Andrei and Opr\v{s}al, Jakub},
title = {An invitation to the promise constraint satisfaction problem},
year = {2022},
issue_date = {July 2022},
publisher = {Association for Computing Machinery},
address = {New York, NY, USA},
volume = {9},
number = {3},
url = {https://doi.org/10.1145/3559736.3559740},
doi = {10.1145/3559736.3559740},
journal = {ACM SIGLOG News},
month = aug,
pages = {30–59},
numpages = {30}
}

@article{garey1976complexity,
author = {Garey, M. R. and Johnson, D. S.},
title = {The Complexity of Near-Optimal Graph Coloring},
year = {1976},
issue_date = {Jan. 1976},
publisher = {Association for Computing Machinery},
address = {New York, NY, USA},
volume = {23},
number = {1},
issn = {0004-5411},
url = {https://doi.org/10.1145/321921.321926},
doi = {10.1145/321921.321926},
journal = {J. ACM},
month = jan,
pages = {43–49},
numpages = {7}
}

@article{hambardzumyan2026spiky,
  title={Spiky Rank and Its Applications to Rigidity and Circuits},
  author={Hambardzumyan, Lianna and Myasnikov, Konstantin and Riazanov, Artur and Shirley, Morgan and Shraibman, Adi},
  journal={arXiv preprint arXiv:2602.23503},
  year={2026}
}

@article{jukna2006graph,
  title={On graph complexity},
  author={Jukna, Stasys},
  journal={Combinatorics, Probability and Computing},
  volume={15},
  number={6},
  pages={855--876},
  year={2006},
  publisher={Cambridge University Press}
}

@INPROCEEDINGS{williams2024orthogonal,
  author={Williams, Ryan},
  booktitle={2024 IEEE 65th Annual Symposium on Foundations of Computer Science (FOCS)}, 
  title={The Orthogonal Vectors Conjecture and Non-Uniform Circuit Lower Bounds}, 
  year={2024},
  volume={},
  number={},
  pages={1372-1387},
  doi={10.1109/FOCS61266.2024.00088}}

@book{hell2004graphs,
  title={Graphs and Homomorphisms},
  author={Hell, Pavol and Ne\v{s}et\v{r}il, Jaroslav},
  series = {Oxford Lecture Series in Mathematics and its Applications},
  volume = {28},
  year={2004},
  publisher={Oxford University Press}
}

@article{lachlan1987structures,
  title={Structures coordinatized by indiscernible sets},
  author={Lachlan, Alistair H},
  journal={Annals of Pure and Applied Logic},
  volume={34},
  number={3},
  year={1987}
}

@article{bodirsky2025taking,
  title={Taking model-complete cores},
  author={Bodirsky, Manuel and Bodor, Bertalan and Marimon, Paolo},
  journal={arXiv preprint arXiv:2512.21278},
  year={2025}
}

@article{bodirsky2025structures,
  title={Structures preserved by primitive actions of ${S}_\omega$},
  author={Bodirsky, Manuel and Bodor, Bertalan},
  journal={arXiv preprint arXiv:2501.03789},
  year={2025}
}

@article{walsberg2021notes,
  title={Notes on trace equivalence},
  author={Walsberg, Erik},
  journal={arXiv preprint arXiv:2101.12194},
  year={2021}
}

@article{hajnal1984must,
  title={What must and what need not be contained in a graph of uncountable chromatic number?},
  author={Hajnal, Andr{\'a}s and Komj{\'a}th, P{\'e}ter},
  journal={Combinatorica},
  volume={4},
  number={1},
  pages={47--52},
  year={1984},
  publisher={Springer}
}

@inproceedings{erdos1972some,
  title={On some general properties of chromatic number},
  author={Erd\H{o}s, Paul and Hajnal, Andr{\'a}s and Shelah, Saharon},
  booktitle={Topics in Topology},
  editor={Cs\'az\'ar, A.},
  pages={243--255},
  year={1972}
}

@inproceedings{taylor1970combinatorial,
  title={Problem 43},
  author={Taylor, Walter},
  booktitle={Combinatorial structures and their applications. Proceedings of the Calgary International Conference on Combinatorial Structures and Their Applications held at the University of Calgary, Calgary, Alberta, Canada, June, 1969.},
  editor={Guy, Richard and Hanani, Haim and Sauer, Norbert and Schonheim, Johanan},
  pages={508},
  year={1970},
}

@article{halevi2025infinite,
  title={Infinite cliques in simple and stable graphs},
  author={Halevi, Yatir and Kaplan, Itay and Shelah, Saharon},
  journal={Model Theory},
  volume={4},
  number={3},
  pages={231--249},
  year={2025},
  publisher={Mathematical Sciences Publishers}
}

@article{halevi2023infinite,
  title={Infinite stable graphs with large chromatic number {II}},
  author={Halevi, Yatir and Kaplan, Itay and Shelah, Saharon},
  journal={Journal of the European Mathematical Society},
  volume={26},
  number={12},
  pages={4585--4614},
  year={2023}
}

@article{halevi2022infinite,
  title={Infinite stable graphs with large chromatic number},
  author={Halevi, Yatir and Kaplan, Itay and Shelah, Saharon},
  journal={Transactions of the American Mathematical Society},
  volume={375},
  number={3},
  pages={1767--1799},
  year={2022}
}

@inproceedings{abrishami2026burling,
  title={Burling graphs in graphs with large chromatic number},
  author={Abrishami, Tara and Bria{\'n}ski, Marcin and Davies, James and Du, Xiying and Masa{\v{r}}{\'\i}kov{\'a}, Jana and Rz{\k{a}}{\.z}ewski, Pawe{\l} and Walczak, Bartosz},
  booktitle={Proceedings of the 2026 Annual ACM-SIAM Symposium on Discrete Algorithms (SODA)},
  pages={3978--3998},
  year={2026},
  organization={SIAM}
}

@article{chudnovsky2021induced,
  title={{Induced subgraphs of graphs with large chromatic number. V. Chandeliers and strings}},
  author={Chudnovsky, Maria and Scott, Alex and Seymour, Paul},
  journal={Journal of Combinatorial Theory, Series B},
  volume={150},
  pages={195--243},
  year={2021},
  publisher={Elsevier}
}

@inproceedings{AN25,
  title={The Meta-Complexity of Secret Sharing},
  author={Applebaum, Benny and Nir, Oded},
  booktitle={Proceedings of the 57th Annual ACM Symposium on Theory of Computing},
  pages={965--976},
  year={2025}
}

@article{AY24,
  author       = {Daniel Avraham and
                  Amir Yehudayoff},
  title        = {On Blocky Ranks Of Matrices},
  journal      = {Comput. Complex.},
  volume       = {33},
  number       = {1},
  pages        = {2},
  year         = {2024},
  doi          = {10.1007/S00037-024-00248-1},
}

@article{ABSZ24,
  title={Boolean combinations of graphs},
  author={Adenwalla, Sarosh and Braunfeld, Samuel and Sylvester, John and Zamaraev, Viktor},
  journal={arXiv preprint arXiv:2412.19551},
  year={2024}
}

@article{APPRRS05,
  title={Crossing patterns of semi-algebraic sets},
  author={Alon, Noga and Pach, J{\'a}nos and Pinchasi, Rom and Radoi{\v{c}}i{\'c}, Rado{\v{s}} and Sharir, Micha},
  journal={Journal of Combinatorial Theory, Series A},
  volume={111},
  number={2},
  pages={310--326},
  year={2005},
  publisher={Elsevier}
}

@inproceedings{BDSZ25,
  title={Adjacency Labeling Schemes for Small Classes},
  author={Bonnet, {\'E}douard and Duron, Julien and Sylvester, John and Zamaraev, Viktor},
  booktitle={16th Innovations in Theoretical Computer Science Conference (ITCS 2025)},
  pages={21--1},
  year={2025},
  organization={Schloss Dagstuhl--Leibniz-Zentrum f{\"u}r Informatik}
}

@article{Cha2023,
  title={Logical labeling schemes},
  author={Chandoo, Maurice},
  journal={Discrete Mathematics},
  volume={346},
  number={10},
  pages={113565},
  year={2023},
  publisher={Elsevier}
}

@article{dMPS25,
  title={Transducing paths in graph classes with unbounded shrubdepth},
  author={Ossona de Mendez, Patrice and Pilipczuk, Micha{\l} and Siebertz, Sebastian},
  journal={European Journal of Combinatorics},
  volume={123},
  pages={103660},
  year={2025},
  publisher={Elsevier}
}

@article{EHK22,
  title={Sketching Distances in Monotone Graph Classes},
  author={Esperet, Louis and Harms, Nathaniel and Kupavskii, Andrey},
  journal={Approximation, Randomization, and Combinatorial Optimization. Algorithms and Techniques},
  pages={1},
  year={2022}
}

@inproceedings{FX14,
  title={Sample complexity bounds on differentially private learning via communication complexity},
  author={Feldman, Vitaly and Xiao, David},
  booktitle={Conference on Learning Theory},
  pages={1000--1019},
  year={2014},
  organization={PMLR}
}

@inproceedings{FPS14,
  title={Density and regularity theorems for semi-algebraic hypergraphs},
  author={Fox, Jacob and Pach, J{\'a}nos and Suk, Andrew},
  booktitle={Proceedings of the twenty-sixth annual ACM-SIAM symposium on Discrete algorithms},
  pages={1517--1530},
  year={2014},
  organization={SIAM}
}

@inproceedings{GPT22,
  title={Stable graphs of bounded twin-width},
  author={Gajarsk{\`y}, Jakub and Pilipczuk, Micha{\l} and Toru{\'n}czyk, Szymon},
  booktitle={Proceedings of the 37th Annual ACM/IEEE Symposium on Logic in Computer Science},
  pages={1--12},
  year={2022}
}

@inproceedings{Har20,
  author       = {Nathaniel Harms},
  title        = {Universal Communication, Universal Graphs, and Graph Labeling},
  booktitle    = {11th Innovations in Theoretical Computer Science Conference, {ITCS}
                  2020},
  series       = {LIPIcs},
  pages        = {33:1--33:27},
  publisher    = {Schloss Dagstuhl - Leibniz-Zentrum f{\"{u}}r Informatik},
  year         = {2020},
  doi          = {10.4230/LIPICS.ITCS.2020.33},
  bibsource    = {dblp computer science bibliography, https://dblp.org}
}

@article{HWZ25,
  title = {Randomized Communication and Implicit Graph Representations},
  volume = {Volume 4},
  url = {http://dx.doi.org/10.46298/theoretics.25.20},
  DOI = {10.46298/theoretics.25.20},
  journal = {TheoretiCS},
  author = {Harms,  Nathaniel and Wild,  Sebastian and Zamaraev,  Viktor},
  year = {2025},
}

@inproceedings{HZ24,
  title={Randomized communication and implicit representations for matrices and graphs of small sign-rank},
  author={Harms, Nathaniel and Zamaraev, Viktor},
  booktitle={Proceedings of the 2024 Annual ACM-SIAM Symposium on Discrete Algorithms (SODA)},
  pages={1810--1833},
  year={2024},
  organization={SIAM}
}

@article{HHH23,
  title={Dimension-free bounds and structural results in communication complexity},
  author={Hambardzumyan, Lianna and Hatami, Hamed and Hatami, Pooya},
  journal={Israel Journal of Mathematics},
  volume={253},
  number={2},
  pages={555--616},
  year={2023},
  publisher={Springer}
}

@inproceedings{LS07,
  title={Lower bounds in communication complexity based on factorization norms},
  author={Linial, Nati and Shraibman, Adi},
  booktitle={Proceedings of the thirty-ninth annual ACM symposium on Theory of computing},
  pages={699--708},
  year={2007}
}

@article{MS14,
  title={Regularity lemmas for stable graphs},
  author={Malliaris, Maryanthe and Shelah, Saharon},
  journal={Transactions of the American Mathematical Society},
  volume={366},
  number={3},
  pages={1551--1585},
  year={2014}
}

@article{JiangNesOdM20,
    AUTHOR = {Jiang, Y. and Ne{\v{s}}et{\v{r}}il, J. and Ossona de Mendez, P. and Siebertz, S.},
     TITLE = {Regular partitions of gentle graphs},
   JOURNAL = {Acta Math. Hungar.},
  FJOURNAL = {Acta Mathematica Hungarica},
    VOLUME = {161},
      YEAR = {2020},
    NUMBER = {2},
     PAGES = {719--755},
      ISSN = {0236-5294,1588-2632},
   MRCLASS = {05C75 (03C45 03C98)},
  MRNUMBER = {4131940},
MRREVIEWER = {Luis\ Boza},
       DOI = {10.1007/s10474-020-01074-x},
       URL = {https://doi.org/10.1007/s10474-020-01074-x},
}

@book{Butkovic2010,
  title = {Max-linear Systems: Theory and Algorithms},
  ISBN = {9781849962995},
  ISSN = {1439-7382},
  url = {http://dx.doi.org/10.1007/978-1-84996-299-5},
  DOI = {10.1007/978-1-84996-299-5},
  series = {Springer Monographs in Mathematics},
  publisher = {Springer London},
  author = {Butkovič,  Peter},
  year = {2010}
}

@article{Gya87,
  title={Problems from the world surrounding perfect graphs},
  author={Gy{\'a}rf{\'a}s, Andr{\'a}s},
  journal={Applicationes Mathematicae},
  volume={19},
  number={3-4},
  pages={413--441},
  year={1987},
  publisher={Polska Akademia Nauk. Instytut Matematyczny PAN}
}

@article{akian2012tropical,
  title={Tropical polyhedra are equivalent to mean payoff games},
  author={Akian, Marianne and Gaubert, St{\'e}phane and Guterman, Alexander},
  journal={International Journal of Algebra and Computation},
  volume={22},
  number={01},
  pages={1250001},
  year={2012},
  publisher={World Scientific}
}

@inproceedings{Dhingra2006,
author = {Dhingra, Vishesh and Gaubert, St\'{e}phane},
title = {How to solve large scale deterministic games with mean payoff by policy iteration},
year = {2006},
isbn = {1595935045},
publisher = {Association for Computing Machinery},
address = {New York, NY, USA},
url = {https://doi.org/10.1145/1190095.1190110},
doi = {10.1145/1190095.1190110},
booktitle = {Proceedings of the 1st International Conference on Performance Evaluation Methodolgies and Tools},
pages = {12–es},
location = {Pisa, Italy},
series = {valuetools '06}
}

@Article{Ehrenfeucht1979,
author={Ehrenfeucht, A.
and Mycielski, J.},
title={Positional strategies for mean payoff games},
journal={International Journal of Game Theory},
year={1979},
month={Jun},
day={01},
volume={8},
number={2},
pages={109-113},
issn={1432-1270},
doi={10.1007/BF01768705},
url={https://doi.org/10.1007/BF01768705}
}

@inproceedings{burr1980subtrees,
  title={Subtrees of directed graphs and hypergraphs},
  author={Burr, Stefan A},
  booktitle={Proceedings of the Eleventh Southeastern Conference on Combinatorics, Graph Theory and Computing, Boca Raton, Congr. Numer},
  volume={28},
  pages={227--239},
  year={1980}
}

@article{ADDARIOBERRY2007620,
title = {Paths with two blocks in $n$-chromatic digraphs},
journal = {Journal of Combinatorial Theory, Series B},
volume = {97},
number = {4},
pages = {620-626},
year = {2007},
issn = {0095-8956},
doi = {https://doi.org/10.1016/j.jctb.2006.10.001},
url = {https://www.sciencedirect.com/science/article/pii/S0095895606001109},
author = {L. Addario-Berry and F. Havet and S. Thomassé}
}

@book{Joswig2021,
  author = {Joswig, Michael},
  title = {Essentials of tropical combinatorics},
  publisher = {American Mathematical Society},
  address = {Providence, RI},
  series = {Graduate Studies in Mathematics},
  volume = {219},
  year = {2021},
}

@article{ZWICK1996343,
title = {The complexity of mean payoff games on graphs},
journal = {Theoretical Computer Science},
volume = {158},
number = {1},
pages = {343-359},
year = {1996},
issn = {0304-3975},
doi = {https://doi.org/10.1016/0304-3975(95)00188-3},
url = {https://www.sciencedirect.com/science/article/pii/0304397595001883},
author = {Uri Zwick and Mike Paterson}
}

@InProceedings{Kratochvil_97,
author="Kratochv{\'i}l, Jan
and Proskurowski, Andrzej
and Telle, Jan Arne",
editor="M{\"o}hring, Rolf H.",
title="Complexity of colored graph covers {I}. {C}olored directed multigraphs",
booktitle="Graph-Theoretic Concepts in Computer Science",
year="1997",
publisher="Springer Berlin Heidelberg",
address="Berlin, Heidelberg",
pages="242--257",
isbn="978-3-540-69643-8"
}

@article{PTW01,
  title={An {E}ulerian path approach to {DNA} fragment assembly},
  author={Pevzner, Pavel A and Tang, Haixu and Waterman, Michael S},
  journal={Proceedings of the national academy of sciences},
  volume={98},
  number={17},
  pages={9748--9753},
  year={2001},
  publisher={National Academy of Sciences}
}

@article{LZR10,
  title={De novo assembly of human genomes with massively parallel short read sequencing},
  author={Li, Ruiqiang and Zhu, Hongmei and Ruan, Jue and Qian, Wubin and Fang, Xiaodong and Shi, Zhongbin and Li, Yingrui and Li, Shengting and Shan, Gao and Kristiansen, Karsten and others},
  journal={Genome research},
  volume={20},
  number={2},
  pages={265--272},
  year={2010},
  publisher={Cold Spring Harbor Lab}
}

@article{ZB08,
  title={Velvet: algorithms for de novo short read assembly using de {B}ruijn graphs},
  author={Zerbino, Daniel R and Birney, Ewan},
  journal={Genome research},
  volume={18},
  number={5},
  pages={821--829},
  year={2008},
  publisher={Cold Spring Harbor Lab}
}

@inproceedings{KK03,
  title={Koorde: A simple degree-optimal distributed hash table},
  author={Kaashoek, M Frans and Karger, David R},
  booktitle={International Workshop on Peer-to-Peer Systems},
  pages={98--107},
  year={2003},
  organization={Springer}
}

@article{CKB25,
  title={A novel discrete time series representation with {de Bruijn} graphs for enhanced forecasting using {TimesNet}},
  author={Cakiroglu, Mert Onur and Kurban, Hasan and Buxton, Elham and Dalkilic, Mehmet},
  journal={IEEE Access},
  year={2025},
  publisher={IEEE}
}

@article{AFCK25,
  title={Tight bounds for monotone minimal perfect hashing},
  author={Assadi, Sepehr and Farach-Colton, Martin and Kuszmaul, William},
  journal={ACM Transactions on Algorithms},
  volume={21},
  number={4},
  pages={1--23},
  year={2025},
  publisher={ACM New York, NY}
}

@incollection{FHRT92,
    AUTHOR = {F\"uredi, Z. and Hajnal, P. and R\"odl, V. and Trotter, W. T.},
     TITLE = {Interval orders and shift graphs},
 BOOKTITLE = {Sets, graphs and numbers ({B}udapest, 1991)},
    SERIES = {Colloq. Math. Soc. J\'anos Bolyai},
    VOLUME = {60},
     PAGES = {297--313},
 PUBLISHER = {North-Holland, Amsterdam},
      YEAR = {1992},
      ISBN = {0-444-98681-2},
   MRCLASS = {06A07 (05C15)},
  MRNUMBER = {1218198},
MRREVIEWER = {Graham\ Brightwell},
}

@article{BDW24,
  title={Separating polynomial $\chi$-boundedness from $\chi$-boundedness},
  author={Bria{\'n}ski, Marcin and Davies, James and Walczak, Bartosz},
  journal={Combinatorica},
  volume={44},
  number={1},
  pages={1--8},
  year={2024},
  publisher={Springer}
}

@article {DLR95,
    AUTHOR = {Duffus, Dwight and Lefmann, Hanno and R{\"o}dl, Vojt{\v{e}}ch},
     TITLE = {Shift graphs and lower bounds on {R}amsey numbers
              {$r_k(l;r)$}},
   JOURNAL = {Discrete Math.},
  FJOURNAL = {Discrete Mathematics},
    VOLUME = {137},
      YEAR = {1995},
    NUMBER = {1-3},
     PAGES = {177--187},
      ISSN = {0012-365X,1872-681X},
   MRCLASS = {05C55},
  MRNUMBER = {1312451},
MRREVIEWER = {Stanis\l aw\ P.\ Radziszowski},
       DOI = {10.1016/0012-365X(93)E0139-U},
       URL = {https://doi.org/10.1016/0012-365X(93)E0139-U},
}

@inproceedings{EH68,
  title={On chromatic number of infinite graphs},
  author={Erd\H{o}s, Paul and Hajnal, Andr{\'a}s},
  booktitle={Theory of Graphs (Proc. Colloq., Tihany, 1966)},
  pages={83--98},
  year={1968}
}

@article{AKL25,
   title={A Hierarchy of Constant Communication Complexity},
  author={Ambianis, Andris and Klauck, Hartmut and Lim, Debbie},
  journal={Information and Computation},
  pages={105416},
  year={2026},
  publisher={Elsevier}
}

@inproceedings{GHR25,
  title={Equality Is Far Weaker Than Constant-Cost Communication},
  author={G{\"o}{\"o}s, Mika and Harms, Nathaniel and Riazanov, Artur},
  booktitle={Approximation, Randomization, and Combinatorial Optimization. Algorithms and Techniques (APPROX/RANDOM 2025)},
  pages={58--1},
  year={2025},
}

@inproceedings{FHHH24,
  title={No complete problem for constant-cost randomized communication},
  author={Fang, Yuting and Hambardzumyan, Lianna and Harms, Nathaniel and Hatami, Pooya},
  booktitle={Proceedings of the 56th Annual ACM Symposium on Theory of Computing},
  pages={1287--1298},
  year={2024}
}

@article{HH24,
  title={Guest column: Structure in communication complexity and constant-cost complexity classes},
  author={Hatami, Hamed and Hatami, Pooya},
  journal={ACM SIGACT News},
  volume={55},
  number={1},
  pages={67--93},
  year={2024},
  publisher={ACM New York, NY, USA}
}

@InProceedings{FKKN14,
  author       = {Jir{\'{\i}} Fiala and
                  Pavel Klav{\'{\i}}k and
                  Jan Kratochv{\'{\i}}l and
                  Roman Nedela},
  title        = {Algorithmic Aspects of Regular Graph Covers with Applications to Planar
                  Graphs},
  booktitle    = {Automata, Languages, and Programming - 41st International Colloquium,
                  {ICALP} 2014},
  series       = {Lecture Notes in Computer Science},
  pages        = {489--501},
  publisher    = {Springer},
  year         = {2014},
  doi          = {10.1007/978-3-662-43948-7\_41},
}

@article{NR76,
  title={The {R}amsey property for graphs with forbidden complete subgraphs},
  author={Ne{\v{s}}et{\v{r}}il, Jaroslav and R{\"o}dl, Vojt{\v{e}}ch},
  journal={Journal of Combinatorial Theory, Series B},
  volume={20},
  number={3},
  pages={243--249},
  year={1976},
  publisher={Elsevier}
}

@article{GURVICH198885,
title = {Cyclic games and an algorithm to find minimax cycle means in directed graphs},
journal = {USSR Computational Mathematics and Mathematical Physics},
volume = {28},
number = {5},
pages = {85-91},
year = {1988},
issn = {0041-5553},
doi = {https://doi.org/10.1016/0041-5553(88)90012-2},
url = {https://www.sciencedirect.com/science/article/pii/0041555388900122},
author = {V.A. Gurvich and A.V. Karzanov and L.G. Khachivan}
}

@InProceedings{fijalkow_et_al,
  author =	{Fijalkow, Nathana\"{e}l and Gawrychowski, Pawe{\l} and Ohlmann, Pierre},
  title =	{{Value Iteration Using Universal Graphs and the Complexity of Mean Payoff Games}},
  booktitle =	{45th International Symposium on Mathematical Foundations of Computer Science (MFCS 2020)},
  pages =	{34:1--34:15},
  series =	{LIPIcs},
  ISBN =	{978-3-95977-159-7},
  ISSN =	{1868-8969},
  year =	{2020},
  volume =	{170},
  doi =		{10.4230/LIPIcs.MFCS.2020.34},
}

@InProceedings{dorfman_et_al,
  author =	{Dorfman, Dani and Kaplan, Haim and Zwick, Uri},
  title =	{{A Faster Deterministic Exponential Time Algorithm for Energy Games and Mean Payoff Games}},
  booktitle =	{46th International Colloquium on Automata, Languages, and Programming (ICALP 2019)},
  pages =	{114:1--114:14},
  series =	{LIPIcs},
  ISBN =	{978-3-95977-109-2},
  ISSN =	{1868-8969},
  year =	{2019},
  volume =	{132},
  doi =		{10.4230/LIPIcs.ICALP.2019.114},
}

@article{Rod77,
  title={On the chromatic number of subgraphs of a given graph},
  author={R{\"o}dl, Vojt{\v{e}}ch},
  journal={Proceedings of the American Mathematical Society},
  volume={64},
  number={2},
  pages={370--371},
  year={1977}
}

@inproceedings{Erd73,
  title={Problems and results in combinatorial analysis},
  author={Erd\H{o}s, Paul},
  booktitle={Colloq. Internat. Theor. Combin. Rome},
  pages={3--17},
  year={1973}
}

@article{SS20,
  title = {A survey of $\chi$‐boundedness},
  volume = {95},
  number = {3},
  journal = {Journal of Graph Theory},
  publisher = {Wiley},
  author = {Scott,  Alex and Seymour,  Paul},
  year = {2020},
  pages = {473–504}
}

@article{MW22b,
  title = {Subgraphs of {K}neser graphs with large girth and large chromatic number},
  volume = {6},
  number = {2},
  journal = {The Art of Discrete and Applied Mathematics},
  publisher = {University of Primorska Press},
  author = {Mohar,  Bojan and Wu,  Hehui},
  year = {2022},
  pages = {\#P2.11}
}

@mastersthesis{Enj23,
  title={Uma conjectura de {Erd\H{o}s e Hajnal}},
  author={Enju, Rodrigo Aparecido},
  year={2023},
  school={Universidade de S{\~a}o Paulo}
}

@misc{TW18,
  author       = {G{\'a}bor Tardos and Bartosz Walczak},
  title        = {On an {E}rd{\H{o}}s--{H}ajnal Conjecture},
  howpublished = {Talk at ``Combinatorics: Extremal, Probabilistic and Additive''},
  address      = {S{\~a}o Paulo, Brazil},
  year         = {2018},
  note         = {Conference talk}
}

@inproceedings{PTW26,
  title={On a Clique Game and the {E}rd{\H{o}}s-{H}ajnal Problem on High-Chromatic High-Girth Subgraphs},
  author={Pettie, Seth and Tardos, G{\'a}bor and Walczak, Bartosz},
  booktitle={Proceedings of the 2026 Annual ACM-SIAM Symposium on Discrete Algorithms (SODA)},
  pages={2903--2927},
  year={2026},
  organization={SIAM}
}

@phdthesis{Esp17,
  author  = {Esperet, Louis},
  title={Graph colorings, flows and perfect matchings},
  school  = {Universit{\'e} Grenoble Alpes},
  year    = {2017},
  type    = {Habilitation thesis}
}

@article{CCDO26,
  title={Reuniting $\chi$-boundedness with polynomial $\chi$-boundedness},
  author={Chudnovsky, Maria and Cook, Linda and Davies, James and Oum, Sang-il},
  journal={Journal of Combinatorial Theory, Series B},
  volume={176},
  pages={30--73},
  year={2026},
  publisher={Elsevier}
}

@book{Grtschel1993,
  title = {Geometric Algorithms and Combinatorial Optimization},
  ISBN = {9783642782404},
  ISSN = {0937-5511},
  url = {http://dx.doi.org/10.1007/978-3-642-78240-4},
  DOI = {10.1007/978-3-642-78240-4},
  journal = {Algorithms and Combinatorics},
  publisher = {Springer Berlin Heidelberg},
  author = {Gr\"{o}tschel,  Martin and Lovász,  László and Schrijver,  Alexander},
  year = {1993}
}

@inproceedings{Calude2017,
  series = {STOC ’17},
  title = {Deciding parity games in quasipolynomial time},
  url = {http://dx.doi.org/10.1145/3055399.3055409},
  DOI = {10.1145/3055399.3055409},
  booktitle = {Proceedings of the 49th Annual ACM SIGACT Symposium on Theory of Computing},
  publisher = {ACM},
  author = {Calude,  Cristian S. and Jain,  Sanjay and Khoussainov,  Bakhadyr and Li,  Wei and Stephan,  Frank},
  year = {2017},
  month = jun,
  pages = {252–263},
  collection = {STOC ’17}
}

@inproceedings{williams2014faster,
  title={Faster all-pairs shortest paths via circuit complexity},
  author={Williams, Ryan},
  booktitle={Proceedings of the forty-sixth annual ACM symposium on Theory of computing},
  pages={664--673},
  year={2014}
}

@article{akian2022tropical,
  title={Tropical Linear Regression and Mean Payoff Games: Or, How to Measure the Distance to Equilibria},
  author={Akian, Marianne and Gaubert, St{\'e}phane and Marchesini, Andrea},
  journal={SIAM Journal on Discrete Mathematics},
  volume={36},
  number={4},
  pages={2643--2671},
  year={2022},
  publisher={SIAM}
}

@inproceedings{korhonen2021lower,
  title={Lower Bounds on Dynamic Programming for Maximum Weight Independent Set},
  author={Korhonen, Tuukka},
  booktitle={48th International Colloquium on Automata, Languages, and Programming (ICALP 2021)},
  year={2021},
  organization={Schloss Dagstuhl-Leibniz-Zentrum f{\"u}r Informatik}
}

@article{allamigeon2015tropicalizing,
  title={Tropicalizing the simplex algorithm},
  author={Allamigeon, Xavier and Benchimol, Pascal and Gaubert, St{\'e}phane and Joswig, Michael},
  journal={SIAM Journal on Discrete Mathematics},
  volume={29},
  number={2},
  pages={751--795},
  year={2015},
  publisher={SIAM}
}

@article{bjorklund2007combinatorial,
  title={A combinatorial strongly subexponential strategy improvement algorithm for mean payoff games},
  author={Bj{\"o}rklund, Henrik and Vorobyov, Sergei},
  journal={Discrete Applied Mathematics},
  volume={155},
  number={2},
  pages={210--229},
  year={2007},
  publisher={Elsevier}
}

@article{butkovic2003max,
  title={Max-algebra: the linear algebra of combinatorics?},
  author={Butkovi{\v{c}}, Peter},
  journal={Linear Algebra and its Applications},
  volume={367},
  pages={313--335},
  year={2003},
  publisher={Elsevier}
}

\end{document}